\newtheorem{theorem}{Theorem}[section]
\newtheorem{lemma}[theorem]{Lemma}
\newtheorem{clm}[theorem]{Claim}
\newtheorem{definition}[theorem]{Definition}
\newtheorem{observation}[theorem]{Observation}
\newtheorem{proposition}[theorem]{Proposition}
\newtheorem{reduction rule}{Reduction Rule}
\newtheorem*{reduction rule*}{Reduction Rule}
\declaretheorem[name = Claim, numberwithin=theorem]{clm}
\newcommand{\problemtitle}[1]{\gdef\@problemtitle{#1}}
\newcommand{\probleminput}[1]{\gdef\@probleminput{#1}}
\newcommand{\problemquestion}[1]{\gdef\@problemquestion{#1}}
	\par\addvspace{.5\baselineskip}
	\par\addvspace{.5\baselineskip}
\colorlet{mix}{red!50!black}
\newcommand{\G}{\mathsf{VIG}}
\newcommand{\vis}{\mathsf{Vis}}
\newcommand{\reg}{\mathtt{Reg}}
\newcommand{\gsp}{{\sc MGS}$(\mathscr{P})$}
\newcommand{\uc}{\mathsf{uC}}
\newcommand{\lc}{\mathsf{\ell C}}
\newcommand{\bd}{\partial}
\newcommand{\inte}{\mathsf{iN}}
\newcommand{\ex}{\mathsf{eX}}
\newcommand{\pws}{\text{potential witness set}}
\newcommand{\wld}{\text{well-defined}}
\newcommand{\wts}{{\sc Witness Set}}
\newcommand{\ter}{\mathsf{Territory}}
\newcommand{\tra}{\mathsf{tRegion}}
\newcommand{\RA}{\mathsf{rAnchor}}
\newcommand{\LA}{\mathsf{\ell Anchor}}
\newcommand{\PC}{\mathsf{pChain}}
\newcommand{\PCU}{\mathsf{pChain}_{u}}
\newcommand{\PCL}{\mathsf{pChain}_{\ell}}
\newcommand{\XM}{\mathsf{xMax}}
\newcommand{\XMi}{\mathsf{xMin}}
\newcommand{\RC}{\mathsf{rChord}}
\newcommand{\LC}{\mathsf{\ell Chord}}
\newcommand{\IMr}{\mathsf{rImage}}
\newcommand{\IMl}{\mathsf{\ell Image}}
\newcommand{\Rb}{\mathsf{rBdry}}
\newcommand{\Lb}{\mathsf{\ell Bdry}}
\newcommand{\spr}{\mathsf{\Pi_{max}}}
\newcommand{\spl}{\mathsf{\Pi_{min}}}
\newcommand{\vig}{\mathsf{VIG}}
\newcommand{\str}{\mathsf{Str}}
\newcommand{\sig}{\mathsf{SIG}}
\newcommand{\infp}{\mathsf{Inf_{\delta}(P)}}
\newcommand{\po}{\mathscr{P}}
\newcommand{\mo}{\mathscr{M}}
\newcommand{\ro}{\mathcal{R}}
\newcommand{\zm}{\mathcal{Z}_{\mathtt{mid}}}
\newcommand{\OO}{\mathcal{O}}
	\newcommand{\fpt} {{\sf FPT}\xspace}
	\newcommand{\nph} {{\sf NP}-hard\xspace}
    	\newcommand{\wgb}{W^{\mathtt{good}}_{\mathtt{bdry}}}
		\newcommand{\wbb}{W^{\mathtt{bad}}_{\mathtt{bdry}}}
			\newcommand{\wint}{W_{\mathtt{int}}}
\colorlet{mix}{red!50!black}
\definecolor{ForestGreen}{rgb}{0.1333,0.5451,0.1333}
\definecolor{DarkRed}{rgb}{0.8,0,0}
\definecolor{Red}{rgb}{1,0,0}
\providecommand*{\cupdot}{%
  \mathbin{%
    \mathpalette\@cupdot{}%
  }%
}
\newcommand*{\@cupdot}[2]{%
  \ooalign{%
    $\m@th#1\cup$\cr
    \hidewidth$\m@th#1\cdot$\hidewidth
  }%
}
\newcounter{joe}\setcounter{joe}{0}
\newcounter{matya}\setcounter{matya}{0}
\newcounter{sasanka}\setcounter{sasanka}{0}
\newcounter{satya}\setcounter{satya}{0}
\newcounter{binayak}\setcounter{binayak}{0}
\begin{document}

\title{Witness Set in Monotone Polygons: Exact and Approximate}

\author{
 Udvas Das \thanks{Indian Statistical Institute, Kolkata, India, \textrm{udvas.das@gmail.com}}
 \and 
 Binayak	Dutta\thanks{Christ University, Bangalore, India,  \textrm{binayak66@gmail.com}}
   \and Satyabrata Jana \thanks{University of Warwick, UK, \textrm{satyamtma@gmail.com.}}
  \and Debabrata Pal \thanks{Indian Statistical Institute, Kolkata, India, \textrm{debabratapal4521@gmail.com}}
\and Sasanka Roy \thanks{Indian Statistical Institute, Kolkata, India, \textrm{
sasanka.ro@gmail.com}}
}

\date{}

 \maketitle






	\begin{abstract}

	\noindent 	Given a simple polygon $\mathscr{P}$ on $n$ vertices, two points $x$ and $y$ within $\mathscr{P}$ are {\em visible} to each other
if the line segment between $x$ and $y$ is contained in $\mathscr{P}$. The {\em visibility region} of a point $x$ includes all points in $\mathscr{P}$ that are visible from $x$.  A point set $Q$ within a polygon $\mathscr{P}$ is said to be a \emph{witness set} for $\mathscr{P}$ if each point in $\mathscr{P}$ is visible from at most one point from $Q$. In other terms, the visibility region of any two points $s, t \in Q$ are non-intersecting within $\mathscr{P}$. The problem of finding the largest size witness set in a given polygon was introduced by Amit et al. [Int. J. Comput. Geom. Appl. 2010]. 
        Recently,  Daescu et al. [Comput. Geom. 2019] gave a linear-time algorithm for this problem on monotone mountains, a proper subclass of monotone polygons. In this study, we contribute to this field by obtaining the largest witness set within both continuous and discrete models.
        
        In the {\sc Witness Set (WS)} problem, the input is a polygon $\mathscr{P}$, and the goal is to find a maximum-sized witness set in $\mathscr{P}$. In the {\sc Discrete Witness Set (DisWS)} problem, one is given a finite set of points $S$ alongside $\mathscr{P}$, and the task is to find a witness set $Q \subseteq S$ that maximizes $|Q|$. We investigate {\sc DisWS} in simple polygons, but consider {\sc WS} specifically for monotone polygons. Our main contribution is as follows: (1) a polynomial time algorithm for {\sc DisWS} for general polygons and (2) the discretization of the {\sc WS} problem for monotone polygons, leading to several algorithmic developments. Our approach only employs basic geometric primitives and circumvents the need for algebraic tools, unlike the methods used in simple polygon guarding by Abrahamsen et al.[J. ACM. 2022], Efrat and Sariel Har-Peled [Inf. Process. Lett. 2006]. This implies that the Witness Problem admits discretization for monotone polygon and conjecture that, unlike polygon guarding as shown by Bonnet and  Miltzow [SoCG 2017], {\sc WS} admits a finite discretization for general polygon. Our discretization, along with its properties, may independently interest those solving {\sc WS} in general polygon contexts and polygon guarding scenarios like smooth or robust polygon guarding [Das et al., SoCG 2024,  Dobbins et al., arXiv:1811.01177]. Specifically, we determine the following for {\sc WS} when restricted to $n$-vertex monotone polygons with $r$ reflex vertices, where the size of the maximum witness set is $k$.
        \begin{itemize}
            \item Generate  a point set $Q$ with size $r^{\OO(k)} \cdot n$ such that $Q$ contains an optimal solution.
            
            \item An exact algorithm  running in time  $r^{\OO(k)} \cdot n^{\OO(1)}$. 
            
            \item A $(1+ \epsilon)$-approximation algorithm  with running time  $r^{\OO(1/\epsilon)} \cdot n^2$.

        \end{itemize} 
 {\sc DisWS} is equivalent to finding a maximum independent set (MIS) of visibility region intersection graphs, where each visibility region is a vertex and there exists an edge between a pair of vertices if and only if the corresponding visibility regions intersect. Thus, we obtain a polynomial-time algorithm for computing MIS in visibility region intersection graphs. This characterization motivates the study of which graph classes are subgraphs of visibility region intersection graphs. To corroborate our claim, we demonstrate that the visibility region intersection graph of a polygon can be classified as an outerstring graph. The string model we construct is consistent with the one outlined by Hengeveld and Miltzow [SoCG 2021]. For monotone polygons, these graphs turn into co-comparable graphs. This enables us to resolve the {\sc DisWS} problem in polynomial time for general polygons, and with enhanced running time for monotone polygons.

	\end{abstract}

\UseRawInputEncoding

	\section{Introduction}
 Given a simple polygon $\po$, we say that two points $p,q \in \po$ are {\em visible  to each other} 
if the line segment $\overline{pq}$ is contained in $\po$. 
A set of points $W \subseteq \po$ is said to be a {\em witness set} in  $\po$ if every point $p \in \po$ is visible from at most one witness $w \in W$.  The points of $W$ are called {\em witnesses}. It may happen that some points in $\po$ are not visible from any witness of $W$. A witness set of $\po$ is optimal if it is a maximum cardinality witness set of $\po$. In our work, we consider the {\sc Witness Set} problem: given a simple polygon $\po$, find a maximum-sized witness set in $\po$.

\begin{figure}[b]
    \centering
    \includegraphics[width=0.3\linewidth]{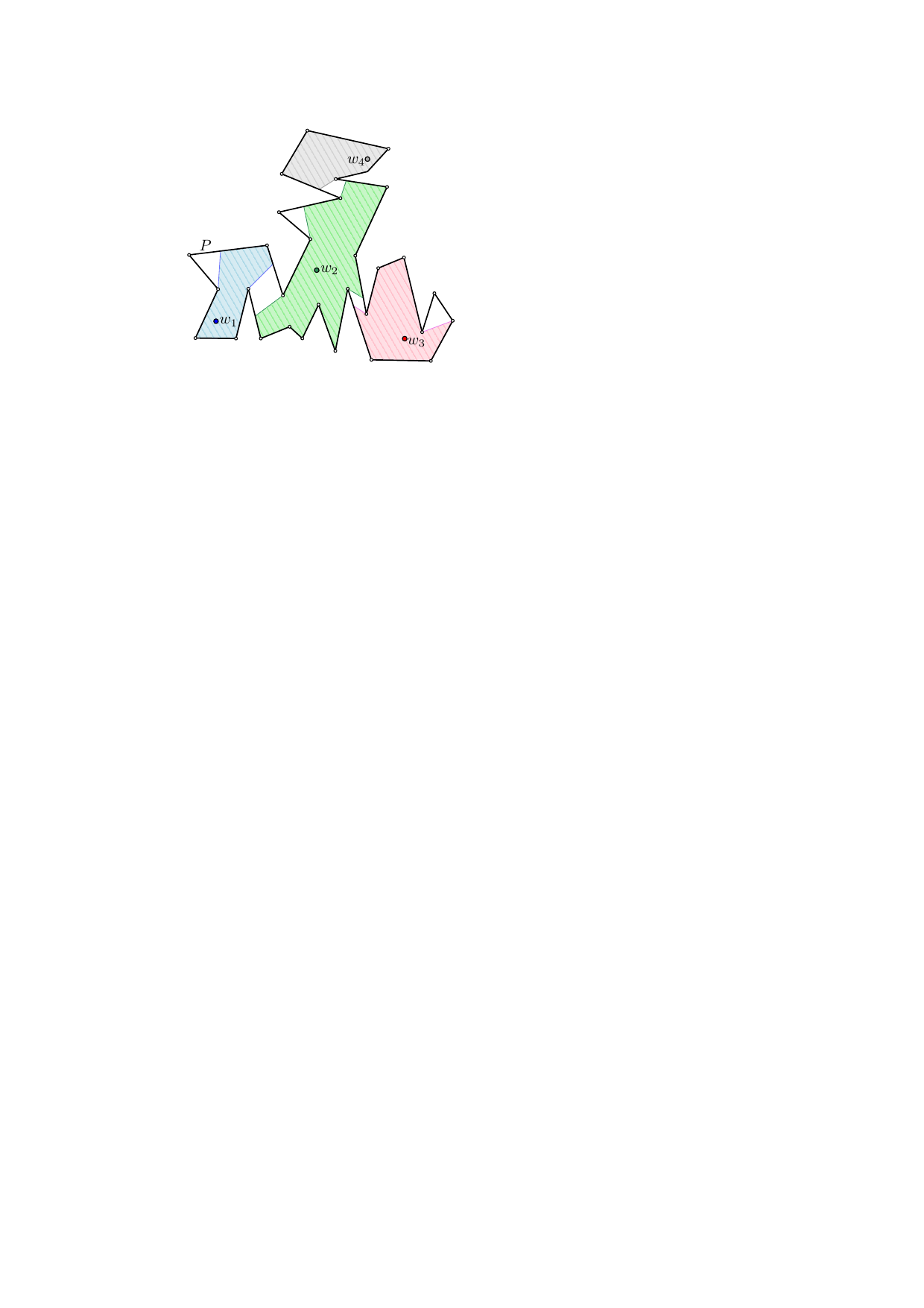}
    \caption{Example of a polygon having  four witnesses $w_1,w_2, w_3, w_4$ with their visibility regions.}
    \label{witfirst}
\end{figure}

The concept of a {\sc Witness Set} is intricately related to several well-known problems in computational geometry, as extensively explored in the literature. These include the {\sc Art Gallery}  \cite{DBLP:journals/ijcga/AmitMP10}, {\sc Hidden Set}  \cite{DBLP:conf/focs/BrowneKMP23}, and  {\sc Convex Cover}  \cite{DBLP:conf/focs/Abrahamsen21}. In {\sc Art Gallery} problem, we are given a simple polygon $\po$ and our task is to find a minimum size point set ({\em guard set}) $G$  in $\po$ such that every point $p \in \po$ is visible from at least one guard $g \in G$.   Given a simple polygon $\po$, the {\sc Convex Cover}  problem seeks to cover $\po$ with the fewest
convex polygons ({\em convex cover set}) that lie within $\po$ whereas the {\sc Hidden Set} problem asks to place maximum number of points ({\em hidden set}) in $\po$ such that no pair of them is visible to each other.  Let $\mathtt{ws}(\po), \mathtt{gs}(\po), \mathtt{cc}(\po), \mathtt{hs}(\po)$ denote the size of a maximum witness set, minimum guard set,  minimum convex cover set, and maximum hidden set, respectively. It is easy to observe that since no two witnesses can be visible from a single point of $\po$, the cardinality of a maximum witness set is a lower bound on the minimum guard set in a polygon. Similarly, as two witnesses cannot be visible from each other, the cardinality of a maximum witness set is a lower bound on the maximum hidden set in a polygon. At the same time, since there is at
most one hidden point within any convex subset of $\po$, we must have the following basic inequalities:
 $\mathtt{ws}(\po) \leq  \mathtt{gs}(\po)$ as well as $\mathtt{ws}(\po) \leq  \mathtt{hs}(\po) \leq  \mathtt{cc}(\po)$.

\begin{tcolorbox}[enhanced,title={\color{black} {\sc Witness Set} ({\sc WS})}, colback=white, boxrule=0.4pt,
	attach boxed title to top center={xshift=-5.5cm, yshift*=-3.5mm},
	boxed title style={size=small,frame hidden,colback=white}]
	
	\textbf{Input:} A  polygon $ \po$.\\
	\textbf{Task:}  \hspace*{1mm} Find a maximum sized  set $ W$ of points in $ \po $ such that for every pair of points  in $ W $ their  visibility regions inside $ \po $ do not intersect.    
	
\end{tcolorbox}

The \wts~problem was first introduced by Amit et al.~\cite{DBLP:journals/ijcga/AmitMP10} in 2010. Although the authors in \cite{DBLP:journals/ijcga/AmitMP10} inadvertently referred to this problem as \nph, this claim appears to be folklore \footnote{ Joseph S. B. Mitchell strongly believe that this problem should have a polytime solution and our polynomial time solution for {\sc DisWS} indicates an assertion towards that belief.}.  So the computational complexity of this problem still remains open. There are only very few results in the literature on this problem. Recently, in 2019, Daescu et al.~\cite{DBLP:journals/comgeo/DaescuFMPS19} studied this problem on monotone mountains, a proper subclass of monotone polygons, and gave a linear-time algorithm. An $x$-monotone polygon is uni-monotone if one of its two chains is a single horizontal segment. Monotone mountains are uni-monotone polygons in which the segment-chain is not necessarily horizontal. These algorithmic results motivate us to look at the problems in other classes of polygons. We also explore this problem within discrete models, where we have given a set of points $S$ with $\po$ as input, and the goal is to identify a witness set $W \subseteq S$ that maximizes $|W|$.



    	\begin{tcolorbox}[enhanced,title={\color{black} {\sc Discrete Witness Set} ({\sc DisWS})}, colback=white, boxrule=0.4pt,
	attach boxed title to top center={xshift=-4.5cm, yshift*=-3.5mm},
	boxed title style={size=small,frame hidden,colback=white}]

	\textbf{Input:} A  polygon $ \po $, and a finite point set $ S \subseteq  \po$.\\
	\textbf{Task:}  \hspace*{1mm} Find a maximum sized $ W\subseteq S $ such that for every pair of points  in $ W$ their  visibility regions inside $ \po $ do not intersect. 
	
\end{tcolorbox}

     In this work, we mainly focus on two kinds of problems.

     \begin{mdframed}[backgroundcolor=gray!10,topline=false,bottomline=false,leftline=false,rightline=false] 
 \centering
 \textbf{Question 1.}  Can we solve the {\sc Discrete Witness Set} in polynomial  time? 
\end{mdframed}

We address Question 1 specifically for both simple polygons and monotone polygons and provide a positive answer. Following this, having solved Question 1, we proceed to explore a broader issue (discretization).

 \begin{mdframed}[backgroundcolor=gray!10,topline=false,bottomline=false,leftline=false,rightline=false] 
 \centering
 \textbf{Question 2.}  Given a polygon $\po$, can we obtain a finite point  set $S \subseteq \po$ in polynomial time  such that there exists a solution $W$ of the {\sc Witness Set} in $\po$ satisfying $W \subseteq S$?
\end{mdframed}

We focus primarily on Question 2 concerning monotone polygons and provide a positive answer. We hypothesize that this type of discretization is also applicable to a simple polygon. Consider a polygon $\po$. If (i) we resolve Question 2 in time $g(|S|,n)$, and (ii) we address Question 1 in time $f(|S|,n)$, then it follows that we can solve \wts~in a total time of $f(|S|,n)+ g(|S|,n)$. This forms the main theme of our work in this paper.

\paragraph{Motivations.} 
We use  \gsp~to denote the problem of finding a minimum-sized guard set in a given polygon $ \po $.
 There exists an algorithm for \gsp~given by Efrat and Har{-}Peled \cite{DBLP:journals/ipl/EfratH06} which is attributed to Micha Sharir. This algorithm uses heavy tools from algebraic geometry. Due to the recent breakthrough result by Abrahamsen et al.~\cite{DBLP:journals/jacm/AbrahamsenAM22} that the decision version of \gsp~is $\exists \mathbb{R}$-complete, it is clear that discretization (similar to one that we do in our work) for {the} \ gsp~might be possible to solve {the} \gsp. As quoted in \cite{DBLP:journals/jacm/AbrahamsenAM22}, As a corollary of our construction, we prove that for any real algebraic number $\alpha$, there is an instance of the {\sc Art Gallery} problem where one of the coordinates of the guards equals $\alpha$ in any guard set of minimum cardinality. That rules out many natural geometric approaches to the problem, as it shows that any approach based on constructing a finite set of candidate points for placing guards has to include points with coordinates as roots of polynomials of arbitrary degree. Considering \gsp,  Bonnet and Miltzow \cite{DBLP:conf/compgeom/BonnetM17} gave an $\OO(\log (\mathtt{opt}))$ factor approximate algorithm where $\mathtt{opt}$ is the size of the optimum solution. Their in-depth analysis of the problem led to the formulation of the necessary assumptions for polynomial time approximations for \gsp. 
 Furthermore, Bonnet and Miltzow \cite{DBLP:conf/compgeom/BonnetM17} have pointed out that \gsp is challenging even in highly restricted settings, such as when a polygon requires only a constant number of guards. Indeed, the problem is known to be difficult for as few as two guards \cite{Belleville91}. We hope that our characterizations for solving \wts~in monotone polygons would be able to bring new insights for \gsp.  Recently, Hengeveld and Miltzow \cite{DBLP:conf/compgeom/HengeveldM21} considered the problem of computing practical algorithms for computing \gsp~, and our characterizations may help devise better practical heuristic algorithms.  
  We hope that if we look for \gsp~through the lens of our algorithm for  \wts~in monotone polygons, then we may be able to devise approximation algorithms that need fewer assumptions than the one by Bonnet and Miltzow \cite{DBLP:conf/compgeom/BonnetM17} for guarding simple polygons. \gsp~has been well-studied in literature.  
  For details on early literature, see the classical books \cite{ghosh2007vis,ORourke87};
 for the recent literature one can look 
\cite{DBLP:journals/jacm/AbrahamsenAM22,DBLP:conf/compgeom/BonnetM17,DBLP:journals/comgeo/DaescuFMPS19,DBLP:conf/compgeom/HengeveldM21}. To the best of our knowledge, it is not known if one can avoid using the tools in \cite{DBLP:conf/compgeom/BonnetM17}  for obtaining an optimal solution of \gsp~using discretization even for the case of monotone polygons, although constant factor approximation for  \gsp~is known for monotone polygons due to Krohn and Nilsson \cite{DBLP:journals/algorithmica/KrohnN13}. To be more precise, is \gsp~$\exists \mathbb{R}$-complete? We anticipate that our geometric characterizations will prove useful in polygon guarding scenarios, such as smooth or robust guarding \cite{DBLP:conf/compgeom/DasFKM24, DBLP:journals/corr/abs-1811-01177}.


    \paragraph{Our Results and Methods:}

	Our contributions are as follows.

    \begin{description}
        \item[\Cref{sec-outerstring}.] We consider the {\sc DisWS} problem in simple polygons. We first prove that the visibility intersection graph of a simple polygon is an outerstring graph. Since a maximum independent set in outerstring graphs can be determined in polynomial time, as established by \cite{DBLP:journals/comgeo/KeilMPV17}, we derive the subsequent conclusion.

          \begin{restatable}{theorem}{theo}\label{theo:disouter}
    {\sc Discrete Witness Set} is solvable in $\mathcal{O}(|S|^3 \cdot  |V(\po)|^3)$ time.
\end{restatable}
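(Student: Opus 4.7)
The plan is to reduce {\sc DisWS} to Maximum Independent Set (MIS) on the \emph{visibility intersection graph} $G=(S,E)$, where $st \in E$ iff the visibility regions of $s$ and $t$ inside $\po$ share at least one common point. By the definition of a witness set, a subset $W \subseteq S$ is a witness set precisely when $W$ is an independent set in $G$, so {\sc DisWS} on $(\po,S)$ is exactly MIS on $G$.

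The heart of the argument is to exhibit an outerstring representation of $G$. I treat $\partial\po$ as the outer curve of the model and, for each $s \in S$, construct a curve $\sigma_s \subset \po$ with at least one endpoint on $\partial\po$ such that $\sigma_s \cap \sigma_t \neq \emptyset$ iff the visibility regions of $s$ and $t$ intersect. The visibility region of any point $s \in \po$ is a star-shaped polygonal subregion of complexity $O(|V(\po)|)$ whose boundary alternates between arcs of $\partial\po$ and internal ``windows'' (visibility chords pinned at reflex vertices of $\po$), and whose intersection with $\partial\po$ is a nonempty union of arcs. I would build $\sigma_s$ by slightly retracting the boundary of the visibility region of $s$ into its interior and then opening the resulting closed curve at a single point lying on one of its $\partial\po$-arcs, following the string model outlined by Hengeveld and Miltzow. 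The intersection correspondence is then verified by a case analysis on how two visibility regions can share points (boundary crossing versus nesting versus meeting only along $\partial\po$).

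With the outerstring representation in hand, the algorithmic step is immediate: each $\sigma_s$ has combinatorial complexity $O(|V(\po)|)$, so the representation has a total of $N = O(|S| \cdot |V(\po)|)$ segments. Applying the $O(N^3)$ algorithm of Keil, Mitchell, Pradhan and Vatshelle for MIS on outerstring graphs then produces a maximum witness set in the claimed $O(|S|^3 \cdot |V(\po)|^3)$ time.

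The main obstacle will be proving correctness of the string-intersection correspondence. The ``only if'' direction comes essentially for free, since each $\sigma_s$ is contained in the visibility region of $s$ by construction. The subtle direction is the ``if'' direction in the nested case, where one visibility region is contained in another and the retracted boundary curves do not automatically cross: the routing of each $\sigma_s$ must be arranged so that it reaches those $\partial\po$-arcs shared with every other visibility region it overlaps, which is exactly the role played by the outer boundary in the Hengeveld--Miltzow construction. Some additional care will be needed for degenerate configurations where reflex vertices, points of $S$, and boundary edges are collinear.
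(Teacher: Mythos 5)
Your high-level plan (reduce {\sc DisWS} to MIS on the visibility intersection graph, realize that graph as an outerstring graph, and invoke the $\OO(N^3)$ algorithm of Keil et al.\ with $N=\OO(|S|\cdot|V(\po)|)$) is exactly the paper's, and the running-time accounting is right. The gap is in the string construction. You propose to \emph{retract} $\bd(\vis(s))$ into the interior of $\vis(s)$ and then open the retracted closed curve at a point of one of its $\bd(\po)$-arcs. This breaks the ``if'' direction of the intersection correspondence for every pair whose visibility regions meet only along their boundaries: two regions sharing an arc of $\bd(\po)$; two nested regions (whose boundaries necessarily share the primary edges of the inner one, since any portion of $\bd(\po)$ contained in $\vis(b)$ lies on $\bd(\vis(b))$); and two regions whose windows merely touch at a single point (the tangential case of the chord-intersection criterion). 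In all of these configurations the pair must be adjacent in the graph, yet the two retracted curves are pushed off the common boundary and need not meet. You name the nested case as ``the main obstacle'' but offer no mechanism to resolve it, and retraction is precisely what destroys the shared $\bd(\po)$-arcs you say the curves must reach (it also leaves unclear how a curve strictly interior to $\vis(s)$ acquires an endpoint on $\bd(\po)$). A further unaddressed point: $\vis(s)$ need not be a simple polygon --- it can have one-dimensional \emph{polygonal arms} --- so its boundary is not a simple closed curve and cannot be turned into a string by a single cut.

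The paper's construction avoids retraction entirely. Each string is taken to be $\bd(\vis(w))$ itself, with only a tiny $\epsilon$-gap cut on a \emph{primary} edge, where $\epsilon$ is chosen smaller than $\frac{1}{m}\min_{w}(\min_{pr}(w))$ and smaller than the distance from any gap to any vertex of any other visibility region, so that no boundary--boundary intersection can be lost inside a gap. To obtain an outerstring model the polygon is then inflated outward by a small $\delta$, each string is attached to the inflated boundary by a short perpendicular stub at its gap, and finally the polygonal-arm branches are rerouted near their attachment points to convert each string-like into a genuine string while preserving all pairwise intersections. If you replace your retraction step by this ``keep the boundary, cut a controlled gap, inflate the polygon'' scheme, the rest of your argument goes through as written.
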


 \item[\Cref{sec-CocomparableGraph}.] We consider the {\sc DisWS} problem specifically for a monotone polygon. Initially, we show that the visibility intersection graph of a monotone polygon is co-comparable. Given that a maximum independent set can be solved in polynomial time for co-comparable graphs, we derive the following result.

  \begin{restatable}{theorem}{theoo}\label{theo-finite-witness-in-M}
    {\sc Discrete Witness Set} in a monotone polygon is solvable in $\OO(|S|^2 + |V(\po)||S|)$ time. 
\end{restatable}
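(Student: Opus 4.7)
The plan is to reduce \textsc{DisWS} on a monotone polygon $\po$ to computing a maximum independent set in the visibility-region intersection graph $G_S$ of $S$, and then exploit the $x$-monotonicity of $\po$ to show that $G_S$ is co-comparable. A subset $W\subseteq S$ is a witness set of $\po$ if and only if the regions $\vis(s)$ for $s\in W$ are pairwise disjoint, which is exactly an independent set of $G_S$, so the entire task becomes MIS on $G_S$.

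To show $G_S$ is co-comparable I would orient its non-edges transitively using the natural left-to-right sweep of $\po$. Declare $s\prec t$ whenever $\vis(s)\cap \vis(t)=\emptyset$ and, on some maximal vertical chord of $\po$ separating them, $\vis(s)$ lies to the left. Using the facts that every maximal vertical chord in an $x$-monotone polygon is a single segment and that a visibility region inside a monotone polygon is itself $x$-monotone and meets each vertical chord in a (possibly empty) subinterval, I would prove two ingredients: \textbf{(i)} whenever $\vis(s)$ and $\vis(t)$ are disjoint, some maximal vertical chord of $\po$ totally separates them, and \textbf{(ii)} the induced ``left of'' relation is transitive and independent of the choice of separating chord. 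Together these make $\prec$ a strict partial order whose comparability relation is exactly non-adjacency in $G_S$, so $G_S$ is co-comparable.

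The main obstacle is part \textbf{(i)}: two disjoint visibility regions in a monotone polygon must be separable by a single vertical chord. I expect this to hinge on the fact that the upper and lower boundary chains of any visibility region stretch monotonically between its extreme $x$-coordinates, so two disjoint visibility regions cannot ``dodge'' one another vertically inside a common $x$-monotone polygon without meeting. Any disjointness therefore forces a gap in the shared $x$-projection or a clean vertical separation at some reflex-vertex-induced chord, which provides the desired witness and simultaneously defines a consistent left--right side for each region.

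Once $G_S$ is shown to be co-comparable, MIS in $G_S$ coincides with a longest chain in $(S,\prec)$ and can be computed in $\OO(|S|^2)$ by a topological-order dynamic program on the poset DAG. The visibility polygon $\vis(s)$ for each $s\in S$ is computable in $\OO(|V(\po)|)$ time with a standard linear-time visibility algorithm, contributing $\OO(|V(\po)|\,|S|)$ of preprocessing, and the pairwise comparability tests can be funneled through a single vertical-chord sweep so that they contribute only $\OO(|S|^2)$ in aggregate. Summing the two parts yields the claimed $\OO(|V(\po)|\,|S|+|S|^2)$ running time.
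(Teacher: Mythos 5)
Your overall strategy --- reduce to maximum independent set in the visibility intersection graph of $S$ and certify co-comparability through the vertical structure of the monotone polygon --- is the same as the paper's, and the final poset dynamic program and the $\OO(|V(\po)||S|)$ visibility preprocessing are fine. The genuine gap is your ingredient \textbf{(i)}: it is \emph{false} that two disjoint visibility regions in a monotone polygon always admit a separating maximal vertical chord. Take a long monotone polygon with one thin downward spike from the upper chain (tip near the floor, at $x=4$, say) and one thin upward spike from the lower chain (tip near the ceiling, at $x=6$), and place $s$ at the bottom-left and $t$ at the top-right. Then $\vis(s)$ squeezes under the first spike and ends in a low, flat wedge reaching to about $x\approx 5.1$, while $\vis(t)$ squeezes over the second spike and ends in a high, flat wedge reaching back to about $x\approx 4.9$; the two regions are disjoint (one hugs the floor, the other the ceiling), but their $x$-projections overlap, so every vertical chord with $x\in[4.9,5.1]$ meets both regions and no vertical chord separates them. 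Since your relation $\prec$ compares only pairs admitting a separating chord, such a pair is non-adjacent in $G_S$ yet incomparable under $\prec$, the identity ``comparability $=$ non-adjacency'' breaks down, and co-comparability is not established.

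The repair is to abandon separating chords between the two \emph{regions} and instead order the \emph{points} by $x$-coordinate, which is what the paper does. Transitivity then follows from a different and much easier fact: for the middle point $q$, the entire vertical chord $\ell(q)$ is contained in $\vis(q)$, so if $\vis(p)$ and $\vis(r)$ are each disjoint from $\vis(q)$ they both avoid $\ell(q)$ and, being connected, lie entirely on opposite sides of it; hence they are disjoint. (The chord through $q$ separates $\vis(p)$ from $\vis(r)$ --- it is not required to separate $\vis(p)$ from $\vis(q)$.) Separately, your claim that all pairwise disjointness tests can be ``funneled through a single vertical-chord sweep'' in $\OO(|S|^2)$ total is left unsubstantiated; the paper achieves $\OO(1)$ per pair by precomputing, for each point $a$, the chords $\Rb(a)$ and $\Lb(a)$ during the visibility computation and proving that for $x(a)<x(b)$ with $b$ strictly to the right of $\XM(\vis(a))$, the regions intersect if and only if $\Rb(a)$ and $\Lb(b)$ cross. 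You would need some such constant-time characterization to meet the stated running time.
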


\item[\Cref{sec-Discretization}.] We consider the \wts~problem in a monotone polygon with a suitable parameter. It is known that the {\sc Art Gallery} problem is W[1]-hard when parametrized by the number of guards, and Giannopoulos \cite{giannopoulos2016open} posed the open question of whether the {\sc Art Gallery} admits a fixed-parameter tractable (\fpt) algorithm when parameterized by $r$. Recently, Agrawal et al.~\cite{DBLP:journals/dcg/AgrawalKLSZ24} studied a variation called the {\sc Vertex-Vertex Art Gallery} and gave an \fpt algorithm with running time $r^{\OO(r^2)}n^{\OO(1)}$. Inspired by this, we explore the \wts~problem in a monotone polygon with $r$, the number of reflex vertices, as the parameter. 

Although our objective is to design an algorithm with a runtime of $f(r) n^{\OO(1)}$ for some computable function $f$, we have obtained an algorithm with an even better running time that is $r^{\OO(k)} n^{\OO(1)}$, where $k$ represents the size of the maximum witness set. This outcome shows that we can find the maximum witness set for a monotone polygon in polynomial time $(n^{\OO(1)})$ whenever $k \log(r)= \OO(\log n)$. As $k$ is at most $r$, the algorithm achieves a runtime of $f(r) n^{\OO(1)}$, for some computable function $f$.

   \begin{restatable}{theorem}{thfpt}\label{exact-algorithm}
    {\sc Witness Set} in a monotone polygon $ \mo $ with $n$ vertices is solvable in  $r^{\OO(k)} \cdot n^{\OO(1)}$ time,  where $r $ denotes the number of reflex vertices in $ \mo $ and $ k$ is the size of a maximum witness set in $\mo$.
\end{restatable}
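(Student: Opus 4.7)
The plan is to compose two ingredients: a structural \emph{discretization} that produces a small candidate set $Q \subseteq \mo$ known to contain an optimal witness set, and then the {\sc DisWS} algorithm of \Cref{theo-finite-witness-in-M} run on the instance $(\mo, Q)$. Concretely, the theorem follows at once from a lemma of the form: there is a polynomial-time procedure that outputs $Q \subseteq \mo$ with $|Q| = r^{\OO(k)} \cdot n$ such that some maximum witness set of $\mo$ is contained in $Q$. Given such a $Q$, \Cref{theo-finite-witness-in-M} solves {\sc DisWS} on $(\mo, Q)$ in time $\OO(|Q|^2 + |V(\mo)|\cdot|Q|) = r^{\OO(k)} \cdot n^{\OO(1)}$, which is exactly the claimed bound; this also recovers the first bullet of our results list as an intermediate product of the same argument.

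The proof of the discretization lemma is where the real work lies, and is the focus of \Cref{sec-Discretization}. My intended approach is a \emph{canonicalisation} (sliding) argument. Fix any optimal witness set $W$ of size $k$. The chords of $\mo$ supported either by an edge of $\mo$ or by a pair of reflex vertices partition $\mo$ into finitely many cells inside which, for each witness $w \in W$, the combinatorial type of the boundary between the visibility region of $w$ and those of the other witnesses remains stable. I would slide each $w \in W$ inside $\mo$ while maintaining the witness property, until $w$ becomes pinned by a bounded number of such chords; the key invariant is that small perturbations of $w$ only deform its visibility region at the reflex vertices it currently sees, so the moves that preserve the witness property can be tracked locally. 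Pinning $w$ requires $\OO(k)$ chord constraints chosen from $\OO(r^2)$ candidate chords, giving $r^{\OO(k)}$ canonical positions per witness and $|Q| = r^{\OO(k)} \cdot n$ in total after accounting for the contribution of polygon edges.

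The main obstacle will be proving that the sliding process can actually be carried out without ever breaking the disjointness of visibility regions among the remaining witnesses of $W$. Two things must be controlled: (a) the direction in which $w$ is slid must be chosen so that the critical portions of its visibility region only shrink or translate, and (b) the sequencing of the $k$ simultaneous moves must be arranged so that no two witnesses ever become mutually visible in transit. Monotonicity of $\mo$ is crucial here, since it bounds how a visibility region can wrap around a reflex vertex and hence keeps the list of critical sliding events per witness of size $\OO(r)$. Once this structural lemma is established, the counting that produces the $r^{\OO(k)}$ factor is combinatorial, and the reduction to {\sc DisWS} is immediate.
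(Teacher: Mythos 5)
Your overall architecture coincides with the paper's: construct a candidate set $Q$ of size $r^{\OO(k)}\cdot n$ guaranteed to contain an optimal witness set, then run the {\sc DisWS} algorithm of \Cref{theo-finite-witness-in-M} on $(\mo,Q)$; the running-time composition is exactly as you compute it. The gap is in the discretization lemma itself, which is the entire substance of the theorem and which your sketch does not establish. You propose to pin each witness on the arrangement of the $\OO(r^2)$ chords supported by edges of $\mo$ or by pairs of reflex vertices, and to extract $r^{\OO(k)}$ canonical positions per witness by ``choosing $\OO(k)$ chord constraints'' from this fixed family. But a point is pinned by at most two line constraints, so a fixed family of $\OO(r^2)$ chords can only yield polynomially many candidate positions; the count $r^{\OO(k)}$ cannot arise this way, and such a fixed arrangement is in fact insufficient. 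In the paper the exponential-in-$k$ factor comes from an \emph{iterated} arrangement (\Cref{algo_1}): starting from $Q_0=V(\mo)$, each of $2k$ rounds draws maximal chords from every currently known boundary point through every reflex vertex, adds the new boundary intersection points and the midpoints of adjacent pairs, and recurses. This iteration is forced by precisely the dependency you flag as your ``main obstacle'': the canonical replacement of the $i$-th witness in a run of $\wbb$-witnesses is obtained from a chord emanating from the already-canonicalised replacement of the $(i-1)$-th witness, a point which in general lies on no chord through two reflex vertices. The candidate positions are thus described by derivation chains of length $\OO(k)$ rooted at $V(\mo)$, not by subsets of a fixed chord family, and your proposal does not supply the argument that resolves this.

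A second omission is that your sliding argument treats all witnesses uniformly, whereas the paper must first split $W$ into three classes: $\wgb$ (replaceable by vertices of $\mo$), $\wint$ (interior witnesses collinear with two mutually visible reflex vertices on opposite chains, replaceable only by the $\leq r^2$ midpoints in $\ro_{mid}$), and $\wbb$ (handled by the iterated arrangement). The $\wint$ case shows that a witness cannot always be slid to the boundary at all, so any correct canonicalisation must carry this case separately; and the left-to-right then right-to-left bookkeeping with clone points, by which the paper sequences the $k$ moves without ever breaking disjointness, is exactly the content you would still need to provide.
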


The proof of \Cref{exact-algorithm} consists of two components. First, we try to find a discrete set of size $r^{\OO(k)}n^{\OO(1)}$ inside a monotone polygon $\mo$, that will suffice to contain a maximum witness set of  $\mo$. We were able to achieve this with the help of reflex vertices and a repeating {\em line arrangement} procedure $\OO(k)$ many times. The term "line arrangement" refers to connecting line segments between reflex vertices and a newly established set of points on $\mo$. Secondly, once we ascertain this discrete set, we apply \Cref{theo-finite-witness-in-M} to derive \Cref{exact-algorithm}.

\item[\Cref{sec-ptas}.] Using a clever {\em vertical decomposition} technique, we design   $(1+\epsilon)$-factor approximation algorithm ({\sf PTAS}) for the {\sc Witness Set} problem in monotone polygon running in time   $r^{\OO(1/\epsilon)} \cdot n^2$ time, where $ r$ and $n $ denotes the number of reflex vertices and vertices, respectively of input monotone polygon.

	\end{description}

\section{Preliminaries}\label{sec-preli}

\noindent{\bf Polygons.} In this paper, we only consider simple polygons. A {\em simple polygon} $\po$ is a flat shape consisting of $n$ straight, {\em non-intersecting} line segments that are joined pairwise to form a closed path. The line segments that make up a polygon, called {\em edges}, meet only at their endpoints, called {\em vertices}.
A simple polygon $\po$ encloses a region, called its {\em interior}, that has a measurable area. We do not consider the boundary of $\po$ as part of its interior. We use $ \bd(\po), ~\inte(\po), \ex(\po) $ to denote the boundary, interior, and exterior region of $\po$, respectively.  A vertex $v \in V(\po)$ is a {\em reflex}  vertex if the interior angle of $\po$ at $v$ is larger than 180 degrees, else it is called a {\em convex} vertex.   A polygon $\po$ in the plane is called {\em monotone} with respect to a straight line $L$, if every line orthogonal to $L$ intersects the boundary of $\po$ at most twice. Unless otherwise stated, we refer to a polygon that is monotone with respect to the line $y=0$ (or, the $x$-axis) as a monotone polygon. For any point $p$ in the plane,  $x(p)$ denotes the $x$-coordinate of $p$.

\ifthenelse{\boolean{shortver}}{}{
\begin{figure}[ht!]
    \centering
    \begin{subfigure}[b]{0.35\textwidth}
        \centering
        \includegraphics[width=\textwidth]{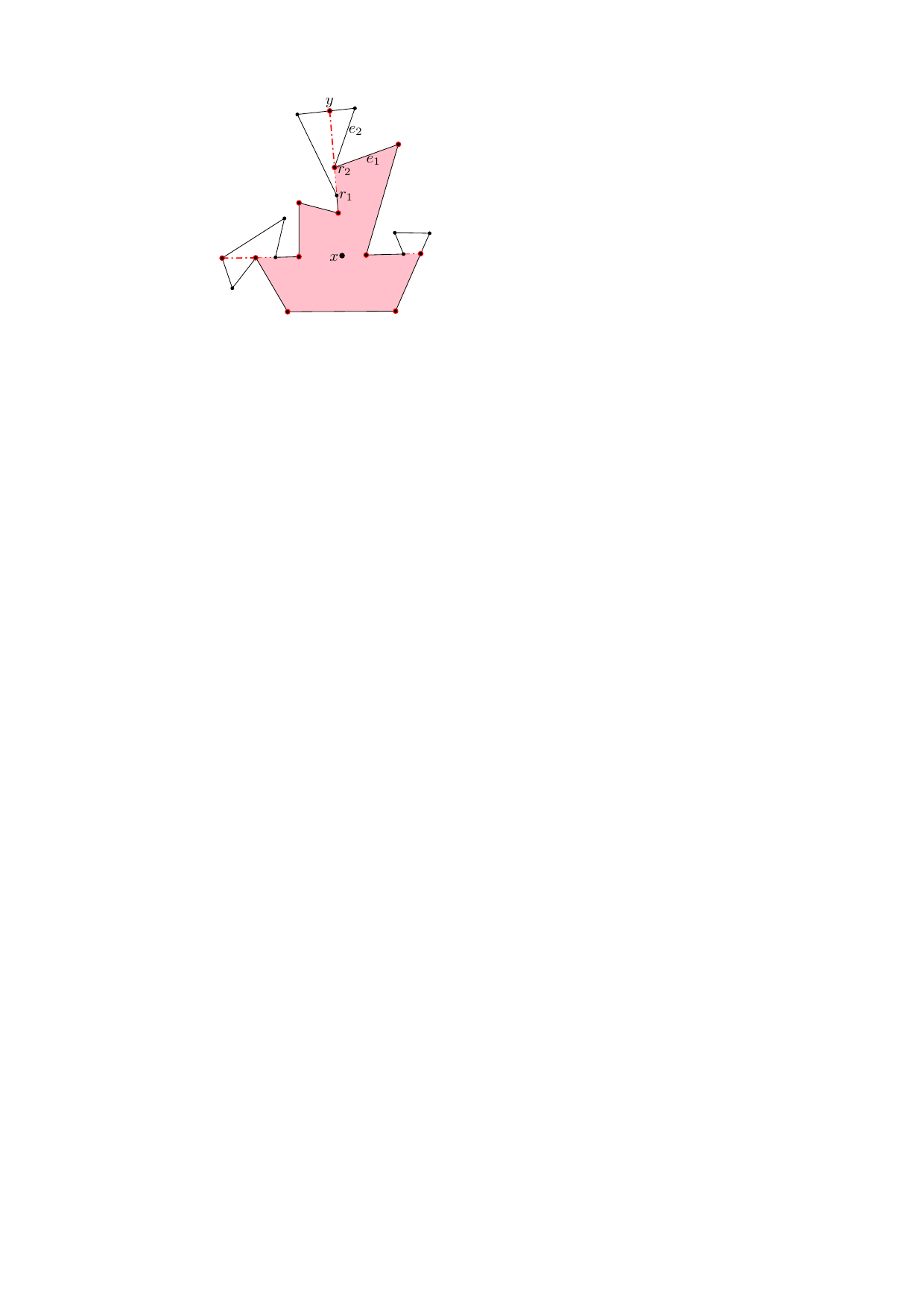}
        \subcaption{$x$ sees both reflex vertices $r_1, r_2$.}
        \label{polygonal-arm1}
    \end{subfigure}
    \hspace{8mm}
    \begin{subfigure}[b]{0.35\textwidth}
        \centering
        \includegraphics[width=\textwidth]{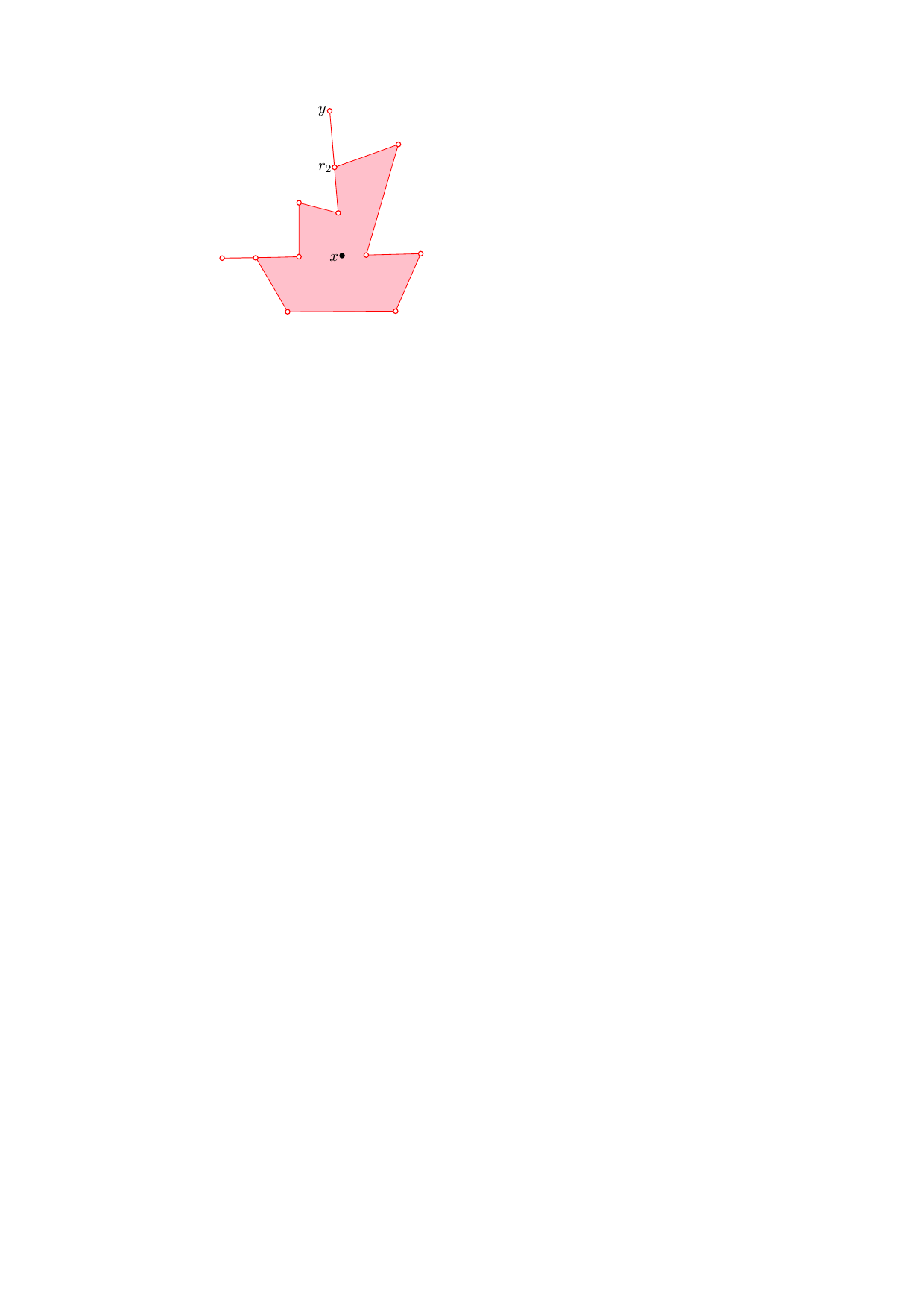}
        \subcaption{ $\overline{r_2y}$ is a \em{polygonal arm}.}
        \label{polygonal-arm2}
    \end{subfigure}
    \caption{An   instance where the visibility region of a point $x$  is not a  simple polygon.} 
    \label{visreg}
\end{figure}
}

\ifthenelse{\boolean{shortver}}{}{
We now motivate the definition of \textit{polygonal arms} of a visibility region $\vis(x)$. We will show an instance where the visibility region of a point $x$ in a simple polygon $\po$ might not be a simple polygon. In the following paragraph, we try to give an example of such a scenario. It is important to note that we are just trying to create an example, i.e., an example of a simple polygon $\po$ and point $x \in \po$ to show the origin of a \textit{polygonal arm} of a visibility region. It is not necessarily true for any arbitrary point $x$ inside a simple polygon $\po.$  So, let $x \in \po$ be a point such that it satisfies the following properties: (i) $x$ sees two distinct reflex vertices of $\po, ~r_1$ and $r_2$; (ii) the pairs of edges containing $r_1$ and $r_2$ lies on either side of the line joining $x, ~r_1$ and $r_2$ (See \Cref{visreg} for an illustration). Furthermore, $r_1$ and $r_2$ are the two closest reflex vertices respectively from $x$ on the line $\overline{xr_1r_2}$; and,(iii) Let the pair of edges that contain the reflex vertex $r_2$ be $e_1$ and $e_2$ respectively. Then neither $e_1$ nor $e_2$ is part of the extended line $\overline{xr_1r_2}$. Now suppose the line $\overline{xr_1r_2}$ hit the boundary of $\po$ on the point $y$ (See \Cref{polygonal-arm1}). Then, the segment $\overline{r_2y}$ is not a part of any polygon in $\vis(x)$. However, this line $\overline{r_2y}$ is indeed visible from $x$ and hence a part of $\vis(x)$. So, we observe that $\vis(x)$ doesn't always need to be a simple polygon in $\po$ for any $x \in \po$. Therefore, the above scenario leads us to the following definition. For a simple polygon $\po$ and a point $x \in \po$, if a line segment or an edge is visible from $x$ such that it is not a part of the simple polygon inside $\vis(x)$, we call that line segment/edge a \textit{polygonal arm} of $\vis(x)$. In \Cref{polygonal-arm2}, the line segment/edge $\overline{r_2y}$ is a \textit{polygonal arm} of $\vis(x)$.
}


\medskip
\noindent{\bf Visibility.} Let $\po=(V,E)$ be a simple polygon. We say that a point $p$ {\em sees} (or is {\em visible} to) a point $q$ if every point of the line segment $\overline{pq}$ belongs to  $\po$ (including the boundary). More generally, a set of points $S$ {\em sees} a set of points $Q$ if every point in $Q$ is seen by at least one point in $S$. Note that if a point $p$ sees a point $q$, then the point $q$ sees the point $p$ as well. Moreover, a vertex $v\in V$ necessarily sees itself and its two neighbors in $V(\po)$. For a point $x \in \po$ we use $\vis(x)$, to denote the visibility region for the point $ x $, defined by $\vis(x) = \{ q \in \po~|~q$ is visible from $x\}$. For a subset  $A \subset \po$, we define $\vis(A) \coloneqq \bigcup_{x \in A} \vis(x)$. For a set of points $ F $ in $ \po $, the \textit{visibility intersection graph} corresponding to $ F $ is denoted by $\vig(F)$ and defined as follows: the points in $F$ correspond to the  vertices of the graph, and there is an edge between a pair 
		of vertices $a,b \in F$ in $\vig(F)$ if and only if $\vis(a) \cap \vis(b) \neq \emptyset$.

\medskip 

\noindent Throughout the paper, we use $ \mo $ to denote an $ x$-monotone polygon.  The notation $\widetilde{G}$  denotes the complement graph of $G$, i.e.,  $V(\widetilde{G}) = V(G)$ and $uv \in E(\widetilde{G})$ if and only  $uv \notin E(G)$.

\section{{\sc Discrete Witness Set} in a Simple Polygon}\label{sec-outerstring}

This section aims to demonstrate that the {\sc Discrete Witness Set} within a simple polygon can be solved in polynomial time (\Cref{theo:disouter}). Initially, we prove that the visibility intersection graph in a simple polygon is an outerstring graph  \ifthenelse{\boolean{shortver}}{}{(\Cref{sigouterstring})}. With the use of the known result \ifthenelse{\boolean{shortver}}{}{(\Cref{prop:miso})} by Keil et al.~\cite{DBLP:journals/comgeo/KeilMPV17} that a maximum independent set can be found in polynomial time (in $|N|$) for outerstring graphs that is given with an intersection model consisting of polygonal arcs with a total of $|N|$ segments, we achieve our objective, which is the following.

\theo*

\ifthenelse{\boolean{shortver}}{}{
 We start with some basic definitions that we need.

\begin{definition}[Primary edge and Window]\label{def:pri}
	{\em Consider a polygon $\po$ and a point  $x \in \po$.  We call an edge of $\vis(x)$ a {\em primary edge} if that edge coincides entirely with the boundary of $\po$, otherwise we call the edge a \textit{window} of  $x$.  Among the two end-points of a window, the one that is closest to $x$ is called a \textit{base} and the other is called an \textit{end}.}
\end{definition} 

\noindent   For a point $x\in \po$, we define $\min_{pr}(x)$ to be the length of the shortest length primary edge(s) in $\vis(x)$. See  \Cref{primaryyyy} for an illustration of \Cref{def:pri}, where notice that  $e_3$ is a {\em polygonal arm} and $e_4, e_5$ have some parts which are not part of $\bd(\po)$. Observe that a polygonal arm is also a window of $x$. We say a point $\alpha$ is collinear to a line segment $l$, if any two points of $l$ and $\alpha$ are collinear.  i.e., if we extend $l$ then it passes through $\alpha$. It is easy to observe that every window of point $x$ is collinear with $ x $ (also mentioned in \cite{DBLP:journals/comgeo/BoseLM02}). However, a connection exists between a primary edge and a polygonal arm.

\begin{figure}[ht!]
	\centering
	\begin{subfigure}[b]{0.35\textwidth}
		\centering
		\includegraphics[width=\textwidth]{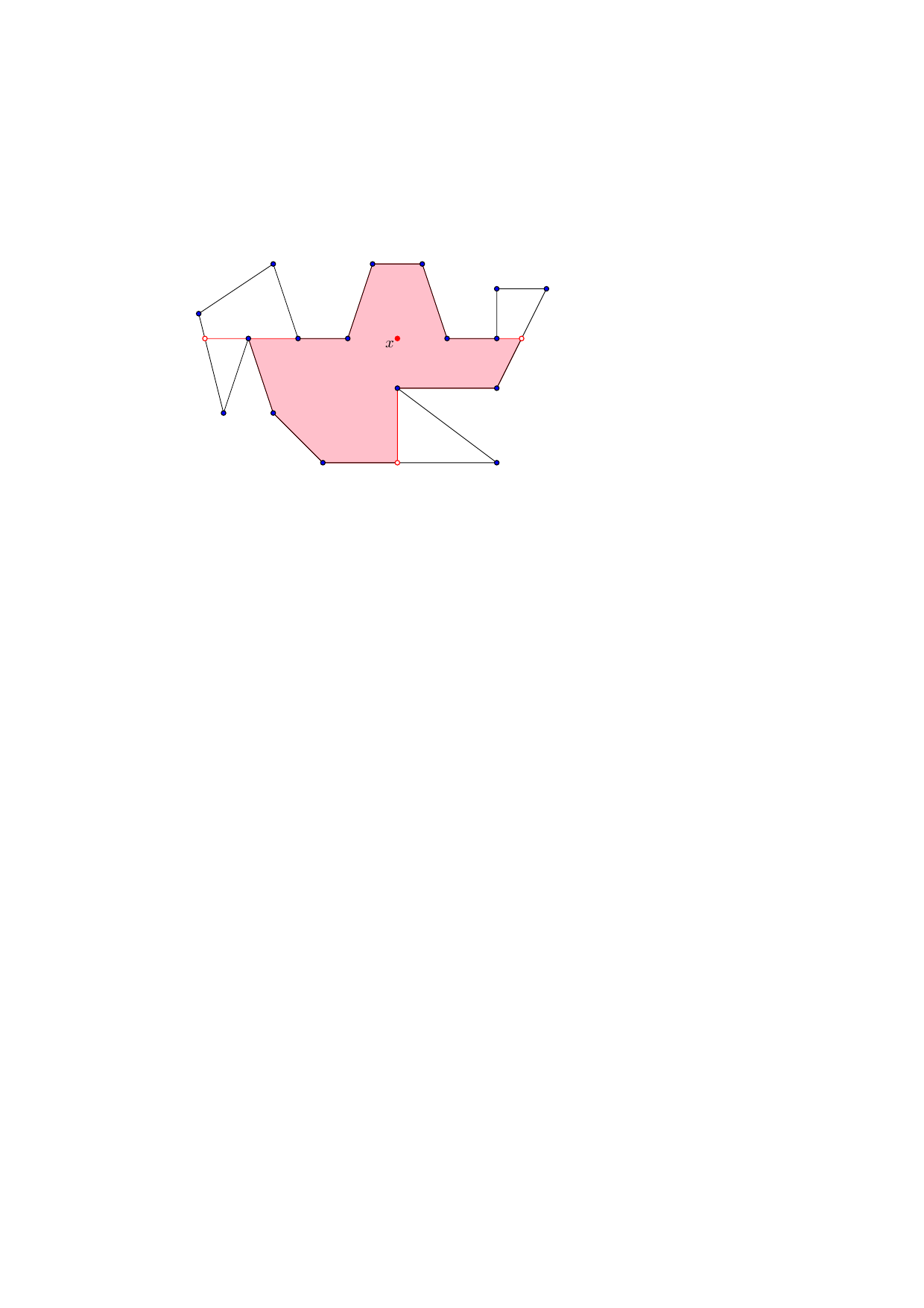}
		\subcaption{}
	\end{subfigure}
	\hspace{8mm}
	\begin{subfigure}[b]{0.35\textwidth}
		\centering
		\includegraphics[width=\textwidth]{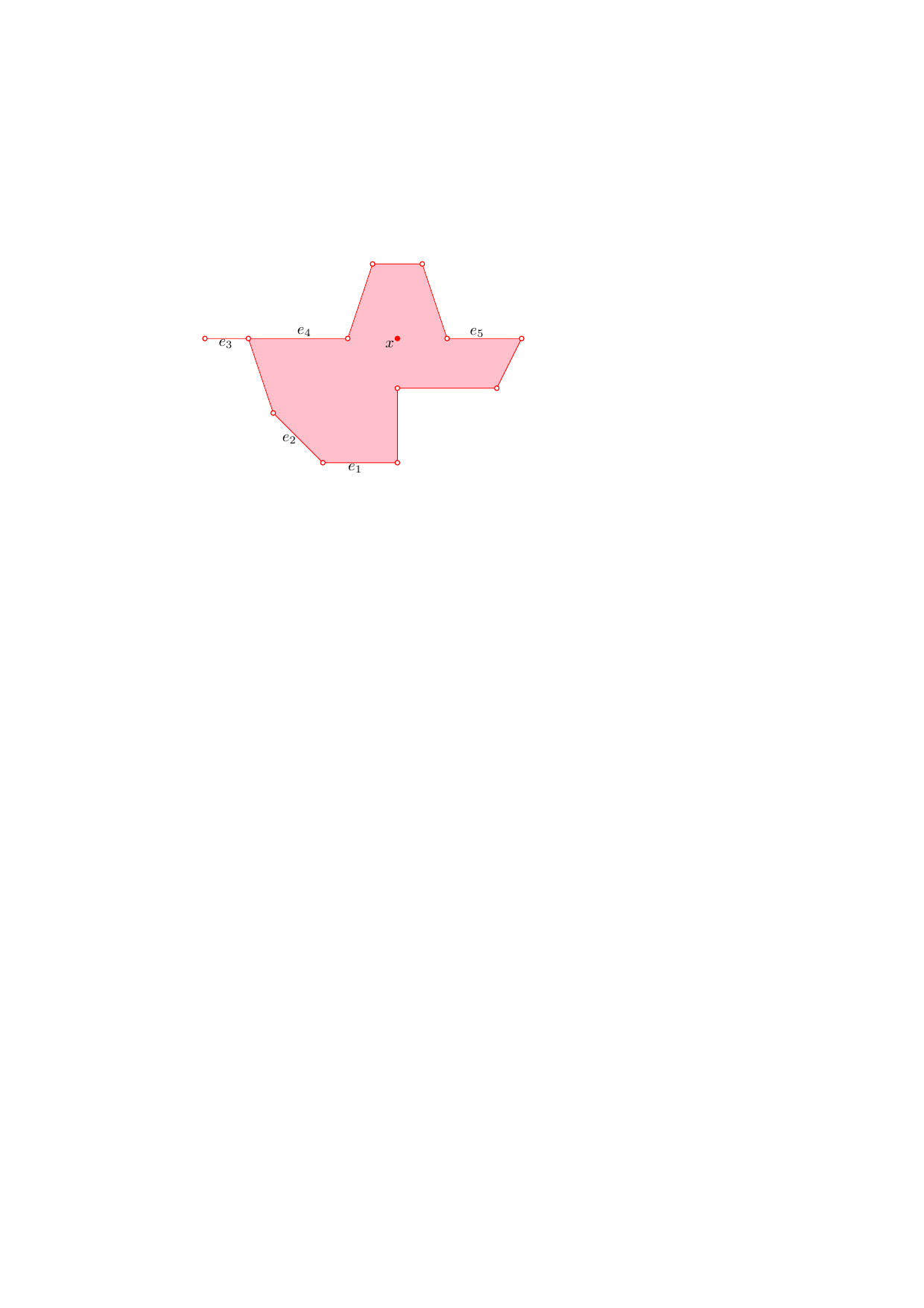}
		\subcaption{}
		\label{primaryyyy}
	\end{subfigure}
	\caption{Example of primary edges ($e_1, e_2$) in $\vis(x)$ and windows ($e_3, e_4, e_5$) of $x$.}
	\label{primaryedgeexample}
\end{figure}

\begin{observation}\label{primary_not_arm}
	For a point $x\in \po$, $\vis(x)$ contains a primary edge, i.e., not all edges of $\vis(x)$ can be windows.
\end{observation}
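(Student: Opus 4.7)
The plan is to argue by contradiction. Suppose every edge of $\vis(x)$ is a window. The paper already notes that every window of $x$ lies on a line through $x$; hence under this assumption every edge of $\bd(\vis(x))$ is a segment of such a line, and in particular all points of any single edge have the same angular direction from $x$ (they lie on a common ray out of $x$, since the edge does not pass through $x$).

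The main tool I would use is that $\vis(x)$ is star-shaped with star centre $x$: for $x$ in the interior of $\po$ (the case $x \in \bd(\po)$ is handled analogously, with the interior angle of $\po$ at $x$ playing the role of $2\pi$), a small open disc around $x$ lies inside $\po$ and therefore inside $\vis(x)$, so $x$ is an interior point of the two-dimensional region of $\vis(x)$. Traversing $\bd(\vis(x))$ once as a closed curve, the angular direction from $x$ is a continuous function of the parameter, and its total change is $\pm 2\pi$ precisely because $x$ is an interior point. However, along any window edge this angular direction is constant, so each window contributes $0$ to the total change. If every edge of $\vis(x)$ were a window, the total change would be $0$, contradicting $\pm 2\pi$. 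Hence at least one edge of $\vis(x)$ must coincide with $\bd(\po)$, i.e., must be a primary edge.

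The main technicality I expect to handle is the presence of polygonal arms of $\vis(x)$ and the corner case $x \in \bd(\po)$. In both settings I would restrict the angular-sweep argument to the two-dimensional region of $\vis(x)$ containing a small neighbourhood of $x$, whose boundary still has positive total angular change; the same contradiction then goes through, while polygonal arms (being themselves windows) do not interfere with the count. An entirely concrete alternative proof is as follows: take $p \in \bd(\po)$ at minimum Euclidean distance from $x$; the segment $\overline{xp}$ is contained in $\po$ (otherwise it would exit $\bd(\po)$ at a point closer than $p$), so $p \in \vis(x)$, and a short continuity argument shows that a positive-length sub-arc of $\bd(\po)$ containing $p$ is also visible from $x$, which directly exhibits a primary edge of $\vis(x)$.
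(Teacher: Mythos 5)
Your proposal is correct, and it actually contains two arguments, one of which matches the paper in spirit and one of which is genuinely different. The paper's proof isolates the two-dimensional polygon $\po_x$ inside $\vis(x)$, asserts that any polygon with nonempty interior has three edges whose supporting lines are not concurrent, and concludes from the fact that all windows are collinear with $x$; your winding-number argument is essentially a quantitative proof of that same concurrency obstruction, anchored at $x$: windows contribute zero angular change while the boundary of a region having $x$ in its interior must sweep $\pm 2\pi$. What this buys is an actual justification of the step the paper merely asserts, at the modest cost of a case split when $x \in \bd(\po)$ (where $x$ is no longer interior to $\vis(x)$), whereas the paper's concurrency fact does not care where $x$ sits relative to $\po_x$. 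Your alternative argument via the nearest boundary point $p$ is a different route altogether --- constructive rather than by contradiction, exhibiting an explicit visible sub-arc of $\bd(\po)$ --- and it is sound; the only detail worth making explicit is that the edge of $\vis(x)$ containing that sub-arc cannot extend off $\bd(\po)$, which holds because such an extension would have to be a window and hence collinear with $x$, while the supporting line of the edge through $p$ stays at distance $\min_{q\in\bd(\po)}|xq|>0$ from $x$ (with the usual separate check when $p$ is a vertex). Both routes are at or above the level of rigor of the paper's own four-line proof.
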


\ifthenelse{\boolean{shortver}}{}{
\begin{proof}
	Clearly $\vis(x)$ consists of a polygon (say $\po_x$) and maybe some polygonal arms (see \Cref{visreg}). $\po_x$ is a connected region with nonempty interior bounded by line segments in $\po$. So, it must contain at least three edges that, when extended, do not pass through the same point. But, since any window of $x$ is collinear to $x$, we must have an edge that is not a window, that is, a primary edge. So we can conclude that $\po_x$ must have a primary edge and no edge of $\po_x$ is a polygonal arm of $\vis(x)$.
\end{proof}}

	
	

\begin{definition}[String, String-like]
	{\em A {\em string} is a polygonal curve that does not cross itself. A {\em string-like} structure refers to a combination of strings $\mathcal{S}$ that satisfies a certain condition. Specifically, there exists a common string $\lambda \in \mathcal{S}$ such that none of the strings in the set $\mathcal{S} \setminus \{\lambda\}$ intersect with one another, and each string in $\mathcal{S} \setminus \{\lambda\}$ has exactly one endpoint in $\lambda$. In other words, by \textit{string-like}, we mean a structure that is a string or a string with branches.}  
\end{definition}

\paragraph{String-like $\boldsymbol{(\str^*(w)})$ Creation on $\boldsymbol{ \vis(w)}$.} For every input point $w \in \po$, our goal is to create a unique  string denoted by $ \str(w) $. To that end, we first create two structures consecutively associated with the point $w$, which are string-likes and we denote as $ \str^*(w) $. Let us first look at the visibility region $ \vis(w) $ and the topological boundary of the region, denoted by $ \bd(\vis(w)) $. We arbitrarily choose a primary edge, say $e$, of $\vis(w)$ and then we break $ \bd(\vis(w)) $ on $ e $ by cutting a very tiny open part of length $\epsilon$ where $0<\epsilon<\min_{pr}(w)$, so that it forms a  string-like $\str^*(w)$ corresponding to $ w $, starting at a point, say $v$, of $\bd(\vis(w))$ (see \Cref{viswist}). We call $v$  the origin of $\str^*(w)$, and by $\epsilon$-gap we mean the gap created by cutting that open segment of length $\epsilon$. We make a restriction here by not choosing any vertex of $\vis(w)$ as the origin of $\str^*(w)$, we can do this because the length of any primary edge is strictly greater than $\epsilon$. Also, for creating $\epsilon$-gap, we do not choose any polygonal arm of $\vis(w)$ to cut from, as the existence of a primary edge, which is not a polygonal arm, is ensured by \Cref{primary_not_arm}.

\begin{figure}[!ht]
	\centering
	\includegraphics[width=0.3\linewidth]{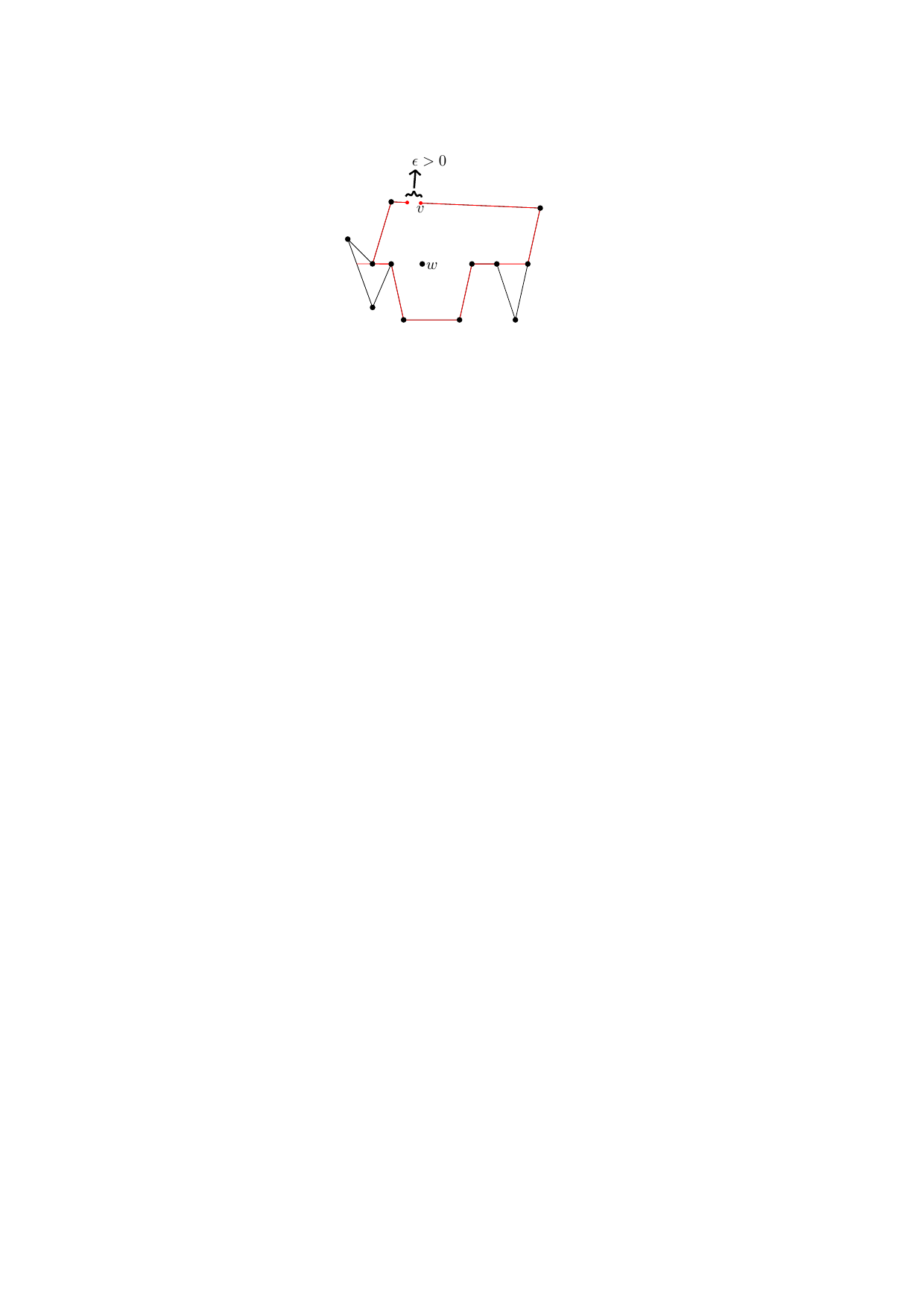}
	\caption{$\str^*(w)$ (red) is formed from $ \bd(\vis(w)) $ by cutting   at a primary edge by a tiny length $\epsilon$. }
	\label{viswist}
\end{figure}

\noindent Consider that we are given a simple polygon $ \po $ and a set $F=\{w_1,w_2, \cdots ,w_m\}$ of $ m $ points in $ \po $. Our main goal is to  construct  a unique  string corresponding to  each point in $ F $  such that the following property is satisfied: for each pair of points $ x,y \in F $, $$\vis(x) \cap \vis(y) \neq \emptyset~\text{if and only if} ~  \str(x) \cap \str(y) \neq \emptyset$$ 
To this end, we follow the three-step procedure described below.
\begin{description}
	\item[Step 1. (String-like formation):] 
	Our construction requires that each string-like originates at a unique point of $\bd(\po)$. In the worst-case scenario, we might have that all the string-likes originate at one particular common primary edge. So we want to choose a common $\epsilon$ in such a way that we have enough space to create multiple $\epsilon$-gaps from one various primary edge. Therefore, we choose $\epsilon$ so that $0 < \epsilon<\frac{1}{m}\min_{w\in F}(min_{pr}(w))$.
	
	\begin{figure}[ht!]
		\centering
		\begin{subfigure}[b]{0.3\textwidth}
			\centering
			\includegraphics[width=\textwidth]{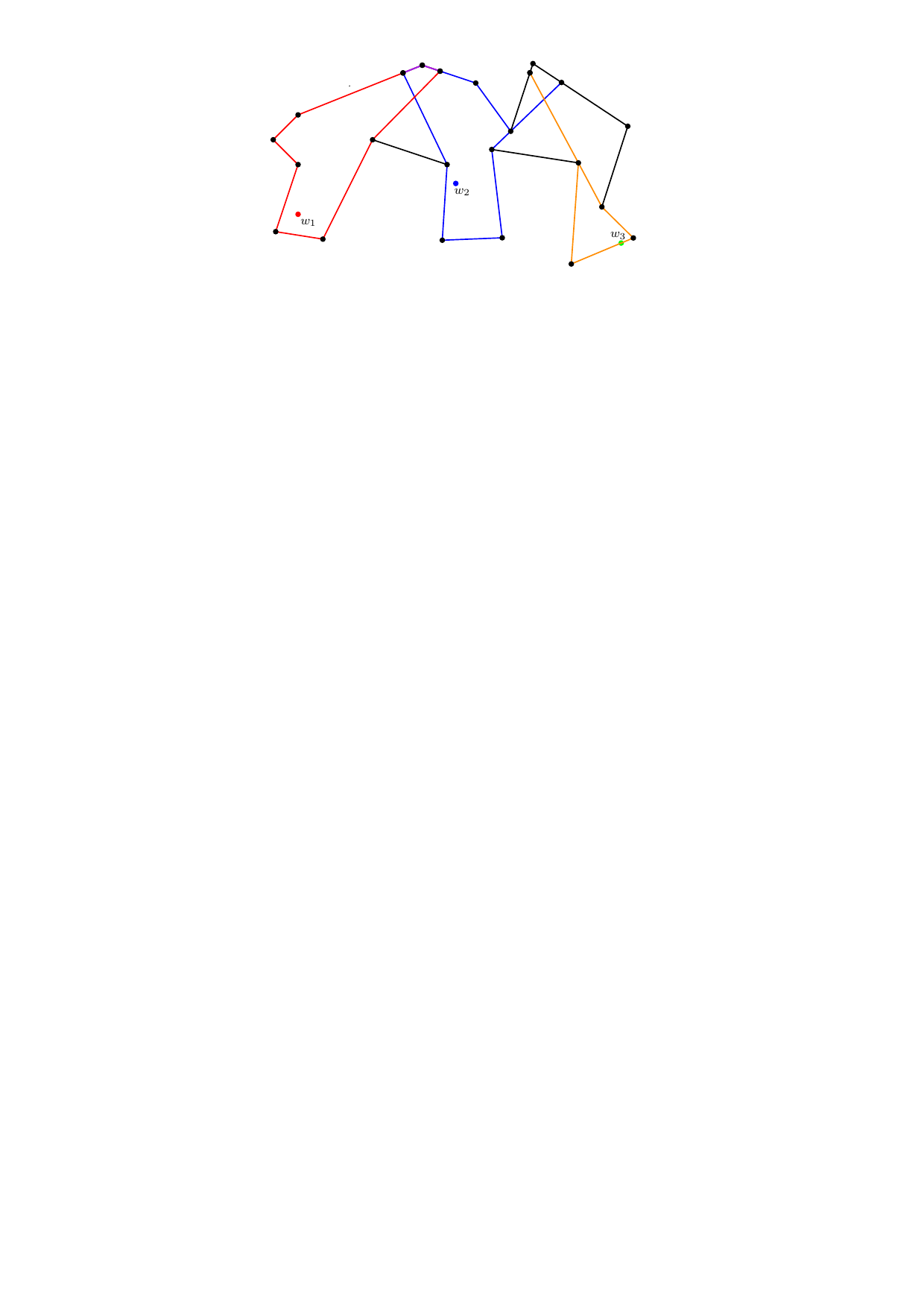}
			\subcaption{$\vis(w_i), i \in [3]$}
			\label{str1}
		\end{subfigure}
		\hspace{8mm}
		\begin{subfigure}[b]{0.5\textwidth}
			\centering
			\includegraphics[width=\textwidth]{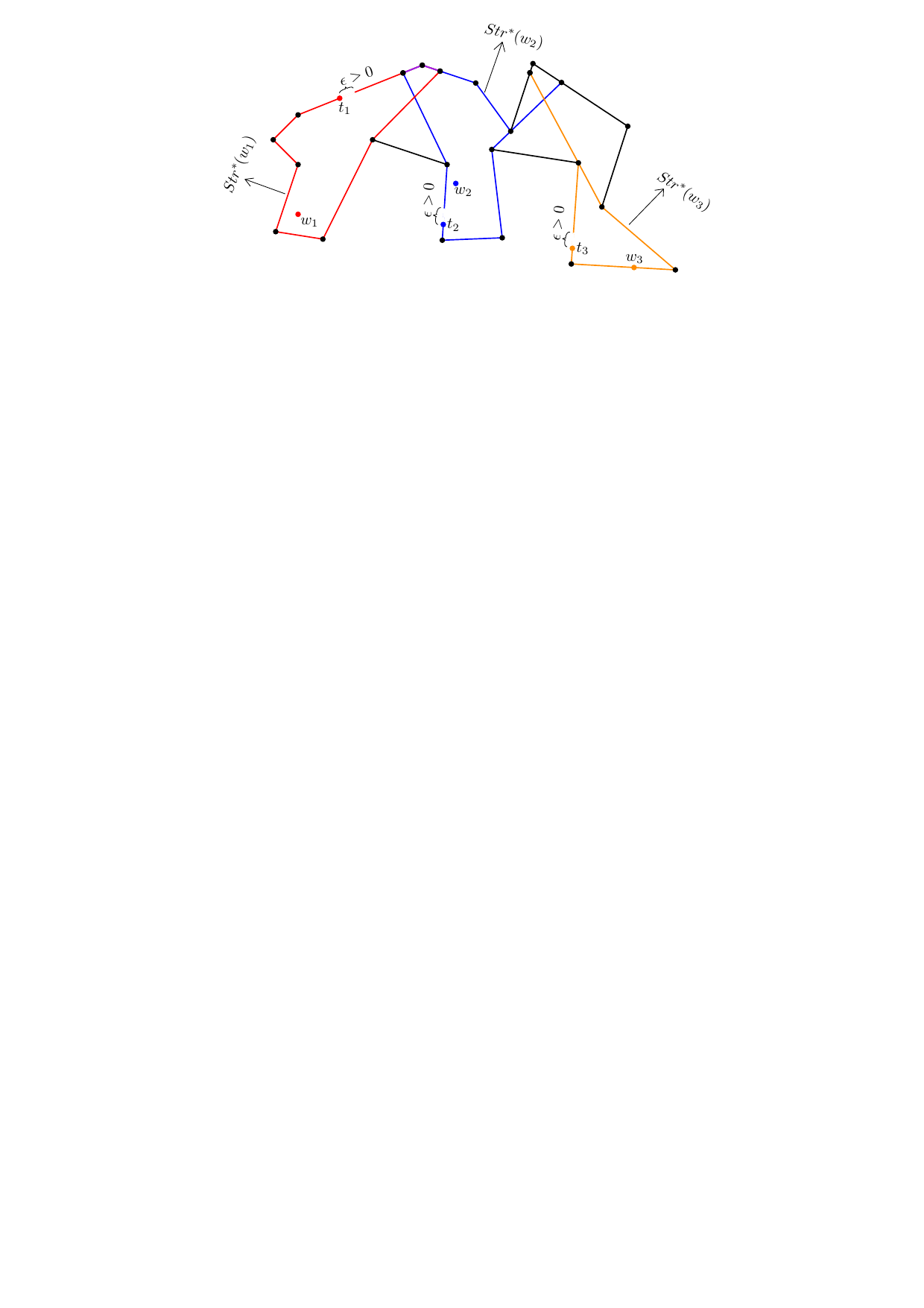}
			\subcaption{$\str^*(w_1), \str^*(w_2)$ and $\str^*(w_3)$.}
			\label{str2}
		\end{subfigure}
		\caption{Example of string-like  formations for $w_1, w_2, w_3$ from their visibility regions..}
		\label{str*formation}
	\end{figure}
	
	Now for each $w_i\in F$, we choose a primary edge of $\vis(w_i)$ and construct the string-like, $\str^*(w_i)$ by creating an $\epsilon$-gap on that edge (See \Cref{str*formation}). 
	The choice of $\epsilon$ ensures that it is possible to create all the string-likes together without overlapping any two $\epsilon$-gaps as $m\epsilon<\min_{w_{j}\in F}(min_{pr}(w_j))<min_{pr}(w_i)<$ \textit{length of any primary edge of} $\vis(w_i)$. We denote the  origin of $\str^*(w_i)$ by $t_i$. The following observation is immediate.
	
\end{description}

\begin{observation} \label{obs1stp1}
	For each $w \in F$ the length of any primary edge of $\vis(w)$ is more than $\epsilon.$
\end{observation}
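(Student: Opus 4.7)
The plan is to derive the observation as an almost immediate consequence of the threshold that Step~1 imposes on $\epsilon$. Recall that $\epsilon$ is chosen so that
\[
0 < \epsilon < \tfrac{1}{m}\min_{w' \in F}\min_{pr}(w'),
\]
and that $\min_{pr}(w')$ is, by the definition introduced right after \Cref{def:pri}, the length of the shortest primary edge of $\vis(w')$. Since \Cref{primary_not_arm} guarantees that $\vis(w')$ always contains at least one primary edge, the quantity $\min_{pr}(w')$ is well defined and strictly positive for every $w' \in F$, so taking a minimum over the finite set $F$ of size $m$ yields a positive real number.

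First I would fix an arbitrary $w \in F$ and note that, simply by the definition of the minimum over $F$, one has $\min_{w' \in F}\min_{pr}(w') \le \min_{pr}(w)$. Combining this with the upper bound on $\epsilon$ gives $m\epsilon < \min_{pr}(w)$; and since $m \ge 1$, this implies the strict inequality $\epsilon < \min_{pr}(w)$.

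Next, let $e$ be an arbitrary primary edge of $\vis(w)$. By the definition of $\min_{pr}(w)$ as the length of the \emph{shortest} primary edge of $\vis(w)$, the length of $e$ is at least $\min_{pr}(w)$. Chaining this with the previous inequality, the length of $e$ is strictly greater than $\epsilon$, which is exactly the statement of the observation.

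I do not anticipate any substantive obstacle: the claim is a purely definitional unwinding, and the multiplicative slack factor of $1/m$ built into the choice of $\epsilon$ gives plenty of room. The only subtle point worth flagging is why we must invoke the global minimum $\min_{w'\in F}\min_{pr}(w')$ rather than just $\min_{pr}(w)$: Step~1 commits to a single $\epsilon$ that must serve all witnesses simultaneously, and this uniform choice is what lets us later place up to $m$ disjoint $\epsilon$-gaps (one per string-like) on a common primary edge without overlap, a fact that will be used in subsequent steps of the construction.
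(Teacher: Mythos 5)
Your argument is correct and is exactly the reasoning the paper intends: the paper treats this observation as immediate from the chain $m\epsilon<\min_{w_{j}\in F}(\min_{pr}(w_j))\le \min_{pr}(w)\le$ length of any primary edge of $\vis(w)$, which is precisely the definitional unwinding you carry out. No difference in approach and no gap.
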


Now define $S_i \coloneqq $ Set of all vertices of $\vis(w_i)$, $S=\bigcup_{i=1}^mS_i$, $T=\{t_1,...,t_m\}$. Let $d_{t_i}=\min_{s\in S}(d(t_i,s))$ and $d=\min_{t_i\in T}d_{t_i}$, where $d(x,y)$ denotes the length of  $\overline{xy}$. 
Since the choice of $\epsilon$ is in our hands, we choose $\epsilon$ such that the value of $\epsilon$ is even smaller than $d$. Let $0<\epsilon<\min\{d,\frac{1}{m}\min_{w\in F}(min_{pr}(w))\}$. The above restriction ensures that for any two points $w_i, w_j\in F$, the distance between the  origin $t_i$ of $\str^*(w_i)$ and any vertex of $\vis(w_j)$ is at least $\epsilon$. Thus, we have the following observation.
\begin{observation}\label{obs-novertexinsideepsilon}
	For any pair of  points, $w_i, w_j\in F$, $\vis(w_j)$ has no vertex which  is  in the $\epsilon$-gap beside the origin of $\str^*(w_i)$ on $\bd(\vis(w_i))$.
\end{observation}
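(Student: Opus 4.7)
The plan is to unpack the observation directly from the inequality $\epsilon < d$ that was imposed when $\epsilon$ was chosen, together with the definition of the $\epsilon$-gap. Since the statement is essentially a bookkeeping consequence of the construction, there is no serious obstacle; the work is just to keep the quantifiers straight.

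First, I would record what the $\epsilon$-gap beside the origin $t_i$ actually is: it is an open segment of length $\epsilon$ lying on some primary edge of $\vis(w_i)$, having $t_i$ as one of its two endpoints. Consequently, for every point $q$ that lies in this open gap,
\[
d(t_i, q) \;<\; \epsilon.
\]
This is the only geometric fact about the gap that we need.

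Second, I would invoke the definition of $d$. Fix any point $w_j \in F$ and any vertex $s$ of $\vis(w_j)$. Then $s \in S_j \subseteq S$, so by the definitions $d_{t_i} = \min_{s' \in S} d(t_i, s')$ and $d = \min_{t_\ell \in T} d_{t_\ell}$ we have
\[
d(t_i, s) \;\ge\; d_{t_i} \;\ge\; d.
\]
Combining this with the choice $0 < \epsilon < \min\{d,\; \tfrac{1}{m}\min_{w\in F}\min_{pr}(w)\}$, we get $d(t_i, s) \ge d > \epsilon$ while $d(t_i, q) < \epsilon$ for every $q$ in the $\epsilon$-gap. Hence no such vertex $s$ can coincide with any point $q$ in the gap, which is exactly the claim.

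The only edge case worth flagging is the diagonal case $j = i$: one must ensure that $t_i$ itself is not a vertex of $\vis(w_i)$, since otherwise $d(t_i, s) = 0$ for $s = t_i$ and the argument collapses. This was explicitly avoided in Step~1 of the construction, where the origin of $\str^*(w_i)$ was forbidden to be a vertex of $\vis(w_i)$; in particular $t_i \notin S_i$, so the bound $d(t_i, s) \ge d$ is nontrivial for every $s \in S$. With that small check in place, the observation follows immediately, and this is really the only point at which one has to be careful.
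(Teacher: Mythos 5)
Your proof is correct and matches the paper's reasoning: the paper states this observation as an immediate consequence of choosing $0<\epsilon<\min\{d,\frac{1}{m}\min_{w\in F}(\min_{pr}(w))\}$, and you simply spell out the two inequalities ($d(t_i,q)<\epsilon$ for points $q$ in the gap, $d(t_i,s)\ge d>\epsilon$ for vertices $s\in S$) that make it work. Your added check that $t_i$ is not itself a vertex is a reasonable bit of extra care consistent with the restriction the paper imposes in Step~1.
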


Now we prove the following claim.

\begin{clm}\label{3rd_equivalance}
	For  $w_i,w_j\in F$, we have $\vis(w_i)\cap \vis(w_j)\neq \emptyset$ if and only if $\str^*(w_{i}) \cap \str^*(w_{j}) \neq \emptyset$.
\end{clm}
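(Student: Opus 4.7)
The plan is to prove the two directions separately. The reverse direction is immediate because $\str^*(w) \subseteq \bd(\vis(w)) \subseteq \vis(w)$ for every $w \in F$, so any point of $\str^*(w_i) \cap \str^*(w_j)$ is automatically a point of $\vis(w_i) \cap \vis(w_j)$.

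For the forward direction, I will assume $\vis(w_i) \cap \vis(w_j) \neq \emptyset$ and split into a containment case and a non-containment case. When one region is contained in the other, say $\vis(w_i) \subseteq \vis(w_j)$, I will pick any primary edge $e$ of $\vis(w_i)$, which exists by \Cref{primary_not_arm}. Since $e \subseteq \bd(\po) \cap \vis(w_j)$ and the primary edges of $\vis(w_j)$ are precisely the maximal connected components of $\bd(\po) \cap \vis(w_j)$, the whole segment $e$ sits inside a single primary edge of $\vis(w_j)$, so $e \subseteq \bd(\vis(w_i)) \cap \bd(\vis(w_j))$. Denoting by $G_i, G_j$ the two open $\epsilon$-gaps cut while forming $\str^*(w_i)$ and $\str^*(w_j)$, the choice $\epsilon < \frac{1}{m}\min_{w \in F} \min_{pr}(w)$ from Step 1, combined with \Cref{obs1stp1}, yields $|e| > m\epsilon \geq 2\epsilon \geq |G_i|+|G_j|$ for $m \geq 2$ (the case $m=1$ is vacuous), so $e \setminus (G_i \cup G_j)$ is non-empty and is contained in $\str^*(w_i) \cap \str^*(w_j)$.

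The non-containment case is harder. Here I will first invoke a standard planar-topology argument, tracing $\bd(\vis(w_i))$ between a point inside $\vis(w_j)$ and one outside it, to conclude that $\bd(\vis(w_i)) \cap \bd(\vis(w_j)) \neq \emptyset$. I then pick any $p$ in this intersection and verify $p \notin G_i \cup G_j$ by case analysis on how the two boundaries meet at $p$. If $p$ is interior to a window of each region, then $p \in \inte(\po)$ while $G_i \cup G_j \subseteq \bd(\po)$. If $p$ lies on a window of one region and on a primary edge of the other, then $p$ must be a window endpoint lying on $\bd(\po)$ and is therefore a vertex of $\vis(w_i)$ or $\vis(w_j)$, so \Cref{obs-novertexinsideepsilon} keeps $p$ out of both gaps. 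If $p$ is interior to primary edges of both regions, the two boundaries locally share a segment whose endpoints are vertices of $\vis(w_i)$ or $\vis(w_j)$, and \Cref{obs-novertexinsideepsilon} again places these endpoints outside $G_i \cup G_j$.

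The main obstacle will be the third subcase above: the shared boundary segment may be very short, possibly shorter than $\epsilon$, so a pure length comparison cannot rescue the intersection. The rescue is \Cref{obs-novertexinsideepsilon}, which forces the endpoints of the shared segment (each of them a vertex of one of the two visibility regions) to stay outside the $\epsilon$-gaps regardless of how small the segment is; these vertex endpoints are precisely the common points that survive in $\str^*(w_i) \cap \str^*(w_j)$ and close out the argument.
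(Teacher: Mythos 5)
Your proposal is correct and follows essentially the same route as the paper's proof: both reduce the forward direction to the non-emptiness of $\bd(\vis(w_i)) \cap \bd(\vis(w_j))$ and then invoke \Cref{obs1stp1} and \Cref{obs-novertexinsideepsilon} to show that the two $\epsilon$-gaps cannot absorb every common boundary point, with your write-up simply being more explicit (separating the containment case and doing a case analysis on how the boundaries meet). One local repair is needed in your window/window and window/primary-edge subcases: a point interior to a window need not lie in $\inte(\po)$, and a window meeting a primary edge of the other region need not do so at a window endpoint, because windows may partially run along $\bd(\po)$ (cf.\ $e_4, e_5$ in \Cref{primaryedgeexample}); the intended conclusions still hold, since each gap is cut from a primary edge of its own region (so a point interior to a window of $\vis(w_i)$ cannot lie in $G_i$) and any stretch along which both boundaries follow $\bd(\po)$ terminates at vertices of one of the two visibility regions, which \Cref{obs-novertexinsideepsilon} keeps out of the gaps.
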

\begin{proof}
	In the forward direction, we show that $\vis(w_i) \cap \vis(w_j) \neq \emptyset$ implies $\str^*(w_{i}) \cap \str^*(w_{j}) \neq \emptyset$. 
	Let $\vis(w_i)\cap \vis(w_j)\neq \emptyset.$ Consider the case where $\str^*(w_{i}) \cap \str^*(w_{j}) = \emptyset.$
	It is easy to observe that $\vis(w_i)\cap \vis(w_j)\neq \emptyset$ implies $\bd(\vis(w_i))\cap\bd(\vis(w_j))\neq \emptyset.$ Now in this case, boundaries of $\vis(w_i)$ and $\vis(w_j)$ intersect but their corresponding string-likes do not. The source of this conflict must lie in the $\epsilon$-gaps introduced into the visibility regions during the construction of the string-likes. This implies that the boundaries of these regions can only intersect within these specific gaps. This can only happen when there exists a primary edge or a vertex of $\vis(w_j)$ strictly contained in the $\epsilon$-gap beside the origin of $\str^*(w_i)$ on $\bd(\vis(w_i))$ or, there exists a primary edge or a vertex of $\vis(w_i)$ strictly contained in the $\epsilon$-gap beside the origin of $\str^*(w_j)$ on $\bd(\vis(w_j))$. Now, there are only a finite number of such $(i,j)$ pairs. If there is a primary edge contained in the $\epsilon$-gap, then the length of that primary edge is less than $\epsilon$, which contradicts \Cref{obs1stp1}. If there is a vertex inside the $\epsilon$-gap, then it contradicts \Cref{obs-novertexinsideepsilon}. Therefore, $\str^*(w_{i}) \cap \str^*(w_{j}) \neq \emptyset$.
	
	In the reverse, let us assume for contradiction that $\vis(w_i) \cap \vis(w_j) = \emptyset$ however, $\str^*(w_{i}) \cap \str^*(w_{j}) \neq \emptyset$. Since these {\em string-likes} are the boundaries of the visibility regions with tiny parts of their edges removed, therefore if $\str^*(w_{i}) \cap \str^*(w_{j}) \neq \emptyset$ then there exists at least one point in the boundaries of $\vis(w_j)$ and  $\vis(w_i)$ where they intersect. Therefore, $\vis(w_i) \cap \vis(w_j) \neq \emptyset$, a contradiction.  This completes the proof.
\end{proof}


\begin{description}
	
	\item[Step 2. (Inflation of $\po$):]
Once the construction of the string-likes is done, we scale up (inflate)  the polygon $\po$ by a tiny factor, say $\delta$, without altering the string-likes. We call this modified polygon $\infp$. {The formal description of inflating the polygon $\po$ is given below.}
            \begin{itemize}
                \item {For each edge $e$ in $\po$, draw a perpendicular to the outside of $\po$.}
                \item  {On each of those perpendiculars drawn on an edge $e$ (say) of $\po$, draw another line $\hat{e}$ parallel to that corresponding edge, at a predetermined tiny distance, $\delta$.}
                \item {This set of new edges, $\{\hat{e} : e ~\text{is an edge in}~ \po \}$, are the edges of the inflated polygon, $\infp$, and the points where any two adjacent such edges meet are the vertices of $\infp$.}
            \end{itemize}
	Now, in $\infp$, for each $w_i\in F $, we modify $\str^*(w_i)$ by joining its  origin $t_i$ with the nearest point of $t_i$ on $\bd(\infp)$. Note that, by our construction, $t_i$ is not any vertex of $\po$, so $t_i$ must have a unique nearest point in $\bd(\infp)$ (the unique nearest point is the intersection point of the perpendicular on $t_i$, and $\bd(\infp)$). We denote the modified string-like by $\str^{**}(w)$ and the point on $\bd(\infp)$ connected to $t_i$ by $s_i$. Hereon, we call $s_i$ as the  origin of the string-like $\str^{**}(w_i).$
	\begin{figure}[ht!]
		\centering
		\includegraphics[width=0.5\linewidth]{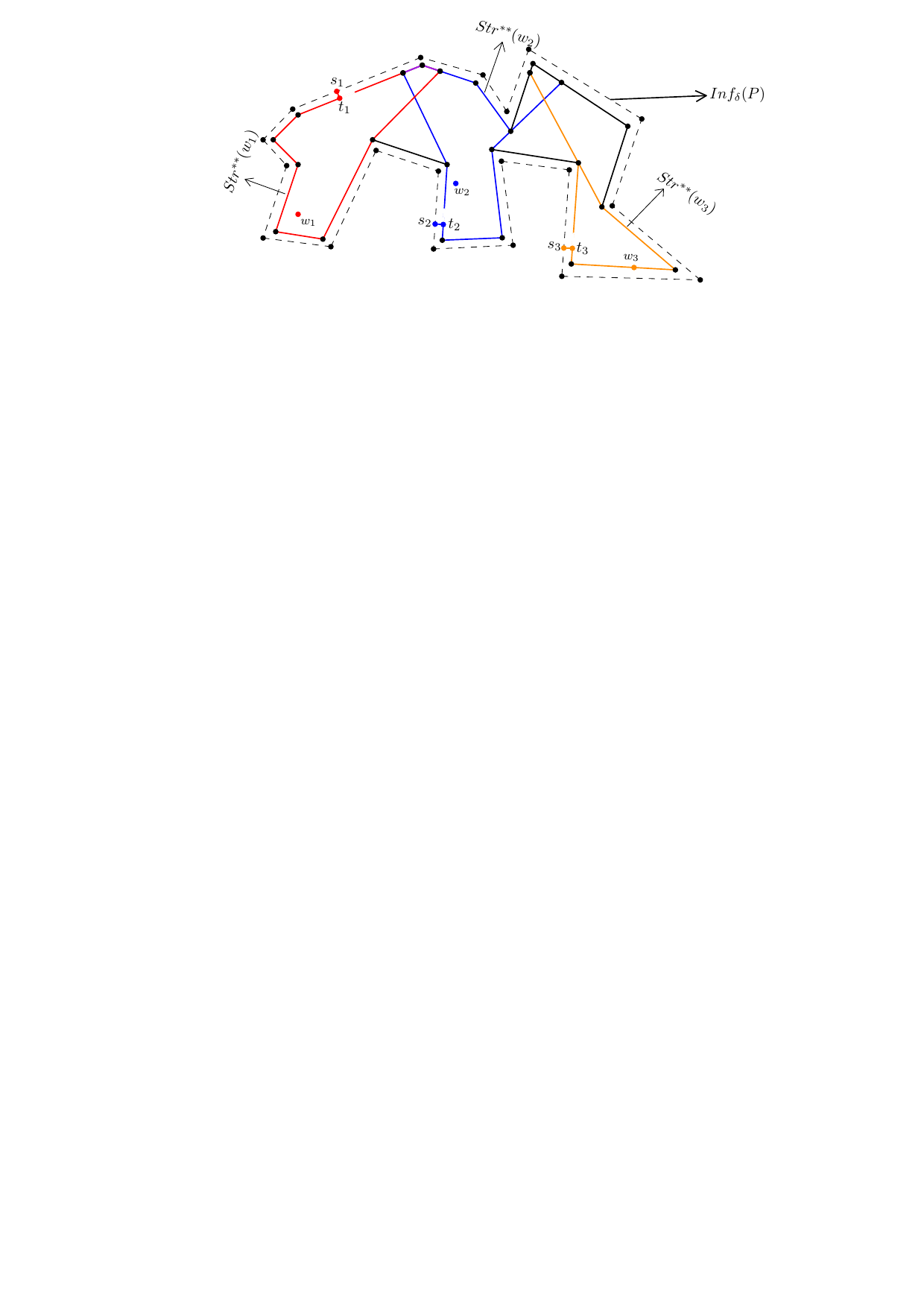}
		\caption{Inflation of $\po$ and $\str^{**}(w)$ formation.}
		\label{str**form}
	\end{figure}
	Observe that the procedure of modifying $\po$ presented above preserves the intersection of string-likes in $\po$, since the string-likes are not perturbed and only $\po$ is modified. The strings that initially intersected in $\po$ still intersected after inflating $\po$. The same goes for the string-likes that do not intersect each other. So we have the following observation, which is immediate.
	
\end{description}

\begin{observation}\label{1st_equivalance}
	$\str^*(w_{i}) \cap \str^*(w_{j}) \neq \emptyset$ if and only if $\str^{**}(w_{i}) \cap \str^{**}(w_{j}) \neq \emptyset$.
\end{observation}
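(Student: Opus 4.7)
The plan is to handle the two directions of the biconditional separately. The key geometric fact is that $\str^{**}(w_i)$ is obtained from $\str^*(w_i)$ by appending exactly one extra segment $\overline{t_i s_i}$, perpendicular to the edge of $\po$ containing $t_i$ and extending outward a distance $\delta$ into the annular region $\infp \setminus \po$. The forward direction is immediate: since $\str^*(w_i) \subseteq \str^{**}(w_i)$ for every $i$, any point of $\str^*(w_i) \cap \str^*(w_j)$ is automatically a point of $\str^{**}(w_i) \cap \str^{**}(w_j)$.

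For the backward direction, I would pick a point $p \in \str^{**}(w_i) \cap \str^{**}(w_j)$ and split on whether $p \in \po$. If $p \in \po$, then since the appended segment $\overline{t_i s_i}$ meets $\po$ only at its endpoint $t_i$ (which already lies on $\str^*(w_i)$ as its origin), we must have $p \in \str^*(w_i)$, and by symmetry $p \in \str^*(w_j)$, so we are done. If $p \notin \po$, then $p$ lies in the open annulus $\infp \setminus \overline{\po}$; since both $\str^*$ parts are entirely contained in $\po$, the only way for $p$ to lie in both $\str^{**}(w_i)$ and $\str^{**}(w_j)$ is for it to lie on both appended perpendicular segments, i.e., $p \in \overline{t_i s_i} \cap \overline{t_j s_j}$. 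The remaining task is to exclude this case.

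Ruling out this last case is the only real content, and it is where the freedom in $\delta$ is exploited. By the disjointness of the $\epsilon$-gaps constructed in Step~1, we have $t_i \neq t_j$. If $t_i$ and $t_j$ lie on a common edge of $\po$, the two perpendiculars are parallel and the appended segments cannot meet; otherwise, the two perpendicular \emph{lines} meet at a single point at some strictly positive distance $d_{ij} > 0$ from $\bd(\po)$. I would then require $\delta$ to be strictly smaller than $\min_{i \neq j} d_{ij}$, a well-defined positive minimum over the finitely many pairs, which forces $\overline{t_i s_i} \cap \overline{t_j s_j} = \emptyset$ for every $i \neq j$. The main (and modest) obstacle is therefore simply verifying that this choice of $\delta$ is compatible with any other constraints already imposed on $\delta$ by the inflation step; since all of them ask only that $\delta$ be sufficiently small, a common sufficiently small $\delta$ suffices.
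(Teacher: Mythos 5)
Your proof is correct, and it is substantially more careful than what the paper offers: the paper simply declares the observation ``immediate'' on the grounds that ``the string-likes are not perturbed and only $\po$ is modified,'' which is not quite accurate, since $\str^{**}(w_i)$ really is $\str^*(w_i)$ with a new perpendicular stub $\overline{t_i s_i}$ appended. Your decomposition --- forward direction by containment, backward direction by splitting on whether the common point lies in $\po$ --- isolates exactly the one nontrivial case the paper glosses over, namely that two appended stubs could cross in the annulus $\infp \setminus \po$. This is a real possibility near a reflex vertex of $\po$, where outward perpendiculars from points on the two incident edges converge, so your additional constraint that $\delta$ be smaller than the minimum positive distance $d_{ij}$ at which any two such perpendicular lines meet outside $\po$ (a finite minimum, compatible with the other smallness requirements on $\delta$) is genuinely needed and not supplied by the paper. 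One small refinement: for pairs whose outward perpendicular rays never meet in the exterior (e.g., stubs on the same edge, or stubs near a convex vertex whose normals diverge), no constraint arises, so the minimum should be taken only over pairs whose outward rays actually intersect; this does not affect the validity of your argument.
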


\begin{description}
	
	\item[Step 3. (String formation):]
	In this step, for each $w_i\in F$, we create a  string, $\str(w_i)$ from the  string-like, $\str^{**}(w_i).$ Notice that, after creating the $\epsilon$-gap on $\bd(\vis(w_i))$, two end-points of $\str^*(w_i)$ were created. One of them was considered as the origin, $t_i$ and let the other end-point be denoted by $s_i'$. Then $s_i'$ is also an endpoint of $\str^{**}(w_i).$ Now, in the string-like $\str^{**}(w_i)$, there is a unique string (excluding the branches) joining the end-points $s_i$ and $s_i'$; we call it $\gamma_i$ (see \Cref{gamma2}). Essentially, $\gamma_i \cup \{branches ~of ~\str^{**}(w_i)\} = \str^{**}(w_i)$. These branches are nothing but the polygonal arms of $\vis(w_i)$. Let $l_i^1, l_i^2, \ldots , l_i^k$ be the  branches of $\str^{**}(w_i)$ and $l_i^j$ is joined with $\gamma_i$ at the point $a_i^j$. Let the other endpoint of $l_i^j$ be $b_i^j$. We denote the branch $l_i^j$ by $\overline{a_i^jb_i^j}$ (since a branch is a polygonal arm of a visibility region, it is a line segment).
	\begin{figure}[ht!]
		\centering
		\begin{subfigure}[b]{0.2\textwidth}
			\centering
			\includegraphics[width=\textwidth]{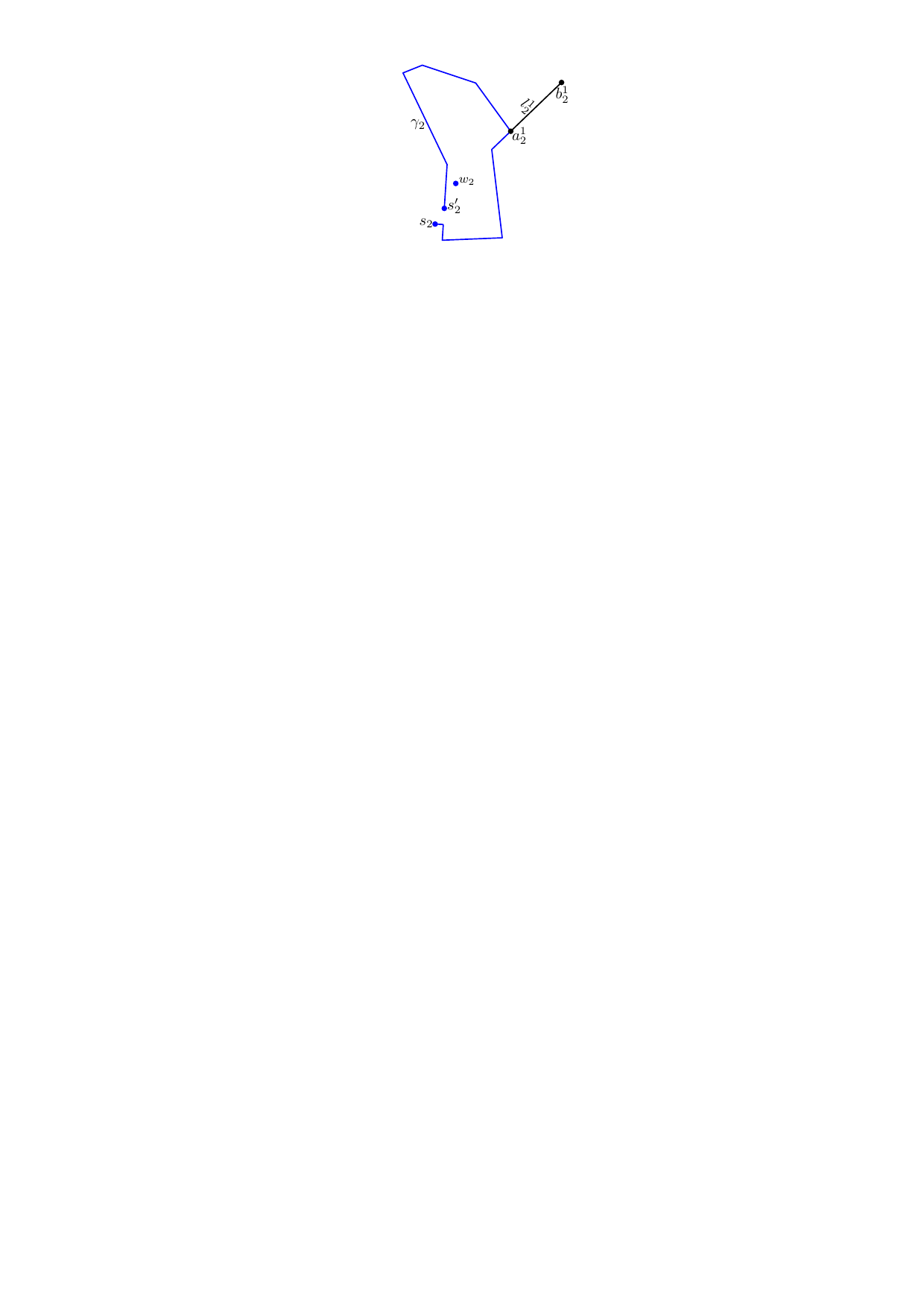}
			\subcaption{$\gamma_2$ (blue string).}
			\label{gamma2}
		\end{subfigure}
		\hspace{8mm}
		\begin{subfigure}[b]{0.5\textwidth}
			\centering
			\includegraphics[width=\textwidth]{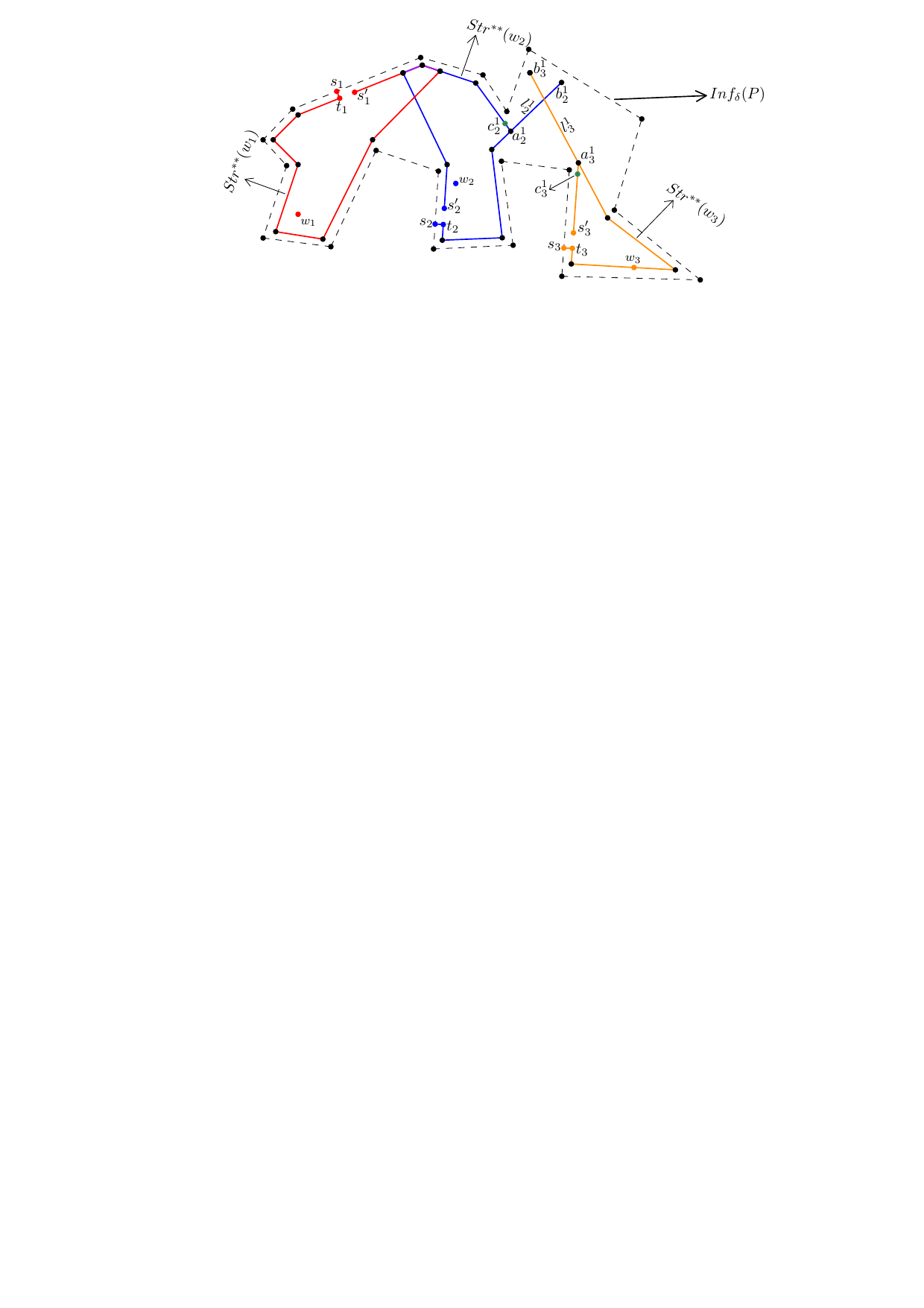}
			\subcaption{$\str^{**}(w_1), \str^{**}(w_2)$ and $\str^{**}(w_3)$.}
			\label{str**withlabeling}
		\end{subfigure}
		\caption{Construction of $\str^{}(w)$ from $\str^{**}(w)$.}
	\end{figure}
	
	Observe that, as there are only a finite number of string-like structures and hence a finite number of branches in total, there exists an open neighborhood $U_i^j$ around $a_i^j$ (the point of intersection of a branch $l_i^j$ and $\gamma_i$) such that for any point $c_i^j \in U_i^j \cap \gamma_i$ (See \Cref{str**withlabeling}), we can replace $\overline{a_i^jc_i^j}$ by $\overline{c_i^jb_i^j}$ (See \Cref{strreplacement}) to create a new string $\str(w_i)$ such that it satisfies the following properties: (i) for all $l$ such that $\str^{**}(w_i) \cap \str^{**}(w_l) \neq \emptyset$ we have $\str(w_i) \cap \str(w_l) \neq \emptyset$; (ii) for all $l$ such that $\str^{**}(w_i) \cap \str^{**}(w_l) = \emptyset$ we have $\str(w_i) \cap \str(w_l) = \emptyset$. We can ensure property (i) because we only have a finite number of string-like structures that intersect with $\str^{**}(w_i)$. So, it cannot happen that for any arbitrarily small neighborhood $U_i^j$, by adding the line $\overline{c_i^jb_i^j}$, it always intersects with a new string-like $\str^{**}(w_{i'})$. Also, we can ensure property (ii) by choosing a small neighborhood $U_i^j$ such that the segment $\overline{a_i^jc_i^j}$ does not intersect with any other string-like structure. In ensuring both properties, the fundamental idea was that the number of string-like structures is finite. We formally describe the construction of $\str(w_i)$ from $\str^{**}(w_i)$ below (see \Cref{strreplacement}). 
	\begin{itemize}
		\item For each $j$, add $\overline{c_i^j b_i^j}$ with $\str^{**}(w_i)$.
		\item For each $j$, delete $\overline{a_i^j c_i^j}$ from $\str^{**}(w_i)$ except the points $a_i^j \text{ and } c_i^j$, i.e., delete the open segment $(a_i^jc_i^j)$.
	\end{itemize}
	
	\begin{figure}[ht!]
		\centering
		\includegraphics[width=0.5\linewidth]{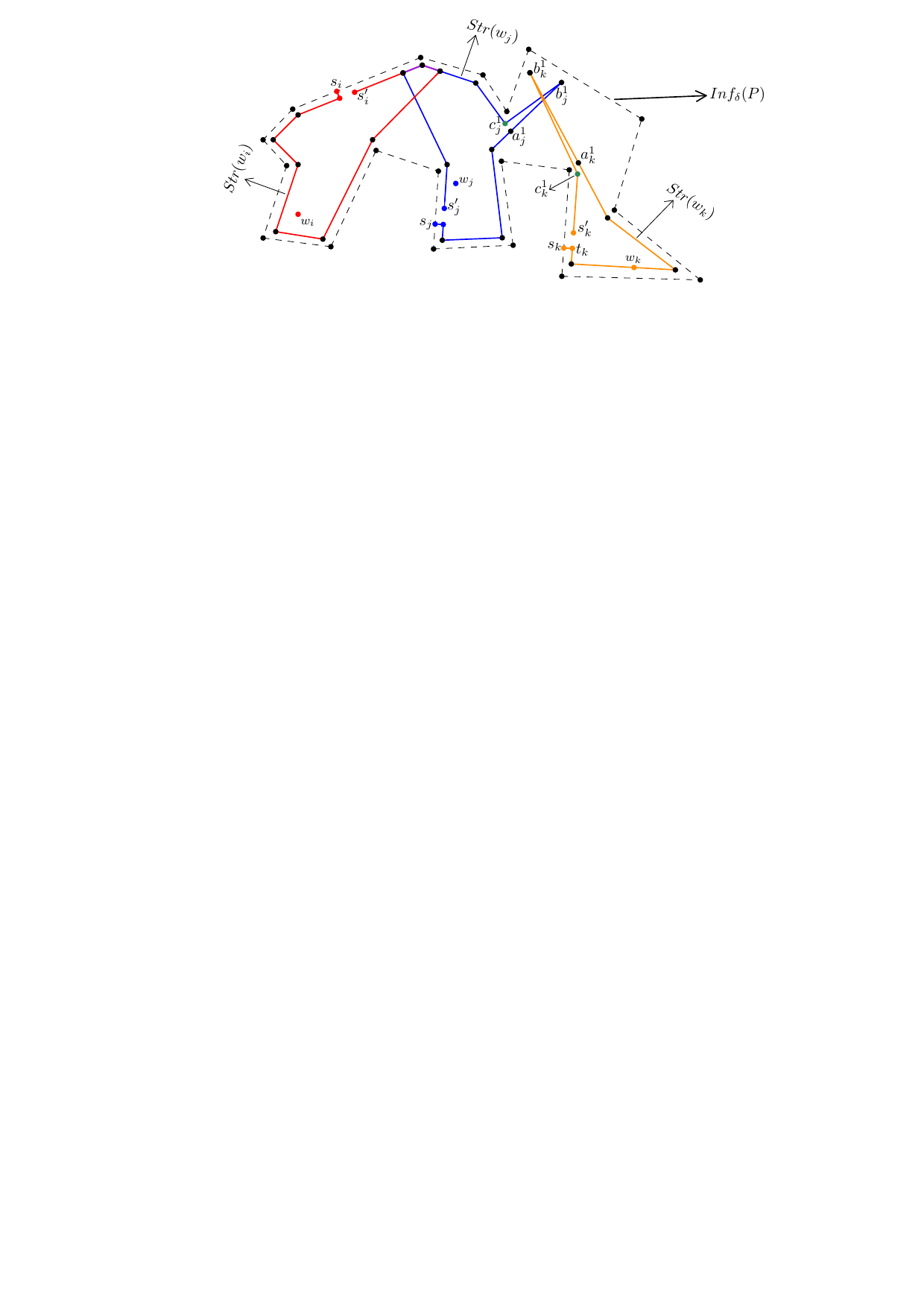}
		\caption{Replacing $\overline{a_i^jc_i^j}$ by $\overline{c_i^jb_i^j}$.}
		\label{strreplacement}
	\end{figure}

	\begin{figure}[ht!]
		\centering
		\includegraphics[width=0.5\linewidth]{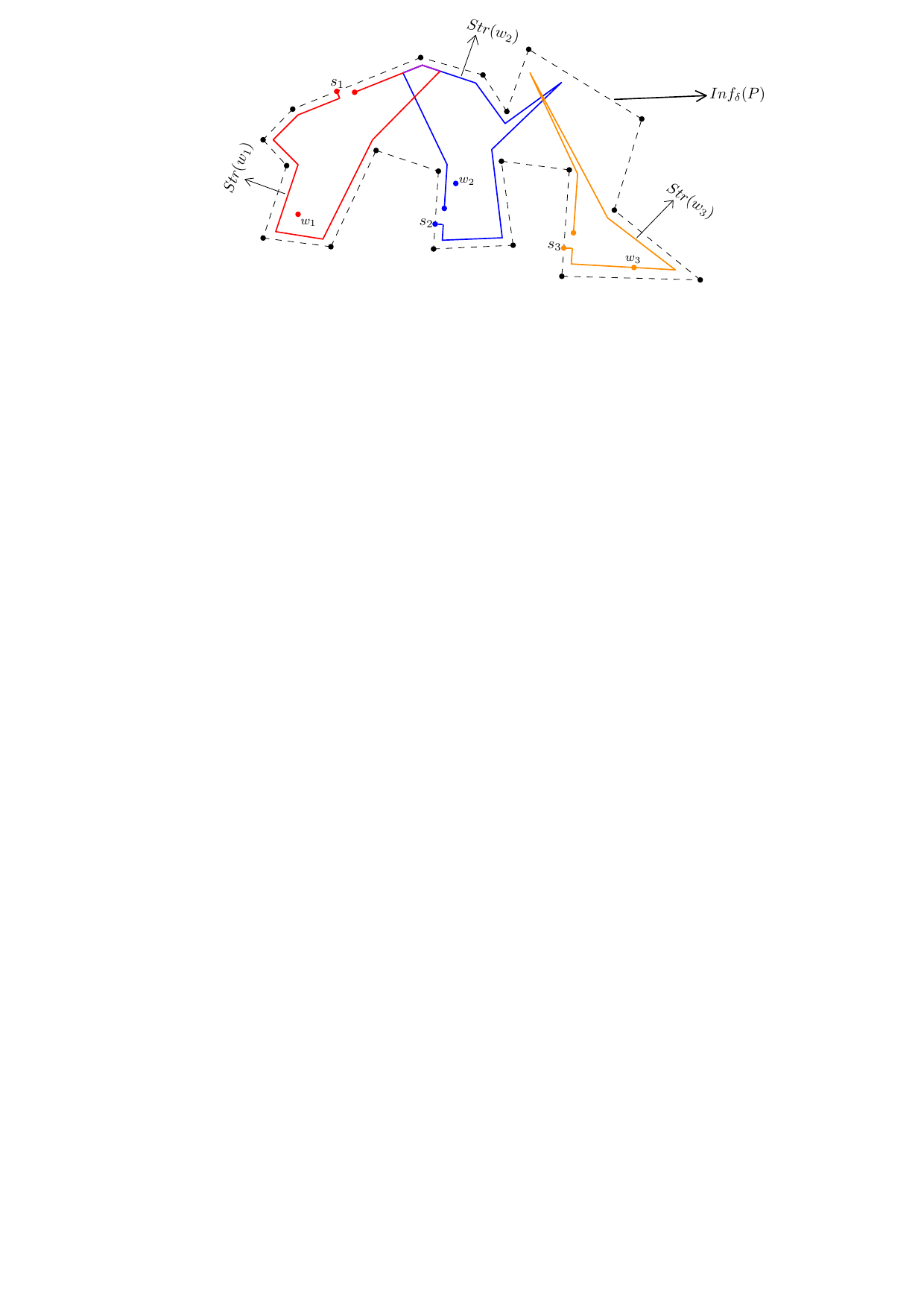}
		\caption{$\str(w_1), \str(w_2)$ and $\str(w_3)$.}
	\end{figure}
	
	Observe that $\str(w_i)$ is a string as it does not have any branches anymore. $s_i$ is the origin of $\str(w_i)$. Since the construction of $\str(w_i)$ for each $w_i\in F$ respects the intersection-invariant property (i.e., two strings intersect if and only if their respective string-likes also intersect, and the procedure described for constructing the strings does not change their respective intersections), we have the following observation.
	
\end{description}

\begin{observation}\label{2nd_equivalance}
	$\str^{**}(w_{i}) \cap \str^{**}(w_{j}) \neq \emptyset$ if and only if $\str(w_{i}) \cap \str(w_{j}) \neq \emptyset$.
\end{observation}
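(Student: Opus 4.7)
The plan is to prove both implications by appealing directly to the two properties (i) and (ii) that were built into the construction of $\str(w_i)$ via the careful choice of each neighborhood $U_i^j$. The argument hinges crucially on the finiteness of $F$: only finitely many string-likes, finitely many branches $l_i^j$, and finitely many pairs to control simultaneously, so we may shrink all the $U_i^j$'s in concert.

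For the reverse direction, I would use property (ii) contrapositively. Concretely, for each pair $(i,j)$ with $\str^{**}(w_i) \cap \str^{**}(w_j) = \emptyset$, the two compact sets lie at some positive Euclidean distance $\delta_{ij} > 0$. Setting $\delta = \min \delta_{ij}$ over the finitely many disjoint pairs and then choosing every $c_i^j$ close enough to $a_i^j$, the replacement segments $\overline{c_i^j b_i^j}$ stay within an arbitrarily small Hausdorff tube around the branches $\overline{a_i^j b_i^j}$; in particular, we can force each $\str(w_l)$ to lie in the $\delta/3$-neighborhood of $\str^{**}(w_l)$. By the triangle inequality, $\str(w_i)$ and $\str(w_j)$ then remain disjoint whenever $\str^{**}(w_i)$ and $\str^{**}(w_j)$ were disjoint, and contrapositively $\str(w_i) \cap \str(w_j) \neq \emptyset$ forces $\str^{**}(w_i) \cap \str^{**}(w_j) \neq \emptyset$.

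For the forward direction, I would fix one witness point $p_{ij} \in \str^{**}(w_i) \cap \str^{**}(w_j)$ for each intersecting pair; this is a finite collection. If $p_{ij}$ lies on an unmodified portion of both string-likes (that is, not on any branch and not inside any of the open stubs $(a_i^j, c_i^j)$ that get deleted from $\gamma_i$), then $p_{ij}$ itself belongs to $\str(w_i) \cap \str(w_j)$ and we are done. Otherwise, the witness lies on a branch $l_i^j = \overline{a_i^j b_i^j}$; here I would exploit that $\overline{c_i^j b_i^j}$ converges to $\overline{a_i^j b_i^j}$ in Hausdorff distance as $c_i^j \to a_i^j$, so by choosing $c_i^j$ close enough to $a_i^j$ the corresponding replacement segments on both sides still cross (using transversality when the original crossing is transverse, or still sharing a sub-segment when the original intersection is a whole interval). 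Because only finitely many witnesses need to be preserved, the required neighborhoods $U_i^j$ can be taken small enough simultaneously, and property (i) is maintained.

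The main obstacle is the forward direction: intersections are fragile under perturbation, and the new segment $\overline{c_i^j b_i^j}$ is not literally contained in the old string-like $\str^{**}(w_i)$, so one cannot argue by pure inclusion. The work is therefore in the simultaneous choice of $c_i^j$'s that (a) stays inside every $U_i^j$ demanded by property (ii) (to avoid creating spurious intersections), (b) keeps every pre-chosen witness $p_{ij}$ either on the unaltered part of $\gamma_i$ or close enough to the branch replacement to survive, and (c) does this across all finitely many branches of all finitely many string-likes at once. Once this joint shrinking is carried out, the observation is an immediate consequence of (i) and (ii).
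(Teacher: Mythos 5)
Your proposal is correct and follows essentially the same route as the paper: the observation is read off from properties (i) and (ii) that are built into the construction of $\str(w_i)$ in Step 3, with the finiteness of the string-likes and their branches being the key to shrinking all the neighborhoods $U_i^j$ simultaneously. You in fact supply more detail than the paper does (the positive-distance argument for disjoint compact pairs and the pre-chosen witness-point argument for intersecting pairs), whereas the paper simply asserts that the construction is intersection-invariant.
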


Combining  \Cref{3rd_equivalance}, Observations  \ref{1st_equivalance} and  \ref{2nd_equivalance} we obtain the following.

\begin{lemma}\label{lem:equi}
	Let  $F=\{w_1,w_2, \cdots ,w_m\}$ be a set of $ m $ points in  a simple polygon $ \po $. For each pair of points $ w_i,w_j \in F $, $\vis(w_i) \cap \vis(w_j) \neq \emptyset~\text{if and only if} ~  \str(w_i) \cap \str(w_j) \neq \emptyset$.
\end{lemma}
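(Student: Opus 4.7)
The plan is to chain the three intersection-preservation results established in Steps 1--3 of the construction. Each of the three steps is designed precisely to carry forward the invariant that visibility regions intersect if and only if the corresponding structures intersect; once those three equivalences are in hand, the lemma becomes a transitive chain.

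Concretely, for each pair $w_i, w_j \in F$ I would argue as follows. First, Claim \ref{3rd_equivalance} gives $\vis(w_i) \cap \vis(w_j) \neq \emptyset \iff \str^*(w_i) \cap \str^*(w_j) \neq \emptyset$. Next, Observation \ref{1st_equivalance} gives $\str^*(w_i) \cap \str^*(w_j) \neq \emptyset \iff \str^{**}(w_i) \cap \str^{**}(w_j) \neq \emptyset$, since inflating $\po$ by $\delta$ does not perturb the string-likes themselves but only adds a short connecting segment from each origin $t_i$ to $s_i \in \bd(\infp)$ along an outward perpendicular. Finally, Observation \ref{2nd_equivalance} gives $\str^{**}(w_i) \cap \str^{**}(w_j) \neq \emptyset \iff \str(w_i) \cap \str(w_j) \neq \emptyset$, because the local re-routings of each branch $\overline{a_i^j b_i^j}$ to $\overline{c_i^j b_i^j}$ were performed inside neighborhoods $U_i^j$ chosen small enough to (i) not destroy any existing intersection with another string-like and (ii) not create any new intersection with a previously disjoint one. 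Composing these three biconditionals yields the claimed equivalence.

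The only subtlety in turning this into a proof is that the three intermediate statements are labelled as observations/claims whose full verifications depend on the careful quantitative choices made earlier, most importantly $0 < \epsilon < \min\{d, \tfrac{1}{m}\min_{w\in F} \min_{pr}(w)\}$. I would explicitly point back to Observation \ref{obs1stp1} and Observation \ref{obs-novertexinsideepsilon} when invoking Claim \ref{3rd_equivalance}, because these are what rule out the pathological case where boundaries of two visibility regions only meet inside a deleted $\epsilon$-gap. This is the step I expect to be the main obstacle for a skeptical reader: showing that the $\epsilon$-gap on $\bd(\vis(w_i))$ is too small both to swallow an entire primary edge of $\vis(w_j)$ and to contain any vertex of any other $\vis(w_j)$. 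Once this is clearly invoked, the inflation step and the branch-rerouting step are essentially continuity/finiteness arguments with no additional geometry.

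Thus my proposed proof is a short three-line chain: apply Claim \ref{3rd_equivalance}, then Observation \ref{1st_equivalance}, then Observation \ref{2nd_equivalance}, transitively, and conclude
\[
\vis(w_i) \cap \vis(w_j) \neq \emptyset \iff \str(w_i) \cap \str(w_j) \neq \emptyset
\]
for every pair $w_i, w_j \in F$, which is exactly the statement of the lemma. No new geometric argument is needed beyond the ones already set up in the construction.
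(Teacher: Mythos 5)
Your proposal matches the paper's proof exactly: the paper derives Lemma~\ref{lem:equi} by combining Claim~\ref{3rd_equivalance}, Observation~\ref{1st_equivalance}, and Observation~\ref{2nd_equivalance} transitively, just as you describe. Your added remarks about where the real geometric work lies (the $\epsilon$-gap arguments via Observations~\ref{obs1stp1} and~\ref{obs-novertexinsideepsilon}) are accurate but belong to the proofs of those intermediate statements rather than to this lemma.
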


\begin{definition}[String Intersection Graph]
	{\em For a set $ \mathcal{S} $ of strings, the string intersection graph corresponding to $ \mathcal{S} $ is denoted by $\sig(\mathcal{S})$; and is defined as follows: each string in $\mathcal{S}$ corresponds to a vertex of $\sig(\mathcal{S})$ and there is an edge between a pair of vertices in $\sig(\mathcal{S})$ if and only if their corresponding strings intersect.}
\end{definition}
\medskip
Let $\mathcal{S}=\{\str(w_1), \ldots , \str(w_m)\}$. Notice  that for each point $ w\in F $, there exists a corresponding $\vis(w)$ in $ \po $ and $\str(w)$ in $ \infp $. Also, $\vis(w)$ and $\str(w)$ correspond to a vertex in $\vig(F)$ and $\sig(\mathcal{S})$, respectively. With the help of  \Cref{lem:equi}, we have the following.

\begin{lemma}\label{vigsiglemma}
	The graphs $\vig(F)$ and $\sig(\mathcal{S})$ are isomorphic.
\end{lemma}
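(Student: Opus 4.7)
The plan is to exhibit an explicit bijection between the vertex sets of $\vig(F)$ and $\sig(\mathcal{S})$ and then use \Cref{lem:equi} to verify that this bijection preserves the adjacency relation in both directions. Concretely, I would define the map $\phi : V(\vig(F)) \to V(\sig(\mathcal{S}))$ by $\phi(w_i) = \str(w_i)$ for every $w_i \in F$.

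First, I would argue that $\phi$ is a well-defined bijection. By the construction carried out in Steps~1--3 of this section, every $w_i \in F$ yields exactly one string $\str(w_i)$, and distinct points $w_i \neq w_j$ in $F$ give rise to distinct strings in $\mathcal{S}$ (since the indexing is preserved throughout the construction and each $\str(w_i)$ inherits a unique origin $s_i$ on $\bd(\infp)$). So $\phi$ is a bijection between the vertex sets of the two graphs.

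Next, I would verify that $\phi$ is a graph homomorphism in both directions. By the definition of $\vig(F)$, vertices $w_i$ and $w_j$ are adjacent in $\vig(F)$ iff $\vis(w_i) \cap \vis(w_j) \neq \emptyset$. By the definition of $\sig(\mathcal{S})$, the vertices $\str(w_i)$ and $\str(w_j)$ are adjacent in $\sig(\mathcal{S})$ iff $\str(w_i) \cap \str(w_j) \neq \emptyset$. The equivalence of these two conditions is exactly the content of \Cref{lem:equi}. Hence $w_i w_j \in E(\vig(F))$ iff $\phi(w_i)\phi(w_j) \in E(\sig(\mathcal{S}))$, and therefore $\phi$ is a graph isomorphism.

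Since \Cref{lem:equi} is already in hand (it is obtained by chaining \Cref{3rd_equivalance}, \Cref{1st_equivalance}, and \Cref{2nd_equivalance}), there is essentially no obstacle here; the proof is a bookkeeping argument. The only thing to be careful about is to note that the labels of vertices of $\sig(\mathcal{S})$ are the strings themselves (not the points $w_i$), so the isomorphism has to be stated via the explicit map $w_i \mapsto \str(w_i)$ rather than treating the identification as implicit.
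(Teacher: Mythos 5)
Your proof is correct and matches the paper's approach: the paper states the lemma as an immediate consequence of \Cref{lem:equi} via the natural correspondence $w_i \mapsto \str(w_i)$, which is precisely the bijection you make explicit. Your additional care about well-definedness and injectivity of the map is a harmless (and slightly more rigorous) elaboration of the same bookkeeping argument.
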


Now we show that the string intersection graph corresponding to $ \mathcal{S} $ is an outerstring graph. 
The {\em outerstring graphs} are the intersection graphs of curves in the plane that lie inside a circle such that each curve intersects	the boundary of the circle at one of its endpoints. However, here, instead of a circle, what we have is a simple polygon. Even then, the intersection graph of curves forms an outerstring graph. The reason is that we can translate the underlying polygon to a circle such that the pairwise intersections among the curves are preserved in either case.

\begin{clm}\label{sigouterstring}
	$\sig(\mathcal{S})$ is an outerstring graph.
\end{clm}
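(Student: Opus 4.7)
\textbf{Proof proposal for Claim \ref{sigouterstring}.} The plan is to verify the two defining features of an outerstring representation directly from the construction of the strings $\str(w_i)$, and then transport the picture to a disk via a homeomorphism. By construction, for every $w_i\in F$ the curve $\str(w_i)$ lies entirely inside the inflated polygon $\infp$ (it was obtained from $\bd(\vis(w_i))$ by an $\epsilon$-cut, by attaching the short perpendicular segment $\overline{t_is_i}$, and by the local branch-rerouting step, none of which leave $\infp$). Moreover, the rerouting step deletes all branches, so $\str(w_i)$ is a simple (non-self-crossing) polygonal arc, and exactly one of its endpoints, namely $s_i$, lies on $\bd(\infp)$, while the other endpoint $s_i'$ lies strictly in the interior of $\infp$. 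Thus $\mathcal{S}=\{\str(w_1),\ldots,\str(w_m)\}$ is a family of simple polygonal arcs inside the simple polygon $\infp$, each having exactly one endpoint on $\bd(\infp)$.

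Next I would argue that such a representation is equivalent to an outerstring representation in a disk. Since $\infp$ is a simple polygon, it is topologically a closed disk; by the Schoenflies theorem there is a homeomorphism $\phi:\infp\to D$, where $D$ is a closed round disk, such that $\phi(\bd(\infp))=\bd(D)$. For each $i$, set $\sigma_i:=\phi(\str(w_i))$. Each $\sigma_i$ is a simple arc inside $D$ with exactly one endpoint $\phi(s_i)\in\bd(D)$ and its other endpoint in the interior of $D$. Because $\phi$ is a bijection, $\sigma_i\cap\sigma_j\neq\emptyset$ if and only if $\str(w_i)\cap\str(w_j)\neq\emptyset$. Hence the intersection graph of $\{\sigma_1,\ldots,\sigma_m\}$ is isomorphic to $\sig(\mathcal{S})$. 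By definition, this exhibits $\sig(\mathcal{S})$ as an outerstring graph.

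The only potentially subtle point is verifying that the rerouting step in Step~3 really produces arcs that stay inside $\infp$ and that still have $s_i$ as their unique boundary endpoint; I would justify this by choosing the neighbourhoods $U_i^j$ small enough to avoid $\bd(\infp)$, which is possible because each branch $\overline{a_i^jb_i^j}$ lies in the interior of $\infp$ (the $\delta$-inflation separated the branches from $\bd(\po)$, and hence from $\bd(\infp)$, by a positive distance). With this observation the construction above is complete and the claim follows.
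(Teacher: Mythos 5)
Your proposal is correct and follows essentially the same route as the paper: the paper's (much terser) proof likewise rests on the observation that each $\str(w_i)$ has exactly one endpoint on the boundary of the enclosing polygon, with the polygon-to-disk translation preserving intersections mentioned only informally in the surrounding text. Your version simply makes explicit the details the paper leaves implicit (simplicity of the arcs after rerouting, the Schoenflies homeomorphism, and that the relevant boundary is $\bd(\infp)$ rather than $\bd(\po)$).
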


\begin{proof}
	Every vertex of $\sig(\mathcal{S})$ corresponds to a point $w \in F$ where $w$ corresponds to a string $\str(w)$. Now the string $\str(w)$ has one endpoint at $ \bd(\po) $ and another endpoint at the interior of $\po$. Therefore, the graph $\sig(Z)$ is an outerstring graph. Hence, the claim.
\end{proof}

Now we show that any independent set on the graph $\sig(\mathcal{S})$  corresponds to a witness set on $\po$ of the same size.

\begin{lemma} \label{indwi}
	Any independent set in $\sig(\mathcal{S})$ gives a witness set in $\po$ with the same size. 
\end{lemma}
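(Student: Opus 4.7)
The plan is to reduce this immediately to the isomorphism established in \Cref{vigsiglemma}. Given an independent set $I$ in $\sig(\mathcal{S})$, I would first lift it to the corresponding point set $W = \{w_i \in F : \str(w_i) \in I\} \subseteq F$. Since $w_i \mapsto \str(w_i)$ is a bijection between $F$ and $\mathcal{S}$ by construction, we get $|W| = |I|$ for free, so the size-preservation claim is immediate.

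The heart of the argument is to verify that $W$ is a witness set, i.e., that for every pair $w_i, w_j \in W$, the visibility regions $\vis(w_i)$ and $\vis(w_j)$ are disjoint in $\po$. Since $I$ is independent in $\sig(\mathcal{S})$, for any two $w_i, w_j \in W$ the strings $\str(w_i)$ and $\str(w_j)$ do not share an edge in $\sig(\mathcal{S})$, which by definition of the string intersection graph means $\str(w_i) \cap \str(w_j) = \emptyset$. Invoking \Cref{lem:equi} (equivalently, the isomorphism of \Cref{vigsiglemma}) then yields $\vis(w_i) \cap \vis(w_j) = \emptyset$, which is precisely the defining property of a witness set.

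I expect no real obstacle here: the lemma is essentially a bookkeeping consequence of the three-step string construction in \textbf{Steps 1--3} and the intersection-invariance captured in \Cref{3rd_equivalance}, \Cref{1st_equivalance}, and \Cref{2nd_equivalance}. The only point worth making explicit in the write-up is that witness sets are defined precisely via pairwise non-intersection of visibility regions (so an independent set in $\vig(F)$ is literally a witness set contained in $F$), and that the bijection from $I$ to $W$ is size-preserving because each string was constructed from a distinct point of $F$.
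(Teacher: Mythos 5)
Your proposal is correct and follows essentially the same route as the paper's own proof: both translate independence in $\sig(\mathcal{S})$ into pairwise non-intersection of strings and then invoke the intersection equivalence (\Cref{lem:equi}) to conclude that the corresponding visibility regions are pairwise disjoint, which is the definition of a witness set. Your version merely makes explicit the bijection $w_i \mapsto \str(w_i)$ for the size-preservation claim, which the paper leaves implicit.
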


\begin{proof}	
	Any independent set in the graph $\sig(\mathcal{S})$ gives us the set of strings in $\po$ that do not intersect each other. Now each string has a corresponding visibility region in $\po$. Since the strings returned by the independent set do not intersect each other, their corresponding visibility regions do not intersect. Hence, the proof follows.
\end{proof}

\begin{proposition}[\cite{DBLP:journals/comgeo/KeilMPV17}]  \label{prop:miso}
	Given the geometric representation $ (\po, S)$ of a weighted outerstring graph $G$,  an independent set with maximum weight for $G$ can be found in $\OO(n^3)$ time, where $n$ is the number of segments
used to represent the strings of $S$ and the polygon $\po$.
\end{proposition}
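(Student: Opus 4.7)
The plan is to exhibit a dynamic-programming algorithm over geometrically defined subregions of $\po$, whose table has $\OO(n^2)$ entries and whose recurrence takes $\OO(n)$ time per entry, giving the claimed $\OO(n^3)$ bound. Fix an orientation of $\partial \po$ and an arbitrary basepoint on it so that all string endpoints that lie on $\partial \po$ are linearly ordered. First I would compute the arrangement of the strings in $S$ inside $\po$; since each string is a polygonal arc and the strings together contribute $n$ segments, there are at most $\OO(n^2)$ pairwise intersection points, and together with the $\OO(n)$ boundary endpoints these constitute the set $\mathcal{A}$ of \emph{anchor points} of the algorithm.

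For each ordered pair $(a,b)\in \mathcal{A}\times\mathcal{A}$ I would define a region $R(a,b)\subseteq \po$ bounded by an arc of $\partial \po$ from $a$ to $b$ together with a canonical polygonal path $\pi(a,b)$ built from string segments that are visible between $a$ and $b$ inside $\po$. Let $M(a,b)$ denote the maximum weight of an independent set of strings lying entirely inside $R(a,b)$. The key combinatorial observation, which I would prove by a Jordan-curve argument, is that if $\sigma\in S$ is any string whose boundary endpoint is the first anchor point strictly after $a$ in the cyclic order, then in any optimal independent set for $R(a,b)$ either (i) $\sigma$ is excluded, reducing the problem to $M(a',b)$ with $a'$ the successor of $a$ among the boundary anchors in $R(a,b)$, or (ii) $\sigma$ is included, in which case $\sigma$ together with $\pi(a,b)$ and $\partial\po$ partitions $R(a,b)$ into a constant-per-segment number of subregions, each of which is again of the form $R(a_i,b_i)$ for anchor points determined by the intersections of $\sigma$ with the existing boundary and with other strings.

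From this the recurrence $M(a,b) = \max\bigl\{M(a',b),\ w(\sigma)+\sum_i M(a_i,b_i)\bigr\}$ is immediate, the number of subregions in case (ii) is bounded by the number of segments of $\sigma$, and so evaluating one table entry costs $\OO(n)$. Since $|\mathcal{A}|=\OO(n)$ in total (endpoints on $\partial\po$) combined with the $\OO(n^2)$ intersection points yields $\OO(n^2)$ reachable pairs $(a,b)$ that actually appear as subproblems in the recursion tree (not all pairs need be materialised, only those generated by recursive calls), the overall running time is $\OO(n^3)$. Correctness then follows by induction on the area of $R(a,b)$: independence of strings across different $R(a_i,b_i)$ is guaranteed because any string crossing from one subregion into another would have to cross either $\sigma$, $\pi(a,b)$, or $\partial\po$, contradicting either the independence assumption or the definition of containment in $R(a,b)$.

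The main obstacle I anticipate is ensuring that the subregions generated in case (ii) are themselves of the same canonical form $R(\cdot,\cdot)$, so that the DP is self-reducing and the table has size $\OO(n^2)$ rather than blowing up. This forces a careful choice of the canonical boundary path $\pi(a,b)$: it must be defined so that when a chosen string $\sigma$ is adjoined, the new boundary segments along $\sigma$ can be parsed as $\pi(a_i,b_i)$ paths for anchor pairs already in $\mathcal{A}$. Handling the case where $\sigma$ has many segments that each spawn a child subregion, while keeping the amortised work per subproblem at $\OO(n)$, is the most delicate part of the argument and is where the specific geometry of outerstring representations (one endpoint of each string on $\partial \po$) is essential, since it prevents strings from ``wrapping around'' and generating incompatible subproblems.
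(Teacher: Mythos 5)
This proposition is not proved in the paper at all: it is imported verbatim from Keil et al.~\cite{DBLP:journals/comgeo/KeilMPV17}, so there is no in-paper proof to compare your argument against. Judged on its own terms as a reconstruction of that result, your sketch is in the right family (a divide-and-conquer DP over subregions of $\po$ with an include/exclude recurrence on a canonically chosen string, which is indeed the shape of the Keil et al. algorithm), but it has two concrete gaps that prevent it from establishing the $\OO(n^3)$ bound.

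First, the accounting does not close. You define the anchor set $\mathcal{A}$ to contain all $\OO(n^2)$ pairwise intersection points of the arrangement together with the $\OO(n)$ boundary endpoints, and then index subproblems by ordered pairs from $\mathcal{A}$; naively that is $\OO(n^4)$ table entries. The sentence claiming that only $\OO(n^2)$ pairs are ``reachable'' in the recursion tree is precisely the load-bearing step of the whole running-time analysis, and it is asserted rather than proved --- nothing in your construction bounds how many distinct pairs $(a_i,b_i)$ can be spawned across all recursive calls. In the actual algorithm of Keil et al.\ the subproblems are indexed by pairs drawn from the $\OO(n)$ \emph{segment endpoints} of the strings (not by arrangement vertices), which is what makes the $\OO(n^2)$-entries-times-$\OO(n)$-work accounting go through; your choice of anchors forfeits that. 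Second, the canonical path $\pi(a,b)$ is never defined, and you yourself identify the self-reducibility of the subregions (that each child region in case (ii) is again of the form $R(\cdot,\cdot)$) as an unresolved obstacle. Since both the well-definedness of the table and the correctness induction (your ``induction on area'' is not well-founded unless the set of generated regions is finite, which again circles back to the unproven reachability claim) hinge on exactly this point, the argument as written is a plausible plan rather than a proof.
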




\begin{proof}[\textbf{Proof of \Cref{theo:disouter}}]
	Given an outerstring graph $G$ and its polygonal representation in a polygon $\po$, we can compute the maximum independent set in $\mathcal{O}(N^3)$ time \cite{DBLP:journals/comgeo/KeilMPV17}, where $N$ is the number of segments used to represent the strings of $G$ and the polygon $\po$. In our problem, each individual string may have $\mathcal{O}(n)$ segments and there are $m$ strings in total, where $n$ is the number of vertices in $\po$ and $m$ is the number of input points.  Hence $N=\OO(mn)$. Now applying  \Cref{indwi}, computing the maximum independent set in the graph $\sig(\mathcal{S})$ returns the maximum witness set in $\po$. Therefore, the maximum witness set problem in a simple polygon $\po$ for a set of $m$ points can be solved in $\mathcal{O}(m^3n^3)$ time. 
\end{proof}}

	\section{{\sc Discrete Witness Set} in a Monotone Polygon} \label{sec-CocomparableGraph} 

This section aims to improve the running time of the result already obtained in the previous section (\Cref{theo:disouter}), for the class of monotone polygons. Initially, we prove that the visibility intersection graph of a monotone polygon is co-comparable \ifthenelse{\boolean{shortver}}{}{(\Cref{lem-cocomparable})}.  Then, with the use of a known result \ifthenelse{\boolean{shortver}}{}{(\Cref{prop:mis})}  that a maximum independent set can be found in polynomial time for co-comparable graphs, we achieve our objective, which is as follows.

\theoo*

\ifthenelse{\boolean{shortver}}{}{

 \begin{definition}[Co-comparable Graph]
     {\em A {\em comparability} graph is an undirected graph that connects pairs of elements that are comparable to each other in a partial order. A graph $G$ is {\em co-comparable} if the complement of the graph $G$ is a comparability graph.} 
 \end{definition}

  \noindent \textbf{Remark.} There is an equivalent definition of a comparability graph, which is as follows.  A comparability graph is a graph that has a transitive orientation, i.e., an assignment of directions to the edges of the graph (i.e. an orientation of the graph) such that the adjacency relation of the resulting directed graph is transitive: whenever there exist directed edges $(x,y)$ and $(y,z)$, there must exist an edge $(x,z)$. 

 \begin{proposition}[\cite{DBLP:journals/dm/McConnellS99}]  \label{prop:mis}
      A maximum independent set in a co-comparable graph with $n$ vertices can be found in $\OO(n^2)$ time. 
 \end{proposition}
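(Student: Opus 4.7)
The plan is to reduce maximum independent set in a co-comparable graph $G$ to computing a longest chain in a finite partial order, then compute such a chain by a standard DAG dynamic program. Correctness follows from the definitions: since $G$ is co-comparable, its complement $\widetilde{G}$ is a comparability graph; independent sets of $G$ are exactly cliques of $\widetilde{G}$; and once a transitive orientation $\sigma$ of $\widetilde{G}$ is fixed, the cliques of $\widetilde{G}$ coincide with the chains of the induced strict partial order $<_\sigma$. Hence a longest chain of $<_\sigma$ is a maximum independent set of $G$.

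Concretely, I would carry out three steps in order. (1) Construct $\widetilde{G}$ explicitly by scanning all $\binom{n}{2}$ vertex pairs of $G$ and recording non-edges, which takes $\OO(n^2)$ time. (2) Compute a transitive orientation $\sigma$ of $\widetilde{G}$, producing a DAG $D$ on the same vertex set. (3) Topologically sort $D$ and evaluate the dynamic program $L(v) = 1 + \max\{L(u) : (u,v) \in E(D)\}$, with $L(v) = 1$ for every source $v$, in $\OO(|V(D)| + |E(D)|) = \OO(n^2)$ time. Following predecessor pointers from a vertex maximizing $L$ yields a longest chain of $<_\sigma$, i.e., a maximum independent set of $G$.

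The main obstacle is step (2). The textbook approach via Gallai's $\Gamma$-relation on edges (where $xy \, \Gamma \, yz$ whenever $xz \notin E(\widetilde{G})$) partitions the edges of $\widetilde{G}$ into implication classes, each of which admits exactly two consistent orientations; once one edge is oriented, the rest of its class is forced. A direct realization of this idea, however, can cost $\Omega(n \cdot |E(\widetilde{G})|) = \Omega(n^3)$ in the worst case, because propagating forcings across a large class may revisit many edges. To meet the $\OO(n^2)$ budget, I would invoke the modular-decomposition-based algorithm of McConnell and Spinrad, which handles the implication classes implicitly through the modular tree of $\widetilde{G}$, orients one edge per child node, and verifies transitivity of the resulting orientation within the same time bound; the correctness rests on the fact that a graph is a comparability graph if and only if every strong module in its modular decomposition is either a clique, an independent set, or a prime comparability graph, and orientations can be assembled compositionally over the tree. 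Combining the three steps yields the claimed $\OO(n^2)$-time algorithm for maximum independent set in a co-comparable graph.
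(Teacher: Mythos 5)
The paper gives no proof of this proposition; it is stated purely as a citation to McConnell and Spinrad, and your argument is exactly the standard reduction that citation encapsulates: pass to the complement, observe that independent sets of $G$ are cliques of $\widetilde{G}$ and hence chains of any transitive orientation, and extract a longest chain by a topological-order dynamic program in $\OO(n^2)$ time. Since your step (2) still delegates the only nontrivial ingredient (computing a transitive orientation of $\widetilde{G}$ within the $\OO(n^2)$ budget, valid here because the input is promised to be co-comparable so no transitivity verification is needed) to the very same McConnell--Spinrad algorithm, your proposal is correct and is essentially the same approach as the paper's, just unpacked.
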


  Let $F$ represent a set of $f$ points within a monotone polygon $\mo$. Initially, we construct the visibility intersection graph $\G(F)$ in the following way. The vertices of $\G(F)$ are the points in $F$ and there is an edge between two vertices $a,b \in F$ if and only if $\vis(a) \cap \vis(b) \neq \emptyset$. Subsequently, we demonstrate that the graph $\G(F)$ is co-comparable. Next, we introduce some notations and definitions. Some of which will be used in the subsequent section. 
	
	\begin{figure}[ht!]
		\centering
		\includegraphics[scale=1.1]{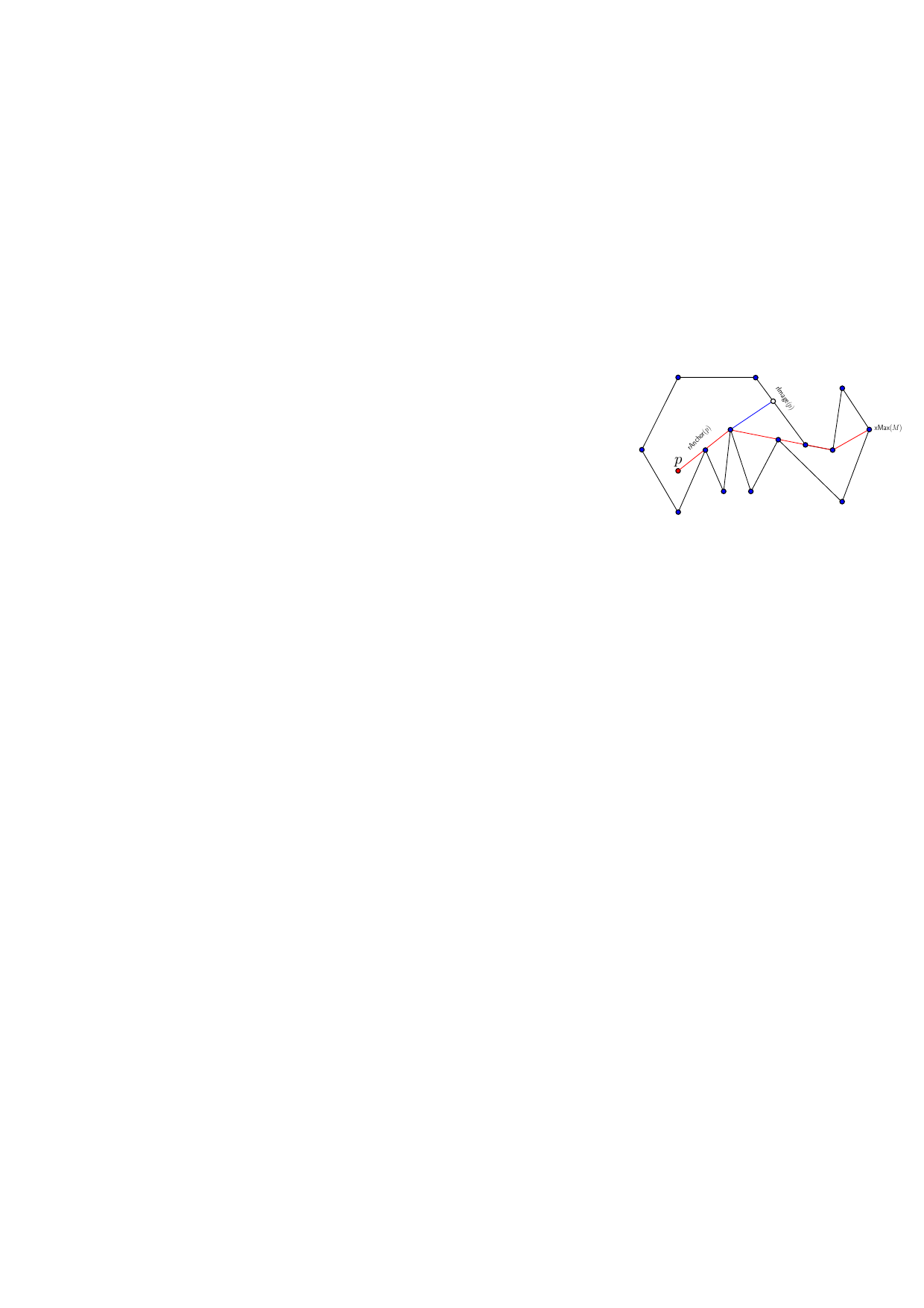}
		\caption{$\spr(p)$ is denoted by the red path. $\Rb(p)$ is the chord between $\RA(p), \IMr(p)$. } \label{fig-Intersect}
	\end{figure}

	
	\begin{definition}[$ \XM() $ and $ \XMi() $]
		{\em For a region $ A $, $\XM(A)$ (resp., $\XMi(A)$) denotes a point having maximum (resp., minimum) $x$-coordinate among all the points in  $A$. }
	\end{definition}
    
	\noindent Observe that there may be multiple points in the set $A$ which possess the highest (or lowest) $x$-coordinate. In such situations, we arbitrarily choose $\XM(A)$. A similar rationale applies to $\XMi(A)$. We represent a monotone polygon by $\mo$. For a point $p \in \mo$, a new edge is created when the visibility of $p$ is blocked by a reflex vertex $c$ of $\mo$. The other endpoint of this newly formed edge, which lies on the boundary of $\mo$, is called the \textit{image of $p$ through $\mathit{c}$}. We will also carry forward all the definitions and notations defined in this section to the following sections.

	\begin{definition}[$ \spr() $ and $ \spl() $]\label{def:one}
		{\em For a point $ p \in \mo $, $\spr(p)$ (resp., $\spl(p)$) denotes a  shortest path in $\mo$ from $p$ to   $\XM(\mo)$ (resp., $\XMi(\mo)$).}
	\end{definition}
	
	\begin{definition}[$ \RA() $ and $ \LA() $]\label{def:two}
		{\em For a point $p \in \mo$, we define a vertex  $c$ of $\mo$ as $\RA(p)$  if $c \neq \XM(\mo)$ and $c$ is the leftmost  vertex (having smallest $x$-coordinate)  of $\mo$ other than $ p $ that is in $\spr(p)$.   Similarly,  we define a vertex  $c$ of $\mo$ as $\LA(p)$ if $c \neq \XMi(\mo)$ and $c$ is the rightmost (having largest $x$-coordinate) vertex  of $\mo$ other than $ p $ that is in $\spl(p)$.} 
	\end{definition}
	
	\begin{definition}[$ \IMr() $ and $ \IMl() $]\label{def:thr}
		{\em For a point $p\in \mo$, if there exist $\RA(p)$ and $\LA(p)$, then we call the points, \textit{image of $p$ through $\mathit{\RA(p)}$} and \textit{image of $p$ through $\mathit{\LA(p)}$} as $\IMr(p)$ and $\IMl(p)$, respectively.}
	\end{definition}

      This leads to the following observations.

    \begin{observation}\label{obs:lara}
        \em{For a point $p \in \mo$, if there exists no $\RA(p)$,  then $\XM(\vis(p))=\XM(\mo)$. Similarly, if there exists no $\LA(p)$  then $\XMi(\vis(p))=\XMi(\mo)$.}
    \end{observation}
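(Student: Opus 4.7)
The plan is to unpack the definition of $\RA(p)$ and then invoke the standard characterization of shortest paths inside a simple polygon. By Definition of $\RA(p)$, the vertex $\RA(p)$ is required to be a vertex of $\mo$ on $\spr(p)$ that is distinct from both $p$ and $\XM(\mo)$. So the hypothesis ``there exists no $\RA(p)$'' is equivalent to saying that $\spr(p)$ contains no vertex of $\mo$ in its relative interior: its only polygon-vertex endpoints are $\XM(\mo)$ (and possibly $p$, if $p$ happens to be a vertex).

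Next I would apply the well-known fact that a geodesic (shortest path) in a simple polygon is a polygonal chain whose bends occur only at reflex vertices of the polygon. Combined with the previous step, this forces $\spr(p)$ to have no interior bends, so $\spr(p)$ must coincide with the straight segment $\overline{p\,\XM(\mo)}$. Since $\spr(p)$ lies inside $\mo$ by definition, this segment is contained in $\mo$, which means $p$ sees $\XM(\mo)$; hence $\XM(\mo) \in \vis(p)$.

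For the conclusion, note $\vis(p) \subseteq \mo$, so every point of $\vis(p)$ has $x$-coordinate at most $x(\XM(\mo))$. Together with $\XM(\mo) \in \vis(p)$, this yields $\XM(\vis(p)) = \XM(\mo)$. The second half of the statement, concerning the non-existence of $\LA(p)$ and the equality $\XMi(\vis(p)) = \XMi(\mo)$, is handled by a symmetric argument obtained by reversing the role of the $x$-axis (or, equivalently, reflecting $\mo$ across a vertical line).

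I do not expect a substantive obstacle here; the only point requiring care is the bookkeeping around the clause ``$c \neq \XM(\mo)$'' in the definition of $\RA(p)$, which one must interpret correctly so as to conclude that the \emph{entire interior} of $\spr(p)$ is vertex-free. Once that is clarified, the geodesic-bends-only-at-reflex-vertices fact closes the argument immediately, and no property specific to monotone polygons is actually used beyond what the definitions of $\spr, \XM, \XMi$ already provide.
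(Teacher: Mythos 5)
Your argument is correct: the paper states this observation without proof, and the justification you give — no $\RA(p)$ means $\spr(p)$ has no polygon vertex other than its endpoints, geodesics in simple polygons bend only at reflex vertices, hence $\spr(p)=\overline{p\,\XM(\mo)}$ lies in $\mo$ and $p$ sees $\XM(\mo)$, while $\vis(p)\subseteq\mo$ bounds the $x$-coordinate from above — is exactly the intended reading of Definitions \ref{def:one} and \ref{def:two}. The only cosmetic remark is that since $\XM(\cdot)$ is chosen arbitrarily among ties, the stated equality should be understood as equality of $x$-coordinates, which is all that is used later (e.g.\ in Observation \ref{lem-cl1}) and is precisely what your argument delivers.
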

    
    \begin{observation}\label{obs-existence_of_LA_and_RA}
      {\em Given a monotone polygon $\mo$ and a set of witnesses $w_1, w_2, \ldots, w_k$  with $x(w_1) < x(w_2) < \ldots < x(w_k)$, both $\LA()$ and $\RA()$ are present for each witness $w_i$ where $1<i<k$. However, this may not hold for the extreme points $w_1$ and $w_k$, which are the leftmost and rightmost witnesses, respectively (due to \Cref{obs:lara}).}
    \end{observation}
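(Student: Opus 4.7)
The plan is to prove the $\RA$-part by contrapositive: if $\RA(w_i)$ fails to exist, then $w_i$ must be the rightmost witness (i.e., $i=k$); the $\LA$-part will then follow by a mirrored argument. The key geometric leverage comes from the $x$-monotonicity of $\mo$, which guarantees that every vertical chord of $\mo$ lies entirely inside $\mo$ and is therefore a set of mutually visible points.

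In detail, first I would observe that $\RA(w_i)$ does not exist exactly when the shortest path $\spr(w_i)$ from $w_i$ to $\XM(\mo)$ contains no vertex of $\mo$ strictly between $w_i$ and $\XM(\mo)$, which happens iff $\spr(w_i)$ is the straight segment $\overline{w_i\,\XM(\mo)}$---equivalently, iff $\XM(\mo) \in \vis(w_i)$ (this is precisely the content of Observation \ref{obs:lara}). So assume $\overline{w_i\,\XM(\mo)} \subseteq \mo$ and, for contradiction, that some witness $w_j$ satisfies $x(w_j) > x(w_i)$ (such a $w_j$ exists whenever $i < k$). Since $x(w_i) < x(w_j) \le x(\XM(\mo))$, the vertical line through $w_j$ meets $\overline{w_i\,\XM(\mo)}$ at a unique point $q$. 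Because $\overline{w_i\,\XM(\mo)} \subseteq \mo$, the sub-segment $\overline{w_i q}$ lies in $\mo$, so $q \in \vis(w_i)$. By $x$-monotonicity of $\mo$, the intersection of $\mo$ with the vertical line through $w_j$ is a single vertical segment containing both $w_j$ and $q$; hence $\overline{w_j q} \subseteq \mo$ and $q \in \vis(w_j)$. This places $q \in \vis(w_i) \cap \vis(w_j)$, violating the defining property of a witness set. Therefore no such $w_j$ exists and we must have $i = k$.

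The $\LA$-side is obtained by the mirror argument: replace $\XM(\mo)$ by $\XMi(\mo)$, use the symmetric statement of Observation \ref{obs:lara}, and take a witness $w_j$ with $x(w_j) < x(w_i)$ (available whenever $i > 1$); the same vertical-chord construction supplies a point of $\vis(w_i) \cap \vis(w_j)$ and again contradicts the witness property. Finally, the ``may not hold'' clause for the extremes is exactly because the contradiction above is vacuous there: no witness sits to the left of $w_1$ to produce a clash at $\XMi(\mo)$, and none sits to the right of $w_k$ to produce one at $\XM(\mo)$, so $\LA(w_1)$ and $\RA(w_k)$ are legitimately allowed to be undefined. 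The only mildly delicate step is the use of $x$-monotonicity to conclude $\overline{w_j q} \subseteq \mo$; once this vertical-visibility fact is pinned down, the rest of the argument is immediate.
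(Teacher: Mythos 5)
Your proof is correct and follows essentially the route the paper intends: the paper states this observation without an explicit proof, justifying it only by citing \Cref{obs:lara}, and your argument is exactly that citation combined with the vertical-chord reasoning the paper itself uses to prove \Cref{lem-cl1} (namely that $\ell(w_j)\subset\vis(w_j)$ by $x$-monotonicity, so any witness with $x$-coordinate at most $x(\XM(\vis(w_i)))$ would clash with $w_i$). You have simply made explicit the step the paper leaves implicit, and the argument is sound.
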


	\begin{definition}[$ \Rb() $ and $ \Lb() $]\label{def:fr}
		{\em For a point $p \in \mo$, we define $\Rb(p)$  as the chord with endpoints at  $\RA(p)$  and $\IMr(p)$. Similarly,  we define  $\Lb(p)$ as  the chord with endpoints  $\LA(p)$) and  $\IMl(p)$.} 
	\end{definition}

    For an illustration of  Definitions \ref{def:one}, \ref{def:two}, \ref{def:thr}, and \ref{def:fr}  see \Cref{fig-Intersect}. Observe that it might happen that for some point $p$, there is no $ \Rb(p) $ or  $ \Lb(p) $(due to \Cref{obs:lara}).

	\begin{definition}\label{def:chord}
		{\em We define $\ell(p)$ as the vertical chord passing through a point $p \in \mo$.}
	\end{definition}

 The following observation is immediate.

	\begin{observation}  \label{fact-Chord}
		For any point $p \in \mo$, we have $\ell(p) \subset \vis(p)$ . 
	\end{observation}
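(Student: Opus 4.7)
\medskip

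The plan is to argue directly from the definition of $x$-monotonicity of $\mo$. Recall that $\mo$ being monotone with respect to the $x$-axis means every vertical line intersects $\bd(\mo)$ in at most two points. Consequently, for any point $p \in \mo$, the vertical line through $p$ meets $\bd(\mo)$ in exactly two points (or is tangent to $\bd(\mo)$ at a single vertex, but then $p$ itself lies on $\bd(\mo)$), and the segment between these two boundary points---which is precisely the chord $\ell(p)$---lies entirely inside $\mo$.

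The key step is then just to observe: for any point $q \in \ell(p)$, the segment $\overline{pq}$ is a vertical segment contained inside the vertical chord $\ell(p)$, since both $p$ and $q$ share the same $x$-coordinate and lie on a common vertical chord. Because $\ell(p) \subseteq \mo$, we get $\overline{pq} \subseteq \mo$, so by the definition of visibility $q \in \vis(p)$. Since this holds for every $q \in \ell(p)$, we conclude $\ell(p) \subseteq \vis(p)$.

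There is essentially no obstacle here---the whole statement is an immediate consequence of the definition of $x$-monotonicity. The only small subtlety worth flagging is the boundary case where the vertical line through $p$ touches $\bd(\mo)$ at a single point (i.e., $p$ lies at a leftmost/rightmost vertex), in which case $\ell(p)$ degenerates to $\{p\}$ and the statement holds trivially since $p \in \vis(p)$. Thus the proof reduces to a one-line verification using monotonicity, and no deeper geometric argument is needed.
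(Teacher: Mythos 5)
Your proof is correct: the paper states this observation without proof (declaring it ``immediate''), and your argument---that any $q \in \ell(p)$ satisfies $\overline{pq} \subseteq \ell(p) \subseteq \mo$, hence $q \in \vis(p)$---is exactly the one-line justification the authors leave implicit. The degenerate case you flag at $\XMi(\mo)$/$\XM(\mo)$ is handled correctly as well.
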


	\noindent In the following \Cref{lem-cl1} and  \Cref{lem-cl2}, we show that for every pair of points $a,b \in \mo$, we can preprocess visibility regions $\vis(a)$ and $\vis(b)$  during the computation of $\vis(a)$ and $\vis(b)$ that helps us to check if $\vis(a) \cap \vis(b)=\emptyset$ in $\OO(1)$ time. 
	
	
	\begin{observation} \label{lem-cl1}
		Consider a pair of points $a, b \in \mo$ with  $x(a) < x(b)$. If $x(b) \le x(\XM(\vis(a))$, then $\vis(a) \cap \vis(b) \neq \emptyset$
	\end{observation}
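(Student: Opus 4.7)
The plan is to locate a common point in $\vis(a) \cap \vis(b)$ by exploiting the vertical chord $\ell(b)$ together with the connectivity of $\vis(a)$. By \Cref{fact-Chord}, we already have $\ell(b) \subseteq \vis(b)$, so it suffices to produce a point on $\ell(b)$ that is also in $\vis(a)$.

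The key observation is that $\vis(a)$ is star-shaped with respect to $a$ (every point of $\vis(a)$ is joined to $a$ by a segment lying in $\mo$), hence $\vis(a)$ is a connected subset of $\mo$. Consider the continuous function $x \mapsto x\text{-coordinate}$ restricted to the connected set $\vis(a)$. The image of this map is an interval of $\mathbb{R}$ that contains $x(a)$ (since $a \in \vis(a)$) and contains $x(\XM(\vis(a)))$ (by definition of $\XM$). By hypothesis,
\[
x(a) \;<\; x(b) \;\le\; x(\XM(\vis(a))),
\]
so $x(b)$ lies in this interval, which means there exists $q \in \vis(a)$ with $x(q) = x(b)$.

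Now I would argue that $q \in \ell(b)$. Since $\vis(a) \subseteq \mo$, the point $q$ lies in $\mo$ and on the vertical line through $b$; because $\mo$ is $x$-monotone, the intersection of $\mo$ with this vertical line is exactly the single chord $\ell(b)$, so $q \in \ell(b)$. Combined with \Cref{fact-Chord}, this gives $q \in \vis(a) \cap \ell(b) \subseteq \vis(a) \cap \vis(b)$, proving $\vis(a) \cap \vis(b) \neq \emptyset$.

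The argument is essentially a one-liner once the right objects are assembled; there is no real obstacle, but the one place to be careful is the step where we pass from ``$\vis(a)$ meets the vertical line at $x(b)$'' to ``$\vis(a)$ meets $\ell(b)$''. This relies crucially on $x$-monotonicity of $\mo$ to ensure that the vertical line cuts $\mo$ in a single segment, so that any point of $\vis(a) \subseteq \mo$ on that vertical line is automatically on $\ell(b)$. Without monotonicity the intersection could lie on a different vertical chord of $\mo$ and the argument would break.
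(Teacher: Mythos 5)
Your proposal is correct and follows exactly the same route as the paper's proof: show $\ell(b) \cap \vis(a) \neq \emptyset$ and then conclude via \Cref{fact-Chord} that $\ell(b) \subseteq \vis(b)$. The paper simply asserts the first step, whereas you justify it carefully via connectivity of $\vis(a)$ and the $x$-monotonicity of $\mo$; that extra care is a welcome addition but does not change the argument.
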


\begin{proof}
    If $x(b) \le x(\XM(\vis(a))$, then $l(b)\cap\vis(a) \neq \emptyset$ and hence, $\vis(b)\cap\vis(a)\neq \emptyset$ as from \Cref{fact-Chord} we have that $\ell(b) \subset \vis(b)$. 
\end{proof}

	\begin{lemma} \label{lem-cl2}
		Consider  a pair of points $a, b \in \mo$ with  $x(a) < x(b)$ and  $x(b)>x(\XM(\vis(a))$. Then $\vis(a)\cap \vis(b)\neq \emptyset$ if and only if $\Rb(a)$ intersects with $\Lb(b)$.
	\end{lemma}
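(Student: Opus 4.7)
My plan is to prove the two implications separately, exploiting the fact that each of $\Rb(a)$ and $\Lb(b)$ divides $\mo$ into two subpolygons and forms the rightward (respectively leftward) boundary of the corresponding visibility region. Throughout, let $L_a \ni a$ and $R_a$ be the subpolygons of $\mo$ cut out by $\Rb(a)$, and $L_b$ and $R_b \ni b$ those cut out by $\Lb(b)$.

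For the easy direction ($\Leftarrow$), every point of $\Rb(a)$ lies on the ray from $a$ through $\RA(a)$ and is therefore visible from $a$, so $\Rb(a) \subseteq \vis(a)$; symmetrically $\Lb(b) \subseteq \vis(b)$. Any common point of the two chords thus witnesses $\vis(a) \cap \vis(b) \neq \emptyset$.

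For the forward direction ($\Rightarrow$), I would first verify two structural facts: (i) the $x$-monotonicity of $\mo$ together with $x(b) > x(\XM(\vis(a))) = x(\IMr(a))$ forces $b \in R_a$; and (ii) $\vis(a) \subseteq L_a \cup \Rb(a)$, because a point in the interior of $R_a$ is either not collinear with $a$ and $\RA(a)$ (in which case the segment from $a$ cannot meet $\Rb(a)$ away from $a$ and hence must exit $\mo$ to reach it) or lies on the extension of the ray past $\IMr(a)$ (hence outside $\mo$). The analogous statements hold for $\Lb(b)$: $\vis(b) \subseteq R_b \cup \Lb(b)$. Given any $p \in \vis(a) \cap \vis(b)$, the segment $\overline{bp}$ lies in $\vis(b)$ by the star-shape property of visibility polygons, runs from $b \in R_a$ to $p \in L_a \cup \Rb(a)$, and therefore crosses $\Rb(a)$ at some point $q \in \Rb(a) \cap \vis(b)$.

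Since $q \in \vis(b) \subseteq R_b \cup \Lb(b)$, either $q \in \Lb(b)$ and we are done, or $q \in R_b$. In the latter case I plan to use that $\Rb(a)$ lies entirely to the left of $b$ in $x$-coordinate and that the leftward boundary of $\vis(b)$ inside $\mo$ is $\Lb(b)$: any maximal subinterval of $\Rb(a) \cap \vis(b)$ must be bounded within $\Rb(a)$ either by a point of $\Lb(b)$ (immediately giving the conclusion) or by an endpoint $\RA(a)$ or $\IMr(a)$ of $\Rb(a)$. In the endpoint sub-case one has $\RA(a) \in \vis(b)$, and tracing the shortest path from $b$ to $\XMi(\mo)$ through $\RA(a)$ yields $\RA(a) = \LA(b)$, so the two chords share this common vertex. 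I expect the main obstacle to be justifying the `leftward boundary is $\Lb(b)$' claim rigorously; I would handle it by a case analysis on where $\RA(a), \IMr(a)$ sit relative to $\Lb(b)$, using $x$-monotonicity of $\mo$ to rule out $\Rb(a)$ entering $\vis(b)$ through any other window.
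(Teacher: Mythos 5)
Your reverse direction and the overall setup of the forward direction coincide with the paper's proof: the paper likewise uses that $\Rb(a)$ and $\Lb(b)$ are separating chords, proves the containments $\vis(a)\subseteq \mo_a^{in}=L_a\cup\Rb(a)$ and $\vis(b)\subseteq \mo_b^{in}=R_b\cup\Lb(b)$ by the same collinearity argument, and then concludes that the chords must cross. Your extra step of producing an explicit point $q\in\Rb(a)\cap\vis(b)$ by intersecting $\overline{bp}$ with $\Rb(a)$ is sound and is a reasonable way to make the paper's (rather terse) final assertion concrete. The problem is that your completion of that final step does not go through.

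Concretely, in the case $q\in R_b$ your argument rests on the claim that a maximal subinterval of $\Rb(a)\cap\vis(b)$ can only terminate at a point of $\Lb(b)$ or at an endpoint of $\Rb(a)$. This is the entire content of the remaining forward direction, and it is not established: the boundary of $\vis(b)$ in the interior of $\mo$ consists of \emph{all} windows of $\vis(b)$, and in a monotone polygon $b$ generally has several left-facing windows besides $\Lb(b)$ (one for each reflex vertex that locally occludes its view), so the interval can a priori terminate on any of them. You flag this yourself, but the proposed remedy --- ``rule out $\Rb(a)$ entering $\vis(b)$ through any other window by a case analysis'' --- restates the claim rather than proving it. Moreover, the endpoint sub-case is closed by the inference ``$\RA(a)\in\vis(b)$ \dots yields $\RA(a)=\LA(b)$,'' which is a non sequitur: $\LA(b)$ is by definition the first reflex vertex on $\spl(b)$, the shortest path from $b$ to $\XMi(\mo)$, and a reflex vertex being visible from $b$ and lying to its left does not place it on that path, let alone first; the sub-case where the bounding endpoint is $\IMr(a)$ is not treated at all. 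If you want to keep your explicit route, the fact to prove is that $\Rb(a)$ has points on both sides of (or on) $\Lb(b)$; a useful lever is that $a$, $\RA(a)$ and $q$ are collinear, so the line of $\LC(b)$ meets the line of $\RC(a)$ in a single point, which sharply constrains where $\Lb(b)$ can cross the chord through $a$ --- but as written the forward direction is not established.
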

	
	\begin{proof}
            First, let us assume that $\vis(a) \cap \vis(b) \neq \emptyset$. Then, for a pair of points $a, b \in \mo$ such that $x(a)<x(b)$ and $x(b)>x(\XM(\vis(a))$, the line segment $\Rb(a)$ partitions $\mo$ into two subpolygons $\mo_a^{in}, \mo_a^{out}$ such that $a \in \mo_a^{in}$ and $b \in \mo_a^{out}$. Also observe that, this $\Rb(a)$ forms the boundary between the two partitions $\mo_a^{in}$ and $\mo_a^{out}$. We consider $\Rb(a)$ to belong to the partition $\mo_a^{in}$. Similarly, we can partition $\mo$ into two subpolygons $\mo_b^{in}$ and $\mo_b^{out}$, by $\Lb(b)$ as well, where we consider $\Lb(b)$ to belong to the partition $\mo_b^{in}$. We see that $\vis(a) \subseteq \mo_a^{in}$ and $\vis(b) \subseteq \mo_b^{in}$. This is true because if there exists a point $y\in \vis(a) \cap \mo_a^{out}$ then, the line of visibility from $a$ to $y$ lies entirely in $\mo$ and hence must intersect $\Rb(a)$ (as it partitions $\mo$). This can only happen if $y$ lies in $\Rb(a)$, as $\Rb(a)$ is collinear to $a$ (see \Cref{containedinpartion}). But this leads to a contradiction as $\Rb(a) \subseteq \mo_a^{in}$. A similar argument shows that $\vis(b) \subseteq \mo_b^{in}$. Now, as $\vis(a) \cap \vis(b) \neq \emptyset$, it implies that $\mo_a^{in} \cap \mo_b^{in} \neq \emptyset$. But as $\mo_a^{in}$ and $\mo_b^{in}$ are two partitions of the polygon $\mo$ whose boundaries are $\Rb(a)$ and $\Lb(b)$ respectively, they must intersect. (See  \Cref{fig-cintersect}).
            
            Conversely, let us assume that $\Rb(a)$ intersects with $\Lb(b)$ at $c\in \mo$. Now, observe that the point $c$ is visible from both $a$ and $b$. Hence, we  conclude that $\vis(a)\cap\vis(b)\neq \emptyset$.
\begin{figure}[ht!]
		\centering
		\begin{subfigure}[b]{0.45\textwidth}
			\centering
			\includegraphics[width=\textwidth]{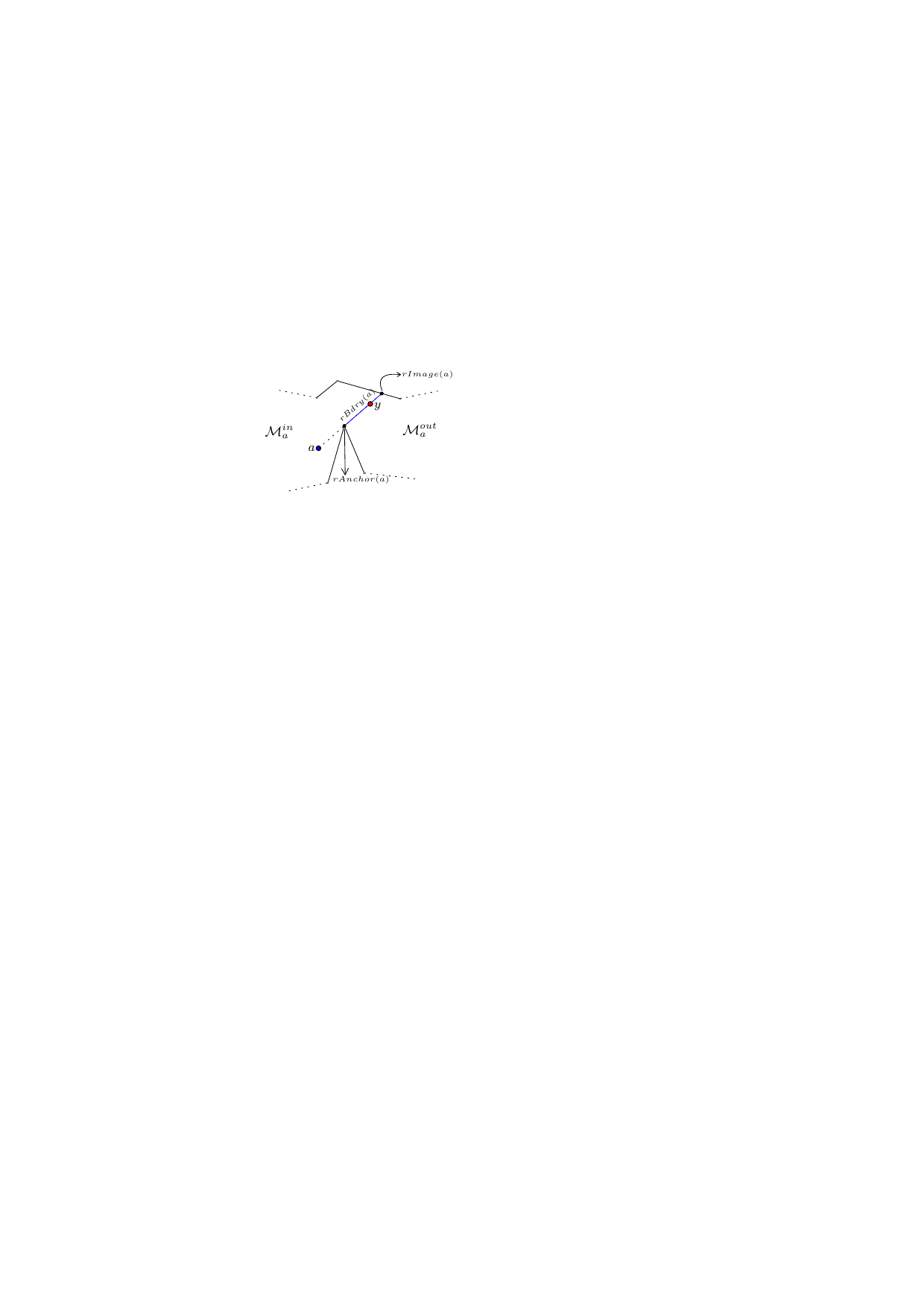}
			\subcaption{$a$ sees no point in the partition $\mo_a^{out}$.}
			\label{containedinpartion}
		\end{subfigure}
		\hspace{8mm}
		\begin{subfigure}[b]{0.45\textwidth}
			\centering
			\includegraphics[width=\textwidth]{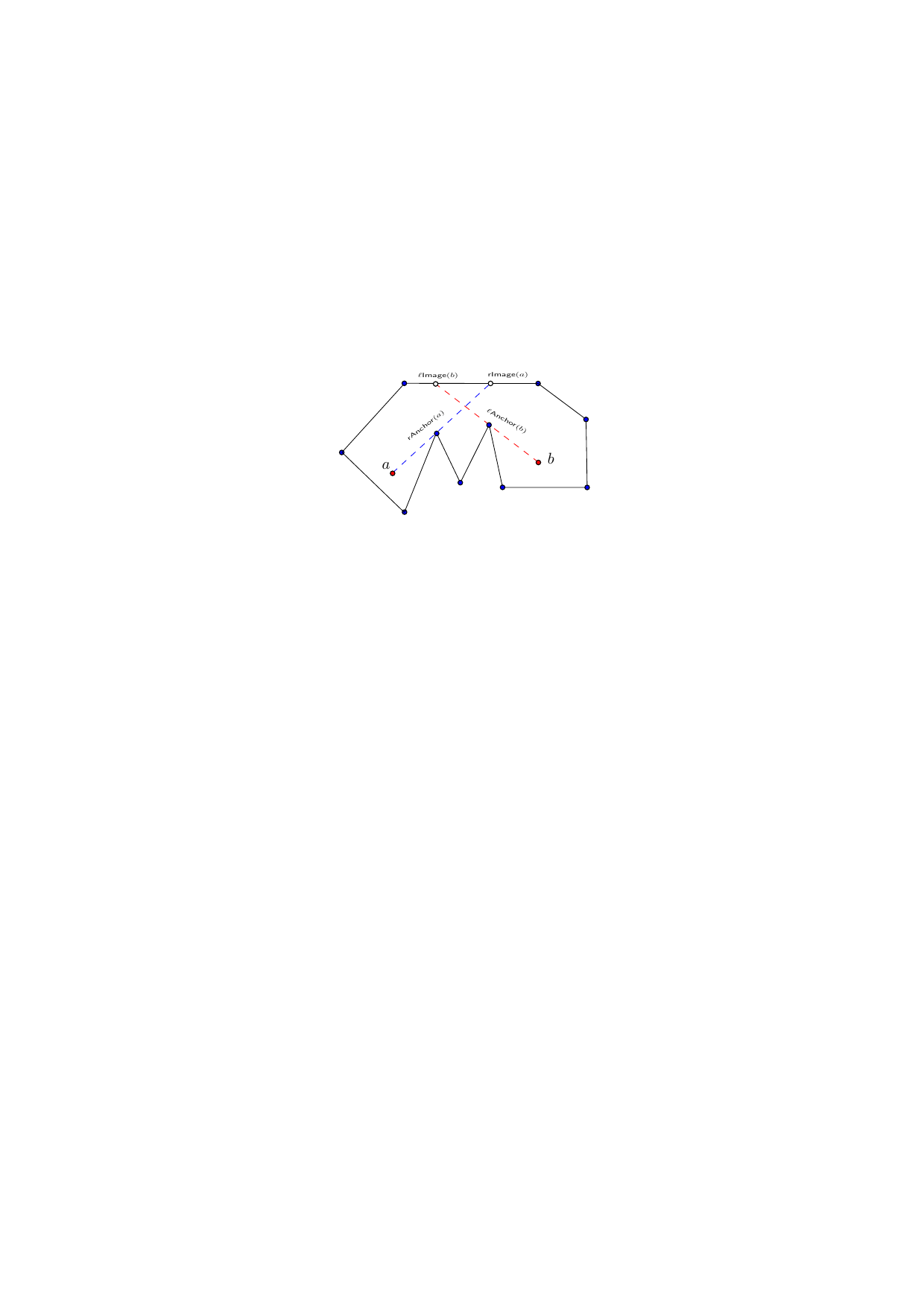}
			\subcaption{Intersection of $\Rb(a)$ and $\Lb(b)$.}
			\label{fig-cintersect}
		\end{subfigure}
        \caption{Necessary and sufficient condition for two visibility regions to intersect.}
	\end{figure}
	\end{proof}
    
Therefore, we have established a necessary and sufficient condition to determine whether $\vis(a)\cap\vis(b)$ is empty. 
    


	\begin{lemma} \label{lem-intersectionGconstruction}
		$\G(F)$ can be constructed in $\OO(n|F| +|F|^2)$ time.
	\end{lemma}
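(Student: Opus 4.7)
\medskip
\noindent\textbf{Proof proposal.} The plan is to decompose the construction into a preprocessing phase that costs $\OO(n|F|)$ and a pairwise-edge-check phase that costs $\OO(|F|^2)$, by leveraging \Cref{lem-cl1} and \Cref{lem-cl2} which reduce the visibility-intersection test between two points to a constant-time geometric predicate on precomputed data.

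First, I would preprocess each point $a \in F$ individually. Since $\mo$ is a monotone polygon on $n$ vertices, the visibility region $\vis(a)$ of a single point can be computed in $\OO(n)$ time using a standard sweep along the upper and lower chains of $\mo$. During this sweep, I would also record, for each $a \in F$, the three pieces of auxiliary data used by \Cref{lem-cl1} and \Cref{lem-cl2}: namely, (i) the value $x(\XM(\vis(a)))$, (ii) the chord $\Rb(a)$ (i.e.\ the endpoints $\RA(a)$ and $\IMr(a)$, if they exist), and (iii) the chord $\Lb(a)$ (i.e.\ the endpoints $\LA(a)$ and $\IMl(a)$, if they exist). All three quantities are read off directly from the vertices of $\vis(a)$ in $\OO(n)$ time per point, so this preprocessing costs $\OO(n|F|)$ in total. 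I would also sort $F$ by $x$-coordinate in $\OO(|F| \log |F|)$ time, which is absorbed into $\OO(n|F|)$.

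Next, for each ordered pair $(a,b) \in F \times F$ with $x(a) < x(b)$, I would decide whether to place the edge $ab$ in $\G(F)$ in $\OO(1)$ time using the precomputed data. Concretely, I would first compare $x(b)$ with the stored value $x(\XM(\vis(a)))$: if $x(b) \le x(\XM(\vis(a)))$, then by \Cref{lem-cl1} the edge is present. Otherwise, by \Cref{lem-cl2} the edge is present if and only if the segment $\Rb(a)$ intersects the segment $\Lb(b)$, and this is a constant-time segment-segment intersection test on two already-known segments (handling the boundary cases where $\Rb(a)$ or $\Lb(b)$ does not exist via \Cref{obs:lara} and \Cref{obs-existence_of_LA_and_RA}). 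Iterating over all $\binom{|F|}{2}$ pairs therefore costs $\OO(|F|^2)$.

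Putting the two phases together yields total running time $\OO(n|F| + |F|^2)$. The only mildly delicate step is making sure that, during the single linear-time sweep used to compute $\vis(a)$, the bounding chords $\Rb(a)$ and $\Lb(a)$ are identified correctly when they do or do not exist; apart from this bookkeeping, the argument is a direct combination of the preprocessing bound with the constant-time predicate provided by \Cref{lem-cl1} and \Cref{lem-cl2}.
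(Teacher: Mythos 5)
Your proposal is correct and follows essentially the same route as the paper: compute $\vis(a)$ together with $\Rb(a)$, $\Lb(a)$, and $\XM(\vis(a))$ for each $a \in F$ in $\OO(n)$ time per point, then decide each of the $\OO(|F|^2)$ edges in constant time via \Cref{lem-cl1} and \Cref{lem-cl2}. The paper simply cites the known linear-time visibility-polygon algorithm rather than describing a sweep, but the decomposition and the cost analysis are identical.
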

	
	\begin{proof} Let $|F|=f$. The visibility polygon for a point $p \in \mo$, denoted as $\vis(p)$, can be determined in $\OO(n)$ time, as demonstrated by \cite{ghosh2007vis}. Consequently, computing visibility regions for all the points in $F$ requires $\OO(fn)$ time, which is also the time complexity for constructing the vertices of $\G(F)$. While determining $\vis(p)$, we generate $\Rb(p)$ and $\Lb(p)$. By utilizing \Cref{lem-cl1} and \Cref{lem-cl2}, we can check  whether $\vis(a) \cap \vis(b)$ is empty in $\OO(1)$ time. Therefore, computing the edges of $\G(F)$ for $F$ takes $\OO(f^2)$ time. Thus, the graph $\G(F)$ can be fully constructed within $\OO(nf + f^2)$ time.
	\end{proof}

	\begin{observation}\label{lem-transitive}
		Consider three distinct points $p, q, r$ in a monotone polygon $\mo$ with $x(p) < x(q) <x(r)$. If $\vis(p) \cap \vis(q)= \emptyset$ and $\vis(q) \cap \vis(r) = \emptyset$ then $\vis(p) \cap \vis(r) = \emptyset$.
	    
	\end{observation}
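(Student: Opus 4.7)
The plan is to argue by contradiction: suppose $\vis(p)\cap\vis(r)\neq\emptyset$ and pick a point $z$ in this intersection. The key leverage is Observation~\ref{fact-Chord}, which says $\ell(q)\subseteq\vis(q)$, combined with the following consequence of $x$-monotonicity of $\mo$: since any vertical line meets $\bd(\mo)$ in at most two points, the chord $\ell(q)$ is precisely the set $\{u\in\mo:x(u)=x(q)\}$. Hence every point of $\mo$ sitting on the vertical line $X=x(q)$ is automatically visible from $q$.

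I would then split into two symmetric cases depending on whether $x(z)\ge x(q)$ or $x(z)\le x(q)$. In the first case, since $z\in\vis(p)$ the segment $\overline{pz}$ lies in $\mo$, and because $x(p)<x(q)\le x(z)$ it must cross the vertical line $X=x(q)$ at some point $z'$. Then $z'\in\mo$ with $x(z')=x(q)$ forces $z'\in\ell(q)\subseteq\vis(q)$, while $z'\in\overline{pz}\subseteq\mo$ gives $\overline{pz'}\subseteq\overline{pz}\subseteq\mo$, so $z'\in\vis(p)$. This yields $z'\in\vis(p)\cap\vis(q)$, contradicting $\vis(p)\cap\vis(q)=\emptyset$. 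The second case is completely symmetric by swapping the role of $p$ with $r$: when $x(z)\le x(q)<x(r)$, the segment $\overline{rz}\subseteq\mo$ crosses $X=x(q)$ at some $z''\in\mo\cap\ell(q)\subseteq\vis(q)$, and $z''\in\vis(r)$ by the same subsegment argument, contradicting $\vis(q)\cap\vis(r)=\emptyset$. Since the two cases jointly cover every value of $x(z)$, the assumption $\vis(p)\cap\vis(r)\neq\emptyset$ is untenable.

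I do not expect any real obstacle here. The proof uses only the trivial topological fact that a segment whose endpoints straddle a vertical line must meet that line, together with Observation~\ref{fact-Chord} and the $x$-monotonicity of $\mo$ to identify the crossing point as an element of $\vis(q)$. Notably, there is no need to invoke reflex vertices, the chords $\Rb(\cdot),\Lb(\cdot)$, or Lemmas~\ref{lem-cl1} and~\ref{lem-cl2}; transitivity is a direct two-line consequence of the vertical-chord property specific to monotone polygons, which is why the observation phrasing (rather than a lemma) is appropriate here.
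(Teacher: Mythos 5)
Your proof is correct and follows essentially the same route as the paper's: both arguments rest on the fact that the vertical chord $\ell(q)$ separates $\mo$ into a side containing $p$ and a side containing $r$, so any common point of $\vis(p)$ and $\vis(r)$ forces one of the connecting segments to cross $\ell(q)\subseteq\vis(q)$ (Observation~\ref{fact-Chord}), a contradiction. Your write-up merely makes explicit the case split and the identification $\ell(q)=\mo\cap\{x=x(q)\}$ that the paper leaves implicit.
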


	\begin{proof}
		Since $\vis(p) \cap \vis(q) = \emptyset$ and $\vis(q) \cap \vis(r) = \emptyset$, the line $l(q)$ divides the polygon $\mo$ into two separate regions, where one contains $p$ and the other contains  $r$. Thus, for $\vis(p)$ to have an intersection with $\vis(r)$, one of them must intersect $l(q)$, as $l(q)$ is responsible for partitioning $\mo$ into distinct regions, each containing either $p$ or $r$. Hence the proof.
	\end{proof}

   \Cref{lem-transitive} directly implies the following. Consider  a monotone polygon $\mo$ and a  witness set $W = \{w_1, w_2, \ldots, w_k\}$  in $\mo$. Then, for any witness point $w_i$ where $i \in \{2, \ldots, k-1\}$, if we can find another point $q$ such that $x(w_{i-1}) < x(q) < x(w_{i+1})$ with $\vis(w_{i-1}) \cap \vis(q) = \emptyset$ and  $\vis(w_{i+1}) \cap \vis(q) = \emptyset$ then $W \smallsetminus \{w_i\} \cup \{q\}$ is also a witness set in $\mo$. Below, we show that the visibility intersection graph of a monotone polygon is co-comparable.




	\begin{lemma} \label{lem-cocomparable}
		Let $F$ be a set of points in a monotone polygon. Then $\G(F)$ is a co-comparable graph.
	\end{lemma}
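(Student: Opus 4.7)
My plan is to show that the complement $\widetilde{\G(F)}$ admits a transitive orientation, which by the equivalent definition of comparability graphs noted in the remark after the definition of a co-comparable graph suffices to conclude that $\G(F)$ is co-comparable. An edge $\{a,b\}$ in $\widetilde{\G(F)}$ corresponds exactly to a pair of points in $F$ whose visibility regions are disjoint, i.e., $\vis(a) \cap \vis(b) = \emptyset$, so we need to orient such pairs transitively.

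The natural orientation I would use is the horizontal one: orient $a \to b$ whenever $\vis(a) \cap \vis(b) = \emptyset$ and $x(a) < x(b)$. The first thing to justify is that this is well-defined on all edges of $\widetilde{\G(F)}$, i.e., that disjoint visibility regions force distinct $x$-coordinates. If $x(a) = x(b)$, then the vertical chord $\ell(a)$ through $a$ equals $\ell(b)$ (since $\mo$ is $x$-monotone and each vertical line meets $\mo$ in a single chord), so by \Cref{fact-Chord} both $a$ and $b$ see this common chord, forcing $\vis(a)\cap\vis(b)\neq\emptyset$; hence every edge of $\widetilde{\G(F)}$ has a strict $x$-ordering and the orientation is unambiguous.

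Next I would verify transitivity. Suppose $a \to b$ and $b \to c$ are oriented edges in $\widetilde{\G(F)}$; then by construction $x(a) < x(b) < x(c)$, and $\vis(a)\cap\vis(b) = \emptyset$ as well as $\vis(b)\cap\vis(c) = \emptyset$. This is precisely the hypothesis of \Cref{lem-transitive}, which yields $\vis(a)\cap\vis(c) = \emptyset$. Therefore $\{a,c\}$ is an edge of $\widetilde{\G(F)}$, and since $x(a) < x(c)$ the orientation gives $a \to c$, establishing transitivity.

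I do not anticipate a major technical obstacle here, since the two non-trivial ingredients — well-definedness of the orientation and closure under transitivity — are supplied respectively by the monotonicity of $\mo$ (via \Cref{fact-Chord}) and by \Cref{lem-transitive}, which has already been proved. The only subtle point is the well-definedness check ruling out $x(a) = x(b)$; everything else is a direct application of the horizontal sweep orientation. Having exhibited a transitive orientation of $\widetilde{\G(F)}$, I conclude that $\widetilde{\G(F)}$ is a comparability graph and therefore $\G(F)$ is co-comparable, which is exactly the statement of the lemma.
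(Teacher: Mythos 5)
Your proof is correct and follows essentially the same route as the paper: orient the edges of $\widetilde{\G(F)}$ by $x$-coordinate and invoke \Cref{lem-transitive} to get transitivity. The only difference is that you additionally verify well-definedness by ruling out $x(a)=x(b)$ via the shared vertical chord and \Cref{fact-Chord}, a small point the paper's proof leaves implicit.
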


	\begin{proof}
		
		We show that the complement graph $\widetilde{\G(F)}$ of $\G(F)$ is comparable. To that end, we prove that the edges of  $\widetilde{\G(F)}$ has a transitive orientation, i.e.,   an assignment of directions to the edges of the graph (i.e. an orientation of the graph) such that the adjacency relation of the resulting directed graph is transitive: whenever there exist directed edges $(p,q)$ and $(q,r)$, there must exists an edge $(p,r)$ where $ p,q,r $ are the vertices in $\widetilde{\G(F)}$.

		
		We assign a transitive orientation to the edges of $\widetilde{\G(F)}$ in the following method: Begin by ordering the vertices in $\widetilde{\G(F)}$ based on sorting the points in $F$ according to their $x$-coordinates, with a vertex $p$ preceding a vertex $q$ if $x(p) > x(q)$. When an edge exists between any two vertices $p$ and $q$ where $p$ comes before $q$, we direct the edge from $p$ to $q$. Following this directional assignment, if there is a directed edge from $p$ to $q$ and another from $q$ to $r$ in $\widetilde{\G(F)}$, it follows that $x(p) > x(q) > x(r)$ and $\vis(p) \cap \vis(q)= \emptyset$ as well as $\vis(q) \cap \vis(r) = \emptyset$. By applying \Cref{lem-transitive}, we deduce $\vis(p) \cap \vis(r) = \emptyset$, necessitating a directed edge from $p$ to $r$ in the orientation. Because $p, q, r$ are chosen arbitrarily, we can conclude that $\widetilde{\G(F)}$ forms a comparability graph. 
	\end{proof}

	\begin{proof}[\textbf{Proof of \Cref{theo-finite-witness-in-M}}]
	By \Cref{lem-intersectionGconstruction}, the graph $\G(S)$ can be formed in $\OO(|V(\po)||S| + |S|^2)$ time. Given that $\G(S)$ is co-comparable (due to \Cref{lem-cocomparable}), \Cref{prop:mis} implies that a maximum independent set in $\G(Q)$ can be determined in $\OO(|S|^2)$ time \cite{DBLP:journals/dm/McConnellS99}. Thus, the overall running time is  $\OO(|S|^2 + |V(\po)||S|)$. 
	\end{proof}

}

\section{{\sc Witness Set} in a Monotone Polygon}

\paragraph{Overview of this section.} In this section, our aim is to find an algorithm for the {\sc  Witness Set} in monotone polygons. To do that, we first build the preliminaries needed to develop our algorithm. We try to find a discrete set inside a monotone polygon $\mo$ that will suffice to contain a solution of the {\sc Witness Set} in $\mo$. We propose an algorithm (\Cref{algo_1})for the {\sc Witness Set} in $\mo$ running in time $r^{\OO(k)}n^{\OO(1)}$,  accordingly it creates a $\pws$ of $\mo$, i.e., it generates a set of points which contain the set of vertices of $\mo$, the midpoints of the line segments joining any two reflex vertices of $\mo$, and a new set of points on $\mo$, which we will {denote by $V(\mo),\ro_{mid}$ and $\zm$, respectively} (defined later). Note that for a witness set $X= \{w_1, \ldots, w_{\ell} \}$ with $x(w_1) < \ldots < x(w_{\ell})$, a point $w \notin X$ to check whether $X \cup \{w\}$ becomes a witness set, we only need to check the intersection of $\vis(w)$ with one pair of witnesses in $X$, which are the left and right points of $w$ in $X$, that is, $w_i$ and $w_{i+1}$, where $x(w_i) < x(w) < x(w_{i+1})$ \ifthenelse{\boolean{shortver}}{}{(due to \Cref{lem-transitive})}. We then prove the correctness of our algorithm. After obtaining a $\pws$ of $\mo$, we use the result from the previous sections to obtain the visibility intersection graph $\G(\mo)$. Since $\G(\mo)$ is an outerstring graph \ifthenelse{\boolean{shortver}}{}{(due to \Cref{vigsiglemma} and \Cref{sigouterstring})}, we can obtain maximum independent set via an algorithm presented {by the authors} in \cite{DBLP:journals/comgeo/KeilMPV17}, to obtain an optimal Witness set of $\mo$ \ifthenelse{\boolean{shortver}}{}{(due to \Cref{indwi})}. Finally, we conclude with the main result of this section.


  \thfpt*

 In the following, we formally describe our procedure for finding a maximum witness set in a monotone polygon $ \mo $. 
	Let $ W= \{w_1, \ldots, w_k\} $ be a solution of the {\sc Witness Set} in $\mo$ where $\mathtt{ws}(\mo)=k $ and $ x(w_1) < \ldots < x (w_k) $ (due to $x-$monotonicity). We characterize a region of special importance for each witness in $ W $ (called $\tra(\cdot,\cdot)$, formally defined later, \Cref{def-trap}), which in turn will help us to obtain a discretization of $\mo$ into a finite point set $Q$ such that the size of a maximum witness set $Z \subseteq Q$ is $k$.

 \subsection*{Step 1: Discretization} \label{sec-Discretization}
 
We start by introducing some essential definitions and notation for our purposes. 

\ifthenelse{\boolean{shortver}}{}{To start with, we define a set $\ter(w, W)$ and a region denoted as $\tra(w)$ for a witness $w \in W$, as described below.}

	\begin{definition}[$\ter(\cdot,\cdot)$] \label{def-territory}
		{\em Let $W$ be a witness set of a monotone polygon $\mo$. For   $w \in W$, $\ter(w, W)$ is a non-empty connected region in  $\mo$ that satisfies
  \begin{enumerate}
      \item $ \vis(w)  \subseteq \ter(w, W)$
      \item $\ter(w, W) \cap \vis(w')=\emptyset$ for every $w' \in W \smallsetminus \{w\}$
  \end{enumerate}}

	\end{definition}
	
	\noindent In \Cref{def-territory}, condition (1) requires that $\ter(w, W)$ fully contains $\vis(w)$, and condition (2) ensures it does not intersect $\vis(w')$ for any $w' \in W \setminus \{w\}$. Note that $\vis(w)$ itself satisfies both conditions, so we treat $\ter(w, W) = \vis(w)$. For each $w \in W$, we define a connected region $\tra(w, W)$, and in \Cref{sec-trap}, we show that such a region always exists.

	
\begin{sloppypar}
\begin{definition}[$\tra(\cdot,\cdot)$] \label{def-trap}		
		{\em Let $W$ be a witness set of a monotone polygon $\mo$. For   $w \in W$,  $\tra(w, W)  $ is a connected region in $\ter(w, W)$  such that $\vis(\tra(w, W)) \subseteq  \ter(w,W)$.}
	\end{definition}
    \end{sloppypar}
	
	The definition mentioned above leads to the following observation, which is easy to follow.
	
	\begin{observation}\label{obs-wit_repre}
		For a witness set $W$ of a monotone polygon $\mo$ and $w \in W$, if $\tra(w,W) \neq \emptyset$  then for any point $ z (\neq w) \in  \tra(w,W)$, $ (W \smallsetminus \{w\}) \cup \{ z\} $ is also a witness set in $\mo$.
		
	\end{observation}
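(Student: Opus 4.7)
The plan is to unpack the two definitions and chain their set inclusions. Let $W$ be a witness set of the monotone polygon $\mo$, fix $w \in W$ with $\tra(w,W) \neq \emptyset$, and pick any $z \in \tra(w,W)$ with $z \neq w$. To show that $W' := (W \smallsetminus \{w\}) \cup \{z\}$ is again a witness set, I only need to verify the non-intersection condition for every pair of elements in $W'$. Pairs inside $W \smallsetminus \{w\}$ are disjoint by hypothesis (since $W$ itself is a witness set), so the only new pairs involve $z$ and some $w' \in W \smallsetminus \{w\}$.

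For such a pair, first I would apply the definition of $\tra$: since $z \in \tra(w,W)$, we have $\vis(z) \subseteq \vis(\tra(w,W)) \subseteq \ter(w,W)$, where the first inclusion uses the convention $\vis(A) = \bigcup_{x \in A}\vis(x)$. Next I would invoke property (2) of $\ter(\cdot,\cdot)$ from \Cref{def-territory}, which gives $\ter(w,W) \cap \vis(w') = \emptyset$ for every $w' \in W \smallsetminus \{w\}$. Combining these two inclusions yields $\vis(z) \cap \vis(w') = \emptyset$ for every $w' \in W \smallsetminus \{w\}$, which is exactly the missing condition. Hence $W'$ is a witness set of $\mo$.

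Since the argument is a direct consequence of chaining \Cref{def-territory} and \Cref{def-trap}, there is no real obstacle to the proof itself; the only subtlety is notational, namely that $\vis(z)$ for a single point $z$ is automatically contained in $\vis(\tra(w,W))$ because $z \in \tra(w,W)$. The genuine content of the observation has therefore been absorbed into the definition of $\tra$, and the main technical work is deferred to \Cref{sec-trap}, where the non-emptiness and explicit construction of $\tra(w,W)$ must be established.
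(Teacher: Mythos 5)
Your argument is correct and is exactly the chain of inclusions the paper has in mind (the paper states this observation without proof, as an immediate consequence of \Cref{def-territory} and \Cref{def-trap}): $\vis(z)\subseteq\vis(\tra(w,W))\subseteq\ter(w,W)$ and $\ter(w,W)\cap\vis(w')=\emptyset$ for all $w'\in W\smallsetminus\{w\}$, while pairs inside $W\smallsetminus\{w\}$ are handled by the hypothesis that $W$ is a witness set. Nothing is missing.
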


	\subsection{Characterizations of \texorpdfstring{\boldmath$\tra()$}{Lg}}\label{sec-trap}

	In this section, we characterize and define the $ \tra(w,W) $ for every $ w \in W $, {for which both $\LA(w)$ and $\RA(w)$ exists,} in detail. {The case for the leftmost and the rightmost witness will be dealt with later.} Recall that for any point $z \in \tra(w,W)$, $ \vis(z) \subseteq \ter(w, W)$ whereas  $\tra(w,W) $ is a connected region that is fully contained in $\ter(w, W)$. Below we define a notion of $ \RC(p) $ and $ \LC(p) $ for any point $p$ in $\mo$. 
    
	
	\begin{definition}[$ \RC() $ and $ \LC() $]\label{def-RCLC}
		{\em For a point $p$ in a monotone polygon  $\mo$, the maximal  chord in $\mo$ passing through the vertices $ p$ and  $\IMr(p)$ is defined as $\RC(p)$. Similarly, the maximal chord in $\mo$ passing through the vertices $ p$ and  $\IMl(p)$ is defined as $\LC(p)$.} 
	\end{definition}

    \ifthenelse{\boolean{shortver}}{}{

    \begin{definition}[${ \PCU()}$ and ${ \PCL()}$] \label{pChain}
		{\em For a pair of points $ a, b \in \uc(\mo)$ (resp, $\in \lc(\mo)$), we use $ \PC_{u}(a,b)$ (resp,  $ \PC_{\ell}(a,b)$ )  to denote the path from $a$ to $b$ along $\uc(\mo)$ (resp, $ \lc(\mo) $)}.
		
	\end{definition}

    \begin{definition}[hill]\label{hill}
       {\em For a pair of distinct points $x,y$ in a polygon $\po$, which cannot see each other, i.e., $y \notin \vis(x)$, and vice-versa, the line joining $x$ and $y$, must intersect the exterior $\ex(\po)$ of $\po$. Specifically, $\overline{xy} \cap \ex(\po) = \{\ell_1, \cdots, \ell_j\}$, where $\ell_i'$s are open line segments\footnote{By an open line segment $(a,b)$ we mean the line segment $\overline{ab}$ except the two endpoints $a$ and $b$.} with their endpoints on $\bd(\po)$. For any line segment $\ell_i$, we name its two endpoints as $h_i^1$ and $h_i^2$. For any $i = 1, \cdots,j$, the polygonal chain (part of $\bd(\po)$) from $h_i^1$ to $h_i^2$ is called a \textit{hill} $\mathcal{H_i}$ corresponding to a pair of mutually invisible points $x$ and $y$ (Refer to \Cref{fig-reflex} for an illustration).}
    \end{definition}

\begin{figure}[ht!]
		\centering
		\includegraphics[scale=1.0]{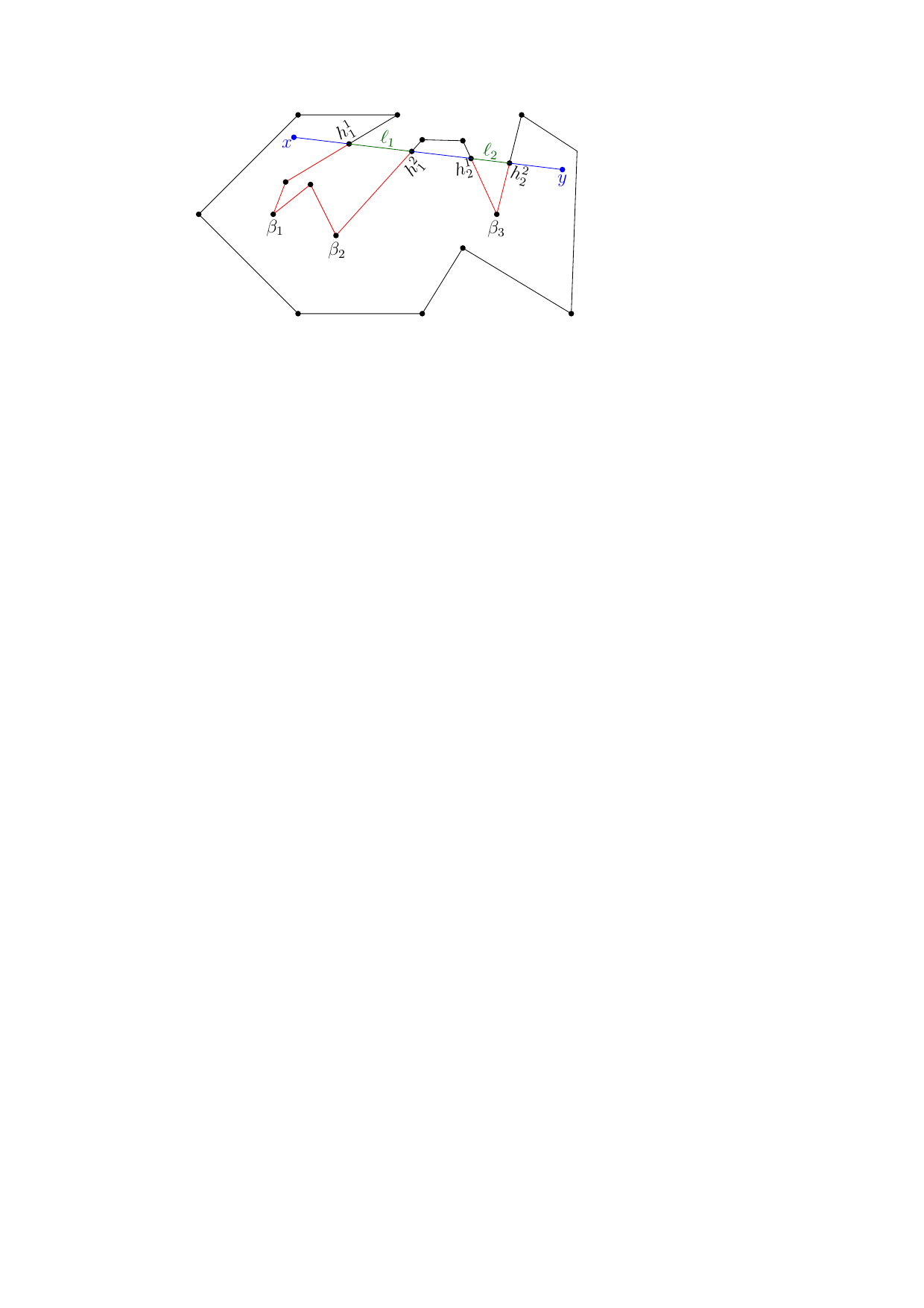}
		\caption{The points $x, y \in \po$ cannot see each other. $\ell_1 \cup \ell_2 = \overline{xy} \cap \bd(\po)$. The polygonal chains from $h_1^1$ to $h_1^2$ and $h_2^1$ to $h_2^2$ are two \textit{hills} $\mathcal{H_1},\mathcal{H_2}$ corresponding to $x,y$. $\beta_1,\beta_2$ are two reflex vertices present on $\mathcal{H_1}$ and $\beta_3$ is another reflex vertex present on the hill $\mathcal{H_2}$.} \label{fig-reflex}
	\end{figure}

    \noindent The immediate observation concerns {when} two points {in $\po$} cannot see each other. 
    
\begin{observation}\label{obs-hill}
     For a pair of {distinct} points $x,y$ in a polygon $\po$ such that $y \notin \vis(x)$, there exists at least one hill corresponding to $x,y$ (note that there might exist multiple hills corresponding to $x,y$). Any such hill (which is a chain in $\bd(\po)$) contains at least one reflex vertex (See \Cref{fig-reflex}).
\end{observation}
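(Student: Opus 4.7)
The observation has two parts: the existence of at least one hill, and the fact that any hill contains a reflex vertex. My plan is to handle each in turn. For existence, the argument is short: the hypothesis $y \notin \vis(x)$ forces $\overline{xy} \cap \ex(\po)$ to be non-empty, and since $\ex(\po)$ is open, this intersection is open in $\overline{xy}$ and decomposes as a finite disjoint union of open intervals whose endpoints lie on $\bd(\po)$. These intervals are precisely the $\ell_i$'s, and each yields a corresponding hill, so at least one hill exists.

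For the main claim, my plan is to build a simple polygon $\mathcal{Q}_i$ representing the ``pocket'' between the hill and $\ell_i$, and apply the Euclidean sum-of-interior-angles formula to force a reflex vertex. Fix a hill $\mathcal{H}_i$; since $\mathcal{H}_i \subseteq \bd(\po)$ while the open segment of $\ell_i$ lies in $\ex(\po)$, the two arcs meet only at $h_i^1$ and $h_i^2$, so $\mathcal{H}_i \cup \ell_i$ is a simple closed curve. Let $\mathcal{Q}_i$ be the bounded simple polygon it encloses. A small perturbation from $\ell_i$ into $\mathcal{Q}_i$ lies in $\ex(\po)$, and because $\mathcal{Q}_i$ is connected and meets $\bd(\po)$ only along $\mathcal{H}_i$, I conclude $\mathcal{Q}_i \subseteq \ex(\po)$. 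The key geometric observation is that at any vertex $v$ of $\po$ on the interior of $\mathcal{H}_i$, a small disk around $v$ is split by the two incident hill edges into one angular sector inside $\po$ and the opposite sector inside $\mathcal{Q}_i$, so the interior angle of $\po$ at $v$ and the interior angle of $\mathcal{Q}_i$ at $v$ sum to $2\pi$.

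With this setup, the contradiction is crisp. Let $m$ be the number of vertices of $\po$ on the interior of $\mathcal{H}_i$; then $\mathcal{Q}_i$ is a simple $(m+2)$-gon, so its interior angles sum to $m\pi$. If no vertex on $\mathcal{H}_i$ were reflex in $\po$, each of the $m$ interior vertices would contribute pocket-angle at least $\pi$, which together with the strictly positive angles at $h_i^1$ and $h_i^2$ would drive the total strictly above $m\pi$, contradicting the formula. Hence some interior vertex is reflex in $\po$.

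The main obstacle I expect is careful handling of degenerate and topological cases. One must rule out $m = 0$, since then $\mathcal{H}_i$ would be a single straight segment from $h_i^1$ to $h_i^2$ and $\ell_i$ would coincide with a portion of $\bd(\po)$, contradicting $\ell_i \subseteq \ex(\po)$. One must also justify that $\mathcal{Q}_i$ lies in $\ex(\po)$ rather than in $\inte(\po)$, by observing that the complementary boundary arc $\bd(\po) \setminus \mathcal{H}_i$ cannot be enclosed by $\mathcal{H}_i \cup \ell_i$ in the configurations of interest (a point on $\ell_i$ has a neighborhood inside $\mathcal{Q}_i$ that meets $\ex(\po)$, and $\mathcal{Q}_i$ is disjoint from $\bd(\po) \setminus \mathcal{H}_i$). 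Everything else reduces to the one-line angle-sum computation.
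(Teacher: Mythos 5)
Your proof is correct, and it takes a more quantitative route than the paper's. The paper disposes of the reflex-vertex claim in two sentences: the endpoints $h_i^1, h_i^2$ cannot see each other (since the open segment $\ell_i$ between them lies in $\ex(\po)$), and ``if the polygonal chain between them consists of only convex vertices, then $h_i^1$ and $h_i^2$ can see each other'' --- contradiction. That key implication is asserted without justification, and as stated for an arbitrary boundary chain it is actually false: in a U-shaped polygon the two points bounding the mouth of the U are joined by an all-convex chain running around the outside of the U, yet they do not see each other. The implication only holds for the chain that, together with $\ell_i$, bounds a pocket lying in $\ex(\po)$, which is the intended reading of ``hill.'' Your argument supplies precisely the missing justification: constructing the pocket polygon $\mathcal{Q}_i$, verifying $\mathcal{Q}_i \subseteq \ex(\po)$, observing that the $\po$-angle and the $\mathcal{Q}_i$-angle at each interior hill vertex sum to $2\pi$, and invoking the angle-sum formula $(m+2-2)\pi = m\pi$ to force some pocket angle below $\pi$. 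What the paper's approach buys is brevity at the cost of rigor; what yours buys is an airtight, self-contained argument whose only delicate point --- which of the two boundary arcs is the hill, equivalently why $\mathcal{Q}_i$ does not swallow $\inte(\po)$ --- you explicitly flag and resolve via the disjointness of $\mathcal{Q}_i$ from $\bd(\po)\setminus\mathcal{H}_i$. Your handling of the degenerate case $m=0$ and of existence via openness of $\ex(\po)$ is also correct and goes beyond what the paper records.
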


\begin{proof}
    The two endpoints $h_i^1,h_i^2$ of any hill $\mathcal{H_i}$ are two points on $\bd(\po)$ which cannot see each other. If the polygonal chain between them consists of only convex vertices, it means that $h_i^1$ and $h_i^2$ can see each other, which is a contradiction.
\end{proof}

}

   We now propose a claim on $\tra(\cdot,\cdot)$ which will be crucial in our paper hereon.

    \begin{clm}\label{clm-tregion}
       For a witness $w \in W$, there exists a non-empty $\tra(w,W) \setminus \{w\}$ such that it is either an open line segment or $\tra(w,W)$ has a non-empty intersection with $\bd(\mo)$.
    \end{clm}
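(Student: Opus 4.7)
The plan is to exhibit $\tra(w, W)$ as a small half-open segment extending from $w$ along the vertical chord $\ell(w)$. I take $\ter(w, W)$ to be the connected component of $\mo \setminus \bigcup_{w' \in W \setminus \{w\}} \vis(w')$ that contains $w$; since $\vis(w)$ is connected and disjoint from each $\vis(w')$, this component contains $\vis(w)$, so both conditions of \Cref{def-territory} hold. This choice has the convenient property that whenever $\vis(z)$ is disjoint from every $\vis(w')$ and contains a point of $\ter(w, W)$, connectedness of $\vis(z)$ forces $\vis(z) \subseteq \ter(w, W)$.

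Let $w = w_i$ with the witnesses ordered by $x$-coordinate. I pick a small open segment $s \subseteq \ell(w)$ with $w$ as one endpoint and verify $z \in \tra(w, W)$ for each $z \in s$. Since $\ell(w) \subseteq \vis(w) \subseteq \ter(w, W)$ by \Cref{fact-Chord}, it suffices to prove $\vis(z) \cap \vis(w') = \emptyset$ for every $w' \in W \setminus \{w\}$. Because $x(z) = x(w)$, the $x$-ordering of the witness set is preserved, so by the transitivity in \Cref{lem-transitive} I only need to check disjointness against the immediate neighbors $w_{i-1}$ and $w_{i+1}$ (if $w$ is extreme, only one neighbor applies, by \Cref{obs:lara}). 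Applying \Cref{lem-cl1} and \Cref{lem-cl2}, the $x$-conditions carry over from $w$ to $z$, and the question reduces to whether $\Rb(w_{i-1}) \cap \Lb(z) = \emptyset$ and $\Rb(z) \cap \Lb(w_{i+1}) = \emptyset$. Both hold for $z = w$ since $W$ is a witness set; and provided $\LA(z) = \LA(w)$ and $\RA(z) = \RA(w)$ throughout $s$, the chords $\Lb(z), \Rb(z)$ are small rotations of $\Lb(w), \Rb(w)$ about fixed reflex vertices, so continuity plus compactness of $\Rb(w_{i-1})$ and $\Lb(w_{i+1})$ preserve these non-intersections for $s$ short enough. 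Setting $\tra(w, W) = \{w\} \cup s$ then makes $\tra(w, W) \setminus \{w\} = s$ an open line segment.

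If $w \in \bd(\mo)$, the same segment $s$ still works and additionally $\tra(w, W) \cap \bd(\mo) \supseteq \{w\} \neq \emptyset$, meeting the second alternative of the claim. The main obstacle is a degeneracy: at a critical placement of $w$ where $w$ is collinear with two reflex vertices along $\ell(w)$, $\LA(z)$ or $\RA(z)$ can change as $z$ moves on $\ell(w)$, making $\Lb(z), \Rb(z)$ jump discontinuously. I would handle this by noting that only finitely many critical directions pass through $w$, so among the two orientations along $\ell(w)$, upward and downward, at least one yields a small segment on which both anchors remain constant, salvaging the continuity argument.
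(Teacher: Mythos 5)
Your construction proves a literally true but substantially weaker statement than the one the paper needs, and that is where the real gap lies. In the paper's proof the ``open line segment'' alternative is never an arbitrary segment: whenever that branch of the dichotomy occurs, the segment is the specific open segment $(\LA(w),\RA(w))$ joining the two anchor reflex vertices (Case 1, and the sub-case of Case 2 where $\overline{\LA(w)\RA(w)}$ lies inside $\mo$). That specificity is load-bearing downstream: \Cref{def:witness} places a witness in $\wint$ precisely when $\tra(w,W)$ is that anchor segment, and \Cref{Type-1-witness} replaces such a witness by the midpoint of $\overline{\LA(w)\RA(w)}$, which lies in $\ro_{mid}$ only because $\ro_{mid}$ consists of midpoints of visible pairs of \emph{reflex vertices}. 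Your $\tra(w,W)\setminus\{w\}$ is a short open piece of the vertical chord $\ell(w)$; for a witness $w$ interior to $\mo$ this set is neither the anchor segment nor does it meet $\bd(\mo)$, so such a witness would fall into none of $\wint$, $\wgb$, $\wbb$, and nothing in $V(\mo)\cup\ro_{mid}\cup\zm$ is guaranteed to substitute for it. The paper's proof avoids this precisely by perturbing $w$ along $\LC(w)$ and $\RC(w)$ (directions that freeze one of the two bounding chords and monotonically shrink or grow $\vis(w)$), which forces the replacement either onto $\bd(\mo)$ (the points $a$, $b$, or a reflex vertex on a hill) or onto the segment between the two anchors; a vertical perturbation has no such property.

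Even as a proof of the literal wording, the continuity step is not secure. Disjointness of the compact sets $\Rb(w_{i-1})$ and $\Lb(w)$ does give a positive gap, but you then need $z\mapsto\Lb(z)$ to vary continuously on your segment \emph{and} to converge to $\Lb(w)$ as $z\to w$. Even with $\LA(z)=\LA(w)$ held fixed, the image point $\IMl(z)$ is traced out by a ray rotating about $\LA(w)$ and jumps discontinuously whenever that ray sweeps past another reflex vertex; if $w$ itself sits at such a critical position, the one-sided limit of $\Lb(z)$ need not equal $\Lb(w)$ on either side, so ``at least one of the two orientations works'' is asserted rather than proved --- constancy of the anchors on the chosen side is not enough, you need the chords on that side to stay uniformly close to the chords at $w$. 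You also implicitly assume that $\LA(z)$ and $\RA(z)$ exist for every $z$ on the segment. None of these is individually fatal, but together with the structural mismatch above the proposal cannot stand in for the paper's case analysis.
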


    \ifthenelse{\boolean{shortver}}{}{
    
    \begin{proof}
       Observe that, for any witness $w$, the intersection of its visibility region $\vis(w)$ with another visibility region of a witness $w'$ depends on its $\Rb(w)$ and $\Lb(w)$ (Due to \Cref{lem-cl2}). Now, we consider the following two cases.

\begin{sloppypar}

    \begin{description}
        \item[{Case 1: \textbf{The witness $w, \RA(w)$ and $\LA(w)$ are collinear.}}] 
        {In this case, we have that the $\RC(w)$ and $\LC(w)$ (as in \Cref{def-RCLC}) are essentially the same line (See Figure \hyperref[fig-wLARACollinear]{17} for an example). So, if we move the witness $w$ to any other point $w'$ along the line segment $\overline{\LA(w)\RA(w)}$ (except the two points $\RA(w)$ and $\LA(w)$), then $\RA(w')$ and $\LA(w')$ remain unchanged and therefore their respective $\Rb(w')$ and $\Lb(w')$ remains the same as in the case of the witness $w$. Thus, since the intersection of $\vis(w')$ with any other visibility region of a witness depends on the position of its $\Rb(w')$ and $\Lb(w')$ (due to \Cref{lem-cl2}), $\vis(w')$ does not intersect with any other visibility region of another witness. So, by \Cref{def-territory}, $\vis(w') \subseteq \ter(w,W)$, which means that $w' \in \tra(w,W)$. This shows that in this case the open line segment $\overline{\LA(w)\RA(w)} = \overline{\RA(w')\LA(w')} \subseteq \tra(w,W)$.}

        \item[{Case 2: \textbf{The witness $w, \RA(w)$ and $\LA(w)$ are not collinear.}}]
       {In this case, we will show that either $w \in \bd(\mo)$ (in which case we need not show anything in particular) or $\tra(w,W)$ is non-empty and that $\tra(w, W) \cap \bd(\mo) \neq \emptyset$, or an open line segment $(r_1,r_2) \subseteq \tra(w,W)$, where $r_1$ and $r_2$ are two reflex vertices of $\mo$. As $w, \RA(w)$ and $\LA(w)$ are not collinear, the respective $\LC(w)$ and $\RC(w)$ are two distinct lines intersecting at $w$. Let the two endpoints of $\LC(w)$ and $\RC(w)$ other than $\IMl(w)$ and $\IMr(w)$ be $a$ and $b$, respectively. Now, observe that if we move $w$ to any other point $w^*$ along the $\LC(w)$ between $\overline{\LA(w)a}$ (similarly along $\RC(w)$ between $\overline{\RA(w)b}$, resp.), then the visibility region changes but the $\LC(w^*)$ ($\RC(w^*)$, resp.) does not. Let us look at one of the above specific cases to understand the substitution of $w$ by $w^*$ in a better way. If we move $w$ to a point $w^*$ along $\LC(w)$ towards $a$, then the visibility region may increase, or decrease on the right side of $\vis(w)$, but it remains unchanged on the left side as the $\LC(w)$ remains unchanged. Similarly, this works if we move $w$ towards $\LA(w)$. However, the movement of $w$ in these two opposite directions results in two different outcomes. If the visibility region increases while moving $w$ towards $a$, then the visibility region decreases if we move $w$ towards $\LA(w)$, and vice-versa. This is because the change in visibility region while moving $w$ along $\overline{\LA(w)a}$ is actually affected by the rotation of $\RC(w)$ pivoting about the reflex vertex $\RA(w)$ (See Figure \hyperref[fig-tRegionProof4]{13}). Let us name the movement of $w$ along the $\LC(w)$ in the two opposite directions by $\ell Cdir_1$ and $\ell Cdir_2$. In a similar manner, let us name the two opposite directions of movement (one movement towards $\RA(w)$ and another movement is towards $b$) of $w$ along the $\RC(w)$ in between $\overline{\RA(w)b}$ as $rCdir_1$ and $rCdir_2$. Essentially, we are denoting the \textit{direction of movements} of $w$ towards any four of the points $\LA(w), a$ (along $\LC(w)$) and $\RA(w),b$ (along $\RC(w)$) by $\ell Cdir_1$, $\ell Cdir_2$, $rCdir_1$, and $rCdir_2$, respectively (See Figure \hyperref[fig-tRegionProof1]{14}). Note that if we choose $w^*$ by moving $w$ towards any of these four directions, exactly for two of the directions the visibility region $\vis(w^*)$ increases, and for exactly two other directions, $\vis(w^*)$ decreases. To be precise, moving $w$ towards the directions $\ell Cdir_1$ and $\ell Cdir_2$ results in opposite outcomes for the change in the visibility area, $\vis(w)$. The same goes for the two directions $rCdir_1$ and $rCdir_2$. So, if we can choose $w^*$ in such a direction that the visibility region decreases, then we have that $\vis(w^*) \subseteq \vis(w) \subseteq \ter(w,W)$. So, by \Cref{def-trap}, $w^* \in \tra(w,W)$, which means that $\tra(w,W) \neq \emptyset$. \\}

        \begin{figure}[ht!]\label{fig-tRegionProof4}
		\centering
		\includegraphics[scale=1]{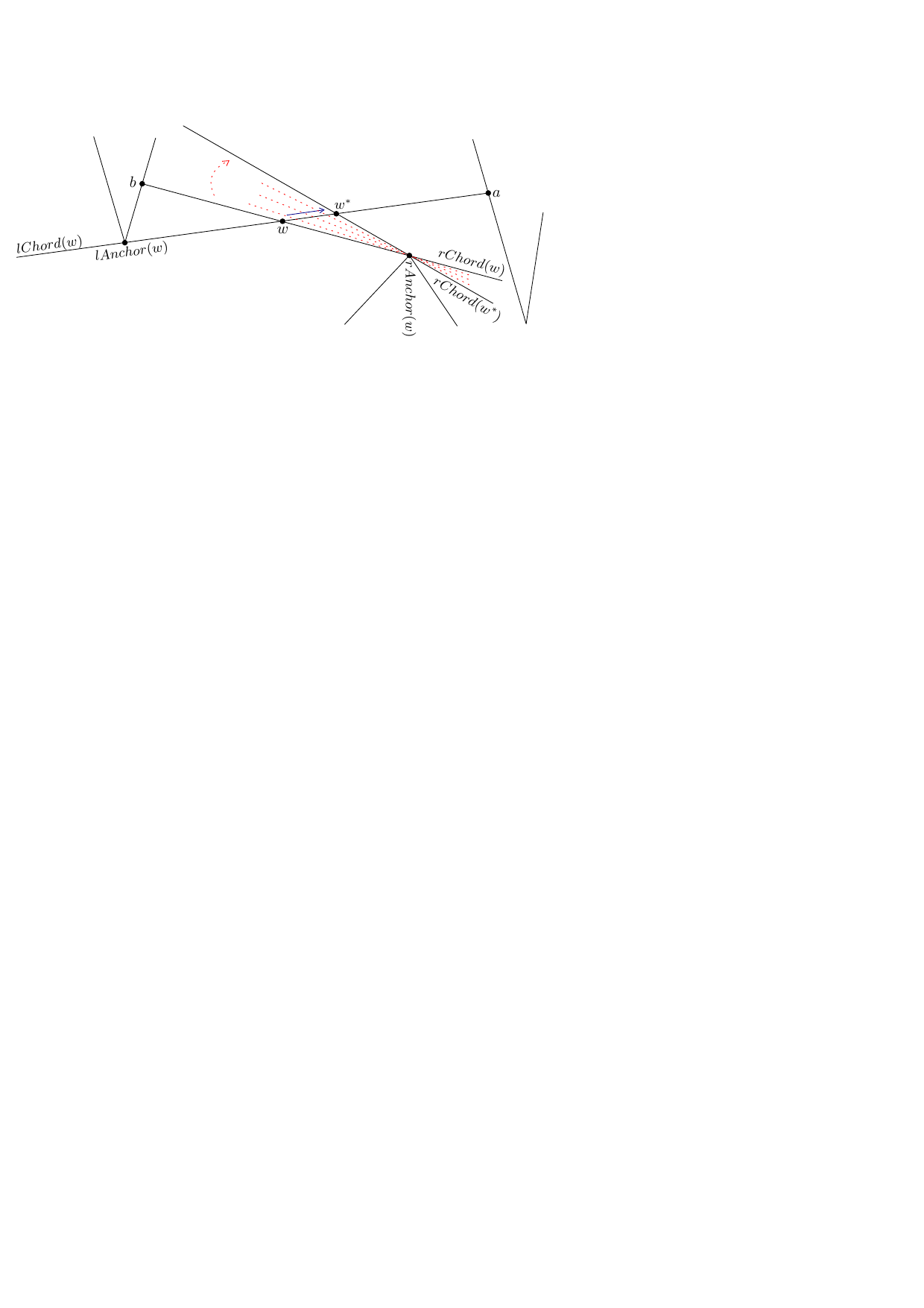}
		\caption{$w$ is the witness. The change in visibility area when $w$ is moved along $\LC(w)$ to another point $w^*$ is depicted.} 
	\end{figure}

        \begin{figure}[ht!]\label{fig-tRegionProof1}
		\centering
		\includegraphics[scale=1]{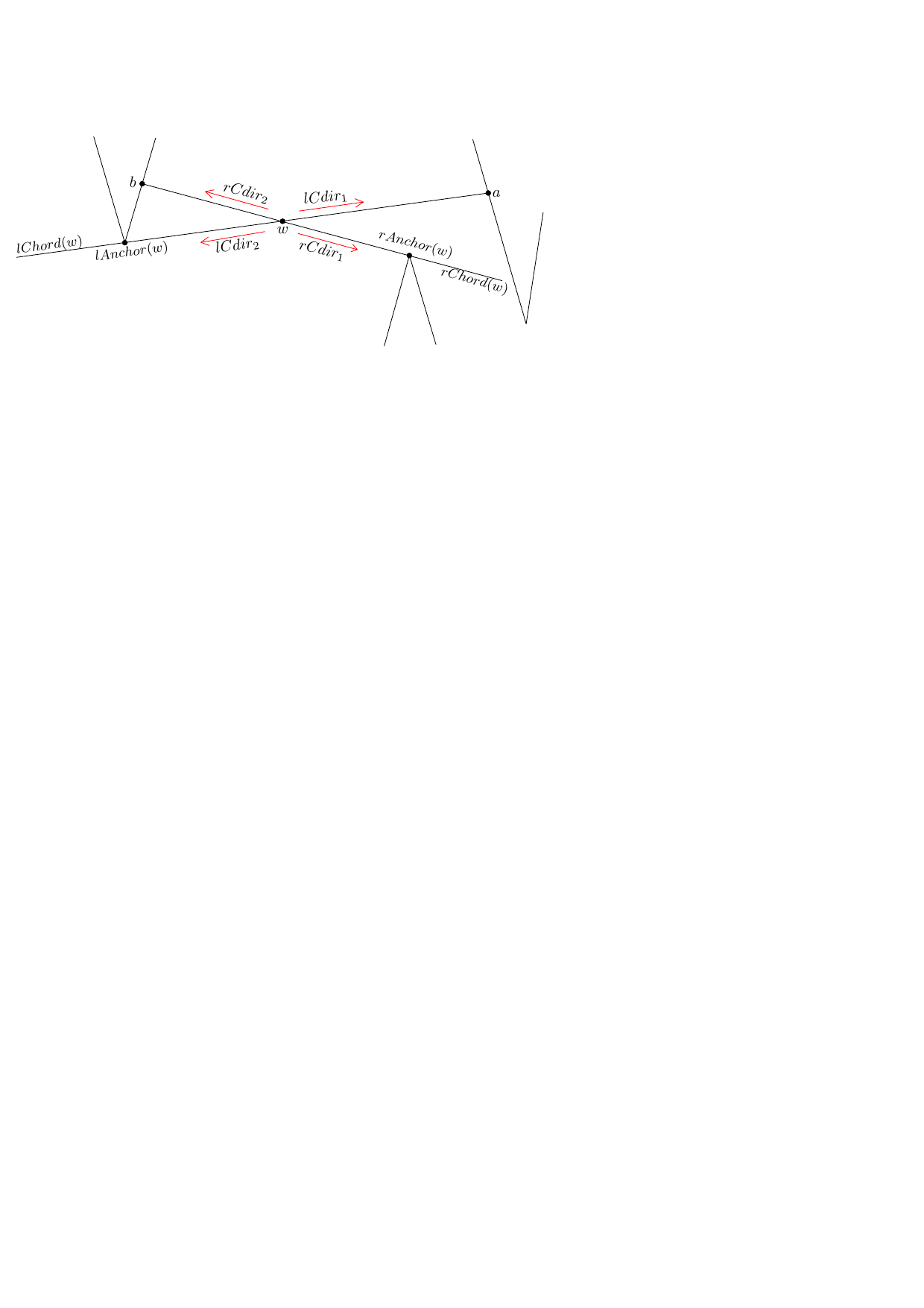}
		\caption{$w$ is the witness. $a, b$ are the endpoints of $\LC(w)$ and $\RC(w)$ respectively, other than the endpoints $\IMl(w)$ and $\IMr(w)$. The four directions of moving $w$ along either $\LC(w)$ or $\RC(w)$ are shown.} 
	\end{figure}
        
        {We now move forward to our next part of the proof, which is to show that either an open segment $(r_1,r_2) \subseteq \tra(w,W)$ where $r_1$ and $r_2$ are two reflex vertices of $\mo$, or $\tra(w,W) \cap \bd(\mo) \neq \emptyset$. If our choice of $w^* \in \tra(w,W)$ lies on either of $\overline{wa}$ or $\overline{wb}$, then we can choose $w^*$ to be the endpoints $a$ or $b$ itself, as choosing $a$ reduces the visibility region on the right side of $\vis(w)$ without changing the $\LC(w)$ and choosing $b$ reduces the visibility region on the left side of $\vis(w)$ without changing the $\RC(w)$. So, in that case, as both $a, b \in \bd(\mo)$, we have that $\tra(w,W) \cap \bd(\mo) \neq \emptyset$. On the other hand, if our choice of $w^* \in \tra(w,W)$ was restricted to the lines $\overline{w\LA(w)}$ and $\overline{w\RA(w)}$, then we cannot directly choose $w^*$ to be $\LA(w)$ or $\RA(w)$, as then, $\LA(w^*) = \LA(\LA(w))$ would not be same as $\LA(w)$ and also $\RA(w^*) = \RA(\RA(w))$ would not be the same as $\RA(w)$. This could result in an increase in the visibility region of $w^*$, $\vis(w^*)$, which will not help in our existence and construction of the $\tra(w,W)$. To resolve this, we consider two sub-cases. }

        \begin{itemize}
            \item {\textbf{The open segment $(\LA(w),\RA(w))$ lies totally inside $\mo$} (See Figure \hyperref[noncollinear1]{$15(a)$}). In this case, any point on the open segment $(\LA(w),\RA(w))$ can be considered as $w^*$, where $w^*$ is in $\tra(w,W)$. In other words, the open segment $(\LA(w),\RA(w)) \subseteq \tra(w,W)$. A short proof of this is as follows: \\ Let $w^*$ be any point on $(\LA(w),\RA(w))$. Then, we draw two lines $\ell_1, \ell_2$ from $w^*$, one of which is parallel to $\overline{w\LA(w)}$ and another one is parallel to $\overline{w\RA(w)}$, respectively. Now let $\ell_1 \cap \overline{w\RA(w)} = p_1$ and $\ell_2 \cap \overline{w\LA(w)} = p_2$, respectively. Then, by moving our initial witness $w$ to $p_1$, we get that $\vis(p_1) \subseteq \vis(w)$. Again, by moving $p_1$ to $w^*$, we are essentially moving $p_1$ in the direction parallel to $\LA(w)$. So, $\vis(w^*) \subseteq \vis(p_1)$. Therefore, $\vis(w^*) \subseteq \vis(w)$, which implies that $w^* \in \tra(w,W)$, as required.} 

            \item {\textbf{The open segment $(\LA(w),\RA(w))$ does not lie totally inside $\mo$, i.e., $\LA(w)$ and $\RA(w)$ is not visible to each other} (See Figure \hyperref[noncollinear2]{$15(b)$}). In this case, due to \Cref{obs-hill}, there must exist a hill containing a reflex vertex $r$ that intersects the line $\overline{\LA(w)\RA(w)}$. Observe that the orientation of this reflex vertex forces it to lie inside the triangle $\triangle{\LA(w)w\RA(w)}$. Then, this reflex vertex $r$, which also lies on $\bd(\mo)$, will lie in $\tra(w,W)$. This is again by the same argument as provided in the case above. $\vis(r) \subseteq \vis(p_1) \subseteq \vis(w)$. So, $r \in \bd(\mo)$ lies in $\tra(w,W)$, which means that $\tra(w,W) \cap \bd(\mo) \neq \emptyset$.}
        \end{itemize}

    \end{description}   

    \end{sloppypar}

        \begin{figure}[ht!]\label{fig-wLARAnonCollinear}
    \centering
    \begin{subfigure}[b]{0.75\textwidth}
        \centering
        \includegraphics[width=\textwidth]{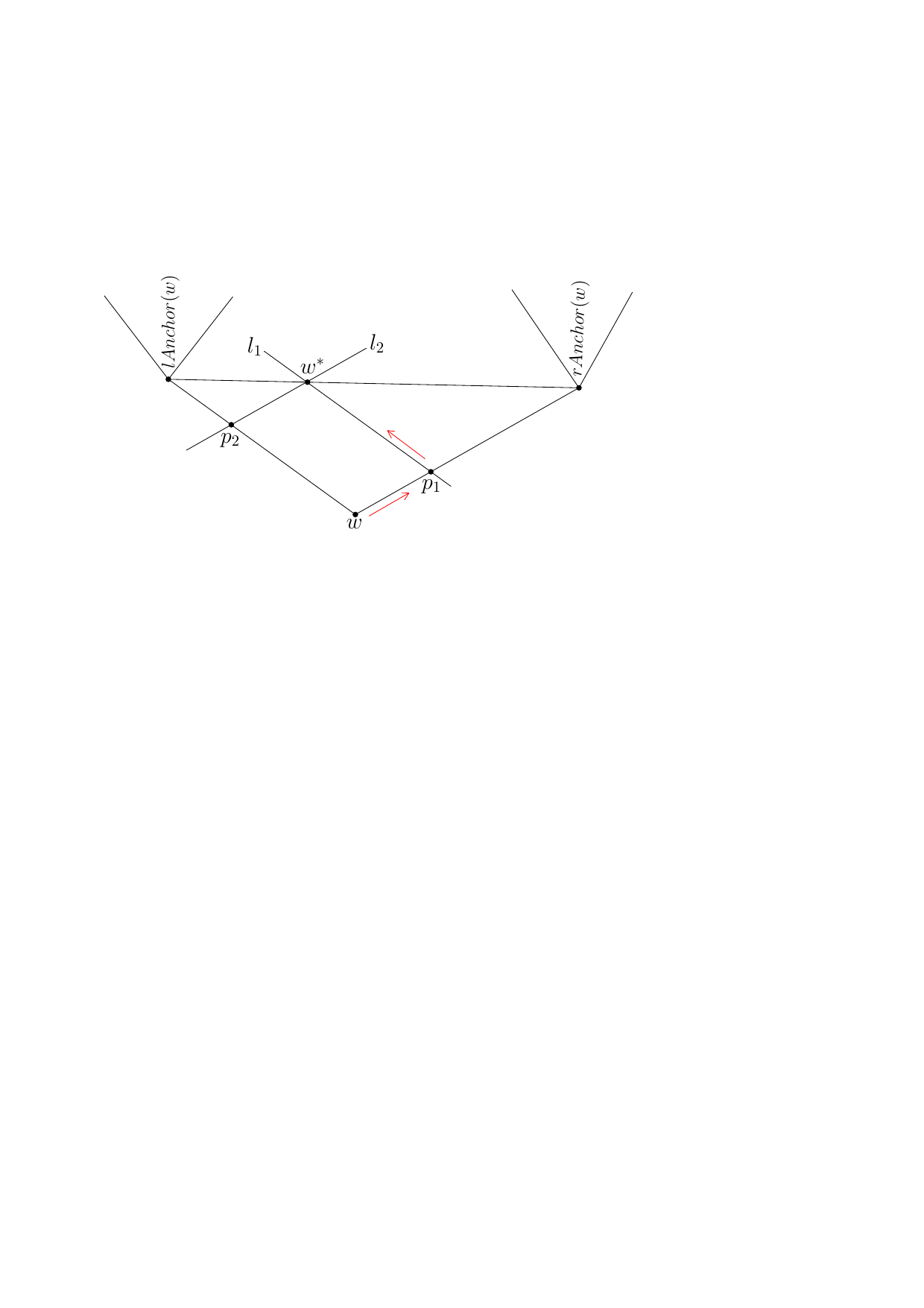}
        \subcaption{$\overline{\LA(w)\RA(w)}$ lies totally inside the polygon $\mo$.}
        \label{noncollinear1}
    \end{subfigure}
    \hspace{8mm}
    \begin{subfigure}[b]{0.75\textwidth}
        \centering
        \includegraphics[width=\textwidth]{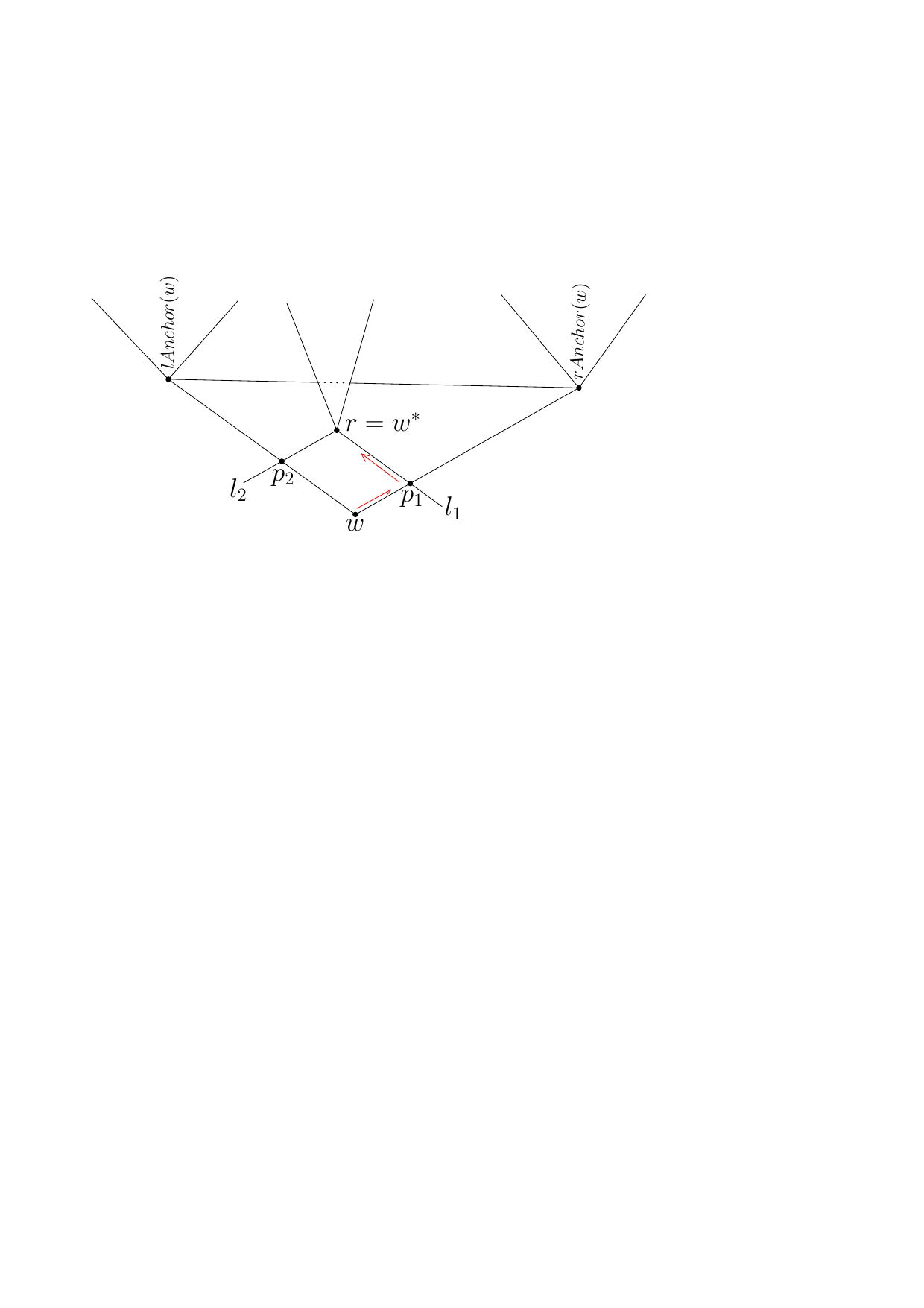}
        \subcaption{ $\overline{\LA(w)\RA(w)}$ does not lie totally inside the polygon $\mo$.}
        \label{noncollinear2}
    \end{subfigure}
    \caption{Cases where $ \LA(w), \RA(w) $ and $ \bm{w} $ are not collinear and by moving $\bm(w)$ towards $\LA(w)$ and $\RA(w)$ reduces the visibility region.}
\end{figure}
        
    \end{proof}
    }

    \ifthenelse{\boolean{shortver}}{}{
	\noindent Thus far, we can see that for a witness $w \in W$, the $\tra(w,W)$ can be classified into two categories based on the relative positions of $\LA(w)$, $\RA(w)$, and $w$: (i) open line segments and (ii) connected regions with non-empty interior. Below, we provide a more detailed examples of construction of these classifications. 

    \begin{sloppypar}

	\begin{description}
		\item[Case 1. \textbf{Both the vertices} $ \bm{\LA(w)}$ and $ \bm{\RA(w)} $ \textbf{belongs to same chain}.] 
       
        There are four sub-cases.
		\begin{figure}[ht!]
				\centering
				\begin{subfigure}[b]{0.4\textwidth}
					\centering
					\includegraphics[width=\textwidth]{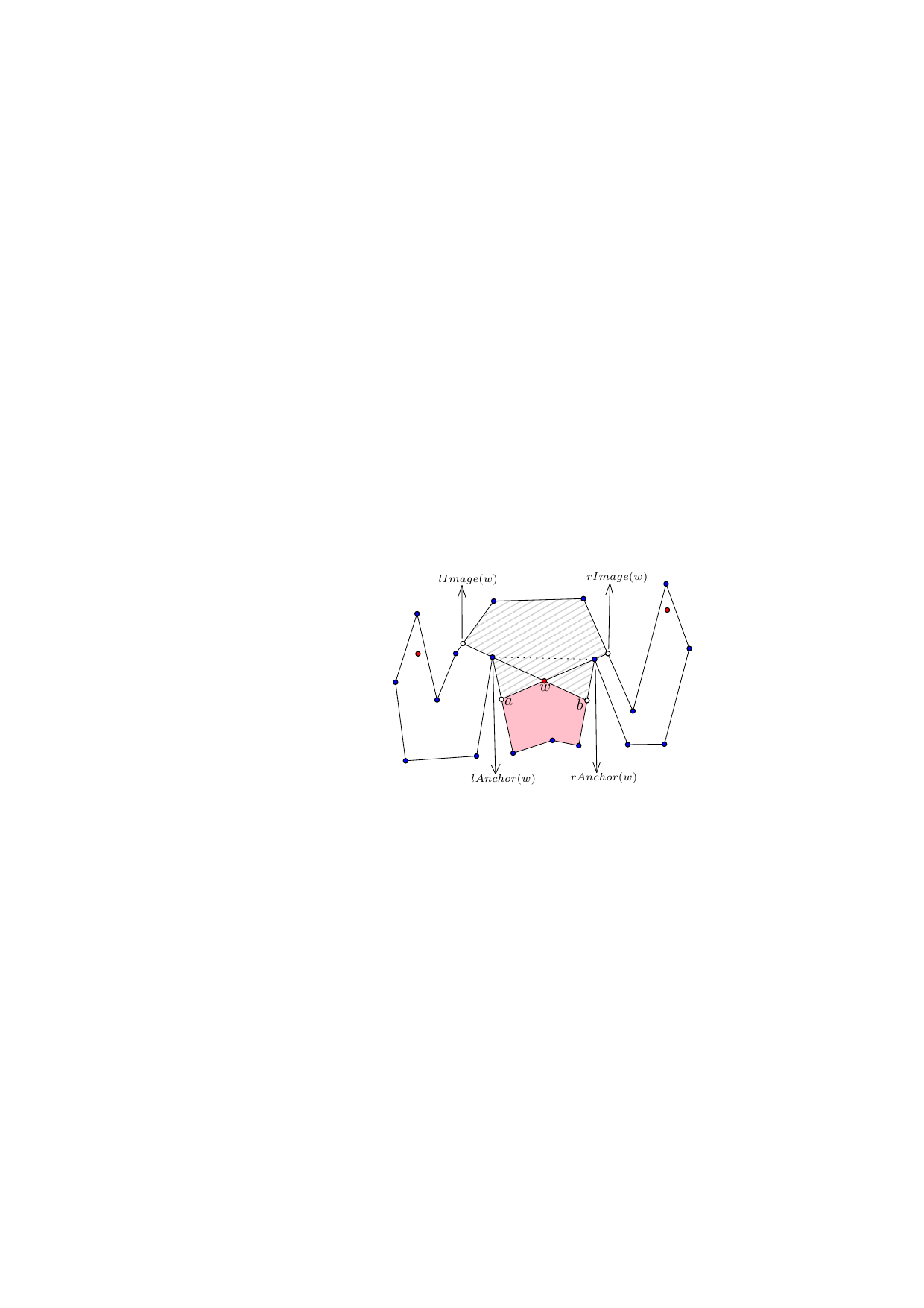}
					\subcaption{Case \hyperref[c1]{1.1}}
					\label{trg-llb}
				\end{subfigure}
				\hspace{8mm}
				\begin{subfigure}[b]{0.4\textwidth}
					\centering
					\includegraphics[width=\textwidth]{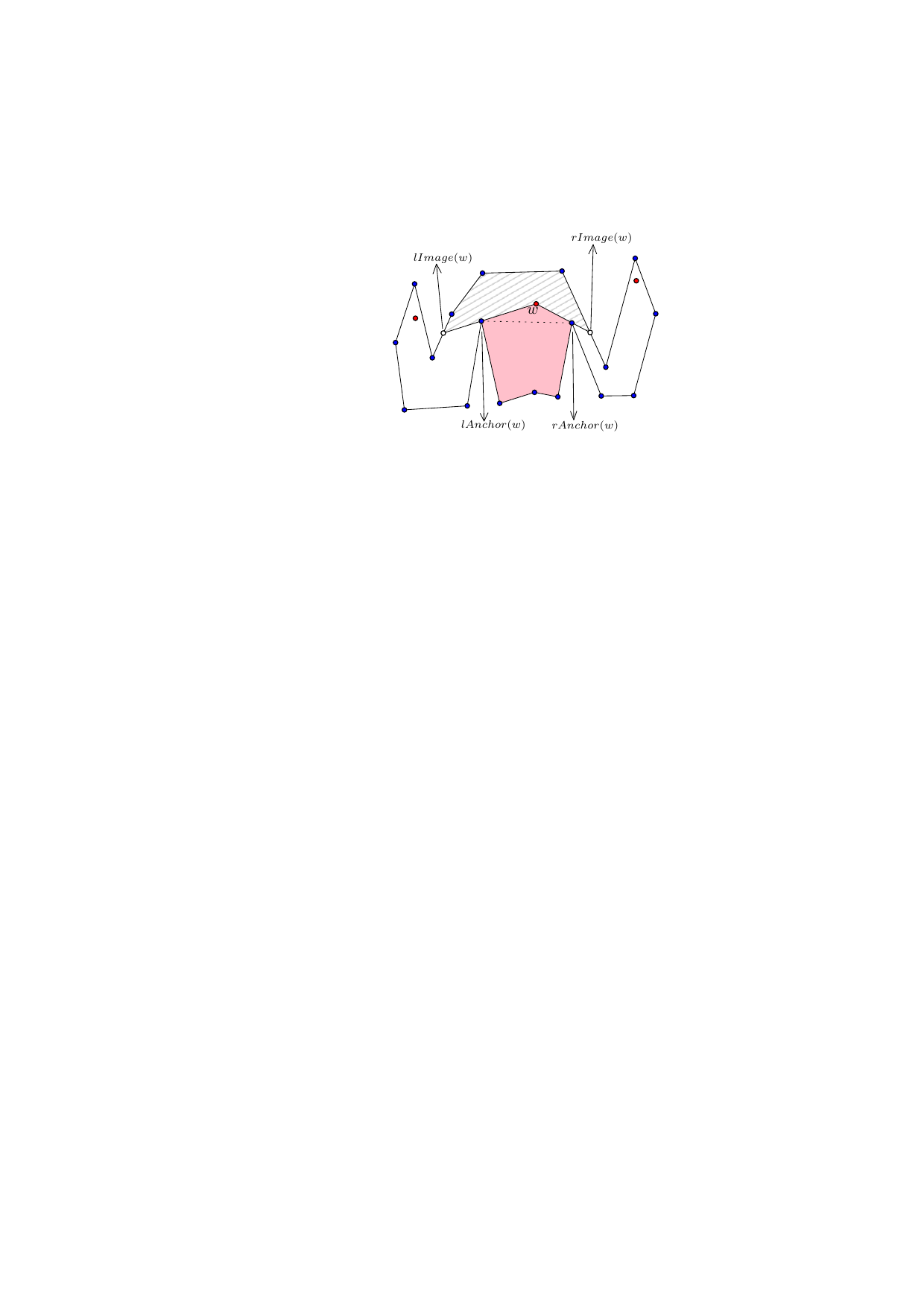}
					\subcaption{Case \hyperref[c2]{1.2}}
					\label{trg-llu}
				\end{subfigure}
                \hspace{8mm}
				\begin{subfigure}[b]{0.4\textwidth}
					\centering
					\includegraphics[width=\textwidth]{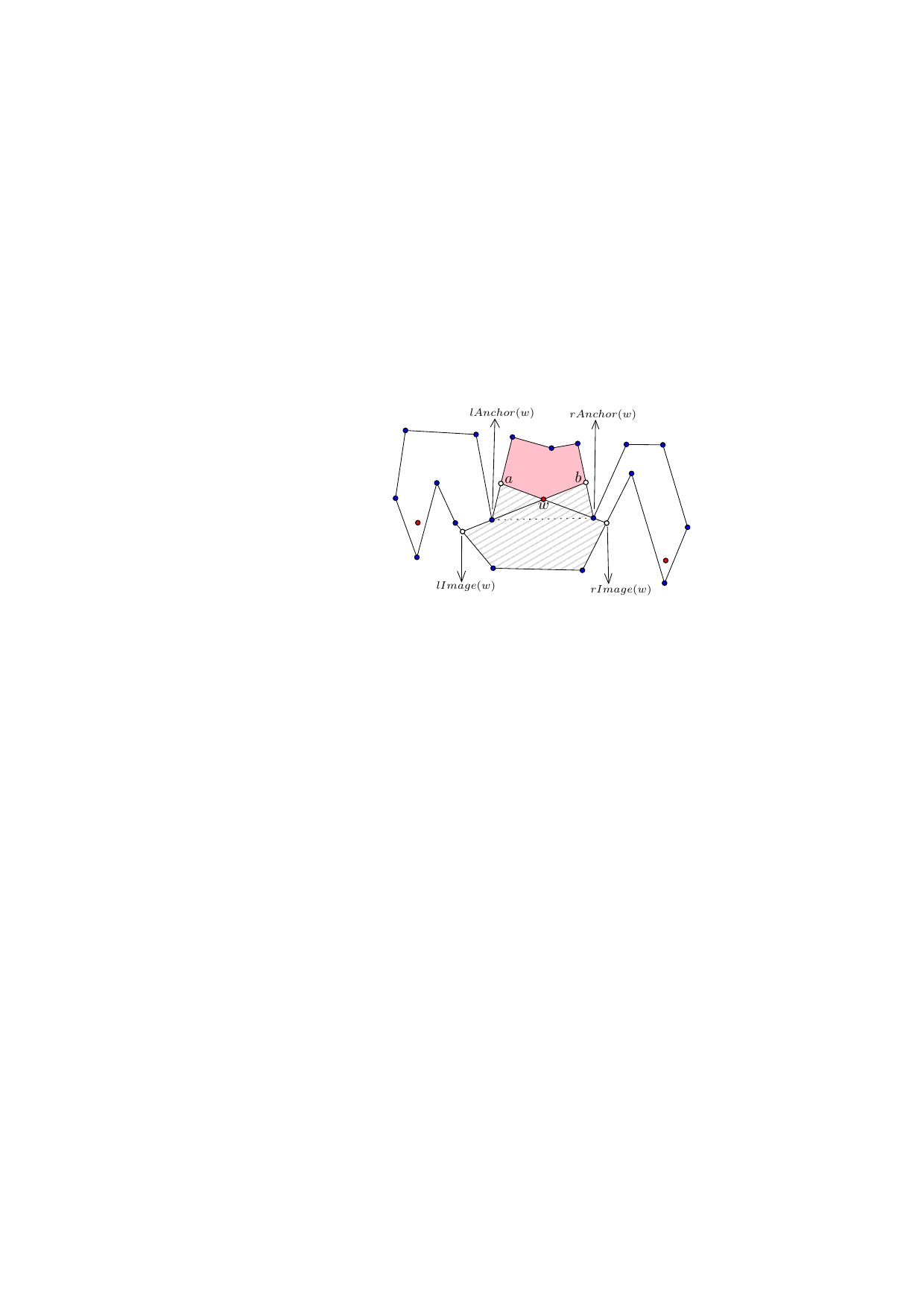}
					\subcaption{Case \hyperref[c3]{1.3}}
					\label{trg-uuu}
				\end{subfigure}
                \hspace{8mm}
				\begin{subfigure}[b]{0.4\textwidth}
					\centering
					\includegraphics[width=\textwidth]{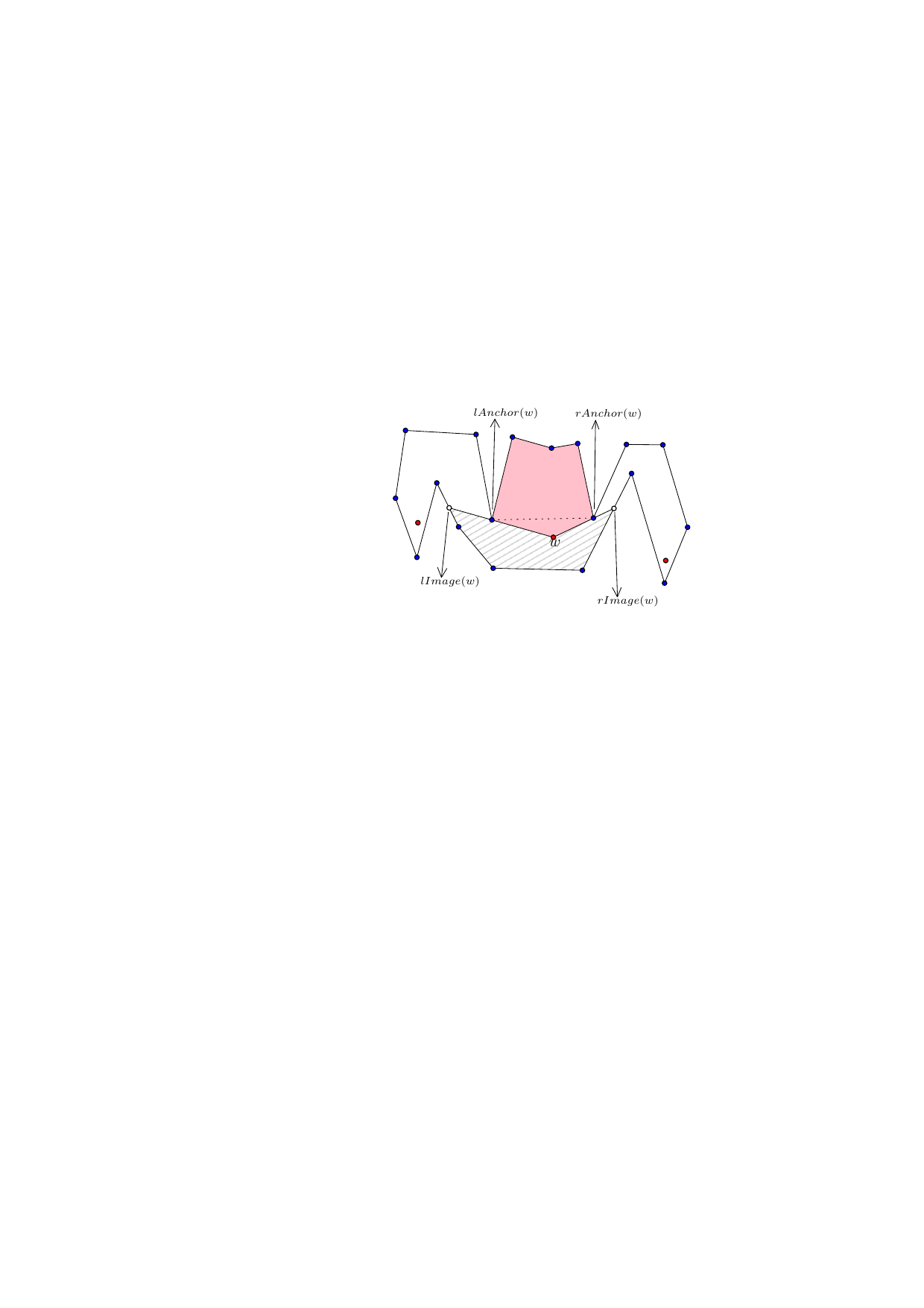}
					\subcaption{Case \hyperref[c4]{1.4}}
					\label{trg-uub}
				\end{subfigure}
			\caption{Red points are set of witnesses. Here $ \LA(w)$ and $ \RA(w) $ belongs to same chain. Pink region is defined for $ \tra(w) $ which is a part of $ \ter(w) $.} \label{fig-trap3}
		\end{figure}
        \begin{description}
             
            \item[1.1\label{c1}] Here we consider that  both belongs to the lower chain $ \lc(\mo) $, and the witness $w$ lies inside the region bounded by $\PCL(\LA(w),\RA(w))$, $\overline{\RA(w)\LA(w)}$. Let $ a $ and $ b $ be the endpoints of   $ \RC(w) $ and $ \LC(w) $, respectively where $ a \neq \IMr(w) $ and $ b \neq \IMl(w) $. Here $ \tra(w,W) $	 is the connected region defined by the boundaries  $   \overline{wb}, ~\PCL(b,a)  $ and $ \overline{a w} $.  See  \Cref{trg-llb} for an illustration.
\smallskip 
            \item[1.2\label{c2}] Here we consider that  both belongs to the lower chain $ \lc(\mo) $, but the witness $w$ lies outside the region bounded by $\PCL(\LA(w),\RA(w))$, $\overline{\RA(w)\LA(w)}$. Then, the $\tra(w,W)$ is the connected region defined by the boundaries $\PCL(\RA(w),\LA(w), ~ \overline{\LA(w)w}, ~ \overline{w\RA(w)}$. See \Cref{trg-llu} for an illustration.
\smallskip
            \item[1.3\label{c3}] Now we consider that  both belongs to the upper chain $ \uc(\mo) $ and the witness $w$ lies inside the region bounded by $\PCU(\LA(w),\RA(w))$, $\overline{\RA(w)\LA(w)}$. Let $ a $ and $ b $ be the endpoints of   $ \RC(w) $ and $ \LC(w) $, respectively where $ a \neq \IMr(w) $ and $ b \neq \IMl(w) $. Here $ \tra(w,W) $	 is the connected region defined by the boundaries  $   \overline{wb}, ~\PCU(b,a)  $ and $ \overline{a w} $.    See \Cref{trg-uuu} for an illustration.
\smallskip
            \item[1.4\label{c4}] Now we consider that  both belongs to the upper chain $ \uc(\mo) $ and the witness $w$ lies outside the region bounded by $\PCU(\LA(w),\RA(w))$, $\overline{\RA(w)\LA(w)}$. Then, the $\tra(w,W)$ is the connected region defined by the boundaries $\PCU(\RA(w),\LA(w), ~ \overline{\LA(w)w}, ~ \overline{w\RA(w)}$. See \Cref{trg-uub} for an illustration.

        \end{description}
        
         \item[Case 2.  $ \bm{\LA(w)}$ and $ \bm{\RA(w)} $ \textbf{belong to different chains}.] There are two subcases:
         

        \begin{description}
            \item[2.1. $ \bm{\LA(w), \RA(w) }$ \textbf{and} $ \bm{w} $ \textbf{are collinear.}\label{c21}] First we consider the situation that $  \LA(w) \in \uc(\mo)$ and $ \RA(w) \in \lc(\mo)$.  Here $ \tra(w,W) $	is defined  by the open line segment $ (\LA(w), \RA(w)) $ (See  \Cref{wint1ul} for an illustration). The argument for  $  \LA(w) \in \lc(\mo)$ and $ \RA(w) \in \uc(\mo)$  in this case is exactly same (See \Cref{wint2lu}).

\begin{figure}[ht!]\label{fig-wLARACollinear}
    \centering
    \begin{subfigure}[b]{0.45\textwidth}
        \centering
        \includegraphics[width=\textwidth]{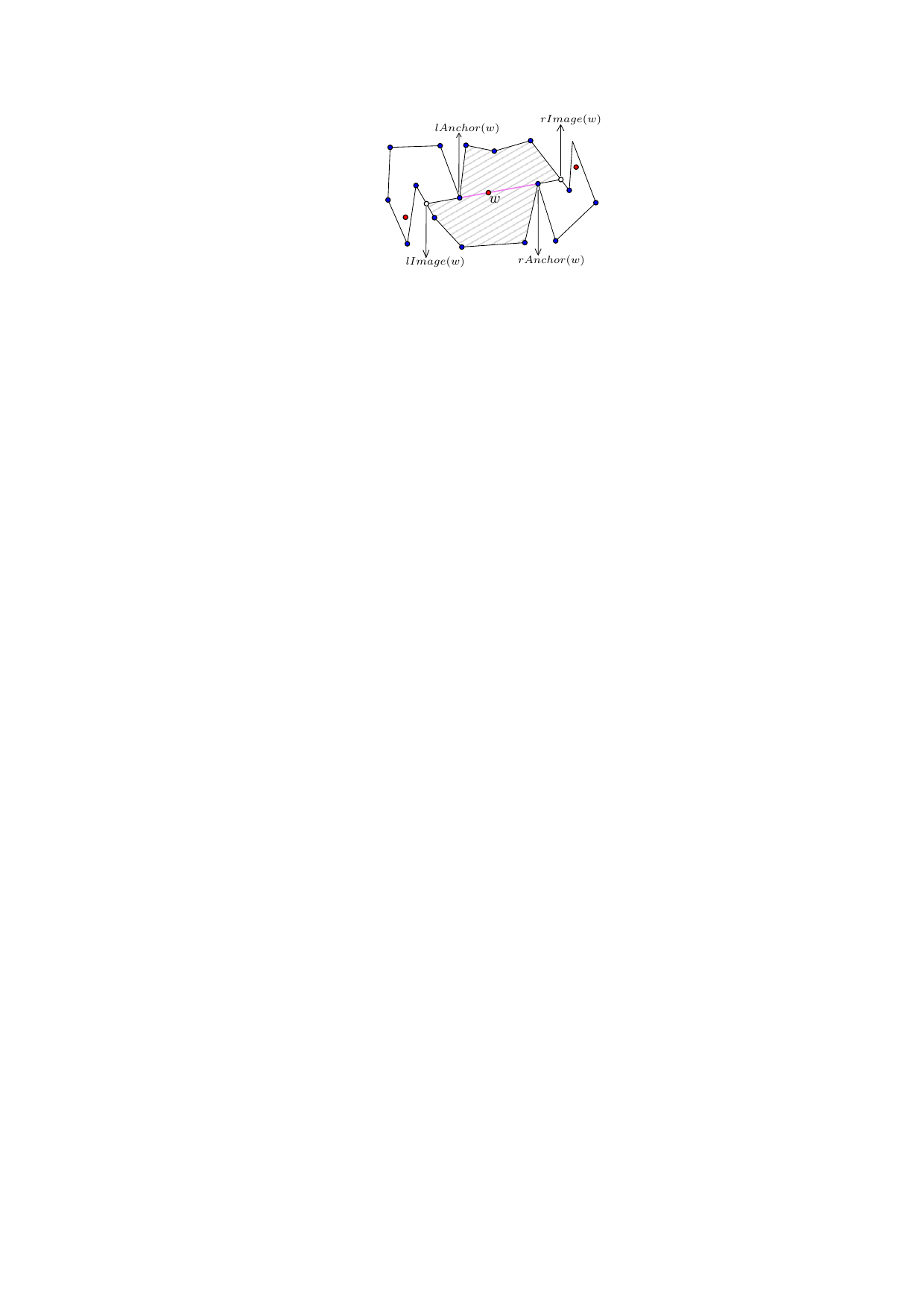}
        \subcaption{$  \LA(w) \in \uc(\mo), \RA(w) \in \lc(\mo)$}
        \label{wint1ul}
    \end{subfigure}
    \hspace{8mm}
    \begin{subfigure}[b]{0.45\textwidth}
        \centering
        \includegraphics[width=\textwidth]{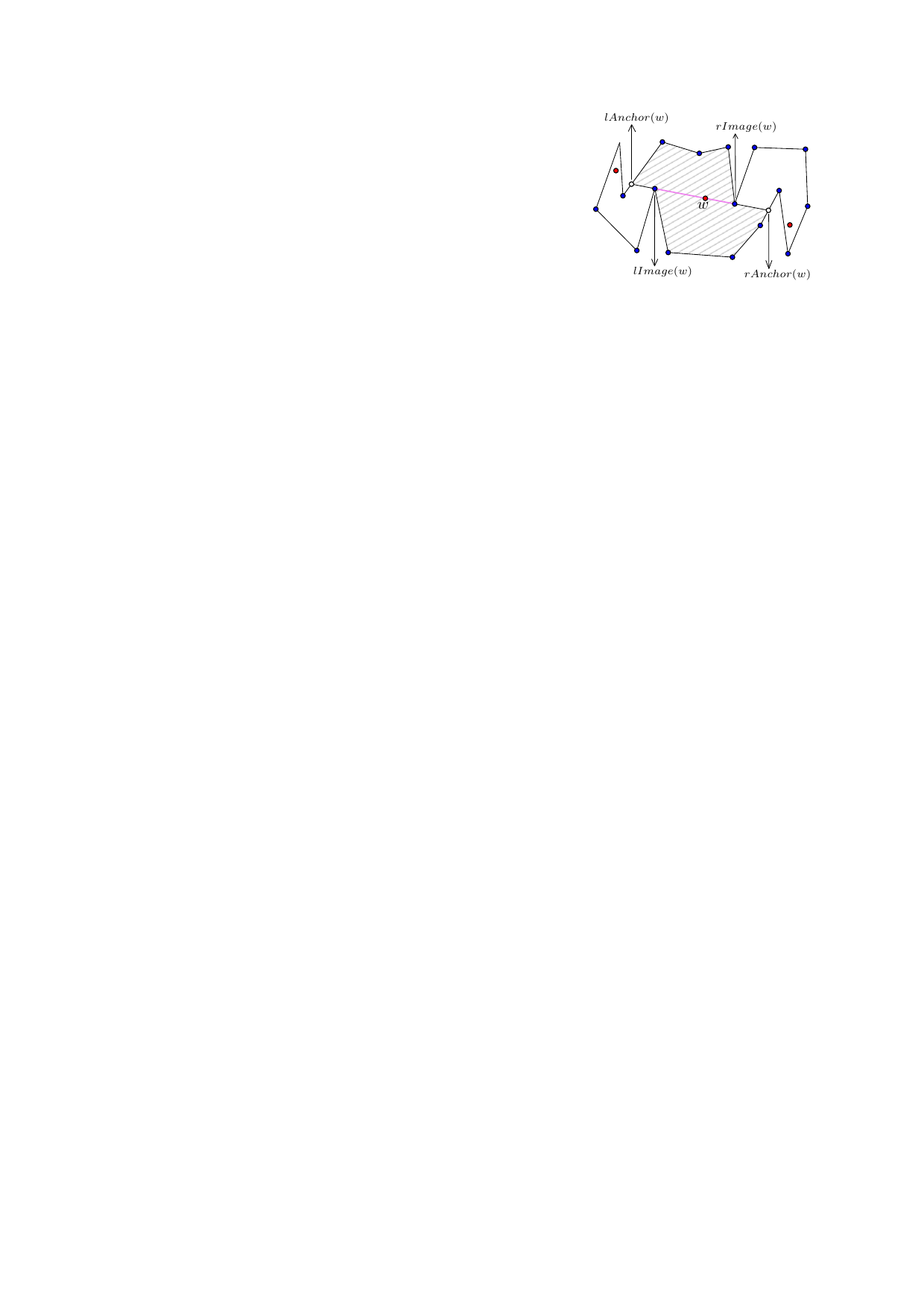}
        \subcaption{ $  \LA(w) \in \lc(\mo), \RA(w) \in \uc(\mo)$}
        \label{wint2lu}
    \end{subfigure}
    \caption{Case \hyperref[c21]{2.1}. $ \LA(w), \RA(w) $ and $ \bm{w} $ are collinear.}
    \label{winttt}
\end{figure}
            
 \item[2.2. $ \bm{\LA(w), \RA(w) }$ \textbf{and} $ \bm{w} $ \textbf{are not collinear.}\label{c22}] We will consider four distinct subcases that depend on the location of the witness and the placements of $\LA(w)$ and $\RA(w)$. Although these cases are symmetric with respect to reflection and rotation, we will present each one individually for the sake of clarity and completeness.

            \smallskip

    \begin{description}        

        \item[2.2.a\label{c23}]  Here, we consider $\LA(w)$ belongs to the upper chain $\uc(\mo)$ and $\RA(w)$ belongs to the lower chain $\lc(\mo)$; and the witness $w$  lies inside the region defined by the boundaries $\overline{\LA(w)\RA(w)},~\PCL(\RA(w),\XMi(\mo))$ and $ \PCU(\XMi(\mo),\LA(w))$. Let $ c $ and $ d $ denote the endpoints of   $ \RC(w) $ and $ \LC(w) $, respectively where $ x(c) <x(w) $ and $ x(d) >x(w) $. Then, the $ \tra(w,W) $	 is the connected region defined by the boundaries   $   \overline{wd},~ \PCL(d,\RA(w)) $ and $ \overline{\RA(w) w} $, except the point $\RA(w)$. (See  \Cref{trg-ulb} for an illustration).
\smallskip
        \item[2.2.b\label{c24}] Here, we consider $\LA(w)$ belongs to the upper chain $\uc(\mo)$ and $\RA(w)$ belongs to the lower chain $\lc(\mo)$; and the witness $w$ lies inside the region defined by the boundaries $\overline{\RA(w)\LA(w)},~\PCU(\LA(w),\XM(\mo))$ and $ \PCL(\XM(\mo),\RA(w))$. Let $ c $ denote the endpoint of   $ \RC(w)$ where $x(c) < x(w)$. Then, the $ \tra(w,W) $	 is the connected region defined by the boundaries $   \overline{wc},~ \PCL(c,\LA(w)) $ and $ \overline{\LA(w) w} $, except the point $\LA(w)$. (See  \Cref{trg-ulu} for an illustration).
\smallskip
        \item[2.2.c\label{c25}] Here, we consider $\LA(w)$ belongs to the lower chain $\lc(\mo)$ and $\RA(w)$ belongs to the upper chain $\uc(\mo)$; and the witness $w$ lies inside the region defined by the boundaries $\overline{\RA(w)\LA(w)},~\PCL(\LA(w),\XM(\mo))$ and $ \PCU(\XM(\mo),\RA(w))$. Let $ c $ denote the endpoint of   $ \RC(w) $  where $ x(c) <x(w) $. Then, the $ \tra(w,W) $ is the connected region defined by the boundaries   $   \overline{wc}, ~ \PCL(c,\LA(w)) $ and $ \overline{\LA(w) w} $, except the point $\LA(w)$. (See  \Cref{trg-lub} for an illustration).
\smallskip
        \item[2.2.d\label{c26}] Here, we consider $\LA(w)$ belongs to the lower chain $\lc(\mo)$ and $\RA(w)$ belongs to the upper chain $\uc(\mo)$; and the witness $w$ lies inside the region defined by the boundaries $\overline{\LA(w)\RA(w)}, \PCU(\RA(w),\XM(\mo))$ and $ \PCL(\XM(\mo),\LA(w))$. Let $c$ and $d$ denote the endpoints of   $ \RC(w) $ and $ \LC(w) $, respectively where $ x(c) <x(w) $ and $ x(d) >x(w) $. Then, the $ \tra(w,W) $	 is the connected region defined by the boundaries $\overline{wd},~ \PCL(d,\RA(w)) $ and $ \overline{\RA(w) w} $, except the point $\RA(w)$.(See  \Cref{trg-luu} for an illustration).
    
    \end{description}
   
    \end{description}

    \end{description}

\begin{figure}[ht!]
\centering
\begin{subfigure}[b]{0.45\textwidth}
\centering
\includegraphics[width=\textwidth]{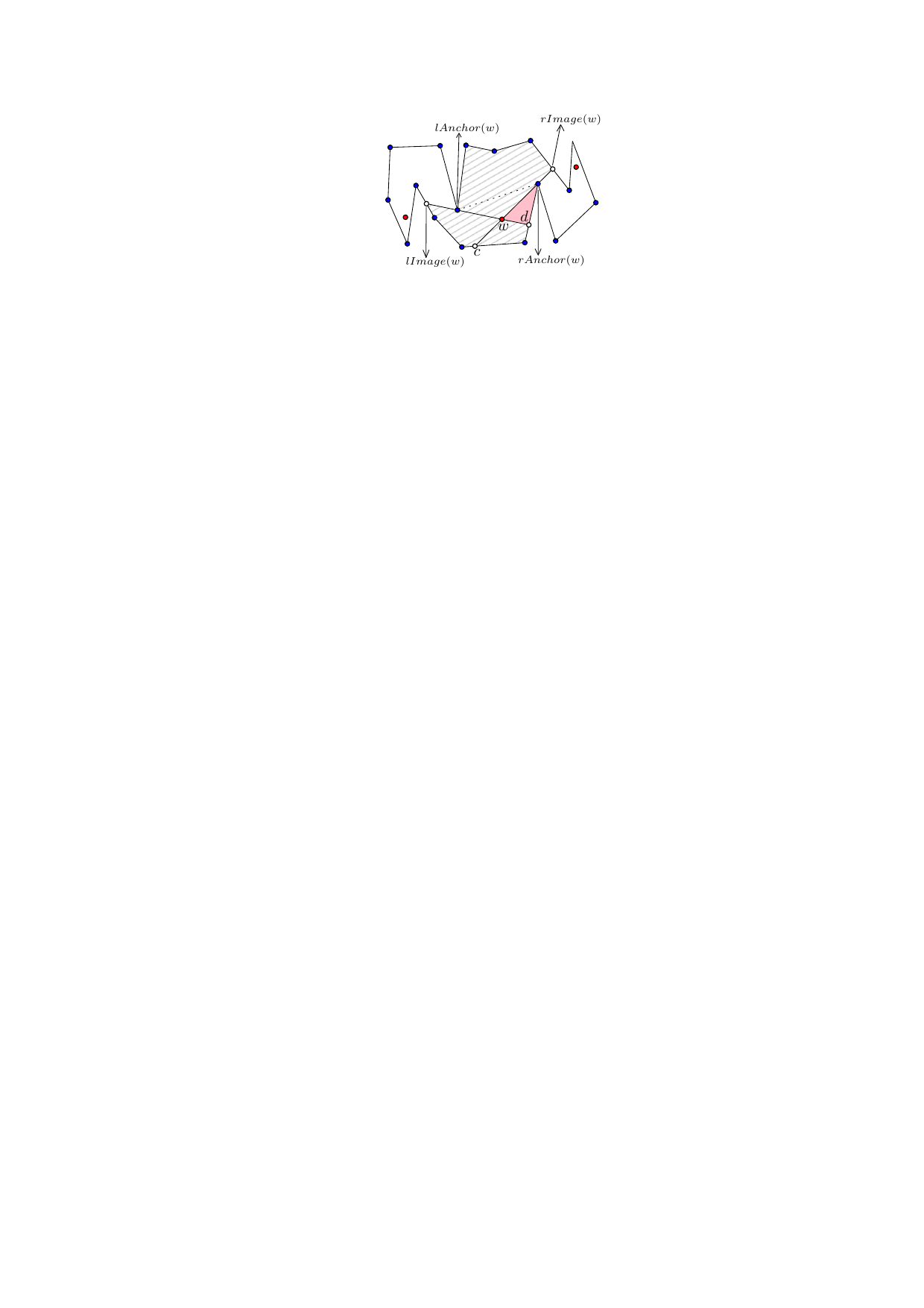}
\subcaption{Case \hyperref[c21]{2.2.a}}
\label{trg-ulb}
\end{subfigure}
\hspace{8mm}
\begin{subfigure}[b]{0.45\textwidth}
\centering
\includegraphics[width=\textwidth]{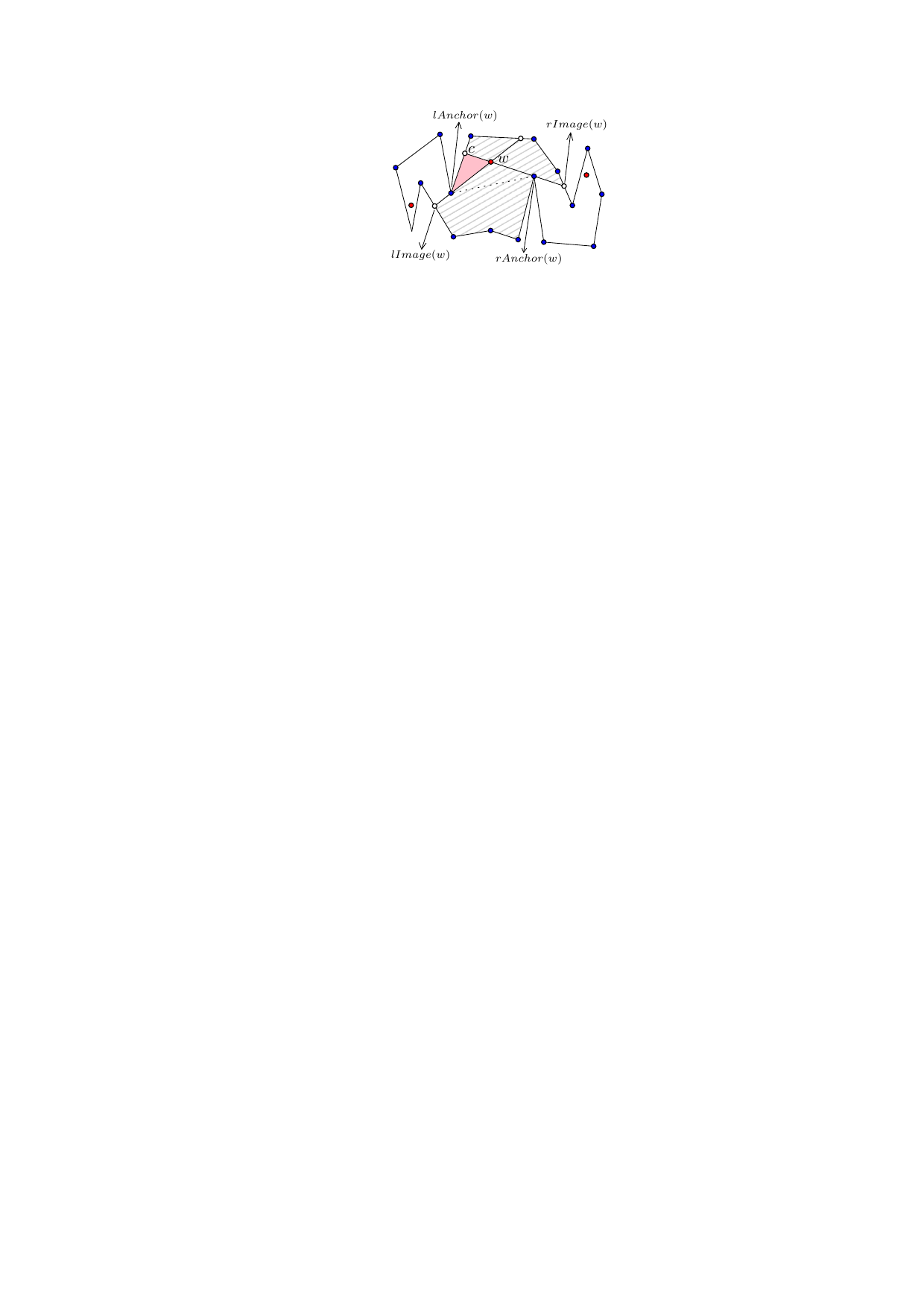}
\subcaption{Case \hyperref[c23]{2.2.b}}
\label{trg-ulu}
\end{subfigure}
\hspace{8mm}
\begin{subfigure}[b]{0.45\textwidth}
\centering
\includegraphics[width=\textwidth]{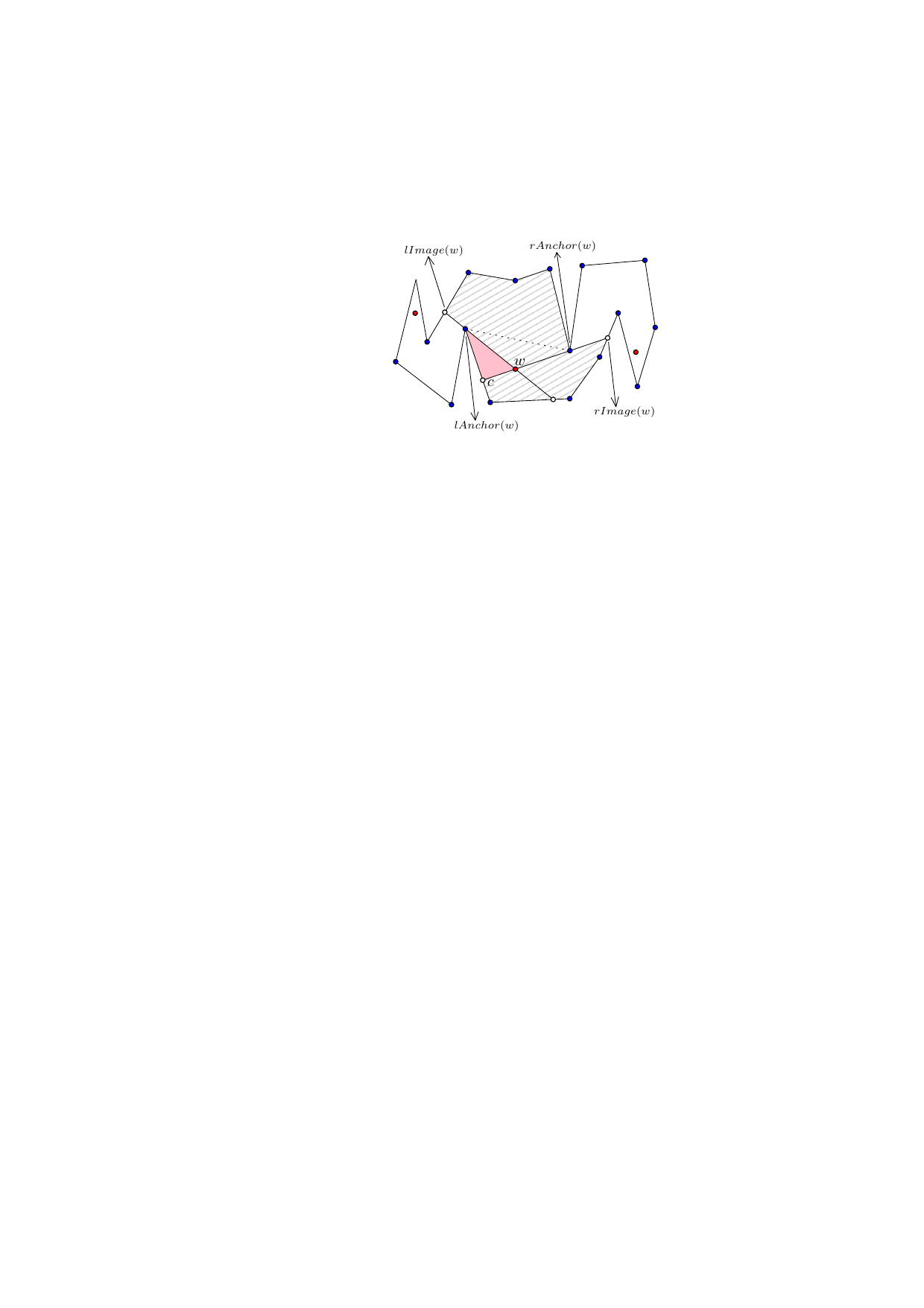}
\subcaption{Case \hyperref[c24]{2.2.c}}
\label{trg-lub}
\end{subfigure}
\hspace{8mm}
\begin{subfigure}[b]{0.45\textwidth}
\centering
\includegraphics[width=\textwidth]{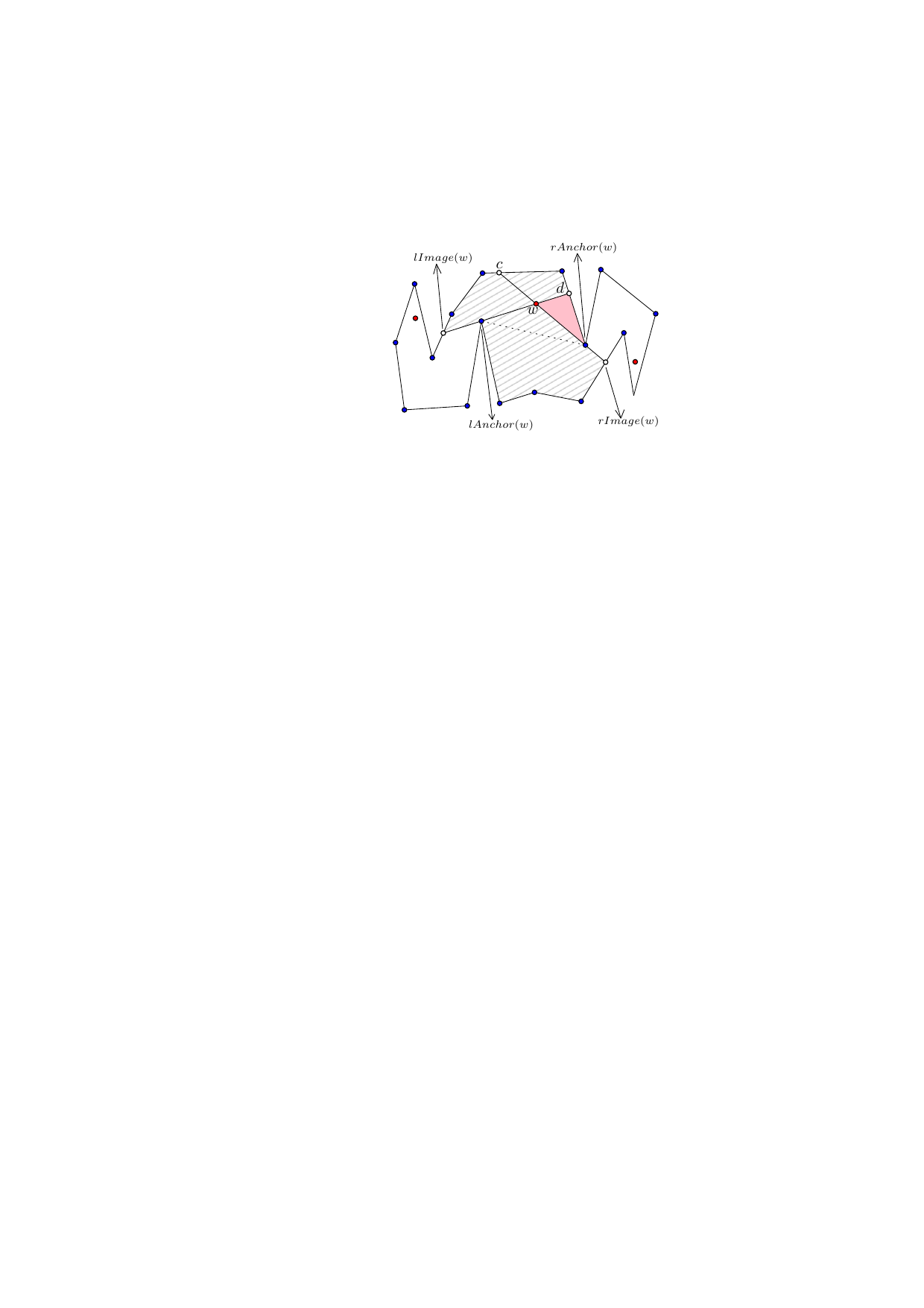}
\subcaption{Case \hyperref[c25]{2.2.d}}
\label{trg-luu}
\end{subfigure}
\caption{ Case \hyperref[c22]{2.2}. $ \LA(w), \RA(w) $ and $ \bm{w} $ are collinear. The red points are a set of witnesses. Here $ \LA(w)$ and $ \RA(w) $ belong to the same chain. Pink region is defined for $ \tra(w,W) $.} \label{fig-trap4}
\end{figure}
\end{sloppypar}		

            \begin{mdframed}[backgroundcolor=red!10,topline=false,bottomline=false,leftline=false,rightline=false] 
	Observe that for any witness $w \in W $, the region $ \tra(w,W) $ is either of following three types;(1) It contains some vertex from $V(\mo)$ (see Figures \ref{fig-trap3}, \ref{trg-ulb}, and \ref{trg-luu}), (2) It contains no vertex from $V(\mo)$ (see (Figures \ref{trg-ulu} and \ref{trg-lub}), and (3) It contains all points of the segment $\overline{\beta \beta'}$ except the points $\{\beta,\beta'\}$ where $\beta $ and $\beta' $ is a pairwise reflex vertices that is visible to each other (See \Cref{winttt}). 
\end{mdframed}    

}

\noindent Based on the above discussion, we partition the set $W$ into  $\wint$, $\wgb$, and $\wbb$.
		
	\begin{sloppypar}
	\begin{definition}[$\wint$, $\wgb$ and $\wbb$]\label{def:witness}
	{\em We partition the witness set $W$ into three  subsets, labeled $\wint$, $\wgb$ and $\wbb$, {based on the nature of their respective $\tra(\cdot,\cdot)$.} Each witness $w$ in $W$ is sorted into either of these subsets.
	\begin{description}
		\item[1. $\wint$:]{Set of the witnesses $w \in W $ for which $w, \LA(w)$ and $\RA(w)$ are collinear and $\LA(w)$ and $\RA(w)$ belongs to the different chains $\uc(\mo)$ and $\lc(\mo)$. So, if $w \in \wint$,} then $ \tra(w,W) $ is obtained  by the open line segment $ (\LA(w), \RA(w)) $. 
	\item[2. $\wgb$:] Set of the witnesses $w$ in $ W $ for which $ \tra(w,W)$ intersects $ V(\mo)$.
	
		\item[3. $\wbb$:] Set of the witnesses $w$ in $ W $ for which $ \tra(w,W)$ has a non-empty intersection with $ \bd(\mo)$ but contains no vertex from $V(\mo)$.

	\end{description}}
    \end{definition}
           \end{sloppypar}

           \ifthenelse{\boolean{shortver}}{

   \begin{figure}[ht!]
\centering
\begin{subfigure}[b]{0.45\textwidth}
\centering
\includegraphics[width=\textwidth]{fig/ulb.pdf}
\label{trg-ulb}
\end{subfigure}
\hspace{8mm}
\begin{subfigure}[b]{0.45\textwidth}
\centering
\includegraphics[width=\textwidth]{fig/ulu.pdf}
\label{trg-ulu}
\end{subfigure}
\hspace{8mm}
\begin{subfigure}[b]{0.45\textwidth}
\centering
\includegraphics[width=\textwidth]{fig/wint1.pdf}
\label{trg-lub}
\end{subfigure}
\hspace{8mm}
\begin{subfigure}[b]{0.45\textwidth}
\centering
\includegraphics[width=\textwidth]{fig/llb.pdf}
\label{trg-luu}
\end{subfigure}
\caption{Pink region is defined for $ \tra(w,W) $.} \label{fig-trap4}
\end{figure}

 \begin{mdframed}[backgroundcolor=red!10,topline=false,bottomline=false,leftline=false,rightline=false] 
	Observe that for any witness $w \in W $, the region $ \tra(w,W) $ is either of following three types;(1) It contains some vertex from $V(\mo)$, (2) It contains no vertex from $V(\mo)$  and (3) It contains all points of the segment $\overline{\beta \beta'}$ except the points $\{\beta,\beta'\}$ where $\beta $ and $\beta' $ is a pairwise reflex vertices that is visible to each other (See \Cref{fig-trap4}). 
\end{mdframed}
}{}

 Now, we introduce the concept of {\em potential witness set}, which contains a solution to the {\sc Witness Set}.  
	
	\begin{definition}[\bm{$ \pws $}] \label{def-potential}
		{\em For a monotone polygon $\mo$, a finite point set $ Q $ in $ \mo $ is said to be a {\em potential witness set} of $ \mo $, if there exists a subset $ Q' \subseteq Q$ such that $ Q' $ is a solution of the {\sc  Witness Set} in $\mo$. }
	\end{definition}
	
	\begin{definition}[{$ (W,S)\text{-}\pws $}] \label{def-potential1}
		{\em Consider a witness set $W$ and a subset $S \subseteq W$. A set of points $Z$ is said to be a  $ (W,S)\text{-}\pws $ if there exists a subset $Z' \subseteq Z$ satisfying  (i) $ (W \smallsetminus S) \cup Z' $ is a witness set of $\mo$, and (ii) $| (W \smallsetminus S) \cup Z' | = |W|$.}

	\end{definition}

Note that if $W$ is a solution to the {\sc Witness Set} within $\mo$ and $S=W$, then $ (W,S)\text{-}\pws $ corresponds precisely to the $\pws$ of $\mo$. Thus, \Cref{def-potential1} can be viewed as a generalization of \Cref{def-potential}. Now we define the notion of a potential witness set for $W$ in $\mo$. 
\begin{definition}[{$ W\text{-}\pws $}] \label{def-potential2}
		{\em Consider a witness set $W$ in $\mo$. A set of points $Z$ is said to be a  $ W\text{-}\pws $ in $\mo$ if for each witness $w$ in $W$, there exists a $q \in Z$ such that $(W \smallsetminus w) \cup \{q\}$ is a witness set in $\mo$.}

        \end{definition}

In the next section, we create both $ \wint$-\pws ~and $ \wbb$-\pws. 
	
	\subsection{Generating a Potential Witness Set of $\mo$} \label{sec-point}
	Let $\ro$ denote the set of all reflex vertices within $\mo$, and let $r$ be its cardinality. The main objective of a potential witness set $Q$ for $\mo$ is described as follows: for any solution $W$ of the {\sc Witness Set} problem, each witness $w$ in $W$ should have an alternative $q$ in $Q$ such that substituting $w$ with $q$ in $W$ results in a valid solution, specifically, $(W \smallsetminus \{w\}) \cup \{q\}$ remains a solution for the {\sc Witness Set} in $\mo$, due to \Cref{obs-wit_repre}.


\paragraph{Few Conventions.} Consider the witness set $W= \{w_1, \ldots, w_k\}$ with a specific order such that $x(w_1) < \ldots < x(w_k)$. To demonstrate that $Q$ is a $W$-potential witness set of $\mo$, we proceed as follows: for each $w_i$  $ i \in [k]$, we find a substitute $w_i^* \in Q$ such that  $(W \smallsetminus \{w_i\}) \cup \{w_i^*\}$ is a  witness set within $\mo$. If $w_i \in Q$ then there is nothing to show; in that case $w_i= w^*_i$. This implies every witness $w \in W$ can be appropriately replaced by a $w^* \in Q$. \ifthenelse{\boolean{shortver}}{}{Within our proof, we adhere to the following convention: For any two points $a_1$, $a_2$ within $\mo$, we assert that $a_1$   lies to the left of $a_2$ or equivalently $a_2$  lies to the right of $a_1$ if $x(a_1) < x(a_2)$.  We will now present two important definitions --  {\em visibility region of a line segment} and {\em clone} that we use often in our proof.}

 	\begin{definition}[Visibility of a line segment from another line segment]\label{def_linevisibility} 
 		{\em Consider a pair of  line segments, $L_1$ and $L_2$, both contained entirely within $\mo$. The notation $\vis(L_1)\vert_{L_2}$ stands for the visibility region of $L_1$ when restricted to $L_2$, i.e.,  $\vis(L_1) \cap L_2$, {with respect to the subspace topology on $L_2$. Here, $\vis(L_1)$ denotes the visibility region of $L_1 \subseteq \mo$ as defined in \Cref{sec-preli}.} Essentially this  captures the portion of $L_2$ that can be seen from $L_1$ in $\mo$.}  
 	\end{definition}
 	
 	The following observation is immediate from the above definition.
 	
 	\begin{observation}\label{obs:con}
 		For any two line segments $L_1, L_2$, the region $ \vis(L_1)\vert_{L_2}$ is closed and indeed a connected segment.
 	\end{observation}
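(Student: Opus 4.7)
The plan is to establish closedness and connectedness of $\vis(L_1)|_{L_2}$ as a subset of $L_2$ separately; a closed connected subset of a line segment is automatically a sub-segment, which gives the claim. Closedness follows from a standard compactness argument: given a sequence $q_n \to q$ in $L_2$ with witnesses $p_n \in L_1$ (so $\overline{p_n q_n} \subseteq \mo$), compactness of the segment $L_1$ yields a convergent subsequence $p_{n_k} \to p \in L_1$, and closedness of $\mo$ gives $\overline{pq} \subseteq \mo$, so $q \in \vis(L_1)|_{L_2}$.

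For connectedness, the key tool is a \emph{triangle lemma} that exploits $x$-monotonicity of $\mo$: any triangle whose boundary is contained in $\mo$ is itself contained in $\mo$. This is because a point $p^*$ interior to such a triangle but outside $\mo$ would lie on a vertical chord of the triangle whose two endpoints on the triangle's boundary are in $\mo$, forcing the vertical cross-section of $\mo$ at $x(p^*)$ to consist of more than one connected component, contradicting $x$-monotonicity. Now suppose $q_1, q_2 \in \vis(L_1)|_{L_2}$ are witnessed by $p_1, p_2 \in L_1$ respectively, and let $q$ lie on the sub-segment $\overline{q_1 q_2} \subseteq L_2$. If the witnessing segments $\overline{p_1 q_1}$ and $\overline{p_2 q_2}$ cross at a point $z$, then the two triangles $\triangle p_1 z p_2$ and $\triangle q_1 z q_2$ have their boundaries in $\mo$ (being unions of sub-segments of $L_1$, $L_2$, and the two witnessing segments), hence both lie in $\mo$ by the triangle lemma. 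The line through $z$ and $q$, extended beyond $z$, meets $\overline{p_1 p_2} \subseteq L_1$ at some point $p$, and the decomposition $\overline{pq} = \overline{pz} \cup \overline{zq}$ lies inside these two triangles and hence inside $\mo$, so $p \in L_1$ witnesses $q$.

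If instead the witnessing segments $\overline{p_1 q_1}$ and $\overline{p_2 q_2}$ do not cross, the quadrilateral $R$ with sides $\overline{p_1 q_1}$, $\overline{q_1 q_2} \subseteq L_2$, $\overline{q_2 p_2}$, and $\overline{p_2 p_1} \subseteq L_1$ is bounded by a simple closed curve in $\mo$, and by simple-connectedness of $\mo$ (which is a simple polygon, hence homeomorphic to a disk) we have $R \subseteq \mo$. Pick the interior diagonal of $R$, say $\overline{p_i q_{3-i}}$ for some $i \in \{1,2\}$ (at least one diagonal of any simple quadrilateral lies in its interior); since it lies inside $R \subseteq \mo$, the triangle $\triangle q_1 q_2 p_i$ has all three sides in $\mo$ and is therefore contained in $\mo$ by the triangle lemma, so $p_i \in L_1$ witnesses every point of $\overline{q_1 q_2}$, including $q$. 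The main obstacle is the non-crossing case, where we must first certify $R \subseteq \mo$ via simple-connectedness and then locate a witness; once the triangle lemma is applied to the diagonal-adjacent triangle, the desired witness emerges directly.
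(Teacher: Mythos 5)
The paper offers no argument for this observation at all --- it is introduced with ``The following observation is immediate from the above definition'' --- so your proposal supplies a proof where the authors supply none, and on its own merits it is essentially correct. The compactness argument for closedness is sound (a union of the closed sets $\vis(x)$ over $x\in L_1$ is not automatically closed, so this step genuinely needs the compactness of $L_1$ that you invoke). For connectedness, your central tool is the right one: the fact that a triangle (or any region bounded by a simple closed curve) whose boundary lies in $\mo$ lies entirely in $\mo$, which is exactly what makes ``visibility restricted to a segment'' behave like a convexity statement. Two remarks. First, your vertical-chord derivation of the triangle lemma from $x$-monotonicity is correct but redundant: you already invoke simple-connectedness of $\mo$ for the quadrilateral case, and the same Jordan-curve argument gives the triangle lemma in any simple polygon, monotone or not; this would make the observation hold at the generality at which it is stated. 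Second, the non-crossing case has a small gap: you assume the quadrilateral $p_1q_1q_2p_2$ is a simple closed curve, but its two ``non-adjacent'' sides $\overline{q_1q_2}\subseteq L_2$ and $\overline{p_1p_2}\subseteq L_1$ can intersect when $L_1$ and $L_2$ cross (and there are collinear or $p_1=p_2$ degeneracies). These are repairable --- split $\overline{q_1q_2}$ at the intersection point of $L_1$ and $L_2$, where visibility is trivial, and handle each half as a triangle --- and they do not arise in the paper's actual uses of the observation (there $L_1$ is a chord such as $\Rb(w_{i-1})$ and $L_2$ an edge $e_i$ on the other side of the partition), but a complete proof at the stated level of generality should dispose of them explicitly.
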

 	
 	\begin{definition}[Clone of a vertex/point]\label{def_clone}
 		{\em Let $v$ be a vertex or a point  of $\bd(\mo)$. We denote $v^+$ (resp., $v^-$) to be an {\em infinitesimally close point}  near $v$ on $\bd(\mo)$ with $x(v^+)>x(v)$ (resp., $x(v^-)<x(v)$). We call $v^+$ as the {\em right clone} of $v$ (resp., $v^-$ as the {\em left clone} of $v$). }  
 	\end{definition}
 
  	We refer to the clone of a vertex as {\em clone point}. We now present a useful claim.

 \begin{clm}
 	\label{edgevis}
 	Consider a witness $w_i \in W$ with $w_i ~{\notin \wint} ~(1 < i <k$). {Then, as $\tra(w_i) \cap \bd(\mo) \neq \emptyset$, we can assume, w.l.o.g. that $w_i$ lies on $\bd(\mo)$.} Let $e_i$ be the edge of $\mo$ on which the witness $w_i$ lies. Then $\vis(\Rb(w_{i-1}))\vert_{e_i} \cup \vis(\Lb(w_{i+1}))\vert_{e_i} \neq e_i$.
 \end{clm}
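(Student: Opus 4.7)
\textbf{Proof plan for \Cref{edgevis}.} The strategy is to exhibit a single point of $e_i$ that lies outside both $\vis(\Rb(w_{i-1}))|_{e_i}$ and $\vis(\Lb(w_{i+1}))|_{e_i}$; the natural candidate is $w_i$ itself, which by the standing WLOG assumption lies on $e_i$. The whole argument then reduces to showing that no point of $\Rb(w_{i-1})$ sees $w_i$ and, symmetrically, no point of $\Lb(w_{i+1})$ sees $w_i$.

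First I would record the structural fact that $\Rb(w_{i-1}) \subseteq \vis(w_{i-1})$ and $\Lb(w_{i+1}) \subseteq \vis(w_{i+1})$. This is immediate from \Cref{def-RCLC} and \Cref{def:fr}: $\Rb(w_{i-1})$ is the sub-chord of the maximal chord $\RC(w_{i-1})$ between the reflex vertex $\RA(w_{i-1})$ and its image $\IMr(w_{i-1})$, and $w_{i-1}$, $\RA(w_{i-1})$, $\IMr(w_{i-1})$ are collinear and lie inside $\mo$; hence every point of this sub-chord is visible from $w_{i-1}$. The analogous statement for $\Lb(w_{i+1})$ follows by symmetry.

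Next, suppose for contradiction that $w_i \in \vis(\Rb(w_{i-1}))|_{e_i}$. By \Cref{def_linevisibility} there exists $p \in \Rb(w_{i-1})$ with $p$ seeing $w_i$, so $p \in \vis(w_i)$. But by the previous step $p \in \vis(w_{i-1})$ as well, whence $p \in \vis(w_{i-1}) \cap \vis(w_i)$. This contradicts the fact that $W$ is a witness set, since the visibility regions of distinct witnesses must be disjoint. The identical argument with $w_{i-1},\Rb$ replaced by $w_{i+1},\Lb$ shows $w_i \notin \vis(\Lb(w_{i+1}))|_{e_i}$. Combining the two exclusions, $w_i \in e_i$ witnesses that $\vis(\Rb(w_{i-1}))|_{e_i} \cup \vis(\Lb(w_{i+1}))|_{e_i}$ is a proper subset of $e_i$, as required.

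I do not foresee any serious obstacle, as the argument is a one-line application of the witness-disjointness property once the inclusion $\Rb(\cdot) \subseteq \vis(\cdot)$ is in hand. The only mild subtlety to flag is the WLOG reduction that places $w_i$ on $\bd(\mo)$: here I would invoke \Cref{obs-wit_repre} together with \Cref{clm-tregion}, which (since $w_i \notin \wint$) guarantees that $\tra(w_i,W) \cap \bd(\mo) \neq \emptyset$, so we may replace $w_i$ by a boundary point inside $\tra(w_i,W)$ without destroying the witness property. Whether the resulting $w_i$ lies in the relative interior of $e_i$ or coincides with an endpoint is immaterial, because the disjointness argument above applies verbatim.
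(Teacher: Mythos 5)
Your proof is correct and follows essentially the same route as the paper's: both arguments observe that every point of $\Rb(w_{i-1})$ is visible from $w_{i-1}$ (and likewise for $\Lb(w_{i+1})$ and $w_{i+1}$), so if $w_i \in e_i$ were seen from either chord the visibility regions of two distinct witnesses would intersect, a contradiction. The paper phrases this as a contradiction from assuming the union equals $e_i$, while you directly exhibit $w_i$ as a point of $e_i$ outside the union; the content is identical.
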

 \ifthenelse{\boolean{shortver}}{}{
 \begin{proof}
 	Suppose for the sake of contradiction that $\vis(\Rb(w_{i-1}))\vert_{e_i} \cup \vis(\Lb(w_{i+1}))\vert_{e_i} = e_i$. This indicates that $w_i \in \vis(\Rb(w_{i-1}))\vert_{e_i} \cup \vis(\Lb(w_{i+1}))\vert_{e_i}$, suggesting that $w_i$ is visible from $\Rb(w_{i-1})$ or $\Lb(w_{i+1})$. {Consequently, this implies that there exists a point $x \in \Rb(w_{i-1})$ or a point $y \in \Lb(w_{i+1})$ such that $x$ (or $y$) is visible from $w_i$. Also, every point on $\Rb(w_{i-1})$ is visible from $w_{i-1}$ and every point on $\Lb(w_{i+1})$ is visible from $w_{i+1}$. So, this means that $(x \in)\vis(w_{i-1}) \cap \vis(w_i) \neq \emptyset$ or $(y \in)\vis(w_i) \cap \vis(w_{i+1}) \neq \emptyset$, which is}  a contradiction in either case.
 \end{proof}
 }
	\begin{definition}[$\reg(e),e_i,\reg_i, \reg^c_i $]\label{def:rege}
	{\em For an edge $e$ in $\mo$, $\reg(e)$ denotes the region determined by all the points lying {on} the edge $e$. For each witness $w_i$, if it is indeed an internal point of some edge in $\mo$, we use $e_i$ to denote such an edge that contains the witness $w_i$. For each integer $i \in \{2, \ldots, k-1\}$, we define two regions $ \reg_i $ and $\reg^c_i $ as follows.
	\begin{itemize}
			\item $\reg_i \coloneqq \vis(\Rb(w_{i-1}))\vert_{e_i} \cup \vis(\Lb(w_{i+1}))\vert_{e_i}$.
			
			\item $\reg^c_i \coloneqq \reg(e_i) \setminus \reg_i$.
	\end{itemize} }  
\end{definition}

  As $\tra(w,W)$ of each witness $w \in \wgb$ intersects $V(\mo)$,  the following holds.
  
  \begin{clm} \label{Type-4-witness}
  	$V(\mo)$ is a  $(W, \wgb)$-\pws.
  \end{clm}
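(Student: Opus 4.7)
\textbf{Proof proposal for \Cref{Type-4-witness}.}

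The plan is to construct the required subset $Z' \subseteq V(\mo)$ directly by picking, for each witness $w \in \wgb$, a vertex of $\mo$ that lies in $\tra(w,W)$, and then verifying the two requirements of \Cref{def-potential1}: (i) $(W \smallsetminus \wgb) \cup Z'$ is a witness set and (ii) this union has size $|W|$. The work therefore splits into a selection step, a visibility-compatibility step, and a distinctness step.

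For the selection step, recall from \Cref{def:witness} that every $w \in \wgb$ has the property $\tra(w,W) \cap V(\mo) \neq \emptyset$. Hence for each $w \in \wgb$ I can pick some $v_w \in \tra(w,W) \cap V(\mo)$, and I define $Z' := \{v_w : w \in \wgb\}$. The key observation, which I will use throughout, is that adopting the canonical choice $\ter(w,W) = \vis(w)$ from the remark following \Cref{def-territory} yields, via \Cref{def-trap}, the chain of inclusions $v_w \in \tra(w,W) \subseteq \ter(w,W) = \vis(w)$ and $\vis(v_w) \subseteq \vis(\tra(w,W)) \subseteq \vis(w)$. These two inclusions drive everything that follows.

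For the witness-set property, pick any two distinct points $p,q \in W^* := (W \smallsetminus \wgb) \cup Z'$. If both lie in $W \smallsetminus \wgb \subseteq W$, then $\vis(p) \cap \vis(q) = \emptyset$ since $W$ itself is a witness set. If $p = v_w$ for some $w \in \wgb$ and $q \in W \smallsetminus \wgb$ with $q \neq w$, then $\vis(p) \cap \vis(q) \subseteq \vis(w) \cap \vis(q) = \emptyset$, again using that $W$ is a witness set. If $p = v_w$ and $q = v_{w'}$ with $w \neq w'$ both in $\wgb$, then $\vis(p) \cap \vis(q) \subseteq \vis(w) \cap \vis(w') = \emptyset$. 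Hence $W^*$ is a witness set.

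For the size step, I must show $|W^*| = |W|$, i.e.\ that the $v_w$'s are pairwise distinct and disjoint from $W \smallsetminus \wgb$. If $v_w = v_{w'}$ for distinct $w, w' \in \wgb$, then this common point lies in $\vis(w) \cap \vis(w')$, contradicting that $W$ is a witness set. If $v_w = w'$ for some $w \in \wgb$ and $w' \in W \smallsetminus \wgb$, then $w' = v_w \in \vis(w)$ and simultaneously $w' \in \vis(w')$, again contradicting $\vis(w) \cap \vis(w') = \emptyset$ for distinct witnesses of $W$. I expect the only subtle point of the argument to be confirming that \emph{simultaneous} replacements preserve the witness property, since \Cref{obs-wit_repre} only justifies replacing one witness at a time; but this subtlety is precisely what the canonical choice $\ter(w,W) = \vis(w)$ dissolves, because it forces $\vis(v_w) \subseteq \vis(w)$ and hence makes the simultaneous replacements non-interfering.
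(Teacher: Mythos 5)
Your proposal is correct and follows exactly the route the paper intends: the paper offers only the one-line justification that each $\tra(w,W)$ for $w\in\wgb$ meets $V(\mo)$, and your argument is the fully worked-out version of that, using $\tra(w,W)\subseteq\ter(w,W)=\vis(w)$ and $\vis(\tra(w,W))\subseteq\vis(w)$ to get both the witness property and the cardinality. Your explicit handling of the simultaneous-replacement subtlety (which \Cref{obs-wit_repre} alone does not cover) is a detail the paper leaves implicit, and you resolve it the same way the paper's convention $\ter(w,W)=\vis(w)$ implicitly does.
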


Now we show that when $|W| \geq 2$, the size of the set $\wgb$ is at least two. To be precise, both the vertices $w_1$ and $w_k$ belong to $\wgb$. We prove this fact formally in \Cref{wonek}.

\begin{clm}\label{wonek}
	$\{w_1, w_k\} \subseteq  \wgb$
\end{clm}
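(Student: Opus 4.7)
By symmetry it suffices to prove $w_1\in\wgb$; the case $w_k\in\wgb$ follows by mirroring $\mo$ about a vertical axis and repeating the argument. The plan is to identify a concrete vertex $v\in V(\mo)$ such that $(W\setminus\{w_1\})\cup\{v\}$ is again a witness set and $v$ is connected to $w_1$ inside the feasible region, thereby placing $v$ inside an admissible realization of $\tra(w_1,W)$.

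The first ingredient is a separation property I will establish: for every $i\ge 2$, $\vis(w_i)\subseteq\{p\in\mo : x(p)>x(w_1)\}$. Otherwise some $p\in\vis(w_i)$ with $x(p)\le x(w_1)<x(w_i)$ would force the segment $\overline{w_ip}\subseteq\mo$ to cross the vertical chord $\ell(w_1)\subseteq\vis(w_1)$ (by \Cref{fact-Chord}) at a point lying in $\vis(w_1)\cap\vis(w_i)$, contradicting the witness property; \Cref{lem-transitive} then propagates the conclusion to all $i\ge 2$. Next, I case-split on the existence of $\LA(w_1)$. If $\LA(w_1)$ does not exist, \Cref{obs:lara} gives $\XMi(\vis(w_1))=\XMi(\mo)$, so $w_1$ directly sees $\XMi(\mo)$ and I take $v:=\XMi(\mo)$. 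If $\LA(w_1)$ exists, it is a reflex vertex of $\mo$ strictly to the left of $w_1$, and I take $v:=\LA(w_1)$.

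It remains to show $\vis(v)\cap\vis(w_i)=\emptyset$ for every $i\ge 2$. The intended argument combines the separation property with the $x$-monotonicity of $\mo$: any putative $p\in\vis(v)\cap\vis(w_i)$ must lie to the right of $\ell(w_1)$, and the segment $\overline{vp}\subseteq\mo$ then crosses $\ell(w_1)$ at a point $q\in\vis(w_1)$; by tracing the reflex blockers between $v,q,p,w_i$ through the monotone structure and the chosen path from $w_1$ to $v$, one aims to exhibit a point of $\vis(w_1)\cap\vis(w_i)$, the desired contradiction. I expect the main obstacle to lie precisely in this visibility verification, since a reflex vertex that blocks $w_1$'s view of $p$ need not block $v$'s view of $p$, so $\vis(v)$ can in principle leak past $\vis(w_1)$ around a reflex. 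As a contingency, \Cref{clm-tregion} already guarantees $\tra(w_1,W)\cap\bd(\mo)\neq\emptyset$, so if the direct choice of $v$ ever fails in a degenerate configuration one may instead slide along $\bd(\mo)$ from any such boundary point of $\tra(w_1,W)$ leftward toward $\XMi(\mo)$ until the first vertex is encountered, verifying step-by-step that no intermediate position creates a visibility conflict with any $\vis(w_i)$.
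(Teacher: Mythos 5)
There is a genuine gap, and you have in fact pointed at it yourself without closing it. The crux of the claim is to exhibit a vertex of $V(\mo)$ that can replace $w_1$ while keeping $\vis(\cdot)\cap\vis(w_2)=\emptyset$ (by \Cref{lem-transitive} only $w_2$ matters). Your separation property and the reduction to $w_2$ are fine, but your candidate $v=\LA(w_1)$ (or $v=\XMi(\mo)$ when $\LA(w_1)$ does not exist) does not work in general. By \Cref{lem-cl2}, whether $\vis(v)$ meets $\vis(w_2)$ is governed entirely by whether $\Rb(v)$ meets $\Lb(w_2)$, and moving from $w_1$ to $\LA(w_1)$ changes the right anchor and right boundary chord: $\Rb(w_1)$ is collinear with $w_1$, which is exactly why $w_1$ cannot see past it, but it is \emph{not} collinear with $\LA(w_1)$, so $\LA(w_1)$ can see across $\Rb(w_1)$ into the region containing $\vis(w_2)$. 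The paper itself flags this in the proof of \Cref{clm-tregion}: choosing $\LA(w)$ itself as a substitute "could result in an increase in the visibility region." The same objection applies to $\XMi(\mo)$, which may have a much wider view to the right than $w_1$ does. Your fallback --- sliding along $\bd(\mo)$ from a boundary point of $\tra(w_1,W)$ toward $\XMi(\mo)$ until the first vertex, "verifying step-by-step" that nothing goes wrong --- is precisely the unproved step: passing a reflex vertex while sliding can open up visibility to the right, and nothing in your argument rules this out. (Also, \Cref{clm-tregion} may only give an open line segment, not a boundary point, so the fallback does not even always start.)

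The paper closes the gap with a different device in two moves. First, if $w_1\notin\bd(\mo)$, it slides $w_1$ along the chord $\RC(w_1)$ (through $\RA(w_1)$ and $\IMr(w_1)$) to the chord's other boundary endpoint $\ro(w_1)$; because $w_1$ is leftmost and this move preserves $\Rb$ \emph{exactly} ($\Rb(\ro(w_1))=\Rb(w_1)$), \Cref{lem-cl2} guarantees $\vis(\ro(w_1))\subseteq\ter(w_1,W)$ --- this is the invariant your choice of $v$ lacks. Second, on the edge $e_1=\overline{u_1v_1}$ containing $\ro(w_1)$, it observes that $\vis(\Lb(w_2))\vert_{e_1}$ is a closed connected subsegment of $e_1$ (\Cref{obs:con}) that is not all of $e_1$ (else $w_1$ would see a point of $\Lb(w_2)$, hence of $\vis(w_2)$), so one endpoint $u_1$ or $v_1$ of the edge is not visible from $\Lb(w_2)$ and serves as the required vertex in $V(\mo)$. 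You would need to replace your choice of $v$ and its verification with an argument of this kind (or otherwise control $\Rb(v)$ relative to $\Lb(w_2)$) for the proof to go through.
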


\ifthenelse{\boolean{shortver}}{}{
\begin{proof}
	 First, we demonstrate that $w_1$ can be substituted with $w_1^*$, where $w_1^*$ belongs to $V(\mo)$. Symmetrically, this argument applies to $w_k$, the last witness from left to right. If $w_1$ is already in $V(\mo)$, we have nothing to prove and $w_1 = w_1^*$. Otherwise, if $w_1 \notin V(\mo)$, we determine $w_1^*$ through a two-step process.
	
	\begin{description}
		\item[Step 1. Finding a replacement of $w_1$ on $\bd(\mo)$.] If $w_1 \in \bd(\mo)$, then we are done. Otherwise, consider the chord in $\mo$ that passes through $w_1, \RA(w_1)$ and has one endpoint at $\IMr(w_1)$. Let $\ro(w_1)$ represent the other endpoint.  Clearly, $\ro(w_1) \in \bd(\mo)$. It is apparent that $\vis(\ro(w_1)) \subseteq \ter(w_1,W)$, as $w_1$ is the leftmost witness and $\Rb(w_1) = \Rb(\ro(w_1))$. Thus, $w_1$ has a substitute on $\bd(\mo)$ concerning $W$, and $\ro(w_1)$ is one such substitute.
		
		\smallskip 
		
		\item[Step 2. Finding a replacement of $\ro(w_1)$ on $V(\mo)$.] Consider $\ro(w_1)$ positioned on the edge $e_1 = \overline{u_1v_1}$ of $\mo$. We examine $\vis(\Lb(w_2))\vert_{e_1}$ and assert that it cannot {be} equal {to $\reg(e_1)$}. If it did, it would imply $w_1 \in \vis(\Lb(w_2))\vert_{e_1}$, {thereby} obtaining {(at least) one} point $x$ on $\Lb(w_2)$ {which sees $w_1$, and vice-versa. Now,} every point on $\Lb(w_2)$ is visible from $w_2$. Thus, this would mean $(x\in)\vis(w_1)\cap\vis(w_2)\neq \emptyset$, which would lead to a contradiction. Based on \Cref{obs:con}, $\vis(\Lb(w_2))\vert_{e_1}$ is a connected line segment, and since $\vis(\Lb(w_2))\vert_{e_1} \neq e_1$, {either} one of $u_1$ or $v_1$ must not be visible from $\Lb(w_2)$. Assume $v_1$ is not visible from $\Lb(w_2)$ (refer to Figure \hyperref[w1invertex]{12}). Consequently, a replacement for $\ro(w_1)$ in $V(\mo)$ is obtained, which serves as a substitute for $w_1$ in $\mo$ with respect to $W$.
         \end{description}
	Hence the proof.
	\end{proof}
}

\ifthenelse{\boolean{shortver}}{}{
	\begin{figure}[ht!]
		\centering
		\includegraphics[scale=.8]{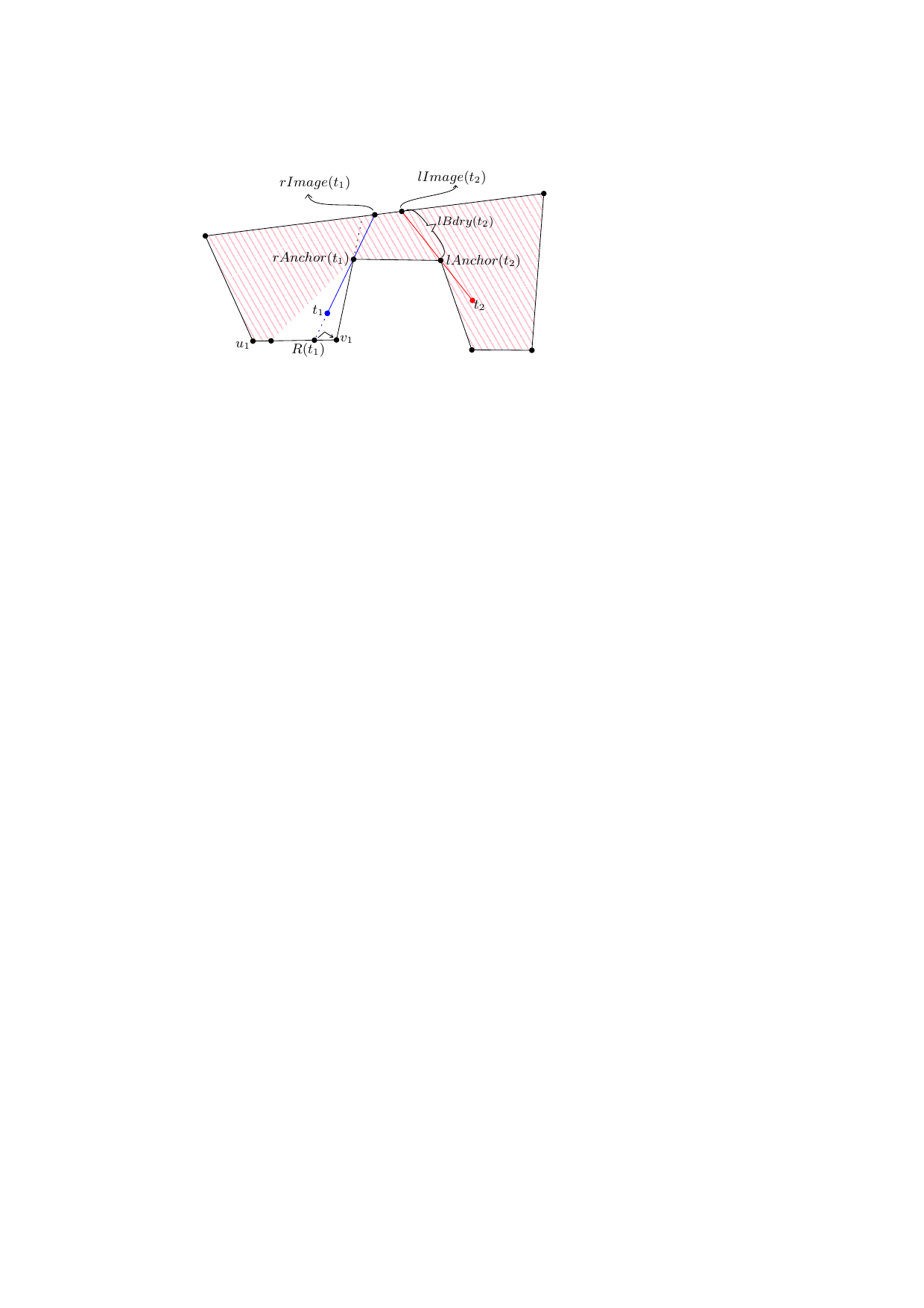}
		\caption{Illustration of the proof of \Cref{wonek}. Replacing $t_1$ by  $v_1\in V(\mo)$. The pink region denotes  visibility region of $\Lb(t_2)$.} \label{w1invertex} 
	\end{figure}
	
	}
  
  \subsubsection{Construction of  a $ (W,\wint)$-Potential Witness Set} The $\tra(w,W)$ of each witness $w \in \wint$ consists of an open line segment. It is important to note that there can be at most $r^2$ of these segments, with each $w \in \wint$ lying within one of these $r^2$ regions. Recall  that the $\tra(w,W)$ is associated with a pair of visible reflex vertices where $\LA(w), \RA(w)$, and $w$ are collinear, and $\LA(w), ~ \RA(w)$ belong to the different chains, {$\uc(\mo)$ and $\lc(\mo)$}. We know that $\vis(\tra(w,W)) \subseteq \ter(w, W)$, implying that for any point $z \in \tra(w)$, $ \vis(z) \subseteq \ter(w, W)$. We will now construct a point set, denoted as $\ro_{mid}$, in the following manner:

  $$\ro_{mid} = \bigl\{ p~:~p~\text{is the midpoint of}~\overline{ss'}~\text{for all}~s, s' \in \ro~\text{with}~s' \in \vis(s)\bigr\} $$

  Let $\ro_{mid}$ denote the set of all midpoints of segments connecting visible reflex vertex pairs. The $\ro_{mid}$ can be computed in $n^{\mathcal{O}(1)}$ time. The following observation is immediate.


\begin{observation}\label{obs:r2}
    $|\ro_{mid}| \leq r^2$.
\end{observation}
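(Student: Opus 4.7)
The plan is to prove this via a direct counting argument, since $\ro_{mid}$ is defined as a set of midpoints indexed by ordered (or unordered) pairs of mutually visible reflex vertices. First I would note that for any pair $(s, s')$ of reflex vertices with $s' \in \vis(s)$, the midpoint of the segment $\overline{ss'}$ is a single well-defined point, and that the visibility relation is symmetric, so the midpoint obtained from the pair $(s, s')$ is identical to the one obtained from $(s', s)$.

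Next I would bound the number of contributing pairs. Since $|\ro| = r$, the total number of ordered pairs $(s, s')$ with $s, s' \in \ro$ is at most $r^2$, and thus the number of distinct midpoints produced by the defining rule of $\ro_{mid}$ is also at most $r^2$. In fact a sharper bound $|\ro_{mid}| \leq \binom{r}{2} = \frac{r(r-1)}{2}$ holds because the midpoint only depends on the unordered pair $\{s, s'\}$, but for the purposes of this observation the coarser bound $r^2$ suffices.

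No obstacle is expected here; the statement is a straightforward cardinality bound, and the argument consists solely of observing that $\ro_{mid}$ is parametrized by a subset of $\ro \times \ro$ under the well-defined map $(s, s') \mapsto \tfrac{1}{2}(s + s')$. Hence $|\ro_{mid}| \leq |\ro \times \ro| = r^2$, which yields the claim.
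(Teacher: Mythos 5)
Your counting argument is correct and matches the paper's intent exactly; the paper states this observation as immediate without giving an explicit proof, and your bound via the map from pairs of mutually visible reflex vertices to midpoints (with the sharper $\binom{r}{2}$ remark) is precisely the reasoning being left implicit.
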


In the following \Cref{Type-1-witness}, we demonstrate that the collection $\ro_{mid}$ is indeed a $(W, \wint)$-\pws. Specifically, there exists a subset $Q' \subseteq \ro_{mid}$ such that: (i)  $(W \smallsetminus \wint) \cup Q'$ forms a witness set in $\mo$, and (ii)  $|(W \smallsetminus \wint) \cup Q'|$ does not exceed $|W|$. 
	
	\begin{clm} \label{Type-1-witness}
		 $\ro_{mid}$ is a  $(W, \wint)$-\pws.
	\end{clm}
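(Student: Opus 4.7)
The plan is to take $Q' = \{m_w : w \in \wint\}$, where $m_w$ denotes the midpoint of the segment $\overline{\LA(w)\RA(w)}$, and show that $(W \smallsetminus \wint) \cup Q'$ is a witness set of cardinality $|W|$. By the definition of $\wint$, the vertex $w$ is collinear with $\LA(w)$ and $\RA(w)$, so the entire segment $\overline{\LA(w)\RA(w)}$ is contained in $\mo$. Hence $\LA(w), \RA(w)$ is a mutually visible pair of reflex vertices and $m_w \in \ro_{mid}$; moreover $m_w$ lies in the open segment $\tra(w,W) = (\LA(w),\RA(w))$.

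I would first verify distinctness: if $m_w$ equals $m_{w'}$ for distinct $w, w' \in \wint$, or equals some $w^* \in W \smallsetminus \wint$, then the coincident point lies on both anchor segments and is therefore simultaneously visible from both underlying witnesses, contradicting the witness property of $W$. Thus $|(W \smallsetminus \wint) \cup Q'| = |W|$. Next I would show pairwise non-intersection of visibility regions. For $m_w$ paired with an unchanged witness $w^* \in W \smallsetminus \wint$, the containment $\vis(m_w) \subseteq \ter(w,W)$ guaranteed by $m_w \in \tra(w,W)$ immediately yields $\vis(m_w) \cap \vis(w^*) = \emptyset$.

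The crucial case, and the expected main obstacle, is proving $\vis(m_w) \cap \vis(m_{w'}) = \emptyset$ for two distinct $w, w' \in \wint$. My plan is to establish the following robustness fact: for any interior point $p$ of $(\LA(w), \RA(w))$, the anchor and chord structures are preserved, i.e., $\RA(p) = \RA(w)$, $\LA(p) = \LA(w)$, $\IMr(p) = \IMr(w)$, $\IMl(p) = \IMl(w)$, and consequently $\Rb(p) = \Rb(w)$, $\Lb(p) = \Lb(w)$, with the $x$-coordinates of $\XM(\vis(p))$ and $\XMi(\vis(p))$ coinciding with those of $w$. This is because sliding along the chord $\overline{\LA(w)\RA(w)}$ does not alter which reflex vertex is the first blocker in either direction, nor does it alter the line through $p$ that passes through that blocker. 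Applying this fact to both $p = m_w$ and $p = m_{w'}$, the disjointness criterion from \Cref{lem-cl1} and \Cref{lem-cl2} --- which depends only on $\Rb$, $\Lb$, and the $x$-positions of $\XM(\vis(\cdot))$ and $\XMi(\vis(\cdot))$ --- returns the same value for the pair $(m_w, m_{w'})$ as for the original pair $(w, w')$, namely empty.

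The last thing to handle is the relative $x$-ordering required to invoke \Cref{lem-cl2} rather than \Cref{lem-cl1}. The hard part will be to show that $x(m_{w'})$ lies strictly outside the $x$-range $\bigl(x(\XMi(\vis(m_w))),\, x(\XM(\vis(m_w)))\bigr)$. I would argue by contradiction: if $x(m_{w'})$ lay in this interval, then using the $x$-monotone nature of visibility regions inside a monotone polygon (together with $\Rb(m_w) = \Rb(w)$ and $\Lb(m_w) = \Lb(w)$), the vertical chord $\ell(m_{w'}) \subseteq \vis(m_{w'})$ would necessarily cross $\vis(w)$, contradicting $\vis(m_{w'}) \cap \vis(w) = \emptyset$, which holds because $m_{w'} \in \tra(w',W)$. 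Once this ordering is in place, the proof concludes that $(W \smallsetminus \wint) \cup Q'$ is the desired witness set, showing that $\ro_{mid}$ is a $(W,\wint)\text{-}\pws$.
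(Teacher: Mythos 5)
Your proposal is correct and follows essentially the same route as the paper: the paper's proof simply notes that the midpoint of $\overline{\LA(w)\RA(w)}$ lies in $\tra(w,W)$ (relying on Case 1 of \Cref{clm-tregion}, where the invariance of $\RA,\LA,\Rb,\Lb$ under sliding along the chord is established) and then invokes \Cref{obs-wit_repre}. You re-derive that invariance inline and additionally spell out the simultaneous-replacement and distinctness checks that the paper leaves implicit via its one-witness-at-a-time convention and the choice $\ter(w,W)=\vis(w)$.
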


\ifthenelse{\boolean{shortver}}{}{
\begin{proof}
 {We show that for any $w \in \wint$, we can find a substitute $w^* \in \ro_{mid}$ such that $(W \setminus \{w\} \cup w^*$ is a valid witness set of $\mo$. So, now let $w \in \wint$ be an arbitrary witness. We have that $\tra(w,W)$ is an open line segment $(\LA(w)\RA(w))$. In particular, the mid-point $p$ of $\overline{\LA(w)\RA(w)}$ is in $\tra(w,W)$. Therefore, we can successfully substitute $w$ by the point $p (= w^*$ in this case) to get a witness set of $\mo$ without changing the cardinality of $W$. This proves our claim.}
	\end{proof}
}

Using \Cref{obs:r2} and \Cref{Type-1-witness}, we arrive at the following lemma.

    \begin{lemma} \label{lem:intwitness}
		Consider a monotone polygon $\mo$ with $n$ vertices. For any hypothetical witness set $W$ in $\mo$ and  subset $\wint \subseteq W$ as specified in \Cref{def:witness}, there exists a $(W, \wint)$-\pws~of size at most $r^2$. Moreover, this set can be found in $n^{\OO(1)}$ time.
	\end{lemma}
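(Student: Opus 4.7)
The plan is to take $Q = \ro_{mid}$ as the candidate $(W,\wint)$-potential witness set and simply assemble the ingredients already proved. The size bound $|Q| \leq r^2$ is exactly \Cref{obs:r2}, since $\ro_{mid}$ is built from midpoints indexed by pairs of reflex vertices and there are only $\binom{r}{2}$ such pairs (counting each unordered pair once, well within the $r^2$ bound). The $(W,\wint)$-potential-witness-set property is exactly \Cref{Type-1-witness}: for every $w \in \wint$, $\tra(w,W)$ is the open segment $(\LA(w),\RA(w))$, its midpoint $p$ lies in $\ro_{mid}$ by construction (since $\LA(w),\RA(w) \in \ro$ are mutually visible reflex vertices), and replacing $w$ by $p$ gives a witness set by \Cref{obs-wit_repre}.

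For the running time, I would describe the construction of $\ro_{mid}$ explicitly. First, identify the set $\ro$ of reflex vertices of $\mo$ by scanning the boundary once, in $O(n)$ time. Then, for each ordered pair $(s,s') \in \ro \times \ro$, test whether $s' \in \vis(s)$; this visibility test reduces to checking that the segment $\overline{ss'}$ lies entirely inside $\mo$, which can be done in $O(n)$ time by standard segment-in-polygon routines (or even in $O(\log n)$ after linear-time preprocessing, though such refinements are unnecessary here). For each visible pair, compute the midpoint of $\overline{ss'}$ and insert it into $\ro_{mid}$. The total cost is $O(r^2 \cdot n) = n^{O(1)}$, and the output has cardinality at most $r^2$.

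There is no real obstacle here, since the heavy lifting was done in \Cref{clm-tregion} (existence and shape of $\tra(w,W)$) and in the case analysis giving rise to $\wint$. The lemma is essentially a bookkeeping statement combining the structural claim \Cref{Type-1-witness} with the counting bound \Cref{obs:r2} and an obvious algorithmic realisation. I would therefore present the proof in three short sentences: (i) take $Q := \ro_{mid}$; (ii) cite \Cref{Type-1-witness} for the $(W,\wint)$-pws property and \Cref{obs:r2} for the size bound; (iii) describe the $O(r^2 n)$ enumeration-and-visibility-test procedure for the running time.
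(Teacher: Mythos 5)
Your proposal is correct and follows the paper's own route exactly: the paper also derives this lemma by taking $Q=\ro_{mid}$ and combining \Cref{Type-1-witness} (the $(W,\wint)$-\pws\ property via the midpoint of $\overline{\LA(w)\RA(w)}$) with \Cref{obs:r2} (the $r^2$ size bound), with the $n^{\OO(1)}$ construction time noted as in your enumeration-and-visibility-test argument. The only difference is that you spell out the $O(r^2 n)$ algorithmic details slightly more explicitly than the paper does, which is harmless.
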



  \begin{mdframed}[backgroundcolor=red!10,topline=false,bottomline=false,leftline=false,rightline=false] 
	Observe that if there is a solution $W$ for the {\sc Witness Set} in $\mo$ such that $\wbb = \emptyset$, then we can solve the {\sc Witness Set} for $\mo$ in $n^{\OO(1)}$  time. This is true due to the fact that $Z \coloneqq V(\mo) \cup \ro_{mid}$ is a $W$-\pws~of $\mo$ (due to \Cref{Type-4-witness} and \Cref{Type-1-witness}) with $|Z| \leq n+r^2$ (due to \Cref{obs:r2}). 
	
\end{mdframed}

As every witness in $\wgb \cup \wint$ has a potential replacement in $V(\mo) \cup \ro_{mid}$, we can assume that for each witness $w_i \in W$ where $i \in [k]$, we have exactly one of the following. 
\begin{enumerate}
	\item $w_i \in V(\mo) \cup \ro_{mid}$, or 
	
	\item $w_i \in \wbb$, i.e., it has no substitute in $V(\mo) \cup \ro_{mid}$, moreover $w_i \in \reg(e_i) \subseteq \bd(\mo)$ {(as $\tra(w_i,W) \cap \bd(\mo) \neq \emptyset$)}. 
\end{enumerate}

\paragraph{Transforming into Sub-instances.}
Initially, we have a given monotone polygon $\mo$. Then we consider an hypothetical solution {of the {\sc Witness Set}} $W= \{w_1, \ldots, w_k\}$. We {have already shown} that if $\wbb = \emptyset$ we can solve the problem in  $n^{\OO(1)}$  time. Now we consider $\wbb \neq \emptyset$, and $ |\wgb \cup \wint| =(\ell <k) $ and $\wgb \cup \wint \coloneqq \{w_{i_1}, w_{i_2}, \ldots, w_{i_\ell}\}  $ where $i_1 < i_2 < \ldots < i_{\ell}$. It is clear from our above discussion that $\wint \cup \wgb \neq \emptyset$ as $w_1, w_k \in V(\mo)$ (by \Cref{wonek}), so $i_1=1$ and $i_{\ell}=k$. {So, a chain of witnesses $w_{i_j+1}, w_{i_j+2}, \ldots , w_{i_{j+1}-1}$ between two consecutive $w_{i_j},w_{i_{j+1}} \in \wgb \cup \wint$ lies in $\wbb$ for any $1 \leq j \leq l-1$.}
	
In the following, we give an algorithm that produces a set of {points} called $\zm$ such that $\zm$ serves as a $(W, \wbb)$-Potential Witness Set. {We show that (later in \Cref{sec:correctness}) for any such chain of witnesses $w_{i_j+1}, w_{i_j+2}, \ldots , w_{i_{j+1}-1} \in \wbb$, we can find a suitable replacement for them in $\zm$.} Basically we are trying to find a witness set $W'=\{w_i': i \in [k]\}$ such that $w_i' \in \zm$ when  $i' \notin \{i_1, i_2,  \ldots, i_{\ell}\}$ and {$w_i' \in V(\mo) \cup \ro_{mid}$} when  $i' \in \{i_1, i_2,  \ldots, i_{\ell}\}$. {Thus, so far, we have that our witness set in $\mo$ satisfies the following properties:}
  \begin{mdframed}[backgroundcolor=black!10,topline=false,bottomline=false,leftline=false,rightline=false] 
	 $W= \{w_1, \ldots, w_k\}$ is a  witness set in $\mo$ such that {$\{w_{i_1}, w_{i_2}, \ldots , w_{i_l}\} \subseteq V(\mo) \cup \ro_{mid}$ and each $w_i$,  $i \notin \{i_1, \ldots, i_l\}$} the point $w_i$ is in  $\reg(e_i)$ (as in \Cref{def:rege}) where $w_i$ could not be replaced by any vertex in   $ V(\mo) \cup \ro_{mid} $. {So $\wbb= W \smallsetminus \{w_{i_1}, \ldots , w_{i_l}\}$. From now on, we choose any particular chain of witnesses $w_{i_j+1}, w_{i_j+2}, \ldots , w_{i_{j+1}-1} \in \wbb$, and show that we can find a suitable replacement for each of them in $\zm$.} 
	
\end{mdframed}


\subsubsection{Construction of  a $(W, \wbb)$-Potential Witness Set} We have that each witness $w_i \in \wbb$ lies on $\bd(\mo)$. Below in \Cref{algo_1} we describe in more detail how to find the set $\zm$ which is a $(W,\wbb)$-\pws.
	
	\paragraph{Overview of \Cref{algo_1}.} Here we give a short description of \Cref{algo_1}. Let  $A_i$ denote the arrangement\footnote{Here, {by} arrangement, {we} mean adding some line segments in our instance in a specific way.} of $i$\textsuperscript{th} iteration and  $ Q_i $ denotes a set of points added  on $\bd(\mo)$. At the beginning, $A_0= \bd(\mo)$ and  $Q_0 = V(\mo)$. Finally we return the point set $Q_{2k}$ as $\zm$.  For each $i \in [2k]$, we compute the arrangement $A_i$ by creating chords that connect every possible pair of visible {points} $v$ and $\beta$, where $v \in Q_{i-1}$ and $\beta \in \ro$ (the set of reflex vertices in $\mo$). Define the set $Q_{i-1}^{\mathtt{chord}}$ in this manner: for each chord $L$ drawn in $A_i$, we include the intersection point $L \cap \bd(\mo)$ into $Q_{i-1}^{\mathtt{chord}}$. Let $Q_{i-1}^{\mathtt{midpt}}$ represent the set of all  midpoints (the exact central point of the line segment) between any two points $u$ and $v$ in $Q_{i-1}$, where both $u$ and $v$ are on the same edge of $\mo$ and are contiguous (i.e., the line segment $\overline{uv}$ does not contain any other point of $Q_{i-1}$). The set $Q_i$ is defined in the following way (see  \Cref{algo_1}).
    
    
    
$$ Q_i= Q_{i-1} \cup Q_{i-1}^{\mathtt{midpt}} \cup Q_{i-1}^{\mathtt{chord}}$$

	

	\IncMargin{1em}
	\begin{algorithm}[ht!]
		\SetKwData{Left}{left}\SetKwData{This}{this}\SetKwData{Up}{up}
		\SetKwFunction{Union}{Union}\SetKwFunction{FindCompress}{FindCompress}
		\SetKwInOut{Input}{Input}\SetKwInOut{Output}{Output}
		\Input{A monotone polygon $ \mo $ and a non-negative integer  $ k $}
		\Output{A point set $\zm$}
		\BlankLine
		Initialization:  $ A_0 = \bd(\mo),  Q_0 = V(\mo)$ \;
		\For{$i\leftarrow 1$ \KwTo $2k$}{

         $Q_{i-1}^{\mathtt{midpt}} \coloneqq $ set of  all mid-points between a pair  adjacent points of $Q_{i-1}$ that lie on the same edge of $\mo$. \\
         $Q_{i-1}^{\mathtt{chord}} = \emptyset$\\ 
			For each possible pair of vertices $ v$ and $\beta $ where $ v \in Q_{i-1} $ and $ \beta \in \ro$ \tcp*[r]{\ \color{blue} $\ro$ denotes the set of all reflex vertices in $ V(\mo) $.} 
			{
				\If{\ $v$ is visible to $\beta $ }{\label{lt}
					{\ Add a maximal chord $ L $ in $ \mo $ joining $ v$ and $\beta $ in $A_i$} \tcp*[r]{\ \color{blue} Maximal means no other chord in $ \mo $ contains $ L $.}
					{\ Add the point $ L \cap \bd(\mo) $ into $ Q_{i-1}^{\mathtt{chord}} $.}
					
				}
				\lElse{do nothing}

                 $Q_i= Q_{i-1} \cup Q_{i-1}^{\mathtt{midpt}} \cup Q_{i-1}^{\mathtt{chord}}$
			}
			
		}
		
		\Return $ \zm=Q _{2k} $
		\caption{\texttt{WitGen}$(\mo,k)$: Generation of $ \zm $}\label{algo_1}
	\end{algorithm}
	\DecMargin{1em}

	
	\begin{observation}\label{obs:r3}
    $|\zm| \leq  |V(\mo)| \cdot (2+r)^{2k},$ where $r = |\ro|$. 
\end{observation}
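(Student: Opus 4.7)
The plan is a straightforward inductive argument on the iteration index of Algorithm~\ref{algo_1}, tracking how $|Q_i|$ grows from $|Q_{i-1}|$. Specifically, I will establish the bound $|Q_i| \leq |V(\mo)| \cdot (2+r)^i$ for every $0 \leq i \leq 2k$, and then specialize to $i = 2k$ to obtain the claim.

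First I would verify the invariant that $Q_i \subseteq \bd(\mo)$ for every $i$: the base case $Q_0 = V(\mo)$ lies on $\bd(\mo)$, and both types of points added at each iteration remain on the boundary, since midpoints between two points on the \emph{same edge} stay on that edge, and each chord-point is by definition $L \cap \bd(\mo)$. This invariant is what makes the bookkeeping clean.

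Next, I would bound the two additions separately. For $Q_{i-1}^{\mathtt{midpt}}$, because the points of $Q_{i-1}$ all lie on $\bd(\mo)$, ordering them along $\bd(\mo)$ yields a cyclic sequence of at most $|Q_{i-1}|$ adjacent same-edge pairs, so $|Q_{i-1}^{\mathtt{midpt}}| \leq |Q_{i-1}|$. For $Q_{i-1}^{\mathtt{chord}}$, each new point is produced by some ordered pair $(v,\beta)$ with $v \in Q_{i-1}$ and $\beta \in \ro$ that are mutually visible, and each such pair contributes at most one new boundary point $L \cap \bd(\mo)$ (the one opposite to $v$ along the maximal chord). Hence $|Q_{i-1}^{\mathtt{chord}}| \leq r \cdot |Q_{i-1}|$. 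Combining,
\[
|Q_i| \;\leq\; |Q_{i-1}| + |Q_{i-1}^{\mathtt{midpt}}| + |Q_{i-1}^{\mathtt{chord}}| \;\leq\; (1 + 1 + r)\,|Q_{i-1}| \;=\; (2+r)\,|Q_{i-1}|.
\]
Iterating this inequality $2k$ times starting from $|Q_0| = |V(\mo)|$ yields $|\zm| = |Q_{2k}| \leq |V(\mo)| \cdot (2+r)^{2k}$, as claimed.

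There is no serious obstacle here; the only subtle point is the argument that $|Q_{i-1}^{\mathtt{midpt}}| \leq |Q_{i-1}|$, which relies on the boundary-invariant above so that ``adjacent on the same edge'' is well-defined and the number of such adjacencies is linear in $|Q_{i-1}|$. Everything else is a direct geometric series calculation driven by the two simple per-step bounds.
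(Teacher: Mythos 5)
Your proof is correct and follows essentially the same route as the paper: bounding $|Q_{i-1}^{\mathtt{midpt}}| \leq |Q_{i-1}|$ and $|Q_{i-1}^{\mathtt{chord}}| \leq r\,|Q_{i-1}|$ to get the recurrence $|Q_i| \leq (2+r)\,|Q_{i-1}|$ and iterating from $|Q_0| = |V(\mo)|$. The boundary-invariant you spell out is a useful extra justification the paper leaves implicit, but the argument is the same.
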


	\ifthenelse{\boolean{shortver}}{}{
	\begin{proof}
		As $Q_{i}= Q_{i-1} \cup Q_{i-1}^{\mathtt{midpt}} \cup Q_{i-1}^{\mathtt{chord}}$. Now $|Q_{i-1}^{\mathtt{midpt}}| \leq |Q_{i-1}|$ and $|Q_{i-1}^{\mathtt{chord}}| \leq |Q_{i-1}| \cdot r$. This implies $|Q_i| \leq 2|Q_{i-1}| + |Q_{i-1}| \cdot r= |Q_{i-1}| \cdot (2+r)$. As $\zm= Q_{2k}$ and $|Q_0|= V(\mo)$, hence $\zm=  |V(\mo)| \cdot (2+r)^{2k}$.
	\end{proof}
    }
\begin{sloppypar}
    We will now demonstrate that the collection $\zm$ is indeed a  $(W, \wbb)$-\pws. Specifically, there exists a subset $Q \subseteq \zm$ such that: (i) the union $(W \smallsetminus \wbb) \cup Q$ forms a witness set in $\mo$, and (ii)  $| \wbb|=| Q|$ (as per \Cref{def-potential1}). The lemma is stated below, while the proof will be presented in the subsequent section (\Cref{sec:correctness}). We want to mention that we prove the following lemma as follows: for each {witness} $w \in \wbb$ there is a replacement ({which will also be referred to as "substitute")} $q$ of $w$ in $Q$, i.e., $(W \smallsetminus \{w\}) \cup \{q\}$ is a witness set.

    \end{sloppypar}
	
	\begin{lemma} \label{Type-2-witness}
		 $\zm$ is a  $(W, \wbb)$-\pws.
	\end{lemma}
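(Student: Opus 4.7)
The plan is to prove the lemma by induction on the witnesses in $\wbb$, processed in left-to-right order. For each $w_i \in \wbb$, I will construct a replacement $w_i^* \in \zm$, maintaining the invariant that the modified set $\{w_1^*, \ldots, w_i^*, w_{i+1}, \ldots, w_k\}$ (where $w_j^* = w_j$ whenever $w_j \notin \wbb$) remains a valid witness set of $\mo$. The base case is immediate: by \Cref{wonek}, $w_1, w_k \in \wgb$, so they are not in $\wbb$ and already lie in $V(\mo) = Q_0 \subseteq \zm$.

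For the inductive step, suppose the claim holds up to $w_{i-1}$ with $w_{i-1}^* \in Q_{2(i-1)}$. Since $w_i \in \wbb$ lies on some edge $e_i \subseteq \bd(\mo)$, I must locate $w_i^* \in e_i \cap \zm$ that sees neither $w_{i-1}^*$ nor $w_{i+1}$. Applying \Cref{edgevis} to the (hypothetically) modified witness set shows that $\reg^c_i = \reg(e_i) \setminus (\vis(\Rb(w_{i-1}^*))|_{e_i} \cup \vis(\Lb(w_{i+1}))|_{e_i})$ is nonempty, so a valid point exists on $e_i$; the task is to show that one such point is already present in some $Q_m$ with $m \le 2k$. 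The key geometric observation is that each boundary point of $\vis(\Rb(w_{i-1}^*))|_{e_i}$ on $e_i$ arises as the intersection of $e_i$ with a chord through a reflex vertex that passes through an endpoint of $\Rb(w_{i-1}^*)$, namely $\RA(w_{i-1}^*)$ or $\IMr(w_{i-1}^*)$. Iteration $2i-1$ of \Cref{algo_1} produces $\IMr(w_{i-1}^*)$ via the chord from $w_{i-1}^* \in Q_{2(i-1)}$ through $\RA(w_{i-1}^*) \in \ro$; iteration $2i$ then generates the relevant boundary points on $e_i$ via chords from $\IMr(w_{i-1}^*) \in Q_{2i-1}$ through other reflex vertices, and symmetrically for the $\Lb(w_{i+1})$ side. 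Once these endpoints populate $e_i$ in $Q_{2i}$, the midpoint step of the subsequent iteration places a point strictly inside a component of $\reg^c_i$, furnishing $w_i^* \in Q_{2i+1} \subseteq Q_{2k} = \zm$, since $i \le k-1$.

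The main obstacle will be a precise characterization of how the boundary of $\vis(\Rb(w_{i-1}^*))|_{e_i}$ on $e_i$ is generated by the chord-through-reflex-vertex operations of \Cref{algo_1}, especially when the blocking reflex vertex is neither an endpoint of $\Rb(w_{i-1}^*)$ nor incident to $e_i$. A case analysis on the relative positions of the blocking reflex vertex, the chord $\Rb(w_{i-1}^*)$, and the edge $e_i$ will be needed to confirm that every such boundary point eventually lies in some $Q_m$ with $m \leq 2k$. A secondary issue is that $\reg^c_i$ may have several connected components, so a consistent selection of replacements across a chain of consecutive $\wbb$ witnesses is required to preserve pairwise non-visibility; this should follow from iteratively applying the transitive-invisibility principle of \Cref{lem-transitive} along the chain, together with the left-to-right greedy choice.
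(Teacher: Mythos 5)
Your overall strategy (left-to-right replacement along a chain of $\wbb$ witnesses, anchored at the already-discretized neighbors, with boundary points of $\vis(\Rb(\cdot))|_{e_i}$ generated by reflex-vertex chords in \Cref{algo_1}) matches the first half of the paper's argument, and the "main obstacle" you flag is indeed resolved in the paper by \Cref{claim:u2} (the segment from $u_i'$ to the unique visible point $c$ of $\Rb$ must pass through one or two reflex vertices, and the case where $u_i'$ sees a whole subsegment of $\Rb$ is ruled out). However, there is a genuine gap in your inductive step. You claim that "the midpoint step of the subsequent iteration places a point strictly inside a component of $\reg^c_i$," but $\reg^c_i$ is bounded on one side by $\vis(\Lb(w_{i+1}))|_{e_i}$, and at the moment you process $w_i$ in left-to-right order, $w_{i+1}$ is still at its original \emph{continuous} position: it is not a point of $\zm$, so the boundary point of $\vis(\Lb(w_{i+1}))|_{e_i}$ on $e_i$ is not generated by any chord operation of \Cref{algo_1}. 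Only the left endpoint $u_i'$ of $\reg^c_i$ is certified to be a vertex of $\zm$; the midpoints that \Cref{algo_1} adds are midpoints of \emph{adjacent} points of $Q_m$ on the same edge, and there is no argument that any such midpoint falls inside the open interval $\reg^c_i$ (the adjacent $Q_m$-point to the right of $u_i'$ may lie deep inside $\vis(\Lb(w_{i+1}))|_{e_i}$). A single pass therefore cannot produce a replacement in $\zm$; it can only produce a point "infinitesimally to the correct side of $u_i'$."

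The paper closes exactly this gap with machinery your proposal omits: during the left-to-right pass each $t_i$ is replaced only by a \emph{clone point} $u_i'^{\pm}$ (an infinitesimal perturbation of a $\zm$-vertex, \Cref{def_clone}), with earlier clones readjusted as later ones are placed; then a second, right-to-left pass starting from $t_{k'}\in V(\mo)\cup\ro_{mid}$ produces a second anchor $q_i$ on $e_i$ from the already-discretized right neighbor, \Cref{midpt} shows $p_i\neq q_i$, and the final replacement is the midpoint $\alpha_i$ of $\overline{p_i q_i}$, which lies in $Q^{\mathtt{midpt}}_{2k-i}\subseteq\zm$ and in the open connected region $\reg^c_i$ by \Cref{obs-openconnected}. (Incidentally, your "secondary issue" about $\reg^c_i$ having several components does not arise: by \Cref{obs:con} and \Cref{obs-bothendptsvisible} it is a single open interval.) To repair your proof you would need to either introduce the clone-point/two-pass mechanism or give a different argument that both endpoints of $\reg^c_i$ are adjacent vertices of some $Q_m$; as written, the inductive step does not go through.
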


Using \Cref{obs:r3} and \Cref{Type-2-witness}, we arrive at the following lemma.

    \begin{lemma} \label{lem:bdrywitness}
		Consider a monotone polygon $\mo$ with $n$ vertices. In  $  r^{\OO(k)} n^{\OO(1)}$ time we can construct a point set $\zm$ of size {at most} $ n \cdot (2+r)^{2k-1}$ such that for any  witness set $W$ in $\mo$ and its subset $\wbb \subseteq W$ as specified in \Cref{def:witness}, $\zm$ can serve as  a $(W, \wbb)$-\pws. Here $r$ denotes the number of reflex vertices in {$\mo$}.
	\end{lemma}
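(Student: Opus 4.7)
The plan is to package Algorithm~\ref{algo_1} (\texttt{WitGen}) together with the previously-established structural facts \Cref{obs:r3} and \Cref{Type-2-witness} into a single statement that additionally accounts for the running time. Concretely, the witness candidate set is taken to be $\zm := \mathtt{WitGen}(\mo, k) = Q_{2k}$, and the lemma then decomposes into three ingredients: (i) a cardinality bound for $\zm$; (ii) a running-time bound for producing it; and (iii) the validity of $\zm$ as a $(W, \wbb)$-\pws.

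For (i), I would unroll the size recurrence implicit in the algorithm. Since $Q_i = Q_{i-1} \cup Q_{i-1}^{\mathtt{midpt}} \cup Q_{i-1}^{\mathtt{chord}}$ with $|Q_{i-1}^{\mathtt{midpt}}| \leq |Q_{i-1}|$ (at most one midpoint per pair of consecutive points on each edge) and $|Q_{i-1}^{\mathtt{chord}}| \leq r \cdot |Q_{i-1}|$ (each chord being anchored at one point of $Q_{i-1}$ and one reflex vertex), we obtain $|Q_i| \leq (2+r)|Q_{i-1}|$. Telescoping from $|Q_0| = |V(\mo)| = n$ over the $2k$ iterations yields $|\zm| \leq n \cdot (2+r)^{\OO(k)}$, which matches the claimed bound; this is exactly \Cref{obs:r3} restated.

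For (ii), the work in iteration $i$ is dominated by the $|Q_{i-1}| \cdot r$ visibility queries between points of $Q_{i-1}$ and reflex vertices of $\mo$, followed by the construction of each maximal chord $L$ and its intersection with $\bd(\mo)$. Each of these geometric primitives runs in $n^{\OO(1)}$ time using standard polygon visibility and ray-shooting machinery. Summing over $i = 1, \ldots, 2k$ and using the size bound from (i) gives a total running time of at most $2k \cdot n \cdot (2+r)^{2k} \cdot r \cdot n^{\OO(1)} = r^{\OO(k)} \cdot n^{\OO(1)}$, as required.

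For (iii), I would directly invoke \Cref{Type-2-witness}: the present lemma is essentially its externally usable packaging, augmented by a size and a running-time bound. The genuine difficulty — which I expect to be the main obstacle, and which is properly handled in the proof of \Cref{Type-2-witness} — is demonstrating that for every $w_i \in \wbb$ a valid substitute lies in $Q_{2k}$. The intuition is that successive rounds of chord-and-midpoint refinement progressively discretize the admissible replacement region $\reg_i^c \subseteq e_i$, which is nonempty by \Cref{edgevis}; moreover, since a chain of up to $k$ consecutive $\wbb$-witnesses can occur between two anchors in $\wgb \cup \wint$, and each substitution shifts the $\Rb$ and $\Lb$ constraints of its neighbor, roughly $2k$ rounds of line arrangement are needed to accommodate all cascading substitutions simultaneously. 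With this delegated, the three ingredients combine to yield the lemma.
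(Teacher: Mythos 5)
Your proposal follows essentially the same route as the paper: the lemma is obtained there by directly combining \Cref{obs:r3} (the size recurrence $|Q_i|\le(2+r)|Q_{i-1}|$ telescoped over $2k$ iterations) with \Cref{Type-2-witness} (the $(W,\wbb)$-\pws{} property), together with the running-time discussion of \Cref{algo_1}, exactly as you do. The only cosmetic remark is that the telescoping yields $(2+r)^{2k}$ rather than the $(2+r)^{2k-1}$ stated in the lemma, but this off-by-one is inherited from the paper's own \Cref{obs:r3} and is immaterial to the $r^{\OO(k)}\cdot n^{\OO(1)}$ bound.
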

    

  \begin{mdframed}[backgroundcolor=black!10,topline=false,bottomline=false,leftline=false,rightline=false] 
{In the next subsection},   we show that $\zm$ is $(W, \wbb)$-\pws. That is,  for each witness $w$ in $W$, there exists a $q \in \zm$ such that $(W \smallsetminus w) \cup \{q\}$ is a witness set in $\mo$. Hereafter, every point in $\zm \smallsetminus V(\mo)$ will be referred to as a {\textit{vertex} of} $\zm$. Unless {stated otherwise}, we will refer to each vertex in $V(\mo)$ simply as a {\em vertex}.
	
\end{mdframed}

	\subsubsection{Correctness}\label{sec:correctness}

\begin{sloppypar}

    \paragraph{Overview of the proof of \Cref{Type-2-witness}.}
    
    We first consider an Optimal Witness Set $W = \{w_1, w_2,  \ldots ,w_k\}$ of size $k$ in $\mo$ (Here, $x(w_1) < x(w_2) < \ldots  < x(w_k)$). Then, we try to show that for every witness $w \in W$, we can obtain a substitute $w^* \in Q = \ro_{mid} \cup \zm$ {($\zm$ contains $V(\mo)$)}. {Observe that, for all witnesses in $\wint \cup \wgb$, we have already found a suitable replacement in $V(\mo) \cup \ro_{mid}$ in $n^{\OO(1)}$ time. Let these witnesses be denoted by the set \{$w_{i_1}, w_{i_2}, \ldots , w_{i_l}$\}. Then, we pick a chain of witnesses $w_{i_j+1}, w_{i_j+2}, \ldots , w_{i_{j+1}-1} \in \wbb$ for some $1 \leq j \leq l-1$. For notational simplicity, from now onward, we will denote these witnesses by $t_2, t_3, \ldots t_{k'-1}$, respectively. Note that $t_1=w_{i_j}$ and $t_k'=w_{i_{j+1}}$ lies in $V(\mo) \cup \ro_{mid}$. In the proof provided below, we will show that we can find a replacement for $t_{i} \in \wbb$ in $\zm$ for all $1 < i <k'$. Since the choice of this {\em chain} of witnesses in $\wbb$ was done arbitrarily, our proof will hold for any such {\em chain} of witnesses in $\wbb$.} \\In {the beginning of our proof,} we introduce a crucial concept of \textit{line visibility}, which means the part of a line which is visible from another line of $\mo$ (\Cref{def_linevisibility}). {We then} proceed in the following way: {For finding the substitutes of the witnesses $t_i \in \wbb$ in $\zm$, we might not be able to find them directly in all cases. So, in those cases, we will create an intermediary set of points, known as {\em clone points} (\Cref{def_clone}), which will temporarily suffice as substitutes for the witnesses $t_2,\ldots,t_{k'-1}$.} We first show that a substitute of $t_2$ is present in {a clone point of a vertex in $\zm$, assuming the fact that $t_1 \in V(\mo) \cup \ro_{mid}$}. Next, we proceed to the next set of witnesses from left to right,{$t_3, t_4$}, and so on. {After finding the substitutes of the witnesses in these intermediate set of points, i.e., clone points, we} then proceed our arguments similarly again, by moving from right to left this time, {again assuming the fact that $t_k' \in V(\mo) \cup \ro_{mid}$}. That is, we try to find substitutes for $t_{k'-1}, t_{k'-2},$ and so on. We then show that, for those particular cases where we had chosen clone points as substitutes, we can find a substitute for the witness by choosing any point from an open line segment, the endpoints of which {lie on the set $\zm$}. So, we then substitute those witnesses by the mid-points of the open segments, which were also considered in our \Cref{algo_1}, {and get added to our final output $\zm = Q_{2k}$}. Therefore, this would prove that we can successfully find a substitute for any witness $t_i \in \wbb$ in $\zm$. We then conclude that our algorithm is indeed correct. The details of the proof are given below. {Here onward, unless mentioned otherwise, we will assume that as for any $t_i \in \wbb, ~\tra(t_i,W) \cap \bd(\mo) \neq \emptyset,$ $t_i$ lies on the edge $e_i = \overline{u_iv_i}$ of $\mo$.}

    \end{sloppypar}

    \smallskip
    
	\ifthenelse{\boolean{shortver}}{}{
We now present the detailed proof of \Cref{Type-2-witness}.


          \paragraph{Determining Substitutes for $\boldsymbol{t_2}$.}
          
       Here, we show that $t_2$ can be substituted with a clone of a vertex from $Q_2 $. By our definition,  $e_2 = \overline{u_2v_2}$ is the edge of  $\mo$ on which the witness $t_2$  lies. Consider the pair of  segments $\Rb(t_1)$ and $\Lb(t_3)$, and their corresponding visibility regions $\vis(\Rb(t_1))$ and $\vis(\Lb(t_3))$. We begin by noting a few observations as outlined below.

        \begin{observation}\label{obs-bothendptsvisible}
            $|\vis(\Rb(t_1)) \cap \{u_2, v_2\}| \leq 1$ and $|\vis(\Lb(t_3)) \cap \{u_2, v_2\}| \leq 1$.  
        \end{observation}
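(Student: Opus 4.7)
The plan is to reduce the claim to the simple fact that the witness $t_2$ itself cannot lie in $\vis(\Rb(t_1))$, and then use the connectedness provided by \Cref{obs:con} to propagate this obstruction to the endpoints $u_2,v_2$.

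First I would record the key fact that $\Rb(t_1)\subseteq \vis(t_1)$. This is immediate from \Cref{def:fr}: $\Rb(t_1)$ is the chord with endpoints $\RA(t_1)$ and $\IMr(t_1)$, both of which lie on visibility segments emanating from $t_1$, and the chord stays inside $\mo$. Hence every point of $\Rb(t_1)$ sees $t_1$. From this I would conclude that $t_2\notin \vis(\Rb(t_1))$: otherwise there would exist $x\in \Rb(t_1)$ with $x\in\vis(t_2)$, so $x\in \vis(t_1)\cap\vis(t_2)$, contradicting that $W$ is a witness set.

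Next I would invoke \Cref{obs:con}, which says $\vis(\Rb(t_1))\vert_{e_2}=\vis(\Rb(t_1))\cap e_2$ is a connected subset of the line segment $e_2$. Suppose for contradiction that $\{u_2,v_2\}\subseteq \vis(\Rb(t_1))$. A connected subset of a line segment containing both its endpoints must equal the full segment, so $e_2\subseteq \vis(\Rb(t_1))$. Since $t_2\in e_2$, this forces $t_2\in \vis(\Rb(t_1))$, contradicting the previous paragraph. Therefore $|\vis(\Rb(t_1))\cap\{u_2,v_2\}|\le 1$.

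The symmetric argument—replacing $\Rb(t_1)$ by $\Lb(t_3)$, using $\Lb(t_3)\subseteq \vis(t_3)$, and invoking the witness-set condition $\vis(t_2)\cap\vis(t_3)=\emptyset$—gives $|\vis(\Lb(t_3))\cap\{u_2,v_2\}|\le 1$. There is no real obstacle here beyond making these two small steps precise; the only thing to be careful about is justifying the inclusion $\Rb(t_1)\subseteq\vis(t_1)$ (and likewise $\Lb(t_3)\subseteq\vis(t_3)$) directly from the construction of these chords.
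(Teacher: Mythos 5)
Your proof is correct and follows essentially the same route as the paper: both arguments use the connectedness of $\vis(\Rb(t_1))\vert_{e_2}$ from \Cref{obs:con} to conclude that seeing both endpoints would force $\vis(\Rb(t_1))\vert_{e_2}=\reg(e_2)$, and then derive a contradiction from the witness property. The only cosmetic difference is that the paper reaches the contradiction by citing \Cref{edgevis}, whereas you inline the relevant half of that claim's proof (namely that $t_2\notin\vis(\Rb(t_1))$ because $\Rb(t_1)\subseteq\vis(t_1)$).
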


\ifthenelse{\boolean{shortver}}{}{
        \begin{proof}
           Our objective is to show that {at most} one of $u_2$ and $v_2$ can be visible from $\Rb(t_1)$ (or equivalently, from $\Lb(t_3)$). For the sake of contradiction, let's assume that $|\vis(\Rb(t_1)) \cap \{u_2, v_2\}| = 2$. We have established that for any two lines, $L_1$ and $L_2$, the visibility $\vis(L_1)\vert_{L_2}$ remains connected, according to \Cref{obs:con}. Consequently, this implies {that} every point on $e_2$ is visible from $\Rb(t_1)$, or in notation, $\vis(\Rb(t_1))\vert_{e_2}=\reg(e_2)$, contradicting \Cref{edgevis}. The proof to show $|\vis(\Lb(t_3)) \cap \{u_2, v_2\}| \leq 1$ follows a similar {argument}.
        \end{proof}
}

        \begin{observation}\label{obs-openconnected}
             If both the points $u_2$ and $v_2$ lies in the region $\reg_2$ (as per  \Cref{def:rege}) then the region $\reg^c_{2}$ is a non-empty, {open} and connected subregion of $\reg(e_2)$. 
        \end{observation}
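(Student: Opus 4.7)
The plan is to identify $\reg_2^c$ explicitly as an open subsegment of $e_2$ by exploiting the interaction between the visibility hypothesis and two facts already established in the paper: \Cref{obs:con} (connectedness of restricted visibility) and \Cref{obs-bothendptsvisible} (at most one of $u_2,v_2$ is visible from each of $\Rb(t_1)$ and $\Lb(t_3)$), together with \Cref{edgevis} (the two restricted visibility regions cannot jointly cover $e_2$). I will parameterise $e_2$ as a linear segment from $u_2$ to $v_2$ and describe each restricted visibility region as a closed subinterval of this parameterisation.

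First, I would argue that the hypothesis $\{u_2,v_2\}\subseteq \reg_2 = \vis(\Rb(t_1))|_{e_2}\cup \vis(\Lb(t_3))|_{e_2}$ together with \Cref{obs-bothendptsvisible} forces each of the two restricted visibility regions to contain exactly one of $u_2,v_2$ (neither region can contain both, and neither region can be empty, because then the other would have to contain both endpoints). Without loss of generality, assume $u_2\in \vis(\Rb(t_1))|_{e_2}$ and $v_2\in \vis(\Lb(t_3))|_{e_2}$, with $v_2\notin \vis(\Rb(t_1))|_{e_2}$ and $u_2\notin \vis(\Lb(t_3))|_{e_2}$.

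Next, by \Cref{obs:con}, each of $\vis(\Rb(t_1))|_{e_2}$ and $\vis(\Lb(t_3))|_{e_2}$ is a closed connected subsegment of $e_2$. Combined with the containments above, this yields intervals $\vis(\Rb(t_1))|_{e_2}=[u_2,a]$ and $\vis(\Lb(t_3))|_{e_2}=[b,v_2]$ (in the linear order on $e_2$ going from $u_2$ to $v_2$), with $a\neq v_2$ and $b\neq u_2$. Now I would invoke \Cref{edgevis}, which states $\vis(\Rb(t_1))|_{e_2}\cup \vis(\Lb(t_3))|_{e_2}\neq e_2$: this rules out $a\ge b$, since otherwise $[u_2,a]\cup[b,v_2]$ would equal the whole segment $[u_2,v_2]$. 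Hence $a<b$ strictly.

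It then follows immediately that $\reg_2^c = \reg(e_2)\setminus \reg_2 = (a,b)$, which is a non-empty open connected subsegment of $\reg(e_2)$, giving all three conclusions at once. I do not anticipate a serious obstacle here: the argument is essentially a one-dimensional topology calculation on the segment $e_2$, and every input it needs (connectedness of restricted visibilities, the endpoint-visibility bound, and the non-covering property) has been established in the lemmas immediately preceding this observation. The only care required is the case analysis in the first step to pin down which endpoint lies in which restricted visibility region; once that is fixed, the rest is a direct deduction about unions of closed intervals on a segment.
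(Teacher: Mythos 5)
Your proof is correct and follows essentially the same route as the paper's: both use \Cref{obs:con} to get that the two restricted visibility regions are closed connected subsegments, \Cref{obs-bothendptsvisible} to place one endpoint of $e_2$ in each, and \Cref{edgevis} for non-emptiness of the complement. Your write-up is merely a more explicit interval-parameterisation of the same one-dimensional argument.
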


\ifthenelse{\boolean{shortver}}{}{
        \begin{proof}
          Recall {in \Cref{def:rege}} that $\reg_2= \vis(\Rb(t_1))\vert_{e_2} \cup \vis(\Lb(t_3))\vert_{e_2}$. According to \Cref{obs:con}, both $\vis(\Rb(t_1))\vert_{e_2}$ and $\vis(\Lb(t_3))\vert_{e_2}$ are closed and connected segments. The region $\reg^c_{2}$ is formed by subtracting two closed connected segments, one containing $u_2$ and the other containing $v_2$ (both of them simultaneously cannot be contained in a single closed connected segment due to \Cref{obs-bothendptsvisible}), where $u_2$ and $v_2$ are the endpoints of $e_2$, resulting in an open, connected segment within {$\reg(e_2)$}. Furthermore, \Cref{edgevis} indicates that $\reg^c_{2}$ is not empty.
        \end{proof}
}

        \begin{observation}\label{open-connectedw2}
             For every  point $x\in \reg^c_{2}$ we have $\vis(x)\cap\vis(t_i)=\emptyset$ where $i \in \{1,3\}$.
        \end{observation}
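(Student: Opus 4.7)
The plan is to prove the contrapositive: if $\vis(x) \cap \vis(t_i) \neq \emptyset$ for some $i \in \{1, 3\}$, then $x \in \reg_2$, contradicting the hypothesis $x \in \reg^c_2$. I will carry out the argument for $i = 1$; the case $i = 3$ is entirely symmetric, with the chord $\Lb(t_3)$ playing the role of $\Rb(t_1)$.

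Since $t_1, t_2$ both lie in the witness set $W$, we have $\vis(t_1) \cap \vis(t_2) = \emptyset$, and Observation \ref{lem-cl1} then forces $x(t_2) > x(\XM(\vis(t_1)))$. Thus Lemma \ref{lem-cl2} applies and, exactly as in its proof, the chord $\Rb(t_1)$ splits $\mo$ into two subpolygons $\mo_{t_1}^{in}$ (containing $t_1$, with $\vis(t_1) \subseteq \mo_{t_1}^{in}$) and $\mo_{t_1}^{out}$ (containing $t_2$). My first step is to place $x$ on the $t_2$-side of this chord. If $\Rb(t_1)$ misses $e_2$, this is immediate from the connectedness of $e_2$ together with $t_2 \in e_2 \cap \mo_{t_1}^{out}$. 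Otherwise $\Rb(t_1)$ crosses $e_2$ at a point $q$; were $x$ on the $t_1$-side of $q$, the segment $\overline{xq}$ would lie along $e_2 \subseteq \mo$, yielding $q \in \vis(x)$ and hence $x \in \vis(\Rb(t_1))\vert_{e_2} \subseteq \reg_2$, contradicting $x \in \reg^c_2$. So $x \in \mo_{t_1}^{out}$ in either case.

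The rest is a standard chord-crossing finish. Assume for contradiction that some $p \in \vis(x) \cap \vis(t_1)$ exists. Then $p \in \vis(t_1) \subseteq \mo_{t_1}^{in}$ while $x \in \mo_{t_1}^{out}$, so the segment $\overline{xp} \subseteq \mo$ must cross $\Rb(t_1)$ at some point $q'$; but then $q' \in \vis(x)$, i.e. $x \in \vis(\Rb(t_1))\vert_{e_2} \subseteq \reg_2$, a contradiction. The only (mild) obstacle is the locating step, where one must handle the case of $\Rb(t_1)$ actually crossing $e_2$ by using visibility along the straight edge $e_2$ itself; once $x$ is placed in $\mo_{t_1}^{out}$, the chord-crossing contradiction is immediate.
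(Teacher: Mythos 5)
Your proof is correct and follows essentially the same route as the paper's: both arguments reduce the claim to showing that $\vis(x)\cap\vis(t_1)\neq\emptyset$ would force $x$ to see a point of $\Rb(t_1)$, hence $x\in\vis(\Rb(t_1))\vert_{e_2}\subseteq\reg_2$, contradicting $x\in\reg^c_2$ (and symmetrically with $\Lb(t_3)$ for $i=3$). The only difference is cosmetic: the paper invokes the chord-intersection criterion of \Cref{lem-cl2} as a black box, whereas you re-run the separation argument from that lemma's proof (placing $x$ in $\mo_{t_1}^{out}$ and crossing the chord), which is marginally more self-contained but not a different idea.
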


\ifthenelse{\boolean{shortver}}{}{
        \begin{proof}
          If $\vis(x) \cap \vis(t_1) \neq \emptyset$, it follows from \Cref{lem-cl2} that $\Rb(t_1)$ and $\Lb(x)$ intersect. This indicates that $x$ is visible from $\Rb(t_1)$ along the line $\LC(x)$, meaning $x \in \vis(\Rb(t_1)) \vert_{e_2}$. Consequently, $x \notin \reg^c_{2}$, leading to a contradiction. An analogous argument applies to $t_3$ to prove that $\vis(x)\cap\vis(t_3)=\emptyset$.
        \end{proof}
    }
    

      According to \Cref{open-connectedw2}, $t_2$ can be substituted with any point from $\reg^c_{2}$. Our primary objective is to demonstrate the existence of a point $q \in Q_2 $ that allows $t_2$ to be replaced by a clone of $q$. It should be noted that our ultimate objective is to show that $t_2$ has a replacement within $\zm$. Note that we are dealing with the situation when both  $u_2$ and $ v_2$ are contained in the region defined by union of $\vis(\Rb(t_1))\vert_{e_2} $ and $ \vis(\Lb(t_3))\vert_{e_2}$. Here  $\Rb(t_1)$ sees one vertex of $\{u_2, v_2\}$  whereas $\Lb(t_3)$ sees the other. So, w.l.o.g., assume that $u_2 \in \vis(\Rb(t_1))\vert_{e_2}$ and $v_2 \in \vis(\Lb(t_3))\vert_{e_2}$. Recall that by \Cref{obs-openconnected}, the region $\reg^c_{2}$ is a non-empty connected open segment. Now we deal with  the following two subcases.

        \begin{description}
			\item [\textbf{Case I.} $\vis(\Rb(t_1))\vert_{e_2} = \{u_2\}$.\label{CaseI}] Here the  only point  of the edge $e_2$ that is visible from $\Rb(t_1)$ is $u_2$. Consequently, there is an open {connected} segment in $\reg(e_2)$ {beside\footnote{{by this, we mean that the open, connected segment $\reg_2^c$ has one of its endpoints as $u_2$.}} $u_2$} within which $t_2$ can be substituted with any point. Thus, $t_2$ can be swapped with either $u_2^+$ or $u_2^-$ (i.e., clone point), depending on which is positioned on the edge $e_2$.(See \Cref{singlepointvisible})

            \medskip

          \item [\textbf{Case II.} $\{u_2\} \subset \vis(\Rb(t_1))\vert_{e_2} $.\label{CaseII}] Let $\vis(\Rb(t_1))\vert_{e_2}=\overline{u_2u_2'}$, where $u_2' \neq u_2$ and is located in $e_2$. In this situation, there is an open {connected} segment within $\reg(e_2)$ {beside $u_2'$} that allows $t_2$ to be substituted by any point within this segment. Therefore, we have the option to substitute $t_2$ with either $u_2'^+$ or $u_2'^-$ (a clone point), {one of which} does not fall within $\vis(\Rb(t_1))\vert_{e_2}$. Now, w.l.o.g., assume $x(u_2')<x(u_2)$. For this,  we can substitute $t_2$ with $u_2'^-$ (see  \Cref{segmentvisible}). {Otherwise, if $x(u_2) < x(u_2')$, then we can substitute $t_2$ with $u_2'^+$}. 
		\end{description}
	
	\begin{figure}[ht!]\label{w2-vertexreplacement}
		\centering
		\begin{subfigure}[b]{0.45\textwidth}
			\centering
			\includegraphics[width=\textwidth]{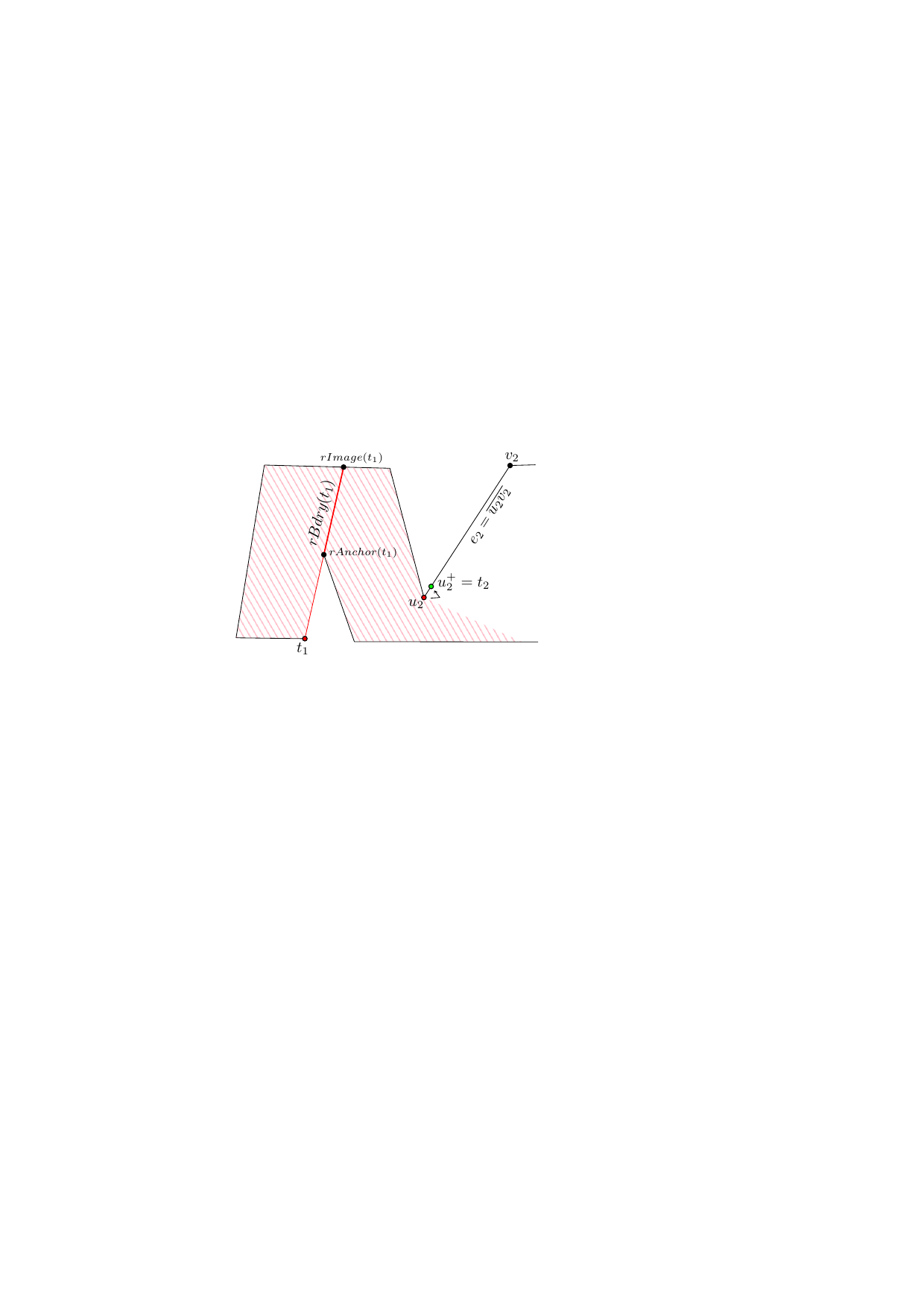}
			\subcaption{\hyperref[CaseI]{Case I} :  $u_2^+$ is a replacement of $t_2$.}
			\label{singlepointvisible}
		\end{subfigure}
		\hspace{8mm}
		\begin{subfigure}[b]{0.45\textwidth}
			\centering
			\includegraphics[width=\textwidth]{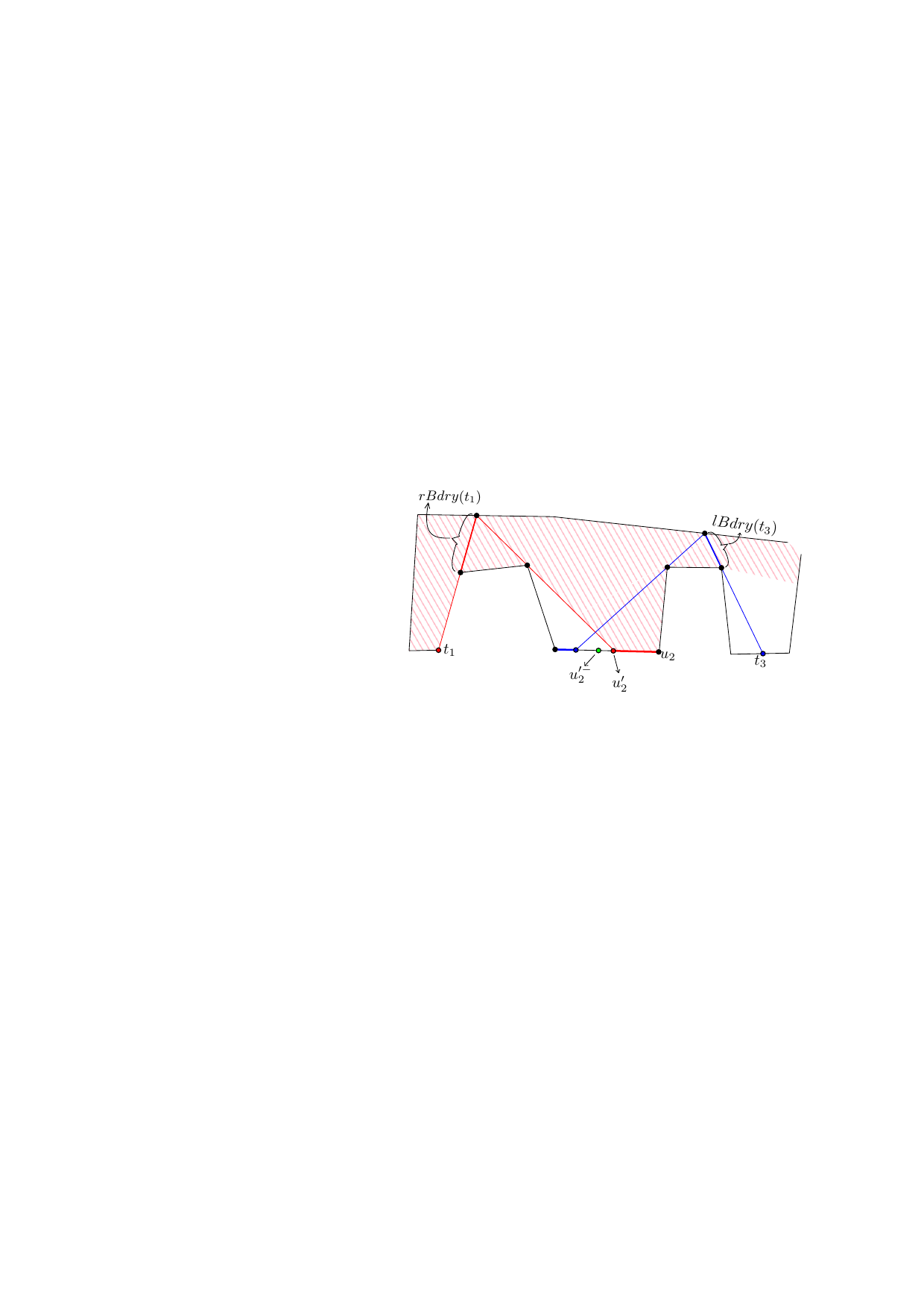}
			\subcaption{\hyperref[CaseII]{Case II}:  $u_2'^-$ is a replacement of $t_2$.}
			\label{segmentvisible}
		\end{subfigure}
		\caption{Pink regions are the visibility regions of $\Rb(t_1)$.}
	\end{figure}

       In the following (\Cref{claim:u2}), we demonstrate that the {point} $u_2'$ introduced in Case \hyperref[CaseII]{II} is indeed a vertex of $\zm$ (more specifically, a point of $Q_2$). This will establish that $t_2$ can be replaced by a clone point of $ Q_2$. Before that, we first make an important observation here.
       
       \begin{observation}\label{obs-rImageinZ_mid}
           Both $\RA(t_1)$ and $\IMr(t_1)$  is a vertex of $\zm$ (similarly, $\LA(t_k)$ and $\IMl(t_k)$ for $t_k$, and consequently for any $w_i \in V(\mo) \cup \ro_{mid}$, their respective $\RA(w_i), \LA(w_i)$ are vertices of $\zm$).
       \end{observation}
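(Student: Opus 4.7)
The first half of the statement is essentially immediate: by Definition~\ref{def:two}, $\RA(t_1)$ is a vertex of $\mo$ (in fact a reflex one, since it blocks visibility), so $\RA(t_1) \in V(\mo) = Q_0 \subseteq \zm$ directly from the initialization of Algorithm~\ref{algo_1}; the symmetric statement holds for $\LA(t_{k'})$. Likewise, for any $w_i \in V(\mo) \cup \ro_{mid}$, its anchors $\RA(w_i)$ and $\LA(w_i)$ are vertices of $\mo$ by definition and hence already lie in $Q_0 \subseteq \zm$. The substantive part of the claim is that $\IMr(t_1) \in \zm$ (and $\IMl(t_{k'}) \in \zm$).

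To handle $\IMr(t_1)$, I will split on where $t_1$ lives, using the hypothesis $t_1 \in V(\mo) \cup \ro_{mid}$. In the first case, $t_1 \in V(\mo) = Q_0$. Since $\RA(t_1) \in \ro$ is, by Definition~\ref{def:two}, visible from $t_1$, the inner loop of Algorithm~\ref{algo_1} at iteration $i=1$ picks the pair $(v,\beta) = (t_1,\RA(t_1))$ at line~\ref{lt}, draws the maximal chord $L$ through them, and inserts $L \cap \bd(\mo)$ into $Q_0^{\mathtt{chord}} \subseteq Q_1 \subseteq \zm$. By Definition~\ref{def:thr}, the far endpoint of this chord is exactly $\IMr(t_1)$, so $\IMr(t_1) \in Q_1 \subseteq \zm$.

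The second case is where the potential difficulty lies: if $t_1 \in \ro_{mid} \setminus V(\mo)$, then $t_1$ itself is never added as a vertex in Algorithm~\ref{algo_1}, so we cannot invoke the chord construction rooted at $t_1$. The key observation is that $t_1$, being the midpoint of a segment $\overline{ss'}$ with $s, s' \in \ro$ mutually visible, is collinear with $s$ and $s'$. Hence the ray from $t_1$ through $\RA(t_1)$ is the same line as the one through $s$ and $s'$; in particular $\RA(t_1)$ must be whichever of $s, s'$ lies to the right of $t_1$ (say $\RA(t_1) = s$), and $\IMr(t_1)$ equals the intersection with $\bd(\mo)$ of the maximal chord through $s$ and $s'$ beyond $s$. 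Since $s, s' \in \ro \subseteq V(\mo) = Q_0$ and are mutually visible, iteration~$1$ of the algorithm (with $v = s'$ and $\beta = s$) draws exactly this maximal chord and adds its boundary intersection point to $Q_0^{\mathtt{chord}} \subseteq Q_1 \subseteq \zm$, so once again $\IMr(t_1) \in \zm$.

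Finally, the statements about $\LA(t_{k'})$ and $\IMl(t_{k'})$ follow by a mirror-image argument, swapping the roles of left and right, using $\LA(\cdot)$ in place of $\RA(\cdot)$ and $\IMl(\cdot)$ in place of $\IMr(\cdot)$. The only delicate point of the whole argument is correctly identifying $\RA(t_1)$ with one of the two reflex endpoints $s, s'$ when $t_1 \in \ro_{mid}$, which is forced by the collinearity of $t_1, s, s'$ together with the defining property of $\RA(t_1)$ as the leftmost vertex of $\mo \setminus \{t_1\}$ on $\spr(t_1)$.
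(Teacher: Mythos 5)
Your proposal is correct and follows essentially the same route as the paper: both note that $\RA(t_1)\in\ro\subseteq Q_0$, and both reduce the substantive claim about $\IMr(t_1)$ to the fact that $\RC(t_1)$ passes through two mutually visible vertices of $\mo$ (namely $t_1$ and $\RA(t_1)$ when $t_1\in V(\mo)$, or $\LA(t_1)$ and $\RA(t_1)$ when $t_1\in\ro_{mid}$), so that the maximal chord drawn in the first iteration of Algorithm~\ref{algo_1} places $\IMr(t_1)$ into $Q_1\subseteq\zm$. Your extra remark identifying $\RA(t_1)$ with the right-hand reflex endpoint of the defining segment of $t_1\in\ro_{mid}$ is a slightly more explicit justification of the collinearity step the paper simply asserts, but the argument is the same.
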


\ifthenelse{\boolean{shortver}}{}{
       \begin{proof}
          {Clearly, as $\RA(t_1) \in \ro \subseteq V(\mo) = Q_0$, $\RA(t_1) \in \zm$. Now, we need to show that $\IMr(t_1)$ is also a vertex of $\zm$. We know that $t_1 \in V(\mo) \cup \ro_{mid}$. Therefore, in either case, $\RC(t_1)$ passes through two vertices of $\mo$, which are visible to each other. That is, if $t_1 \in V(\mo)$, then $\RC(t_1)$ passes through the two mutually visible vertices $t_1$ itself, and $\RA(t_1)$. Otherwise, if $t_1 \in \ro_{mid}$, it implies that $t_1, \RA(t_1)$ and $\LA(t_1)$ are collinear; so, $\RC(t_1)$ passes through two mutually visible vertices $\LA(t_1)$ and $\RA(t_1)$. As $\IMr(t_1)$ is an endpoint of $\RC(t_1)$ on $\bd(\mo)$, distinct from $t_1, \LA(t_1)$ and $\RA(t_1)$, $\IMr(t_1) \in Q_1 \subseteq \zm$.}
       \end{proof}
       }
  
        \begin{clm}\label{claim:u2}
            $u_2'  $ is  a   vertex of $\zm$. More  specifically $u_2' \in Q_2$.
        \end{clm}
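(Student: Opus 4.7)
The plan is to identify $u_2'$ as a point explicitly produced by iteration~$2$ of \Cref{algo_1}. Since $u_2' \in e_2$ is the endpoint (distinct from $u_2$) of the connected segment $\vis(\Rb(t_1))|_{e_2}$, it lies on the boundary of the visible region of the chord $\Rb(t_1)$ within $\mo$. By the standard structure of visibility regions of a line segment $L$ in a simple polygon, the one-dimensional boundary of $\vis(L)$ that lies in the interior of the polygon (the so-called ``windows'') consists of maximal chords of the polygon passing through a reflex vertex and an endpoint of $L$. Applying this to $L = \Rb(t_1)$, whose endpoints are $\RA(t_1)$ and $\IMr(t_1)$, we conclude that $u_2'$ lies on a maximal chord of $\mo$ that passes through some reflex vertex $\beta^* \in \ro$ and some $v^* \in \{\RA(t_1), \IMr(t_1)\}$.

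To justify this characterization rigorously, I would argue by continuity. Let $q^* \in \Rb(t_1)$ be an extremal point such that $q^*$ sees $u_2'$ but no point of $\Rb(t_1)$ on one side of $q^*$ sees any point of $e_2$ on the side of $u_2'$ away from $u_2$. The line $\overline{q^*u_2'}$ must graze $\beta^*$; otherwise a small perturbation of $q^*$ along $\Rb(t_1)$ would still see a perturbation of $u_2'$ along $e_2$, contradicting that $u_2'$ is a boundary point of the visible region. A transversality/pigeonhole argument then shows that the extremal $q^*$ must coincide with an endpoint of $\Rb(t_1)$: if $q^*$ were interior, one could move $q^*$ in the direction away from $\beta^*$'s shadow to enlarge the visible arc on $e_2$, contradicting the extremality. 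Thus $v^* = q^* \in \{\RA(t_1), \IMr(t_1)\}$.

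Next, I would invoke \Cref{obs-rImageinZ_mid}, which ensures that both $\RA(t_1)$ and $\IMr(t_1)$ are vertices of $Q_1$: the reflex vertex $\RA(t_1)$ lies in $\ro \subseteq V(\mo) = Q_0 \subseteq Q_1$, and $\IMr(t_1)$ is produced already in iteration~$1$ as the intersection with $\bd(\mo)$ of the chord through $t_1$ (or, if $t_1 \in \ro_{\mathrm{mid}}$, through $\LA(t_1)$) and $\RA(t_1) \in \ro$. Consequently, in iteration~$2$ of \Cref{algo_1} the algorithm considers the pair $(v^*, \beta^*) \in Q_1 \times \ro$. Since $v^*$ and $\beta^*$ are mutually visible in $\mo$ (they lie on the common chord through $u_2'$), the algorithm draws the maximal chord joining them and inserts its intersection with $\bd(\mo)$ into $Q_1^{\mathtt{chord}} \subseteq Q_2$. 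That intersection is exactly $u_2'$, so $u_2' \in Q_2$.

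The main obstacle is the geometric subclaim that the window defining $u_2'$ must pass through an endpoint of $\Rb(t_1)$ rather than an interior point; handling the potential degenerate configurations (e.g., a line through two reflex vertices also touching $u_2'$) will require careful case analysis, though in those degenerate cases $u_2'$ is already produced by an even earlier iteration of the algorithm, so the conclusion $u_2' \in Q_2$ is robust.
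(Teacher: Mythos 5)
Your proposal is correct and follows essentially the same route as the paper: the paper's Case A1 is your main case (the critical chord through $u_2'$ passes through a reflex vertex and an endpoint of $\Rb(t_1)$, both of which lie in $Q_1$ by \Cref{obs-rImageinZ_mid}), and what you call the ``degenerate'' two-reflex-vertex configuration is exactly the paper's Case A2, where, as you note, $u_2'$ is already produced by an earlier iteration since reflex vertices belong to $Q_0$. The only substantive difference is that the paper separately proves (its Case B) that $u_2'$ cannot see a positive-length sub-segment of $\Rb(t_1)$, a configuration your window-based framing sidesteps without loss.
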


        \begin{sloppypar}

        \ifthenelse{\boolean{shortver}}{}{
\begin{proof}
            Since $u_2'$ is visible from $\Rb(t_1)$, {$u_2'$ also sees some point(s) of $\Rb(t_1)$.} So, we examine the sub-region of $\Rb(t_1)$ that is  visible  from $u_2'$, namely $\vis(u_2')\vert_{\Rb(t_1)}$. Now, w.l.o.g., let us assume that $x(u_2') < x(u_2)$. The other case is symmetric.
            \begin{description}
                \item[\textbf{Case A:} $u_2'$ sees exactly one point on $\Rb(t_1)$.\label{CaseA}] Let $c$ be the unique point on $\Rb(t_1)$ that is visible from $u_2'$, i.e., $\vis(u_2')\vert_{\Rb(t_1)} = \{c\}$. We now differentiate our cases based on whether the point  $c$ is a point of the set $\{\RA(t_1), \IMr(t_1) \}$.
 \begin{figure}[ht!]\label{fig-Steiner-vertex}
     \centering
     \begin{subfigure}[b]{0.45\textwidth}
         \centering
         \includegraphics[width=\textwidth]{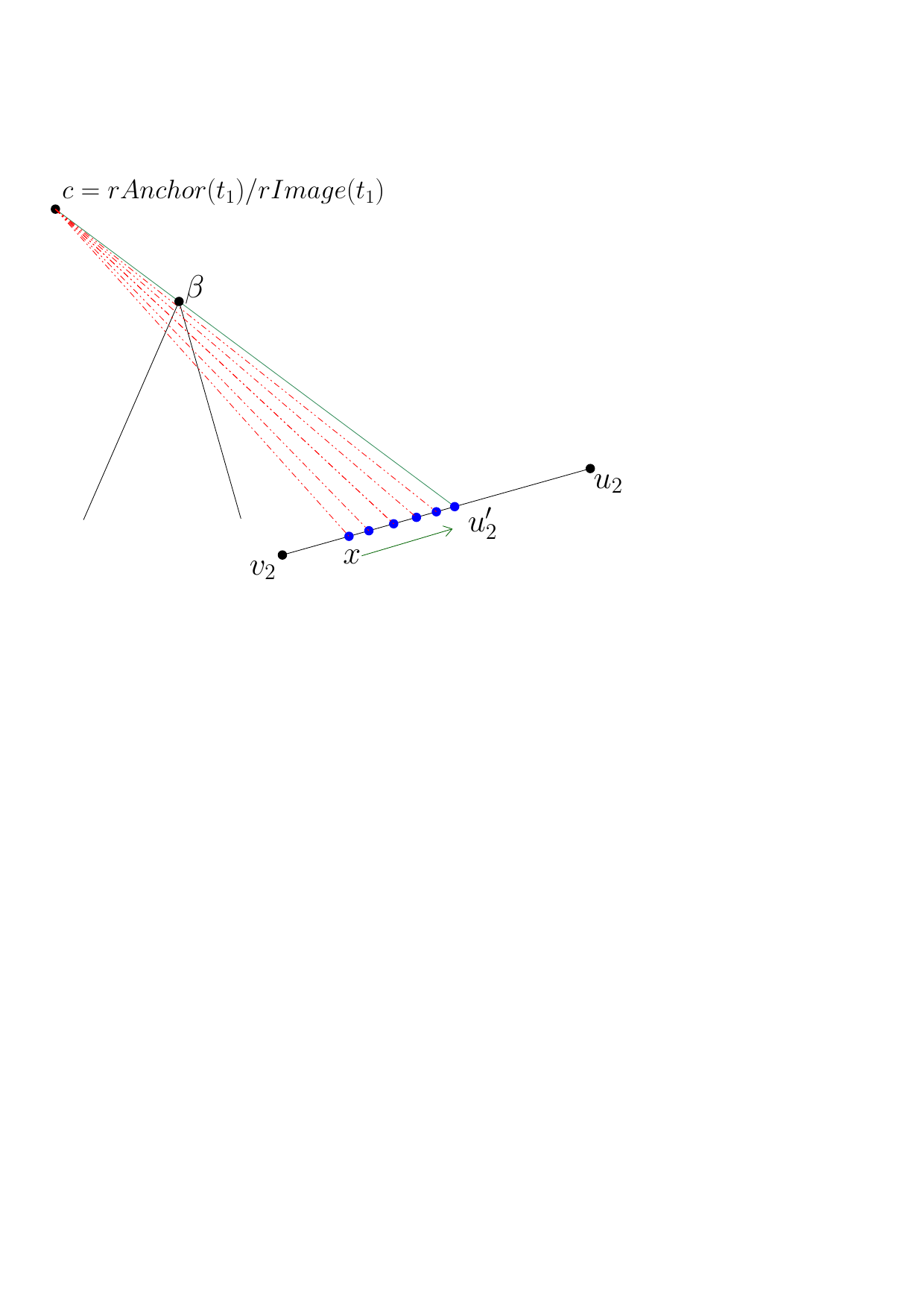}
         \subcaption{\textbf{Case} \hyperref[CaseA1]{A1}: $c = \RA(t_1)$ or $\IMr(t_1)$.}
         \label{figcaseA1}
     \end{subfigure}
     \hfill
     \begin{subfigure}[b]{0.45\textwidth}
         \centering
         \includegraphics[width=\textwidth]{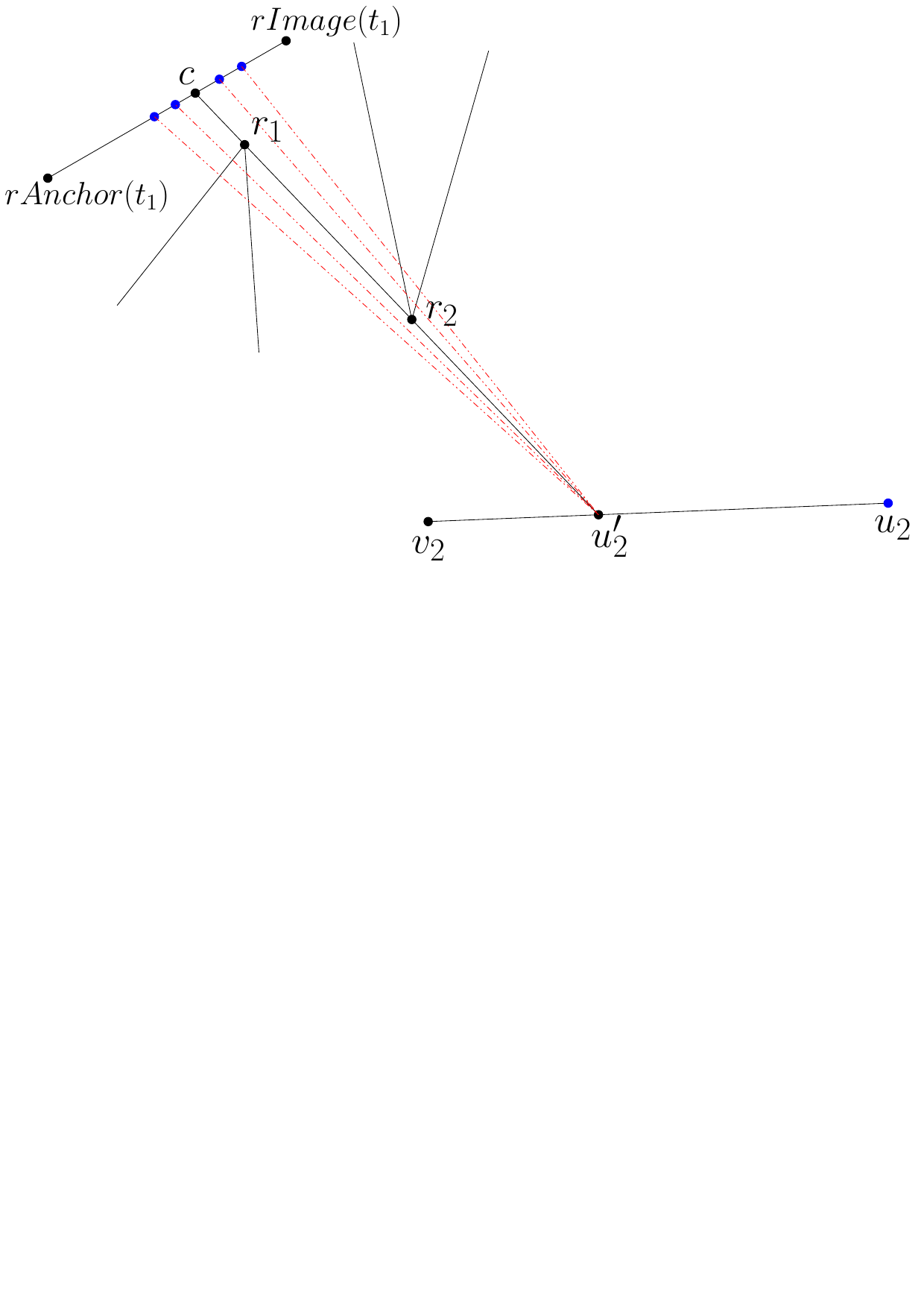}
         \subcaption{\textbf{Case} \hyperref[CaseA2]{A2}:  $c \neq \RA(t_1)$ or $\IMr(t_1)$.}
         \label{figcaseA2}
     \end{subfigure}
        \caption{$u_2'$ is a vertex of $\zm$.}
        \label{inducedvertex}
\end{figure}               
                \begin{description}
                    \item[Case A1: $c$ is either of $\RA(t_1)$ and $\IMr(t_1)$.\label{CaseA1}] $u_2$ is a vertex on the boundary \(\bd(\mo)\). Given that \(c\) is a point either from \(\RA(t_1)\) or \(\IMr(t_1)\), to demonstrate that \(u_2'\) is a {vertex} of $\zm$, it suffices to show that the segment \(\overline{cu_2'}\) intersects a reflex vertex in \(\mo\) and is wholly contained within \(\mo\) {(as $\RA(t_1)$ and $\IMr(t_1)$ lies in $\zm$, due to \Cref{obs-rImageinZ_mid})}. Observe that \(c\) is visible from \(u_2'\) and vice versa, yet no point left of \(u_2'\) is visible from \(c\) due to \(x(u_2')<x(u_2)\), and only \(\overline{u_2'u_2}\) of \(e_2\) is visible from \(\Rb(t_1)\). Thus, if \(c\) is joined to any point \(x \in e_2\) left of \(u_2'\), the segment \(\overline{cx}\) cannot be entirely inside \(\mo\). Consequently, as mentioned in \Cref{obs-hill}, there exists {an obstacle, i.e., a hill} containing a reflex vertex that obstructs visibility from \(x\) to \(c\) {(here, the hills might be different for different $x$, but that won't matter in our arguments below)}. However, since \(u_2'\) is visible from \(c\), transforming \(\overline{cx}\) towards \(\overline{cu_2'}\) as \(x\) moves along $\overline{v_2u_2}$ means the segment becomes part of \(\mo\) when \(x\) reaches \(u_2'\). This implies that \(\overline{cu_2'}\) must intersect reflex vertex \(r\) since \(\overline{cx}\) is wholly contained in \(\mo\) only when the segment \(\overline{cx}\) intersect exactly at the point {of intersection of two edges of a hill}, which is precisely a reflex vertex $\beta$(refer to \Cref{figcaseA1}). Now if \(c = \RA(t_1)\), then \(c\) itself is a reflex vertex of \(\mo\). If \(c = \IMr(t_1)\), then $c \in Q_1 \subseteq \zm$ as \(t_1 \in \ro_{mid} \cup V(\mo)\), (note that this  \(t_1\) {was obtained by replacing $w_{i_j}$ by a point in $\ro_{mid} \cup V(\mo)$)}. Therefore, as \(\overline{cu_2'}\) passes through a reflex vertex \(r\), it signifies that \(u_2' \in Q_2 \subseteq \zm\).

                   \smallskip
                    \item[{Case A2. $c$ is neither of $\RA(t_1)$ and $\IMr(t_1)$.}\label{CaseA2}]
                 In this scenario where the point $c \neq \RA(t_1)$ or $\IMr(t_1)$, we aim to establish that \(u_2'  \in \zm \). This will be shown by demonstrating that the line segment \(\overline{cu_2'}\) intersects a pair of distinct reflex vertices of \(\mo\) and remains completely contained in \(\mo\). Since $c$ is the sole point on $\Rb(t_1)$ that is visible from $u_2'$, there are no visible points to either the left or right of $c$ {on $\Rb(t_1)$} from $u_2'$. As argued in \textbf{Case} \hyperref[CaseA1]{A1}, the {non-visibility} of any point to the left of $c$ {on $\Rb(t_1)$} from $u_2'$ implies that \(\overline{cu_2'}\) passes through a reflex vertex, say $r_1$ of $\mo$. Likewise, since no point to the right of $c$ {on $\Rb(t_1)$} is visible from $u_2'$, a similar rationale suggests it also must intersect another reflex vertex, say $r_2$, distinct from $r_1$, in $\mo$. {The fact that the two reflex vertices $r_1$ and $r_2$ are distinct is guaranteed by the divergent orientation of the hills containing them,} as illustrated in \Cref{figcaseA2}. Therefore, $u_2'$ is derived by connecting two reflex vertices, $r_1$ and $r_2$, confirming {that} \(u_2' \in \zm\).
    \end{description}
            \end{description}    

                \begin{description}
                
                \item[\textbf{Case B:} $u_2'$ sees more than  one point on $\Rb(t_1)$.\label{CaseB}] Let $\vis(u_2')\vert_{\Rb(t_1)} = \overline{rs} \subseteq \Rb(t_1)$, i.e., $u_2'$ sees a part of (segment) or whole of $\Rb(t_1)$. 
\begin{figure}[ht!]
    \centering
    \includegraphics[width=0.5\linewidth]{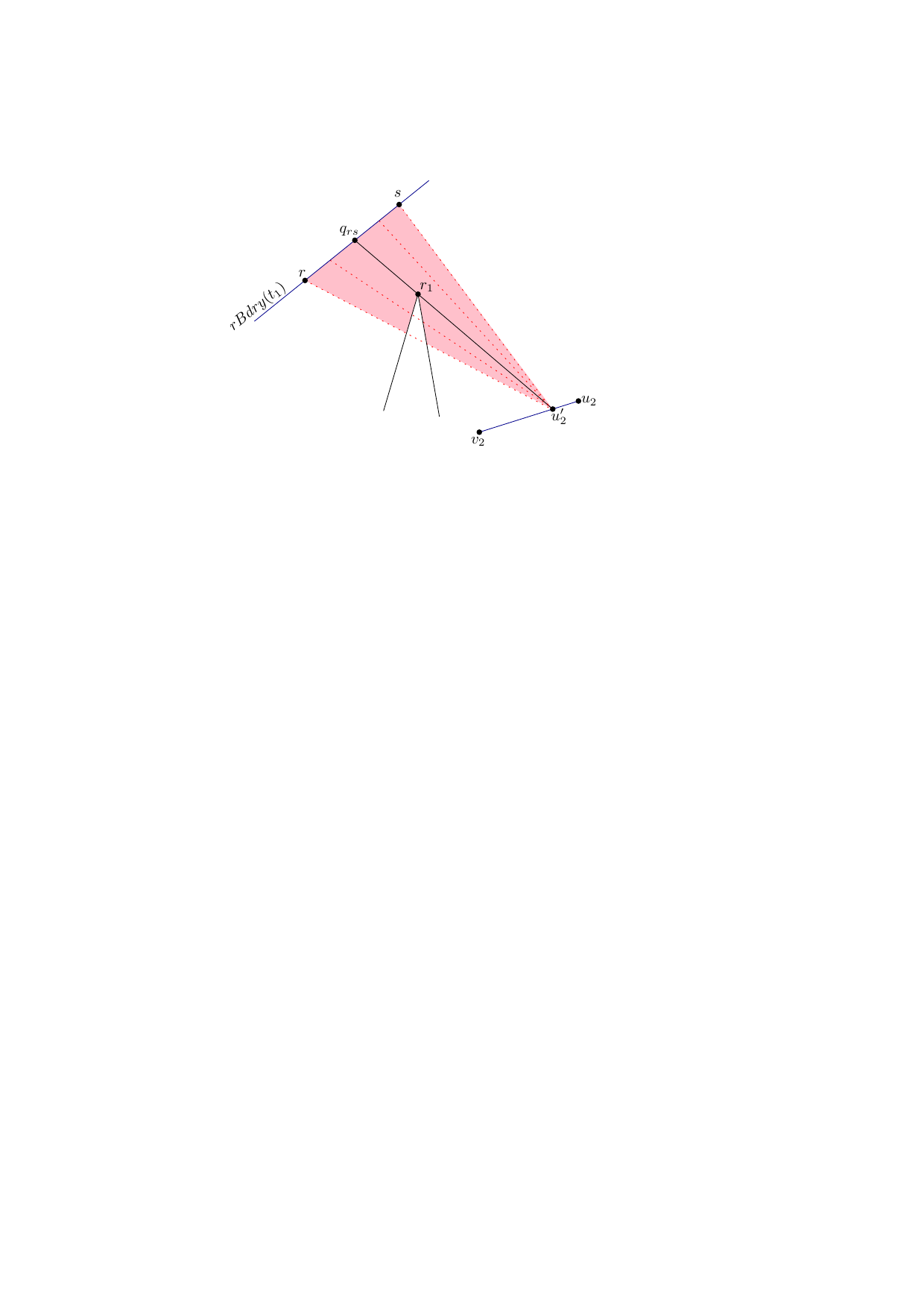}
    \caption{\textbf{Case} \hyperref[CaseB]{B}: $u_2'$ sees a part of (segment) or whole of $\Rb(t_1)$.}
         \label{figcaseB}
\end{figure}

                 In this case, we essentially show that this scenario does not arise. Assume that either $x(r) = x(s)$ and $y(r) < y(s)$ or $x(r) < x(s)$. {But $x(r) = x(s)$ is not possible because that would mean that the entire $\RC(t_1)$ is a vertical line (with same $x$-coordinates); and as $\RC(t_1) \subsetneq \mo$ intersects $\bd(\mo)$ on at least  three points, namely $\IMr(t_1),\ro(t_1)$ (the two distinct endpoints of $\RC(t_1)$) and $\RA(t_1)$, it contradicts the fact that the polygon $\mo$ is $x$-monotone.} So, we only consider the case where $x(r) < x(s)$.
                 
                 Now, we have that the entire $\overline{rs}$ is visible from $u_2'$. Therefore, if we join any point $q_{rs}$ on $\overline{rs}$ from $u_2'$, the line $\overline{u_2'q_{rs}}$ lies in $\mo$. So, essentially the solid triangle $\triangle{u_2'rs}$ lies entirely inside $\mo$. Let us now consider any point $q_{rs}$ on the open segment $(r,s)$. In particular, for the sake of clarity, let $q_{rs}$ be the mid-point of the line $\overline{rs}$. Note that this point $q_{rs}$ is visible from $u_2'$ and vice-versa. Since no point to the left of $u_2'$ is visible from $q_{rs}$, then, by the argument of \textbf{Case} \hyperref[CaseA1]{A1} we will get a reflex vertex $r_1$ through which $\overline{q_{rs}u_2'}$ must pass ($r_1$ must lie on the open segment $(q_{rs},u_2')$). But then, observe that a part of the hill which contains the reflex vertex $r_1$ must lie inside the $\triangle{u_2'rs}$ (See \Cref{figcaseB}). But this contradicts the fact that $\triangle{u_2'rs}$ lies entirely inside $\mo$. Thus, we conclude that this \textbf{Case} \hyperref[CaseB]{B} does not occur.

      
            \end{description}  

            This completes the proof that $u_2' \in \zm$. Now, as per our description of \Cref{algo_1}, the point $u_2'$ has been identified during the arrangement $A_2$, hence  $u_2' \in Q_2$. 
        \end{proof}}
        \end{sloppypar}

        Thus, we got a replacement of  $t_2$ in a clone point of a vertex from $Q_2 \subseteq \zm$.

  \begin{mdframed}[backgroundcolor=black!10,topline=false,bottomline=false,leftline=false,rightline=false] 
 Notice that when attempting to find an alternative for $t_2$, the key requirement is having $\RA(t_1)$ and $\IMr(t_1)$ as a vertex of $Q_{1} $. {Here, this fact is ensured by \Cref{obs-rImageinZ_mid}}. Similarly, in the search for a substitute for $t_3$, it is crucial to have $\RA(t_2)$ and $\IMr(t_2)$ as vertices of $Q_{i} $ for some $i \in [2k]$, although acquiring these is not as straightforward as obtaining  $\IMr(t_1)$ since the substitute of $t_2$ may be a clone point which is not a vertex of $V(\mo) \cup  \zm$.
\end{mdframed}

Let $t_2^*$ (a clone point) be a valid substitute for $t_2$ within $W$. In the following section, we will determine the substitute for $t_3$ by considering all possible positions of $t_2^*$.

          \paragraph{Determining Substitutes for $\boldsymbol{t_3}$.}
          
Here, we show that $t_3$ can be substituted with a clone point of a vertex from $Q_4 $. Recall that $t_3 \in \wbb$ and $t_2^* = u_2'^+/u_2'^-$, where $u_2'$ is a vertex of $Q_2 $.  Also $t_3$ lies on the edge $e_3 = \overline{u_3v_3}$ of $\mo$.  We first observe the following.

\begin{observation}\label{obs-t3clone}
	$  \vis(\Rb(u_2'))\vert_{e_3} \cup \vis(\Lb(t_4))\vert_{e_3} \neq \reg(e_3)$ 
\end{observation}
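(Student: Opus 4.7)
The plan is to prove \Cref{obs-t3clone} by the same contradiction scheme used for \Cref{edgevis}, but with one extra step to bridge the gap between $u_2'$ (which is a vertex of $Q_2$ but \emph{not} itself a witness) and the actual witness substitute $t_2^* \in \{u_2'^+, u_2'^-\}$. Suppose for contradiction that $\vis(\Rb(u_2'))\vert_{e_3}\cup\vis(\Lb(t_4))\vert_{e_3}=\reg(e_3)$. Since $t_3\in\reg(e_3)$, it must lie in at least one of the two sets.

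If $t_3\in\vis(\Lb(t_4))\vert_{e_3}$, then $t_3$ sees a point of $\Lb(t_4)\subseteq\vis(t_4)$, giving $\vis(t_3)\cap\vis(t_4)\neq\emptyset$ and contradicting the fact that $t_3,t_4$ are both witnesses in $W$. This case is handled exactly as in the proof of \Cref{edgevis}.

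The substantive case is $t_3\in\vis(\Rb(u_2'))\vert_{e_3}$. Here I would invoke the fact that $t_2^*$ is a clone of $u_2'$, i.e., an infinitesimally close point on $\bd(\mo)$ lying on the same edge $e_2$. Because the right anchor is determined by which reflex vertex first obstructs the shortest path to $\XM(\mo)$, moving from $u_2'$ to an infinitesimally nearby point on $e_2$ does not change this anchor, so $\RA(t_2^*)=\RA(u_2')$. Consequently $\Rb(t_2^*)$ and $\Rb(u_2')$ are two chords sharing the endpoint $\RA(u_2')$ and differing only by an infinitesimal pivot about it. Let $x\in\Rb(u_2')$ be a point seen by $t_3$. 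If $x=\RA(u_2')$, then $x\in\Rb(t_2^*)\subseteq\vis(t_2^*)$ as well, so $\vis(t_3)\cap\vis(t_2^*)\neq\emptyset$, contradicting that $(W\smallsetminus\{t_2\})\cup\{t_2^*\}$ is a witness set. If $x\neq\RA(u_2')$, then $x$ lies in the relative interior of the chord and hence in $\text{int}(\mo)$, so $\overline{t_3x}\subseteq\mo$ has a small open neighbourhood in $\vis(t_3)$; the pivoted chord $\Rb(t_2^*)$ passes through that neighbourhood (being an infinitesimal rotation of $\Rb(u_2')$ about the shared endpoint), yielding a point of $\Rb(t_2^*)\cap\vis(t_3)$ and the same contradiction.

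The main obstacle I expect is making the ``infinitesimal perturbation'' step rigorous, i.e., turning the qualitative statement ``$\Rb(t_2^*)$ is close to $\Rb(u_2')$'' into an honest continuity argument. The clean way is to fix the common reflex vertex $\RA(u_2')$, parametrise the chord by the direction from $\RA(u_2')$ to a variable point sliding along $e_2$, and observe that the visibility function $\vis(t_3)\cap(\text{chord through }\RA(u_2')\text{ in direction }\theta)$ is lower semicontinuous in $\theta$ wherever the intersection meets $\text{int}(\mo)$ transversally. This lets the visibility from $t_3$ to an interior point of $\Rb(u_2')$ transfer to $\Rb(t_2^*)$. The endpoint case $x=\RA(u_2')$ needs no continuity and is immediate, which is convenient because it is exactly where the pivot is ``degenerate.''
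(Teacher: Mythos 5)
Your reduction to the single substantive case is the same as the paper's: if the union covered $\reg(e_3)$ then $t_3$ would lie in one of the two pieces, and the $\Lb(t_4)$ piece is impossible because $t_3,t_4$ are both witnesses. But for the remaining case ($t_3$ visible from some point of $\Rb(u_2')$) you take a genuinely different route from the paper, and that route has a gap. You try to transfer visibility from $\Rb(u_2')$ to $\Rb(t_2^*)$ by an ``infinitesimal pivot'' about the shared anchor and then invoke $\vis(t_3)\cap\vis(t_2^*)=\emptyset$. The step that fails is the claim that a point $x\in\Rb(u_2')\cap\vis(t_3)$ with $x$ in the relative interior of the chord has a neighbourhood inside $\vis(t_3)$: visibility regions are closed, not open, and $x$ may lie on a \emph{window} of $\vis(t_3)$ (the segment $\overline{t_3x}$ grazing a reflex vertex), in which case $\vis(t_3)\cap\Rb(u_2')$ can be a single tangential contact point. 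Your own caveat (``wherever the intersection meets $\inte(\mo)$ transversally'') is exactly this hole, and you cannot escape it by choosing the rotation direction, because the clone $t_2^*\in\{u_2'^+,u_2'^-\}$ was already fixed by the earlier construction to lie on a prescribed side of $u_2'$; the pivoted chord $\Rb(t_2^*)$ may rotate \emph{away} from the contact and miss $\vis(t_3)$ entirely, so no contradiction is produced. (The subsidiary assertion $\RA(t_2^*)=\RA(u_2')$ is also unproved, and your dichotomy ``$x=\RA(u_2')$ or $x$ is interior'' omits the endpoint $x=\IMr(u_2')\in\bd(\mo)$.)

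The paper avoids all of this by never mentioning $t_2^*$: from ``$t_3$ sees a point of $\Rb(u_2')$'' it applies the \Cref{lem-cl2}-type argument to get $\Rb(u_2')\cap\Lb(t_3)\neq\emptyset$, hence $u_2'\in\vis(\Lb(t_3))\vert_{e_2}$. Since $u_2'$ is by construction the far endpoint of the connected segment $\vis(\Rb(t_1))\vert_{e_2}$ (which contains $u_2$) and $\vis(\Lb(t_3))\vert_{e_2}$ is a connected segment containing $v_2$, the two segments would jointly cover $e_2$, contradicting \Cref{edgevis}. Note also that the paper's logical order is the reverse of yours: \Cref{obs-t3clone} (about $u_2'$) is proved first and the statement about $t_2^*$ (\Cref{obs-t3readjustment}) is deduced afterwards, whereas you derive the $u_2'$ statement from the $t_2^*$ one. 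I recommend replacing your perturbation step by the pull-back to edge $e_2$ and \Cref{edgevis}.
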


\begin{proof}
	If $\vis(\Rb(u_2'))\vert_{e_3} \cup \vis(\Lb(t_4))\vert_{e_3} = \reg(e_3)$, then, it means that $t_3$ is visible from either $\vis(\Rb(u_2'))\vert_{e_3}$ or $\vis(\Lb(t_4))\vert_{e_3}$. But $t_3$ cannot be visible from any point of $\Lb(t_4)$. So, $t_3$ must be visible from some point of $\Rb(u_2')$. Using similar kind of {argument in} \Cref{open-connectedw2} we get that  $\Rb(u_2') \cap \Lb(t_3) \neq \emptyset.$ So, it means $u_2'$ is visible from $\Lb(t_3)$. Also, $u_2'$ is visible from $\Rb(t_1)$. But then, we recall the case where we needed $t_2$ to be replaced by a clone point. In that case, $\Rb(t_1)$ sees a vertex of $e_2$ and $\Lb(t_3)$ sees the other vertex of $e_2$. But as both intersect in at least one point, i.e., $u_2'$, and, as $\vis(\Rb(t_1)\vert_{e_2}, \vis(\Lb(t_3)\vert_{e_2}$ are connected, this violates \Cref{edgevis}. Thus, we get a contradiction. {[Note that as $\vis(\Rb(u_2')\vert_{e_3}, \vis(\Lb(t_4)\vert_{e_3}$ are closed, the non visible region of $\reg(e_3)$ from $\Rb(u_2')$ and $\Lb(t_4)$ is open.]}
\end{proof}

Using a similar argument, we can claim the following as well.
\begin{observation}\label{obs-t3readjustment}
	$  \vis(\Rb(t_2^*))\vert_{e_3} \cup \vis(\Lb(t_4))\vert_{e_3} \neq \reg(e_3)$ 
\end{observation}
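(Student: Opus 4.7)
The plan is to reduce this observation to \Cref{edgevis} by a simple witness-swap argument. Recall that $t_2^*$ was chosen as a valid substitute for $t_2$, meaning that $W' := (W \smallsetminus \{t_2\}) \cup \{t_2^*\}$ is itself a witness set of $\mo$ of the same cardinality as $W$. Inside this new witness set $W'$, the witness $t_3$ still sits on the edge $e_3$, lies in $\wbb \subseteq W \smallsetminus \wint$ (the hypothesis on $t_3$ was never tied to the particular position of $t_2$), and has $t_2^*$ as its immediate left neighbor and $t_4$ as its immediate right neighbor along the $x$-axis ordering.

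First, I would verify that $W'$ really is a witness set: disjointness of $\vis(t_2^*)$ with every other $\vis(\cdot)$ in $W'$ was the defining property of ``$t_2^*$ is a substitute for $t_2$'', and the other pairwise disjointness relations carry over from $W$ untouched. Having established this, I would directly invoke \Cref{edgevis} at the witness $t_3$ inside the witness set $W'$: that claim gives $\vis(\Rb(t_2^*))\vert_{e_3} \cup \vis(\Lb(t_4))\vert_{e_3} \neq e_3$, which is exactly the desired inequality once we note $\reg(e_3) = e_3$ as regions.

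If I preferred to avoid even the formal invocation of \Cref{edgevis}, the same argument unfolds in one line: assuming for contradiction that the union fills $\reg(e_3)$, the point $t_3$ itself would have to lie in $\vis(\Rb(t_2^*))$ or in $\vis(\Lb(t_4))$; since every point of $\Rb(t_2^*)$ is visible from $t_2^*$ and every point of $\Lb(t_4)$ is visible from $t_4$, one of the intersections $\vis(t_2^*) \cap \vis(t_3)$ or $\vis(t_3) \cap \vis(t_4)$ would be non-empty, each contradicting the fact that $W'$ is a witness set.

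I do not foresee any genuine obstacle here: the only substantive ingredient is the disjointness $\vis(t_2^*) \cap \vis(t_3) = \emptyset$, which is guaranteed by construction of $t_2^*$. The only care needed is the book-keeping verification that $t_3 \notin \wint$ continues to hold in $W'$ (so that \Cref{edgevis} is applicable), but this is immediate because the classification of $t_3$ depends only on the geometry of $\LA(t_3), \RA(t_3), t_3$ and on $\tra(t_3, \cdot)$ being a boundary region, both of which are intrinsic to $t_3$ and unaffected by replacing $t_2$ by $t_2^*$.
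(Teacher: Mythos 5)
Your proposal is correct and matches what the paper intends: the paper disposes of this observation with ``using a similar argument'' (pointing back to the proof of \Cref{obs-t3clone}), and when that argument is instantiated for $t_2^*$ it collapses to exactly your reduction, since $t_3$ being visible from a point of $\Rb(t_2^*)$ (resp.\ $\Lb(t_4)$) immediately yields $\vis(t_2^*)\cap\vis(t_3)\neq\emptyset$ (resp.\ $\vis(t_3)\cap\vis(t_4)\neq\emptyset$), contradicting the fact that $(W\smallsetminus\{t_2\})\cup\{t_2^*\}$ is a witness set. You also correctly note why the sibling statement for $u_2'$ needs the longer detour (because $u_2'$, unlike its clone $t_2^*$, is not itself a valid substitute), which is precisely the distinction the paper relies on.
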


Let $\reg_{2,4} \coloneqq \vis(\Rb(t_2^*))\vert_{e_3} \cup \vis(\Lb(t_4))\vert_{e_3}$ {and $\reg_{2,4}^c \coloneqq \reg(e_3) \setminus (\vis(\Rb(t_2^*))\vert_{e_3} \cup \vis(\Lb(t_4))\vert_{e_3})$}. If at most one of $\{u_3, v_3\}$ is contained in $\reg_{2,4}$, then  the region $\tra(t_3, W)$ contains either $u_3$ or $v_3$. Hence, we have shown that $V(\mo)$ contains a replacement of $t_3$. {But as we have already assumed that $t_3 \in \wbb$, this case doesn't arise.} So, we consider that $\{u_3, v_3\} \subseteq \reg_{2,4}$.

\medskip

 \noindent We know that $t_2^* = u_2'^+/u_2'^-$. For the time being, let us assume that $t_2^*$ is $u_2'$, a vertex of $ Q_2 \subseteq \zm$. Then $t_3$ can be replaced in a manner analogous to our replacement of $t_2$, where we already knew the substituted position of $t_1=w_{i_j}$. {So, temporarily, we do that itself {by assuming $t_2^* = u_2'$}. That is, we look at the closed, connected region on $\reg(e_3), \vis(\Rb(u_2')\vert_{e_3} = \overline{v_3v_3'}$ (say). Then, we replace $t_3$ by a clone of $v_3'$, either $v_3'^+$ or $v_3'^-$, whichever lies on the open connected region $\reg_{2,4}^c$ (Note that $\reg_{2,4}^c$ is an open connected region due to \Cref{obs-t3clone}). So, w.l.o.g., we assume that the substitute of $t_3$ is $v_3'^+$ (the other case will be analogous, as our arguments provided below will evidently show). Now, this arrangement of $u_2'$, $v_3'^+$ and $t_4$ will satisfy $\vis(u_2') \cap \vis(v_3'^+) = \emptyset$ and $\vis(v_3'^+) \cap \vis(t_4) = \emptyset.$ In particular, as $\vis(u_2') \cap \vis(v_3'^+) = \emptyset$, there exists an open region on $\reg(e_2)$ where any replacement of $u_2'$ would successfully preserve the non-intersection property with $\vis(v_3'^+)$. So, effectively, we can choose a clone point of $u_2'$, such that it still preserves the non-intersection property with $\vis(v_3'^+)$ (due to \Cref{obs-t3readjustment}). Since we can choose this clone point from an open segment around $u_2'$ on $\reg(e_2)$, we particularly choose one such that it lies to the left of $u_2'$ (recall that our original substitute of $t_2$ was also positioned on the left of $u_2'$). So, we can follow the same procedure of replacement as we had done for $t_2$, without affecting the non-intersection property with both $\vis(t_1)$ (as any point to the left of $u_2'$ that lies on $\reg(e_2)$ is not visible from $\Rb(t_1)$), as well as $\vis(v_3'^+)$. It may happen that after choosing $v_3'^+$ as our replacement for $t_3$, we have to update the previous choice of replacement of $t_2$, by choosing a point closer to $u_2'$, but even then it can still be considered as a clone point of $u_2'$.}
 {Thus, ultimately, this allows us to effectively replace $t_3$ with a clone point of a vertex of $ Q_4$, illustrated in \Cref{t3_replacement}.}

\begin{figure}
    \centering
    \includegraphics[width=.8\linewidth]{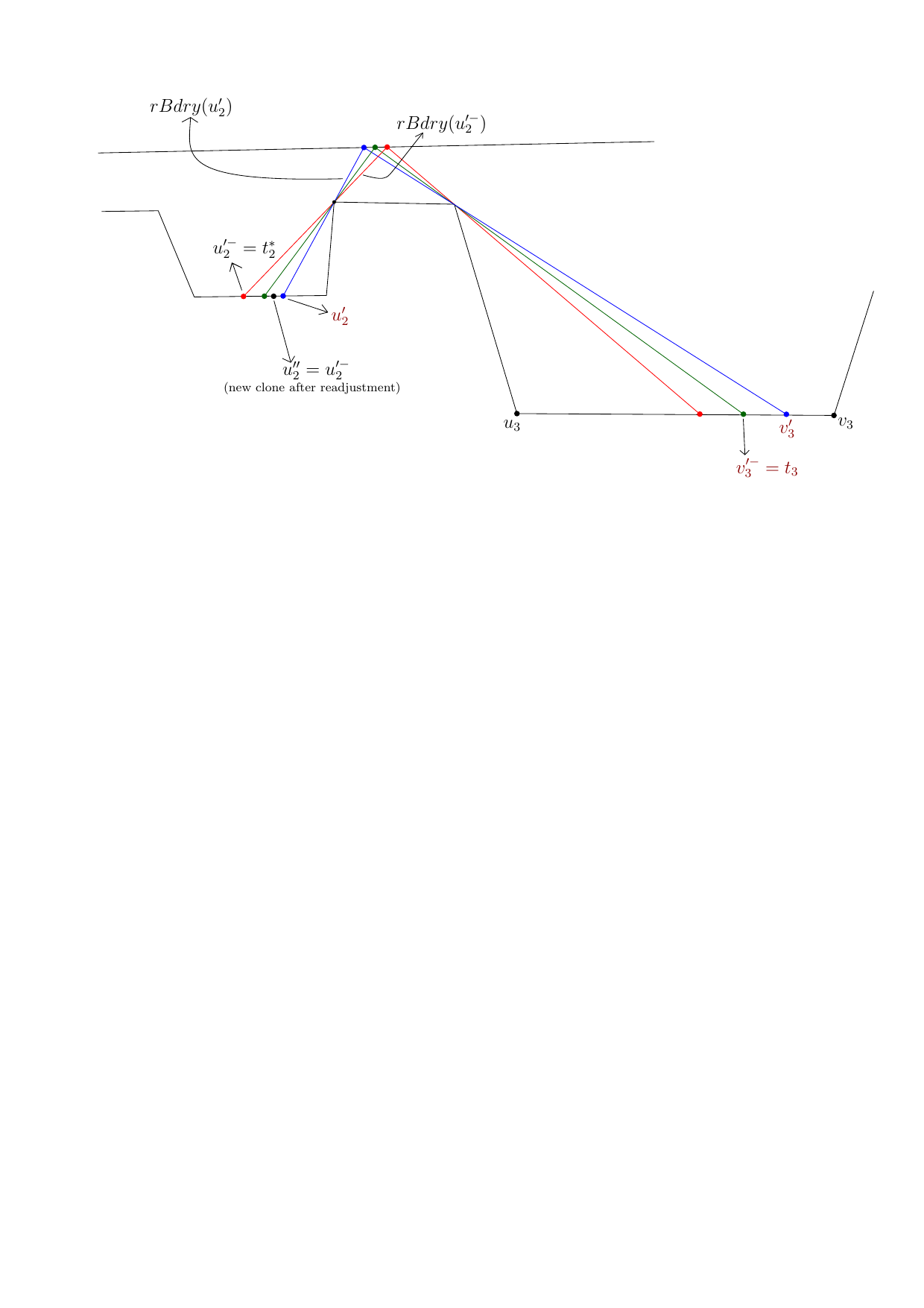}
    \caption{Readjusting the position of $t_2^*$ after replacing $t_3$ by a clone point. Here, our initial choice of substitute for $t_2$ was $u_2'^-$, which changes to $u_2''$ after replacing $t_3$ by $v_3'^-$.}
    \label{t3_replacement}
\end{figure}

   \paragraph{Determining Substitutes for $\boldsymbol{t_4}$ and onwards} 

 So, proceeding similarly, we replace the next set of witnesses $t_4$ onward. 
Recall that our goal is to demonstrate that each witness in $\wbb$ has an appropriate replacement in $\zm$. So, to do that, we had chosen a chain of such consecutive witnesses in $\wbb$, $t_2, t_3, \ldots , t_{k'-1}$ ($t_1$ and $t_k'$ lies in $V(\mo) \cup \ro_{mid}$). We then tried to find a suitable substitute for each $t_i ~(1 < i < k')$ in $\wbb$ in $\zm$. However, as described thus far, the replacement is not consistent with vertices from $V(\mo) \cup \zm$ and includes clone points. Some witnesses may find their replacements in clone points, yet our algorithm never produces clone points. Our next aim is to show that we can obtain a viable alternative for these clone points of the vertices of $z_{mid}$, in $\zm$ itself.
 
 \paragraph{Left $\rightarrow$ Right: Find a  replacement for $\boldsymbol{t_i}$ and updating all previous Clones for $\boldsymbol{t_{j}, j<i}.$}
 
 We have seen that when we try to find a replacement of $t_3$ as a clone point of some vertices of $Q_4 $, after finding the suitable clone, we {might have to readjust the} position of the clone point which {was initially considered to be a} replacement for $t_2$. Similarly, we proceed {to the next set of witnesses from left to right}. Assume that we are at $A_{2i}$\textsuperscript{th} arrangement where we aim to find a replacement of $t_i$. Once we find the clone point, which is a replacement $t_i$, we {might} need to update all the clone points corresponding to the replacement of each $t_j$ where $1 < j <i$. But we know that we do not have to find a replacement for $t_k'$ as $t_k'$ {has already been substituted by a point in} $V(\mo) \cup \ro_{mid}$. So in $A_{2i}$\textsuperscript{th} arrangement we actually find all the replacement of $t_i$ for each $i \in \{2, 3, \ldots, k'-1\}$ in some clone point, say $u_i^*$ (either $ u_i^+ $ or $ u_i^- $) where $u_i$ is a vertex obtained $A_{2i}$\textsuperscript{th} arrangement.
 
                   \begin{mdframed}[backgroundcolor=red!10,topline=false,bottomline=false,leftline=false,rightline=false] 
 	Recall that our primary goal is to demonstrate that each vertex in $W$ has an appropriate replacement in $\zm$. In the following, we illustrate that witnesses with potential replacements in clone points also have substitutes in  $\zm$. More precisely, every clone point has a viable alternative in $ \zm$. Further details are provided below.
 \end{mdframed}
 

  \paragraph{Right $\rightarrow$ Left: Generating replacement of Clones in $\boldsymbol{\zm}.$}

        Note that $t_k'$ is a {point} in $V(\mo) \cup \ro_{mid}$. Now, focus on the witness $t_{k'-1}$. According to the previously described method, $t_{k'-1}$ was initially replaced by a clone point of a vertex {of $\zm$}, say $p_{k'-1} (\in  Q_{2k'})$, {while moving from left to right}. Note that the vertex {of $\zm,$} $p_{k'-1}$, was determined by analyzing the visibility regions from $\Rb(t_{k'-2})$ on the edge $e_{k'-1} = \overline{u_{k'-1}v_{k'-1}}$ of $\mo$, where $t_{k'-1}$ lies. However, since $t_{k'}$ belongs to $V(\mo) \cup \ro_{mid}$, we can opt for a clone of another vertex {of $\zm$}, say  $q_{k'-1}$ from $ Q_{1}$ to do   a replacement of $t_{k'-1}$ where  $q_{k'-1}$ lies on the edge $e_{k'-1}$ and obtained in the arrangement $A_1$. Specifically, $q_{k'-1}$ is determined by examining the part of $e_{k'-1}$ that is visible from $\Lb(t_{k'})$.  The clone of $q_{k'-1}$ would adequately replace $t_{k'-1}$.  This approach is symmetric to the replacement method of $t_2$ {by using the fact that $t_1$ lies in $V(\mo) \cup \ro_{mid}$}. Now we can claim the following.


            \begin{clm}\label{midpt}
                These two choices of replacement of $t_{k'-1}$, i.e., clone of $p_{k'-1}$ (obtained from $\Rb(t_{k'-2})\vert_{e_{k'-1}}$) and clone of $q_{k'-1}$ (obtained from $\Lb(t_{k'})\vert_{e_{k'-1}}$) are distinct.
            \end{clm}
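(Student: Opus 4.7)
The plan is to prove this by contradiction, leveraging the fact that $\vis(\Rb(t_{k'-2}))\vert_{e_{k'-1}}$ and $\vis(\Lb(t_{k'}))\vert_{e_{k'-1}}$ are each closed connected subsegments of $e_{k'-1}$ (by \Cref{obs:con}), together with the bound supplied by \Cref{edgevis}. Since $t_{k'-1} \in \wbb$ (and hence has no replacement in $V(\mo)$), both endpoints $u_{k'-1}$ and $v_{k'-1}$ must lie in $\reg_{k'-1} = \vis(\Rb(t_{k'-2}))\vert_{e_{k'-1}} \cup \vis(\Lb(t_{k'}))\vert_{e_{k'-1}}$; by \Cref{obs-bothendptsvisible} applied at index $k'-1$, they cannot lie in the same one of the two visibility subsegments. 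Thus, without loss of generality, $u_{k'-1} \in \vis(\Rb(t_{k'-2}))\vert_{e_{k'-1}}$ while $v_{k'-1} \in \vis(\Lb(t_{k'}))\vert_{e_{k'-1}}$.

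By the construction described above (symmetric to the construction of $u_2'$ in the discussion after \Cref{CaseII}), the point $p_{k'-1}$ is precisely the endpoint of the closed connected segment $\vis(\Rb(t_{k'-2}))\vert_{e_{k'-1}}$ opposite $u_{k'-1}$, so this segment equals $\overline{u_{k'-1}\,p_{k'-1}}$. Analogously, by running the symmetric construction from the right (using $\Lb(t_{k'})$ in place of $\Rb(t_{k'-2})$), the point $q_{k'-1}$ is the endpoint of $\vis(\Lb(t_{k'}))\vert_{e_{k'-1}}$ opposite $v_{k'-1}$, so this segment equals $\overline{q_{k'-1}\,v_{k'-1}}$.

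Now suppose for contradiction that $p_{k'-1} = q_{k'-1}$; call this common point $z$. Then
\[
\vis(\Rb(t_{k'-2}))\vert_{e_{k'-1}} \cup \vis(\Lb(t_{k'}))\vert_{e_{k'-1}} \;=\; \overline{u_{k'-1}\,z} \cup \overline{z\,v_{k'-1}} \;=\; \overline{u_{k'-1}\,v_{k'-1}} \;=\; \reg(e_{k'-1}),
\]
which directly contradicts \Cref{edgevis} (applied with the pair of neighbouring witnesses $t_{k'-2}$ and $t_{k'}$ in the chain, whose respective $\Rb$ and $\Lb$ govern the visibility on $e_{k'-1}$). Hence $p_{k'-1} \neq q_{k'-1}$.

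The main thing to verify carefully is that the second construction of a substitute of $t_{k'-1}$ really is symmetric to the left-to-right construction: the argument of \Cref{claim:u2} showed that the endpoint of $\vis(\Rb(t_{1}))\vert_{e_2}$ opposite $u_2$ is a vertex of $Q_2$ obtained via the line-arrangement through some reflex vertices and the image $\IMr(t_1)$. Mirroring this with $\Lb(t_{k'})$, and using \Cref{obs-rImageinZ_mid} applied to $t_{k'} \in V(\mo) \cup \ro_{mid}$ to guarantee that $\LA(t_{k'})$ and $\IMl(t_{k'})$ already lie in $Q_1 \subseteq \zm$, shows that $q_{k'-1} \in Q_1$; no new difficulty arises here beyond a mirror of the earlier argument. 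The actual disjointness $p_{k'-1} \neq q_{k'-1}$ is then the one-line coverage argument above, so the only real obstacle is ensuring the hypotheses of \Cref{edgevis} genuinely apply in the readjusted configuration.
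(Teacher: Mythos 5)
Your proposal is correct and follows essentially the same route as the paper's proof: both argue that since $t_{k'-1}$ has no vertex replacement, each of $\Rb(t_{k'-2})$ and $\Lb(t_{k'})$ sees exactly one endpoint of $e_{k'-1}$, and then derive a contradiction from $p_{k'-1}=q_{k'-1}$ by observing that the two closed connected visibility subsegments would then cover all of $\reg(e_{k'-1})$, violating \Cref{edgevis}. The paper additionally separates out the degenerate case where both $p_{k'-1}$ and $q_{k'-1}$ are vertices of $\mo$ (where distinctness is immediate), but your uniform coverage argument subsumes it.
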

            
\ifthenelse{\boolean{shortver}}{}{
            \begin{proof}
                As we were not able to replace $t_{k'-1}$ with a vertex of $\mo$, it implies that $\Rb(t_{k'-2})$ sees exactly one vertex of $e_{k'-1}$ while $\Lb(t_{k'})$ sees the other. If both $p_{k'-1}, q_{k'-1} \in V(\mo)$, then they are the two endpoints of the edge $e_{k'-1}$ (since $p_{k'-1} \in V(\mo)$ is visible from $\Rb(t_{k'-2})$ while $q_{k'-1} \in V(\mo)$ is visible from $\Lb(t_{k'})$), and hence, are distinct. Otherwise, if $p_{k'-1} = q_{k'-1}$, then, $\overline{u_{k'-1}p_{k'-1}}$ is visible from $\Rb(t_{k'-2})$ and $\overline{v_{k'-1}q_{k'-1}}$ is visible from $\Lb(t_{k'})$, or, $\overline{v_{k'-1}p_{k'-1}}$ is visible from $\Rb(t_{k'-2})$ and $\overline{u_{k'-1}q_{k'-1}}$ is visible from $\Lb(t_{k'})$. But then, ${\reg_{k'-1} =} \vis(\Rb(t_{k'-2})\vert_{e_{k'-1}} \cup \vis(\Lb(t_{k'})\vert_{e_{k'-1}} = \reg(e_{k'-1})$, as $p_{k'-1} = q_{k'-1}$ {and the respective regions $\vis(\Rb(t_{k'-2})\vert_{e_{k'-1}}$ and $\vis(\Lb(t_{k'})\vert_{e_{k'-1}}$ are connected,} which contradicts \Cref{edgevis}. So, as $p_{k'-1}\neq q_{k'-1}$ we can claim that their respective clones are also distinct. 
            \end{proof}
}

Thus, we can replace $t_{k'-1}$ by a clone of $p_{k'-1}$, or a clone of $q_{k'-1}$ {with $p_{k'-1} \neq q_{k'-1}$}. Since we cannot replace the witness with a vertex of the edge $e_{k'-1}$, the region in which we can place a witness, i.e., $\reg_{k'-1}$ is open and connected (due to \Cref{obs-openconnected}). This means that we can replace $t_{k'-1}$ by any point that lie in between $\overline{p_{k'-1}q_{k'-1}}$. Specifically, we can replace $t_{k'-1}$ by the midpoint of $\overline{p_{k'-1}q_{k'-1}}$. (See \Cref{lastmid}) Now, as the mid-points of two {adjacent} vertices {of $\zm$} on the same edge of $\mo$ {in the $A_i\textsuperscript{th}$ arrangement} get added to our set $\zm$, we can consider $t_{k'-1}$ to be replaced by a vertex from $\zm$ which is the midpoint (say, $\alpha_{k'-1}$) between two adjacent vertices of $Q_{i}$ generated in the preceding step of \Cref{algo_1}) itself. Then, by following the same argument as we replaced $t_{k'-1}$, we move towards the set of witnesses to the left of $t_{k'-1}$ again. In more detail, once we try to find a replacement for $t_{k'-1}$ in $\zm$, specifically in $Q_{k'+1}^{\mathtt{midpt}}$, we use the position of the {point} $t_k'$. We call $t_k'$ is the  {\em important} {point} for $t_{k'-1}$. In the next steps, when trying to find a replacement of $t_{k'-2}$, we need such an important point. For that, we apply the same argument as before, considering the vertex $\alpha_{k'-1}$ as an important point for $t_{k'-2}$. Continuing this {way} we can get a set of {points} $\alpha_{k'-2}, \alpha_{k'-3}, ...$, and so on, which are distinct from the choices we had obtained by replacing the witnesses while moving from left to right. {Since choices $q_i$ and $p_i$ are distinct (due to \Cref{midpt}), existence of such an $\alpha_i \in \zm$ is always ensured.} Proceeding this way, we can get a choice of a substitute for any witness of $W$ by a vertex from $\zm$. From the above discussion, we obtain the following conclusions for each $i \in \{2,3, \ldots, k'-1\}$:
\begin{itemize}
	\item $\alpha_{i}$ is a mid-point of two vertices, say $p_i$ and $q_i$ of $\zm$ where $p_i \in Q_i$, $q_i \in Q_{2k-i}$.
	\item $\alpha_{i} \in \zm$. More specifically,  $\alpha_{i} \in Q_{2k-i}^{\mathtt{midpt}}$. 
	\item the point $\alpha_{i}$ is a replacement of the witness $t_i$ in $\zm$.
	\item the point $\alpha_{i+1}$ is an important point for $t_i$ ($\alpha_k'=t_k'$).
\end{itemize} 

 \begin{proof}[\textbf{Proof of \Cref{Type-2-witness}}]
{We have already shown that for a witness $w \in \wint \cup \wgb, ~w$ has a valid substitute in $V(\mo) \cup \ro_{mid}$. And, for any chain of consecutive witnesses $t_i \in \wbb (1<i<k')$, we have a valid substitute in $\zm$.} Therefore, every $w \in W$ has a valid substitute in $V(\mo) \cup \ro_{mid} \cup \zm$. This shows the correctness of our \Cref{algo_1}. Hence, the proof follows.
\end{proof}
            
	\begin{figure}[ht!]
	    \centering
	    \includegraphics[width=1\linewidth]{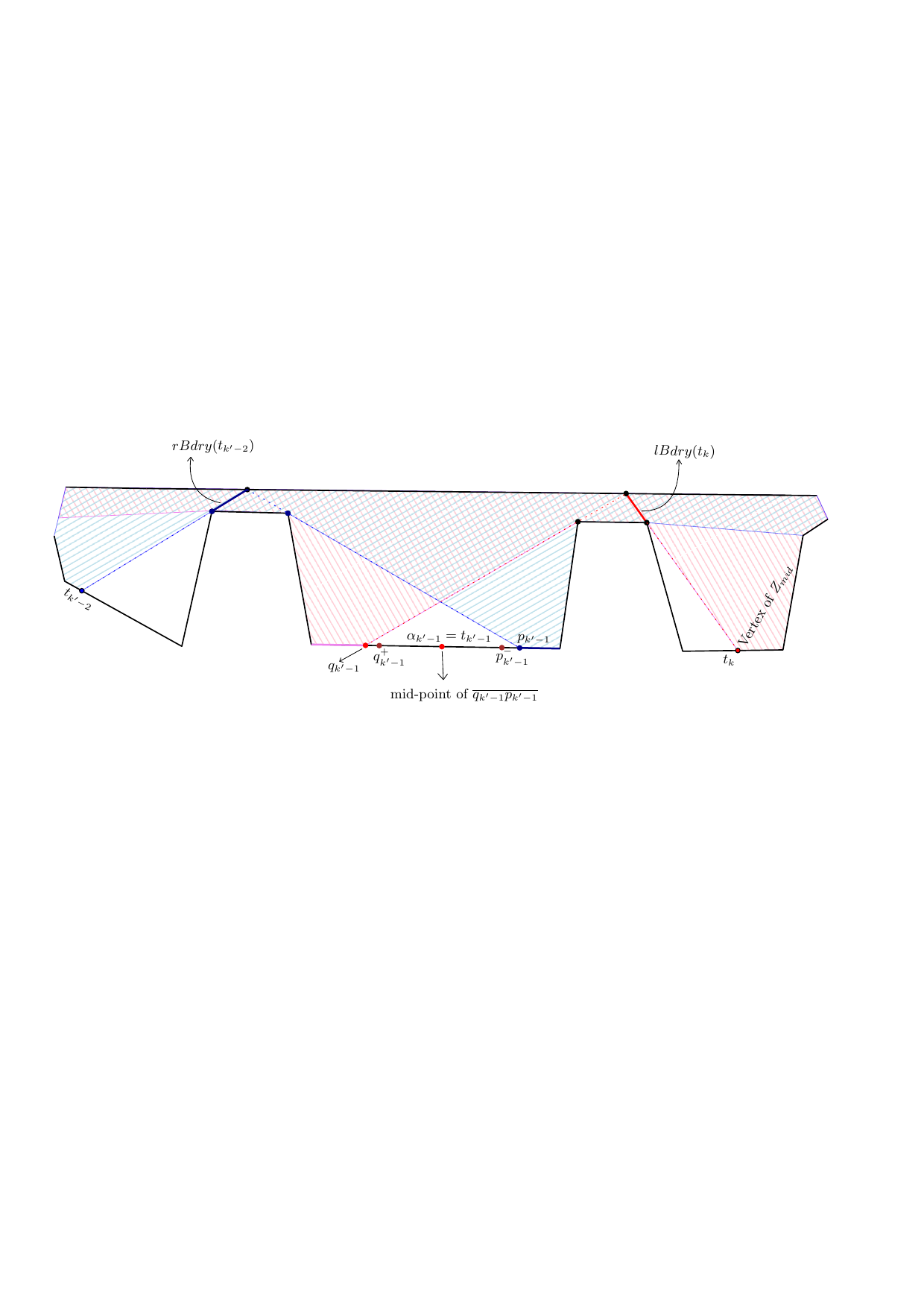}
	    \caption{Replacement of $t_{k'-1}$ with the mid-point of two adjacent vertices of $\zm$.}
	    \label{lastmid}
	\end{figure}
    }

	
	\paragraph{Running Time.} \label{sec-exact}
	
	The running time of the \Cref{algo_1} is influenced by both the number of arrangements and the cardinality of points generated in each arrangement. According to \Cref{obs:r3}, the number of points in $\zm$ is bounded by $r^{\OO(k)}n$, where $|V(\mo)|=n$ and $r$ represent the count of reflex vertices in $\mo$. Given that the number of arrangements is limited to $2k$, and each arrangement is quadratically dependent on the number of existing points, the running time is bounded by  $r^{\OO(k)} \cdot n^{\OO(1)}$.


 \begin{proof}[\textbf{Proof of \Cref{exact-algorithm}}]
		Consider a hypothetical solution $W$ for the {\sc Witness Set} of $\mo$, divided into parts $W= \wint \cup \wgb \cup \wbb$, as defined in \Cref{def:witness}. According to \Cref{Type-4-witness}, we understand that $V(\mo)$ functions as a $(W, \wgb)$-\pws. Furthermore, as detailed in \Cref{lem:intwitness}, there is a $(W, \wint)$-\pws~with a size of at most $r^2$, which can be determined in $n^{\OO(1)}$ time. Additionally, by \Cref{lem:bdrywitness}, it is possible to generate a point set $\zm$ with a maximum size of $n \cdot (2+r)^{2k-1}$, ensuring that $\zm$ is a $(W, \wbb)$-\pws, and this set can be constructed within $r^{\OO(k)} n^{\OO(1)}$ time. Hence, we have shown that if the largest witness set has size $k$, then the set $V(\mo) \cup \ro_{mid} \cup \zm$ (as derived by \Cref{algo_1}) contains a witness set of the same size, $k$.
        
        Now, the algorithm for the {\sc Witness Set} in $\mo$ works as follows: Define $O_k$ as the output (the set of vertices) from \texttt{WitGen}$(\mo,k)$. We aim to identify the minimum $k$ (by guessing the value $k$) such that: (i) $V(\mo) \cup \ro_{mid} \cup O_k$ encompasses a witness set of size $k$, while (ii) $V(\mo) \cup \ro_{mid} \cup O_{k+1}$ does not contain any witness set of size $k+1$. The correctness of this approach is assured because if the maximum possible size of a witness set exceeds $k$, then $V(\mo) \cup \ro_{mid} \cup O_{k+1}$ must include a witness set of at least size $k+1$. Since the {\sc Discrete Witness Set} problem in a monotone polygon $\mo$ can be solved in $\OO(|S|^2 + |V(\mo)||S|)$ time (as demonstrated by \Cref{theo-finite-witness-in-M}), we thus infer that the {\sc Witness Set} problem in a monotone polygon is solvable in $r^{\OO(k)} \cdot n^{\OO(1)}$ time.
  	\end{proof}

\section{Approximation Algorithms for {\sc Witness Set} in Monotone Polygons} \label{sec-ptas}

\ifthenelse{\boolean{shortver}}{We present a factor-2 approximation and a \textsf{PTAS} for the {\sc Witness Set} problem in monotone polygons, both running in polynomial time. These algorithms adopt a discretization approach similar to the optimal algorithm, with a key difference: we perform a \emph{vertical decomposition} by drawing vertical chords $\ell(v)$ through each vertex $v \in V(\mo)$. Let $H = \bigcup_{v} \{h_v\}$ denote the set of endpoints $h_v$ of these chords.\footnote{For $\XMi(\mo)$ or $\XM(\mo)$, the chord may degenerate to a point (i.e., the vertex itself).} The rest of the arrangement generation follows as in the optimal case. Sorting the chords by $x$-coordinate yields $\ell(v_1) < \ldots < \ell(v_n)$.}{}

\ifthenelse{\boolean{shortver}}{}{Here we first obtain a polynomial-time factor-2 approximation and  a polynomial-time approximation scheme ({\sf PTAS}) for the {\sc Witness Set} in Monotone Polygons. Both algorithms employ a discretization method similar to that of the optimal algorithm.. The only difference in this discretization is as follows:  we first draw the vertical chords $\ell(v)$ through each vertex $v \in V(\mo)$ (vertical decomposition). Let $H=\cup_{v} \{h_v\}$ be the set of other end points $h_v$\footnote{The chord through
	$\XMi(\mo)$ or $\XM(\mo)$ could be a point which is the vertex itself.} of $\ell(v)$.  After this, the generation of the $i$-th iteration of the arrangement remains the same as the generation of arrangements in the optimal solution. We can sort all the vertical chords according to their $x$-coordinate. Let the ordering be $\ell(v_1) <  \ldots < \ell(v_n)$. The following observation is immediate.}

\begin{observation} \label{obs-two-chords}
	If $\ell(v_j)$ and $\ell(v_{j+1})$ are the two consecutive chords and $ p $ is a point in $ \mo $ that lies in the vertical strip $(\{z \in \mo: x(v_j) \leq x(z) \leq x(v_{j+1}) \})$ then $\ell(v_j),~\ell(v_{j+1}) \subseteq \vis(p)$.
\end{observation}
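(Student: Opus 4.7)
The plan is to show that the vertical strip $S_j := \{z \in \mo : x(v_j) \le x(z) \le x(v_{j+1})\}$ is a convex region, after which the conclusion follows because a point in a convex region sees every other point in it along a straight segment, and both chords $\ell(v_j),\ell(v_{j+1})$ lie inside $S_j$.

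First, I would describe the boundary of $S_j$ explicitly. Since $\mo$ is $x$-monotone, every vertical line in the strip meets $\bd(\mo)$ in exactly one point of $\uc(\mo)$ and one point of $\lc(\mo)$. Thus $\bd(S_j)$ consists of four pieces: the vertical chord $\ell(v_j)$ on the left, the vertical chord $\ell(v_{j+1})$ on the right, the portion of $\uc(\mo)$ above the strip, and the portion of $\lc(\mo)$ below the strip (one of these horizontal portions may degenerate to a point if $v_j$ or $v_{j+1}$ lies on the extreme of a chain).

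Second, I would invoke the fact that $v_j$ and $v_{j+1}$ are consecutive in the sorted $x$-order of all vertices of $\mo$. This means no vertex of $\mo$ has its $x$-coordinate strictly between $x(v_j)$ and $x(v_{j+1})$. Hence the upper boundary of $S_j$ is a single straight line segment (a subsegment of one edge of $\uc(\mo)$), and likewise the lower boundary is a single straight line segment of $\lc(\mo)$. Therefore $S_j$ is bounded by at most four straight line segments: two vertical chords and two monotone edge pieces.

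The main (and only mildly nontrivial) step is to argue that this trapezoidal region is convex. The two vertical sides are parallel; the upper edge lies weakly above both of its endpoints on $\ell(v_j)$ and $\ell(v_{j+1})$ (where these endpoints are the upper endpoints of the chords), and the lower edge lies weakly below them. Because $\mo$ is a simple polygon, the upper boundary never crosses the lower boundary inside the strip. A quadrilateral whose two sides are parallel vertical segments and whose other two sides are single line segments joining their endpoints pairwise is a trapezoid, hence convex. Once convexity of $S_j$ is established, for any $p \in S_j$ and any $q \in \ell(v_j) \cup \ell(v_{j+1}) \subseteq S_j$ the segment $\overline{pq}$ lies entirely in $S_j \subseteq \mo$, giving $\ell(v_j), \ell(v_{j+1}) \subseteq \vis(p)$, as required.
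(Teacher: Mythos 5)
Your proof is correct: the paper states this observation without proof (``The following observation is immediate''), and your argument---that consecutiveness of $v_j,v_{j+1}$ in the $x$-sorted order forces the strip to be a trapezoid bounded by the two vertical chords and single subsegments of one upper-chain edge and one lower-chain edge, hence convex---is exactly the standard vertical-decomposition reasoning the authors are implicitly relying on. No gaps; the degenerate cases (a chord collapsing to a point at $\XMi(\mo)$ or $\XM(\mo)$) are handled the same way the paper's footnote handles them.
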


	\begin{definition}[$ \bm \wld $] \label{def-well}
	{\em Let $W$ be a witness set in $\mo$. For a witness $w \in W$, we say $w$ is {\em well-defined} in a set $Q $ with respect to $ W$ if $w$ has a  replacement  in $ Q$, that is, there exists a point   $ q \in   Q$ such that $(W \smallsetminus \{w\}) \cup \{q\} $ is a witness set in $\mo$.} 
	
	\end{definition}

\noindent 	Now, let $Q_1$ be the set of all points generated from the monotone polygon $\mo$  as follows: 
\begin{itemize}
	\item For each \( v \in V(\mo) \cup H \) and \( x \in \ro \), if \( v \) is visible to \( x \), add a maximal chord \( L \) in \( \mo \) joining \( v \) and \( x \), where "maximal" means no other chord in \( \mo \) contains \( L \).

	\item Add the point $(L \cap \bd(\mo)) \smallsetminus \{v\} $ into $ Q_{1}$.
\end{itemize}

\ifthenelse{\boolean{shortver}}{\subsection{Factor-2 Approximation and PTAS}}{

}

\ifthenelse{\boolean{shortver}}{}{
\subsection{Factor-2 Approximation}
}

\begin{lemma} \label{theo-2-factor}
	
	Any witness set $W$ of size $ k $ contains a subset $ W_{odd} \subseteqq W$  such that $ |W_{odd}| \geq  \frac{k}{2}$ and each witness in $ W_{odd} $ are $\wld$ in $Q_1$ with respect to $ W_{odd} $.
\end{lemma}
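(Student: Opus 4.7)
The plan is to partition the witness set, after sorting by $x$-coordinate as $w_1,\ldots,w_k$, into $W_{odd}=\{w_1,w_3,\ldots\}$ and $W_{even}=\{w_2,w_4,\ldots\}$. Since $|W_{odd}|=\lceil k/2\rceil\ge k/2$, the size requirement is immediate, and it suffices to prove that every $w_i\in W_{odd}$ admits a substitute in $Q_1$ that keeps $W_{odd}$ a witness set. The conceptual reason for taking alternate witnesses is that the exact procedure in \Cref{algo_1} needed $2k$ arrangement iterations only because the substitute of $w_i$ was a clone point that forced cascading readjustments of previously-placed substitutes; skipping every other witness opens enough geometric slack that a single level of arrangement, namely $Q_1$, already captures a valid substitute.

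The key structural consequence of \Cref{obs-two-chords} is that any two consecutive witnesses of $W$ must lie in vertical strips separated by at least one fully empty strip: if $w_i$ and $w_{i+1}$ shared a strip, or even shared a bounding vertex-chord $\ell(v)$, both would see $\ell(v)$ in its entirety, forcing $\vis(w_i)\cap\vis(w_{i+1})\neq\emptyset$. Applying this to $w_i,w_{i+2}\in W_{odd}$ with $w_{i+1}\in W\setminus W_{odd}$ sandwiched between them, we obtain at least two open vertical strips, plus the strip of $w_{i+1}$ itself, separating the strip of $w_i$ from that of $w_{i+2}$. This lateral slack is what the argument will exploit.

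Concretely, for each $w_i\in W_{odd}$ I would mirror the replacement taxonomy of \Cref{def:witness}. Witnesses of type $\wgb$ are replaced by a vertex of $V(\mo)\subseteq Q_1$ exactly as in \Cref{Type-4-witness}; witnesses of type $\wint$ are replaced by a midpoint of a chord between mutually visible reflex vertices, and since $Q_1$ already records the image of every element of $V(\mo)\cup H$ through every reflex vertex, a representative in $Q_1$ on the same open reflex-chord segment can be used, with correctness following the proof of \Cref{Type-1-witness}. For $w_i\in\wbb$, which lives on an edge $e_i$, I would invoke the reflex-anchored chord argument of \Cref{claim:u2}: the substitute $u_i'\in\bd(\mo)$ produced there is obtained as $L\cap\bd(\mo)$ for a chord $L$ drawn from some $v\in V(\mo)\cup H$ through a reflex vertex, which is precisely the recipe defining $Q_1$, so $u_i'\in Q_1$. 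The two-strip separation between $w_{i-2}$ and $w_{i+2}$ established above ensures that this substitute does not intrude into either $\vis(w_{i-2})$ or $\vis(w_{i+2})$.

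The main obstacle I anticipate is the symmetric bidirectional constraint: in the exact algorithm, substitutes were first placed left-to-right with clone points and then consolidated into midpoints via a right-to-left pass, precisely because a one-directional placement could fail on the opposite side. For the factor-$2$ bound I plan to argue that the two-strip gap between consecutive $W_{odd}$ witnesses makes the two constraints decouple in the $x$-direction: the $x$-extent of $\vis(w_{i-2})$ terminates strictly to the left of the $x$-extent where $\vis(w_{i+2})$ begins, so any $Q_1$ point landing in this gap satisfies both disjoint-visibility conditions at once. Translating the intersection of two visibility regions into an $x$-interval overlap cleanly, via monotonicity together with \Cref{lem-cl2} and \Cref{lem-transitive}, is the step I expect to require the most care.
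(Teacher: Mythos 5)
There is a genuine gap. Your high-level framing (split into $W_{odd}$, exploit the strip separation guaranteed by \Cref{obs-two-chords}) matches the paper, and your observation that consecutive witnesses cannot share a strip or a bounding chord is correct. But the core of the proof --- why a \emph{single} level of arrangement suffices --- is missing, and the route you propose for supplying it does not work. First, reusing the taxonomy of \Cref{def:witness} breaks down for two of the three types: a $\wint$-type witness has its replacement on the \emph{interior} open segment between two mutually visible reflex vertices, whereas every point of $Q_1$ lies on $\bd(\mo)$ (each is $L\cap\bd(\mo)$ for a maximal chord $L$, or a vertex of $V(\mo)\cup H$), so there is no ``representative in $Q_1$ on the same open reflex-chord segment.'' Likewise, the substitute $u_2'$ produced by \Cref{claim:u2} is generated by a chord anchored at $c\in\{\RA(t_1),\IMr(t_1)\}$, and $\IMr(t_1)$ is itself only a first-level arrangement point; the resulting boundary point lands in $Q_2$, not $Q_1$, so this is not ``precisely the recipe defining $Q_1$'' as you claim.

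Second, your decoupling argument conflates the \emph{location} of the substitute with the \emph{extent of its visibility region}. Placing $w^*$ in the $x$-gap between the strips of $w_{i-2}$ and $w_{i+2}$ does not prevent $\vis(w^*)$ from reaching far past either bounding chord; by \Cref{lem-cl2} disjointness is governed by whether $\Rb$ and $\Lb$ chords cross, not by where the point sits. The paper closes exactly this hole with a rotation argument you do not have: it takes $\LC(w_{2t+1})$, which passes through a reflex vertex $x^*$, rotates the line about $x^*$ until it first hits a vertex of $V(\mo)\cup H$, and places the substitute $w^*$ on the resulting chord. That chord is by construction one of the chords generating $Q_1$, and the ``first hit'' property guarantees $w^*$ sees at most one point of $\ell(v_{j+1})\cup\ell(v_{j+2})$, hence $\vis(w^*)$ cannot reach $\ell(v_j)$ (which $\vis(w_{2t-1})$ also misses, since $w_{2t}$ sees all of $\ell(v_j)$). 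This is the step that simultaneously certifies membership of the substitute's generating chord in $Q_1$ and bounds its visibility on both sides; without it, the lemma is not proved.
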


\ifthenelse{\boolean{shortver}}{}{

\begin{proof}
	Let $W=\{w_1,\cdots,w_{k}\}$ be a witness set  of size $ k $, with $ x(w_i) <x (w_{i+1}) $, where $ 1 \leq i < k $. Let $W_{odd}=\{ w_i ~|~ w_i \in W, 1\leq i \leq k$ and $i$ is odd$\}$. Obviously $ W_{odd} $ is a witness set in $\mo$ of size at least $ \frac{k}{2}$. Now we show each witness in $ W_{odd} $ is $\wld$ in $Q_1$ with respect to $ W_{odd} $, i.e., $Q_1$ is a  $ W_{odd}\text{-}\pws$ in $ Q_1 $. Surely, the witness $w_{1}$ and $w_k$ is $\wld$ in $V(\mo)$ and so has replacemnt  in $ Q_1 $. Now we consider the witnesses that are in between $w_{1}$ and $w_k$, i.e, witnesses from $W_{odd} \smallsetminus \{w_1, w_k\}$.

	 Let  $w_{2t+1}$ be an arbitrary vertex in $ W_{odd} $ for some positive integer $t$ where $ 1 < 2t+1 < k$.  Now there must exist a pair of consecutive vertical chords, say $\ell(v_{j})$ and $\ell(v_{j+1})$ such that $w_{2t}$ lies in the vertical strip formed by  $\ell(v_{j})$ and $ \ell(v_{j+1})$ with  $\ell(v_{j}), \ell(v_{j+1}) \subset \vis(w_{2t})$ (follows from \Cref{obs-two-chords}). Note that $w_{2t+1}$ may not be well-defined in $Q_1$ with respect to the witness set $W$. But  if we consider the witness $W \smallsetminus \{ w_{2t}, w_{2t+2}\}$, then we can show  that $w_{2t+1}$ is well-defined in $Q_1$ with respect to the witness set $W \smallsetminus \{w_{2t}, w_{2t+2}\}$. Below, we prove that.

Notice  that   $\vis(w_{2t-1}) \cap \ell(v_{j+1}) = \emptyset$ in fact  $\vis(w_{2t-1}) \cap \ell(v_{j}) = \emptyset$. Also $\vis(w_{2t+1}) \cap \ell(v_{j+1}) = \emptyset$. Now consider  $\LC(w_{2t+1})$. It must pass through some reflex vertex, say $x^*$. Now consider the line  $L$ (not segment) passing through  $w_{2t+1}$ and $x^*$. If we rotate the line $L$ anti-clockwise through $x^*$ (i.e.,  the point $x^*$ remains in the line when rotating),  it must hit some vertices in $V(\mo) \cup H$. Now consider the point $p$ where the rotating line hits some vertices in $V(\mo) \cup H$ at the very first time. Either $p \in V(\mo)$ or $p$  is a vertex of $H $, more specifically a vertex from the line segment $\ell(v_{j+1})$ or $\ell(v_{j+2})$ (this is true because while rotating it may happen that the $\LC(w_{2t+1})$ hits a vertex of $V(\mo) \cup H$ on the left side or right side). Now we can find a vertex $w^*$ in the rotated line such that $w^*$ is a replacement of $w_{2t+1}$ in $Q_1$ with respect to the witness set   $W \smallsetminus \{w_{2t}, w_{2t+2}\}$. This is true due the fact of combining (i) As $\vis(w_{2t-1}) \cap \ell(v_{j}) = \emptyset$ and $w^*$ can see at most one point of $ \ell(v_{j+1}) \cup  \ell(v_{j+2}) $ that is also from $H$, $\ell(v_{j}) \cap \vis(w^*) = \emptyset$ that indeed imply $\vis(w_{2t-1}) \cap \vis(w^*) = \emptyset$, similarly $\ell(v_{j+3}) \cap \vis(w^*) = \emptyset$ that indeed imply $\vis(w_{2t+3}) \cap \vis(w^*) = \emptyset$, (ii) as we rotate the line $L$  anti-clockwise through $x^*$, $\vis(w_{2t-1}) \cap \vis(w^*) = \emptyset$ and $\vis(w_{2t+3}) \cap \vis(w^*) = \emptyset$. 	
\end{proof}
}


\noindent 	We have shown that $Q_1$ contains a witness set of size at least $k/2$ where  $k$ is the size of the maximum witness set. So even for the optimum value, the argument works. Hence, $Q_1$ contains a factor-2 solution (witness set). It remains to find a maximum witness set that is contained in $Q_1$. Now since $|Q_1|= \OO(nr)$ and the {\sc Discrete Witness Set} in monotone polygons is solvable in $\OO(|Q_1|^2 + |V(\po)||Q_1|)$ time (by \Cref{theo-finite-witness-in-M}).  We have the following theorem.


\begin{theorem} \label{theo-2-factor-time}
	There is a 2-approximation algorithm for the {\sc Witness Set} problem in monotone polygons running in time   $\OO(r^2n^2)$, where $ r$ and $n $ denote the number of reflex vertices and vertices, respectively, of the input monotone polygon.
\end{theorem}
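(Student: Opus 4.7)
The plan is to combine the structural guarantee of \Cref{theo-2-factor} with the exact algorithm for {\sc DisWS} from \Cref{theo-finite-witness-in-M}. More precisely, the algorithm will first compute the discrete candidate set $Q_1$, and then solve the {\sc Discrete Witness Set} problem on the instance $(\mo, Q_1)$ optimally. By \Cref{theo-2-factor}, any maximum witness set $W$ of size $k$ contains a subset $W_{odd} \subseteq W$ with $|W_{odd}| \geq k/2$ such that every element of $W_{odd}$ has a replacement in $Q_1$ (i.e., $Q_1$ is a $W_{odd}$-\pws). Consequently there exists a witness set $W' \subseteq Q_1$ of $\mo$ of size at least $k/2$. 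Hence the optimum of the {\sc DisWS} instance $(\mo, Q_1)$ is at least $k/2$, so returning any optimal solution for this discrete instance yields a $2$-approximation for {\sc WS} on $\mo$.

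For the running time, I would first bound the size of $Q_1$. The set $V(\mo)\cup H$ has $\OO(n)$ elements, since $|H|\leq n$ (one chord endpoint per vertex of $\mo$), and the number of reflex vertices is $r$. For each pair in $(V(\mo)\cup H)\times \ro$, we add at most one point to $Q_1$, so $|Q_1| = \OO(nr)$. Constructing $Q_1$ reduces to drawing vertical chords (done in $\OO(n\log n)$ time after sorting) and performing $\OO(nr)$ visibility queries, each of which can be handled in $\OO(n)$ time, giving $\OO(n^2 r)$ preprocessing time, which is within the target bound.

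Next, I would invoke \Cref{theo-finite-witness-in-M}, which solves {\sc DisWS} on a monotone polygon with input point set $S$ in $\OO(|S|^2 + |V(\po)|\,|S|)$ time. Substituting $|S|=|Q_1|=\OO(nr)$ and $|V(\po)|=n$ yields an overall running time of $\OO(n^2 r^2 + n \cdot nr) = \OO(n^2 r^2)$, as required. I do not foresee any substantive obstacle: the structural content is already packaged in \Cref{theo-2-factor}, and the algorithmic content is packaged in \Cref{theo-finite-witness-in-M}. The only care needed is verifying that the guarantee of \Cref{theo-2-factor}, stated as a $W_{odd}$-\pws property, indeed translates to the existence of a witness subset of $Q_1$ of cardinality $|W_{odd}|$; this follows by replacing each $w \in W_{odd}$ by its representative in $Q_1$, since these replacements keep visibility regions pairwise disjoint by \Cref{def-well}.
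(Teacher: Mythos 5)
Your proposal matches the paper's own argument: both derive the approximation guarantee from \Cref{theo-2-factor} (so that $Q_1$ contains a witness set of size at least $k/2$), bound $|Q_1|=\OO(nr)$, and then solve the discrete instance via \Cref{theo-finite-witness-in-M} to get $\OO(|Q_1|^2+n|Q_1|)=\OO(r^2n^2)$. The extra details you supply on constructing $Q_1$ and on converting the $W_{odd}$-\pws{} guarantee into an actual witness subset of $Q_1$ are consistent with (and slightly more explicit than) what the paper writes.
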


\ifthenelse{\boolean{shortver}}{We then extend our algorithm for 2-factor to get a PTAS. And obtained the following.}{}

\ifthenelse{\boolean{shortver}}{}{
\subsection{Polynomial-time Approximation Scheme}

The polynomial-time approximation scheme algorithm follows a similar idea to the factor-2 approximation algorithm. Intuitively, when we try to design 2-factor  algorithm we show that for any witness set $W= \{w_i : i \in [k]\}$ if we consider the witness set $W_{odd}=\{ w_i ~|~ w_i \in W, 1\leq i \leq k\}$ the set $Q_1$ forms a $W_{odd}$-\pws. Here we show that when we try $(1+\epsilon)$-factor approximation algorithm we show that for any witness set $W= \{w_i : i \in [k]\}$ if we consider the witness set $W_{apx}=\{w_j ~|~w_j \in W, j$ mod $(i+1) \neq 0\}$ then  $Q_{2i}$ forms a $W_{apx}$-\pws, where $ i = \frac{1}{\epsilon}$.

\begin{lemma} \label{theo-PTAS}
	Any witness set $W$ of size $ k $ contains a subset $ W_{apx} \subseteq W $, such that $ | W_{apx}| \geq  \frac{i  k}{i+1}$ and each witness in $  W_{apx} $ is $\wld$ in $Q_{2i}$ with respect to $  W_{apx} $.

\end{lemma}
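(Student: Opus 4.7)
The plan is to generalize Lemma~\ref{theo-2-factor} by widening the gaps between dropped witnesses. Sort $W=\{w_1,\ldots,w_k\}$ by $x$-coordinate and define
\[
W_{apx} = \{w_j \in W : j \bmod (i+1) \neq 0\},
\]
so that every $(i+1)$-th witness is removed, which immediately gives $|W_{apx}|\geq \frac{ik}{i+1}$. For each removed witness $w_{t(i+1)}$, the verbatim argument of Lemma~\ref{theo-2-factor} produces a pair of consecutive vertical chords $\ell(v_j),\ell(v_{j+1})$ from the vertical decomposition through $V(\mo)\cup H$ such that $w_{t(i+1)}$ lies in the strip between them and both chords lie in $\vis(w_{t(i+1)})$ (by \Cref{obs-two-chords}). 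I would use these chord pairs to partition the witnesses of $W_{apx}$ into consecutive \emph{chunks} of length at most $i$, each bounded by separator chords, with the leftmost and rightmost chunks handled as in Claim~\ref{wonek}.

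Inside each chunk, the two bounding separator chords would play the role of the anchor witnesses $t_1,t_{k'}$ from the proof of Lemma~\ref{Type-2-witness}. I would then replay the left-to-right and right-to-left propagation sweeps of the exact discretization locally inside the chunk. Starting from a chord endpoint already in $Q_0$, the first witness of the chunk is replaced by a clone of a vertex in $Q_2$, the next by a clone of a vertex in $Q_4$, and so on; since the chunk has at most $i$ witnesses, after at most $i$ such iterations the left-to-right sweep has reached the rightmost witness of the chunk, placing its clone in $Q_{2i}$. A symmetric right-to-left sweep readjusts the clone positions, and each witness is finally replaced by the midpoint of the two clones produced by the two sweeps; by the definition of $Q^{\mathtt{midpt}}_{i-1}$ in \Cref{algo_1}, these midpoints lie in the midpoint set added at round $2i$, hence in $Q_{2i}$.

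The main obstacle will be showing that the separators genuinely isolate neighbouring chunks, so that a replacement chosen inside one chunk does not violate the non-intersection property with a witness in an adjacent chunk. This should follow because any line of sight leaving the vertical strip of a chunk must cross one of its bounding separator chords, but that entire chord is contained in the visibility region of the removed witness sitting at the separator; consequently, such a line of sight would force the visibility region of the neighbouring chunk's witness to intersect the visibility region of the removed witness, contradicting the fact that $W$ is a witness set. Once this isolation is established, each chunk can be processed independently by the local propagation argument above, and the union of the chosen replacements is a $W_{apx}$-potential witness set contained in $Q_{2i}$, completing the proof.
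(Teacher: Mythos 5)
Your proposal follows essentially the same route as the paper: the same choice of $W_{apx}$ by deleting every $(i+1)$-th witness, the same use of the vertical chords through the deleted witnesses as separators (via \Cref{obs-two-chords}), and the same replay of the left-to-right/right-to-left clone-propagation machinery from the proof of \Cref{Type-2-witness} inside each chunk of at most $i$ surviving witnesses, landing in $Q_{2i}$ after at most $i$ rounds. The one place you are looser than the paper is the isolation step: the contradiction with ``$W$ is a witness set'' only disposes of lines of sight emanating from \emph{original} witnesses of the adjacent chunk, whereas for a \emph{replacement} point $w^*$ one additionally needs the explicit rotation construction (rotating the line $\LC(\cdot)$ about its reflex vertex until it first hits a vertex of $V(\mo)\cup H$) to guarantee that $\vis(w^*)$ meets at most one point of the near separator chord and none of the far one --- which is exactly how the paper closes this case.
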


\begin{proof}
	
	Let $W_{apx}=\{w_j ~|~w_j \in W, j$ mod $(i+1) \neq 0\}$. $  W_{apx} $ is a witness set of size at least $\frac{i k}{i+1}$. Now we show each witness in $  W_{apx} $ is $\wld$ in $Q_{2i}$ with respect to $  W_{apx} $, i.e., has a replacement in $Q_{2i}$. Surely, the witness $w_{1}$ and $w_k$ is $\wld$ in $Q_{2i}$ with respect to $  W_{apx} $.  Now fix a $j$ and  consider the witnesses $w_{j \times (i+1)+1}, w_{j \times (i+1)+2}, \cdots w_{j \times (i+1) +i} \in  W_{apx}$ whose indexes are of the form $\{(j \times (i+1)+1),(j \times (i+1)+2),\cdots,(j \times (i+1)+i)\}$, (except $ w_{1}$ and $w_k$).  If we can show that the vertex $w_{j \times (i+1)+1} \in  W_{apx}$ is well-defined on $Q_{2}$ then it trivially follows that 
	$\{(j \times (i+1)+1),\cdots,(j \times (i+1)+i)\}$ is $\wld$ in $Q_{2\times 1}^*, Q_{2\times 3}^*, \cdots, Q_{2 \times i -2}$  is well-defined on $Q_{2i}$. Towards proving that we apply a similar kind of argument as in \Cref{theo-2-factor}. In \Cref{theo-2-factor}, we show that if we consider $w_{2t-1}$ and $w_{2t+1}$ not as a witness, then we can find a replacement of $w_{2t}$ in $Q_1$. Here, we also do the exact same thing.
	
		 Consider the witness  $w_{j \times (i+1)}$.  Now there must exists a pair of consecutive vertical chords, say $\ell(v_{l})$ and $\ell(v_{l+1})$ such that  $w_{j \times (i+1)}$ lies in the vertical strip formed by  $\ell(v_{l})$ and $ \ell(v_{l+1})$ with  $\ell(v_{l}), \ell(v_{l+1}) \subset \vis(w_{j \times (i+1)})$ (follows from \Cref{obs-two-chords}). Note that $w_{j \times (i+1)+1}$ may not be well-defined in $Q_1$ with respect to the witness set $W$. But  if we consider the witness $W \smallsetminus \{ w_{j \times (i+1)}\}$, then we can show  that $w_{j \times (i+1)+1}$ is well-defined in $Q_1$ with respect to the witness set $W \smallsetminus \{w_{j \times (i+1)}\}$. Below, we prove that.

	Notice  that   $\vis(w_{j \times (i+1)-1}) \cap \ell(v_{l+1}) = \emptyset$ in fact  $\vis(w_{j \times (i+1)-1})\cap \ell(v_{l}) = \emptyset$. Also $\vis(w_{j \times (i+1)+1}) \cap \ell(v_{l+1}) = \emptyset$. Now consider  $\LC(w_{j \times (i+1)+1})$. It must pass through some reflex vertex, say $x^*$. Now consider the line  $L$ (not segment) passing through  $w_{w_{j \times (i+1)+1}}$ and $x^*$. If we rotate the line $L$ anti-clockwise through $x^*$ (i.e.,  the point $x^*$ must be in the line when rotating), it must hit some vertices in $V(\mo) \cup H$. Now consider the point $p$ where the rotating line hits some vertices in $V(\mo) \cup H$ for the first time. Either $p \in V(\mo)$ or $p$ is a vertex of $H$; more specifically, a vertex from the line segment $\ell(v_{l+1})$. Now we can find a vertex $w^*$ in the rotated line such that $w^*$ is a replacement of $w_{j \times (i+1)+1}$ in $Q_1$ with respect to the witness set   $W \smallsetminus \{w_{j \times (i+1)}\}$. This is true due the fact of combining (i) As $\vis(w_{j \times (i+1)-1}) \cap \ell(v_{l}) = \emptyset$ and $w^*$ can see at most one point of $ \ell(v_{l+1})$ that is also from $H$, $\ell(v_{l}) \cap \vis(w^*) = \emptyset$ that indeed imply $\vis(w_{j \times (i+1)-1}) \cap \vis(w^*) = \emptyset$, (ii) as we rotate the line $L$  anti-clock wise through $x^*$, {it may happen that $L$ hits a vertex of $V(\mo) \cup H$ for the first time on the right side. Even then, by a symmetric argument we can conclude that} $\vis(w_{j \times (i+1)+2}) \cap \vis(w^*) = \emptyset$. 	
	Given that $w_{j \times (i+1)+1}$ is $\wld$ in $Q_1$, we can apply the reasoning used to prove the correctness of \Cref{algo_1} to show that witnesses $w_{j \times (i+1)+2}, \cdots, w_{j \times (i+1) +i} \in  W_{apx}$ whose indexes are of the form $\{(j \times (i+1)+1),\cdots,(j \times (i+1)+i)\}$ is $\wld$ in $Q_{3}, Q_{5}, \cdots, Q_{2i}$. This completes the proof.		
\end{proof}

	In \Cref{theo-PTAS}, we are able to show that  $Q_{2i}$ contains a witness set of size at least $\frac{ik}{i+1}$ where the value $k$ can be arbitrary. So even for the optimum value, the argument works. Hence $Q_{2i}$ contains a factor $\frac{i+1}{i}$ solution (witness set). It remains to find a maximum witness set that is contained in $Q_{2i}$. Now since $|Q_{2i}|= \OO(nr^{\OO(i)})$ and the {\sc Discrete Witness Set} in monotone polygons is solvable in $\OO(|Q|^2 + |V(\po)||Q|)$ time (by \Cref{theo-finite-witness-in-M}).  Thus, we have the following result.}

\begin{theorem} \label{theo-PTAS-time}
	For every fixed  $\epsilon >1$, there is a  $(1+\epsilon)$-approximation algorithm for the {\sc Witness Set} problem in monotone polygons running in time $r^{\OO(1/\epsilon)} \cdot n^2$, where $ r$ and $n $ denote the number of reflex vertices and vertices, respectively, of input polygon.
\end{theorem}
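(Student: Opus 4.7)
The plan is to combine the structural lemma \Cref{theo-PTAS} with the polynomial-time algorithm for \textsc{DisWS} on monotone polygons from \Cref{theo-finite-witness-in-M}. Given $\epsilon > 0$, I would set $i = \lceil 1/\epsilon \rceil$ and run \Cref{algo_1} (adapted with the vertical decomposition: first inject the endpoints $H$ of all vertical chords $\ell(v)$ into $Q_0$, then proceed with the reflex-vertex arrangement rounds) for exactly $2i$ iterations to produce the candidate set $Q_{2i}$. By \Cref{theo-PTAS}, any optimal witness set $W$ of size $k$ contains a subset $W_{apx} \subseteq W$ of cardinality at least $\tfrac{i}{i+1}k$ such that every witness in $W_{apx}$ has a valid replacement inside $Q_{2i}$. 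Consequently, a maximum-cardinality witness subset of $Q_{2i}$ has size at least $\tfrac{i}{i+1}k = \tfrac{k}{1+1/i} \ge \tfrac{k}{1+\epsilon}$, which directly yields the desired $(1+\epsilon)$ approximation guarantee.

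The next step is to bound $|Q_{2i}|$ and the construction time. Since $|Q_0| = |V(\mo)\cup H| = O(n)$ and each round joins every existing candidate with every reflex vertex by a maximal chord and records only the opposite boundary hit point (plus adjacent midpoints), one obtains the recurrence $|Q_{j+1}| \le (r+2)\,|Q_j|$, giving $|Q_{2i}| = O(n \cdot r^{2i}) = n \cdot r^{O(1/\epsilon)}$. Generating the arrangement can be performed within $O(|Q_{2i}|^2)$ time by repeated chord–boundary intersection tests, so the discretization costs $r^{O(1/\epsilon)} n^{O(1)}$.

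Finally, I would feed the instance $(\mo, Q_{2i})$ to the \textsc{DisWS} algorithm of \Cref{theo-finite-witness-in-M}, which runs in $O(|Q_{2i}|^2 + |V(\mo)|\cdot|Q_{2i}|)$ time. Substituting the bound $|Q_{2i}| = n\cdot r^{O(1/\epsilon)}$ gives a total running time of $r^{O(1/\epsilon)}\cdot n^2$, matching the theorem, and the returned maximum independent set in the corresponding visibility intersection graph is a witness set of $\mo$ contained in $Q_{2i}$ of the largest possible size, hence at least $\tfrac{k}{1+\epsilon}$.

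The main obstacle I anticipate is twofold. First, one must verify that the size bound $|Q_{j+1}| \le (r+2)\,|Q_j|$ is tight enough: a careless count over pairs of reflex vertices could blow up to $r^{O(i^2)}$, so the recurrence must be driven solely by (existing candidate, reflex vertex) pairs, not by pairs of reflex vertices recomputed each round. Second, and more subtly, \Cref{theo-PTAS} was phrased for a fixed optimum $k$, but the approximation argument requires it to hold for \emph{every} candidate value of $k$ up to the true optimum; I would verify that the proof of \Cref{theo-PTAS} indeed goes through for the actual optimum $k$ without any a priori knowledge of its value, so that running the discretization once suffices and no guessing of $k$ is needed.
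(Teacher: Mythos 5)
Your proposal is correct and follows essentially the same route as the paper: it combines Lemma~\ref{theo-PTAS} (with $i=1/\epsilon$) with the discrete algorithm of Theorem~\ref{theo-finite-witness-in-M}, uses the same $(2+r)$-per-round recurrence to bound $|Q_{2i}| = n\cdot r^{\OO(1/\epsilon)}$, and correctly observes — as the paper does — that Lemma~\ref{theo-PTAS} holds for arbitrary $k$ so no guessing of the optimum is required. Both of the obstacles you flag are already resolved exactly as you anticipate.
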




	\section{Conclusion}\label{sec-conclusion}

   In the literature, although Amit et al.\cite{DBLP:journals/ijcga/AmitMP10} claim the problem to be \nph, there was no {\sf NP}-hardness result for the {\sc Witness Set} problem in any kind of polygons. We roughly argue that it is difficult to build polynomial-size gadgets to show {\sf NP}-hardness reduction for this problem. The reason is the following: let $ \Pi' $ be a gadget that we construct for the {\sc Witness Set}  problem in a simple polygon $\po$  from an {\sf NP}-complete problem $ \Pi $. Now $ \Pi' $ should consist of a polynomial size point set $ Q $ where witnesses of $ \po $ in the desired solution are constrained to lie on. As shown in \Cref{theo-finite-witness-in-M}, the {\sc Discrete Witness Set} problem can be solved in polynomial time for any polygon, so we will be unable to show the reduction between $ \Pi $ and $ \Pi' $. That is why we feel that {\sf NP}-hardness of this problem does not hold for simple polygons, though we do not rule out the hardness result for simple polygons. The preceding discussion implies that any {\sf NP}-hardness proof for this problem would likely require methods beyond standard polynomial-size discretization techniques. 


 \medskip

   \noindent
\textbf{Funding:} 
Satyabrata Jana: Supported by the Engineering and Physical Sciences Research Council
(EPSRC) via the project MULTIPROCESS (grant no. EP/V044621/1).


\medskip
\noindent
\textbf{Acknowledgment:} Authors are sincerely thankful to Haim Kaplan, Matya Katz,  Joseph S. B. Mitchell  and Micha Sharir for several discussions and valuable comments which has truly helped us in writing this article.


	

    \bibliographystyle{alpha}
	\bibliography{satya}
	
	\appendix
	

\end{document}